\documentclass{article}

\usepackage{microtype}
\usepackage{graphicx}
\usepackage{subcaption}
\usepackage{booktabs} % for professional tables

\usepackage{hyperref}

\usepackage[preprint]{icml2026}

\usepackage{amsmath}
\usepackage{amssymb}
\usepackage{mathtools}
\usepackage{amsthm}

\usepackage[capitalize,noabbrev]{cleveref}

\usepackage{aliascnt}

\theoremstyle{plain}
\newtheorem{theorem}{Theorem}[section]

\newaliascnt{proposition}{theorem}
\newtheorem{proposition}[proposition]{Proposition}
\aliascntresetthe{proposition}

\newaliascnt{lemma}{theorem}
\newtheorem{lemma}[lemma]{Lemma}
\aliascntresetthe{lemma}

\newaliascnt{corollary}{theorem}

\aliascntresetthe{corollary}

\theoremstyle{definition}
\newaliascnt{definition}{theorem}
\newtheorem{definition}[definition]{Definition}
\aliascntresetthe{definition}

\newaliascnt{assumption}{theorem}
\newtheorem{assumption}[assumption]{Assumption}
\aliascntresetthe{assumption}

\newaliascnt{example}{theorem}

\aliascntresetthe{example}

\theoremstyle{remark}
\newaliascnt{remark}{theorem}
\newtheorem{remark}[remark]{Remark}
\aliascntresetthe{remark}

\crefname{theorem}{theorem}{theorems}
\Crefname{theorem}{Theorem}{Theorems}

\crefname{proposition}{proposition}{propositions}
\Crefname{proposition}{Proposition}{Propositions}

\crefname{lemma}{lemma}{lemmas}
\Crefname{lemma}{Lemma}{Lemmas}

\crefname{corollary}{corollary}{corollaries}
\Crefname{corollary}{Corollary}{Corollaries}

\crefname{definition}{definition}{definitions}
\Crefname{definition}{Definition}{Definitions}

\crefname{assumption}{assumption}{assumptions}
\Crefname{assumption}{Assumption}{Assumptions}

\crefname{example}{example}{examples}
\Crefname{example}{Example}{Examples}

\crefname{remark}{remark}{remarks}
\Crefname{remark}{Remark}{Remarks}

\crefname{section}{section}{sections}
\Crefname{section}{Section}{Sections}

\crefname{appendix}{appendix}{appendices}
\Crefname{appendix}{Appendix}{Appendices}

\usepackage[textsize=tiny]{todonotes}

\newcommand{\E}{\mathbb{E}}

\newcommand{\R}{\mathbb{R}}
\newcommand{\Z}{\mathbb{Z}}

\newcommand{\sA}{\mathcal{A}}

\newcommand{\sD}{\mathcal{D}}

\newcommand{\sF}{\mathcal{F}}

\newcommand{\sL}{\mathcal{L}}

\newcommand{\sO}{\mathcal{O}}

\newcommand{\sS}{\mathcal{S}}
\newcommand{\sT}{\mathcal{T}}

\newcommand{\sZ}{\mathcal{Z}}

\newcommand{\bQ}{\mathbf{Q}}
\newcommand{\bR}{\mathbf{R}}

\newcommand{\ba}{\mathbf{a}}

\newcommand{\bd}{\mathbf{d}}

\newcommand{\br}{\mathbf{r}}

\usepackage{mathrsfs}

\newcommand{\TV}{\mathrm{TV}}

\newcommand{\diag}{\textup{diag}}
\newcommand{\Diag}{\textup{Diag}}

\newcommand{\diff}{\mathop{}\!\mathrm{d}}
\newcommand{\Var}{\textup{Var}}

\newcommand{\Cov}{\textup{Cov}}

\newcommand{\Proj}{\textup{Proj}}

\newcommand{\ind}{\mathbf{1}}
\DeclareMathOperator*{\argmax}{arg\,max}
\DeclareMathOperator*{\argmin}{arg\,min}
\DeclareMathOperator*{\minimize}{\text{minimize}}

\icmltitlerunning{Provably Convergent Actor-Critic for MARL through Risk-aversion}

\begin{document}

\twocolumn[
  \icmltitle{Provably Convergent Actor-Critic for MARL through Risk-aversion}

  % It is OKAY to include author information, even for blind submissions: the
  % style file will automatically remove it for you unless you've provided
  % the [accepted] option to the icml2026 package.

  % List of affiliations: The first argument should be a (short) identifier you
  % will use later to specify author affiliations Academic affiliations
  % should list Department, University, City, Region, Country Industry
  % affiliations should list Company, City, Region, Country

  % You can specify symbols, otherwise they are numbered in order. Ideally, you
  % should not use this facility. Affiliations will be numbered in order of
  % appearance and this is the preferred way.
  \icmlsetsymbol{equal}{*}

  \begin{icmlauthorlist}
    \icmlauthor{Yizhou Zhang}{caltech}
    \icmlauthor{Eric Mazumdar}{caltech}
  \end{icmlauthorlist}

  \icmlaffiliation{caltech}{Department of Computing and Mathematical Sciences, California Institute of Technology, Pasadena, CA, USA}

  \icmlcorrespondingauthor{Yizhou Zhang}{yzhang8@caltech.edu}

  % You may provide any keywords that you find helpful for describing your
  % paper; these are used to populate the "keywords" metadata in the PDF but
  % will not be shown in the document
  \icmlkeywords{Machine Learning, ICML}

  \vskip 0.3in
]

% this must go after the closing bracket ] following \twocolumn[ ...

% This command actually creates the footnote in the first column listing the
% affiliations and the copyright notice. The command takes one argument, which
% is text to display at the start of the footnote. The \icmlEqualContribution
% command is standard text for equal contribution. Remove it (just {}) if you
% do not need this facility.

% Use ONE of the following lines. DO NOT remove the command.
% If you have no special notice, KEEP empty braces:
\printAffiliationsAndNotice{}  % no special notice (required even if empty)
% Or, if applicable, use the standard equal contribution text:
% \printAffiliationsAndNotice{\icmlEqualContribution}

\begin{abstract}
Learning stationary policies in infinite-horizon general-sum Markov games (MGs) remains a fundamental open problem in Multi-Agent Reinforcement Learning (MARL). While stationary strategies are preferred for their practicality, computing stationary forms of classic game-theoretic equilibria is computationally intractable—a stark contrast to the comparative ease of solving single-agent RL or zero-sum games. To bridge this gap, we study Risk-averse Quantal response Equilibria (RQE), a solution concept rooted in behavioral game theory that incorporates risk aversion and bounded rationality. We demonstrate that RQE possesses strong regularity conditions that make it uniquely amenable to learning in MGs. We propose a novel single-timescale Actor-Critic algorithm characterized by a faster actor and a slower critic. Leveraging the regularity of RQE, we prove that this approach achieves global convergence with finite-sample guarantees. We empirically validate our algorithm in several environments to demonstrate superior convergence properties compared to risk-neutral baselines.
\end{abstract}

\section{Introduction}
Emerging paradigms in AI are fundamentally multi-agent in nature, requiring agents to interact strategically to achieve their goals. From autonomous driving and robotics to agentic markets, these interactions are often driven by misalignment between agent objectives. While Reinforcement Learning (RL) has become the dominant paradigm for isolated decision-making \citep{sutton1998reinforcement}, training agents to navigate strategic interactions requires drawing on ideas from \emph{Multi-Agent} Reinforcement Learning (MARL) and game theory more broadly~\citep{littman1994markov, silver2016mastering, vinyals2019grandmaster}. This stems from the fact that---unlike single-agent RL, which solves an optimization problem--- MARL and game theory seek to solve problems of equilibrium computation \citep{lanctot2017unified}.

The canonical framework for studying \emph{dynamic} strategic interactions is the discounted general-sum Markov Game (MG) which dates back to seminal work by Shapley~\cite{shapley1953stochastic}. Despite the longevity of this framework, designing algorithms with provable guarantees of convergence to a meaningful equilibrium remains a largely open problem. Existing guarantees are generally limited to highly structured settings, like two-player zero-sum \cite{daskalakis2022complexitymarkovequilibriumstochastic} or cooperative games \cite{monderer1996potential}. This can be attributed to the fact that \textit{Nash equilibria} are intractable (PPAD-complete) to compute even in general-sum normal-form games \citep{daskalakis2009complexity, chen2009settling}. Furthermore, computing even stationary generalizations of weaker forms of equilibria like correlated and coarse correlated equilibria (which can be computed in normal-form games) has been shown to be intractable~\cite{jin2022complexityinfinitehorizongeneralsumstochastic,daskalakis2022complexitymarkovequilibriumstochastic}. This in turn precludes the possibility that one could design algorithms to learn these strategies efficiently.  

Consequently, recent work in MARL has pivoted toward finding non-stationary equilibria \cite{jin2021v}. While computationally tractable, these approaches suffer from crucial drawbacks: they require agents to maintain history-dependent policies whose complexity scales with the time horizon, and they fail to reflect the stationary policies typically employed in practice~\cite{lowe2017multi, samvelyan2019starcraftmultiagentchallenge}.

In this paper, we tackle the problem of learning stationary equilibria by adopting a different perspective. We build upon a line of work originating in behavioral game theory that models human decision-making via \emph{strategic} risk aversion and bounded rationality. The resulting solution concept---Risk-Averse Quantal-Response Equilibria (RQE) \citep{mazumdar2024tractableequilibriumcomputationmarkov}---has recently been shown to be computationally tractable in MGs \citep{zhang2025convergent}. The key insight of these works is that computational tractability can be achieved by making assumptions on agent behavior rather than game structure. In effect the behavioral features regularize the underlying game, rendering it monotone \cite{rockafellar2009variational} (a game-theoretic analogue to convexity \cite{cai2023doubly}) and allowing for the definition of a contractive risk-adjusted Bellman operator.

We extend this line of work to derive a practical MARL algorithm that provably converges to RQE in general-sum MGs. Concretely, our contributions are:
\begin{itemize}
    \item \textbf{Generalized Conditions for RQE Tractability:} We weaken the requirements on players' risk aversion and bounded rationality necessary for the game to possess a unique, computationally tractable RQE. We prove that a broader class of games than those considered in prior work \citep{mazumdar2024tractableequilibriumcomputationmarkov, zhang2025convergent} admits an RQE that is unique and varies smoothly with respect to payoff matrices. We use these results to establish that the risk-adjusted Bellman operator is a contraction, which in turn allows us to derive convergent value-based learning methods. Critically, the assumptions under which our theory holds are on the agents' properties (e.g., degrees of risk-aversion and bounded rationality) and \emph{not} on the underlying game. This makes our approach applicable to arbitrary general-sum games, provided we design agents appropriately—a key advantage in practical applications where game structure is fixed but the agent design is under our control.
    \item \textbf{A Provably Convergent Actor-Critic Algorithm:} We design a single-timescale iteration rule that approximates the contraction mapping. In contrast to standard Actor-Critic methods (where critics often update faster than actors), our approach updates the policy (actor) with a larger stepsize and the Q-function (critic) with a smaller stepsize. We provide a novel contraction-based analysis of coupled Lyapunov drift inequalities tailored to the single-timescale stepsizes to prove that this algorithm enjoys finite-sample convergence guarantees. To the best of our knowledge, this is the first MARL algorithm with global guarantees of convergence to stationary equilibria in general-sum discounted MGs that does not assume additional structure on the game.
    \item \textbf{Scalable Implementation and Evaluation:} We further adapt our algorithm to fit modern deep RL infrastructure by employing policy- and Q-networks and a replay buffer. We conduct experiments on three different MARL environments: A normal-form inspection game, a (Markov) gridworld cooperation game and an MPE Simple Tag experiment with fixed good agents. Our results confirm that RQE leads to more stable convergence patterns in the learning process and inherently risk-averse agent behaviors.
\end{itemize}

\section{Preliminaries}
In this section we present our problem setup. We first define RQE and the notion of monotone games in normal-form games, and then introduce infinite-horizon general-sum Markov game, a setting that extends normal-form games to MARL. Due to space limit, we defer our notations and basic definitions to \Cref{app:notations}.
For our theoretical results, we focus on the two-player case for simplicity, although our results can be easily extended to the $n$-player case.

\subsection{RQE and Generalized Monotonicity in Normal-form Games}
The first and simpler setting we consider in this work is a two-player general-sum bimatrix game where player (agent) $i\in\{1,2\}$ has payoff matrix $R_i$ and pure strategy set (action set) $\sA_i$. When players are risk-neutral, their objective is to maximize the expected utility, expressed as:
\begin{equation}\begin{aligned}\label{eq:intro_rational_objective}
    U_i(\pi_i,\pi_{-i})=\E_{(a_1,a_2)\sim(\pi_1,\pi_2)}[R_i(a_1,a_2)]
    =\pi_i^TR_i\pi_{-i};
\end{aligned}\end{equation}
for $i\in\{1,2\}$ where $\pi_i\in \Delta_{|\sA_i|}$ denotes the mixed strategy (policy) of agent $i$. However, since $\pi_{-i}$ is a mixed strategy, if player $i$ wants to be risk-averse against different possible realizations of the pure strategy $a_{-i}$ selected by $\pi_{-i}$, maximizing the expected utility may not be its desirable objective. Using the framework of convex risk measures \citep{follmer2002convex}, player $i$ minimizes a risk measure $\rho_{i,\pi_{-i}}(\E_{\pi_i}[R_i(a_i,a_{-i})])$ associated with $\pi_{-i}$. Leveraging the dual representation theorem for convex risk measures proposed in \citet{follmer2002convex}, prior work \citep{mazumdar2024tractableequilibriumcomputationmarkov} showed that the objective of player $i$ under risk aversion can be expressed as:
\begin{equation}\label{eq:intro_risk_averse_objective}
    \minimize_{\pi_i}\sup_{p_i\in \Delta_{|\sA_{-i}|}}-\pi_i^T R_i p_i-D_i(p_i,\pi_{-i})/\tau_i
\end{equation}
here $D_i$ is a penalty function as a regularization term in addition to the reward term induced by $R_i$. We can alternatively interpret \eqref{eq:intro_risk_averse_objective} as follows: Instead of optimizing to play against player $-i$, player $i$ imagines an adversary (we call it adversary $i$), who decides $p_i$ that tries to minimize its expected payoff, while constrained by a term $D_i(p_i,\pi_{-i})$ (representing some distance notion) not to be too far away from the true policy $\pi_{-i}$ of player $-i$. The parameter $\tau_i$ characterizes the degree of risk-aversion of player $i$, where larger $\tau_i$ indicates player $i$ to be more risk-averse by making the adversary less constrained.

To incorporate the bounded rationality behavior in agents' strategies, we restrict their strategies to \emph{quantal responses}, a behavioral economics notion positing that agents make probabilistic decisions that assign higher probabilities to actions with higher utility.
Quantal responses can be realized by adding a proper convex regularizer $\nu_i$ to their objective function \citep{sokota2023unifiedapproachreinforcementlearning, mertikopoulos2016learning}. Now the objective function for player $i$ becomes (to minimize):
\begin{equation}\begin{aligned}\label{eq:intro_risk_averse_regularized_objective}
    &f_i(\pi_i,\pi_{-i};R_i)\\
    =&\sup_{p_i\in \Delta_{|\sA_{-i}|}}-\pi_i^T R_i p_i-D_i(p_i,\pi_{-i})/\tau_i+\epsilon_i\nu_i(\pi_i).
\end{aligned}\end{equation}
where $\epsilon_i$ is the temperature that captures the regularization strength of player $i$. Notice that in reinforcement learning, the regularization term $\nu_i$ (often set to be entropy) is already widely used to encourage the policy to be randomized for better exploration. When context is clear, we drop the dependence of $f_i$ on $R_i$ and simply write $f_i(\pi_i,\pi_{-i})$.

Given each player's objective function as in \eqref{eq:intro_risk_averse_regularized_objective}, we define the risk-averse quantal response equilibrium (RQE) as the equilibrium point at which each player attains optimality:
\begin{definition}[\citealp{mazumdar2024tractableequilibriumcomputationmarkov}, Definition 5]\label{def:RQE_normal_form}
    A risk-averse quantal response equilibrium (RQE) of a two-player general-sum bimatrix game whose payoff matrix is given by $\mathbf{R}=(R_1,R_2)$ is a pair of mixed strategies $\pi^*=(\pi_1^*,\pi_2^*)\in \Delta_{|\sA_1|}\times \Delta_{|\sA_2|}$ such that
    \begin{equation}
        f_i(\pi_i^*,\pi_{-i}^*;\mathbf{R})\leq f_i(\pi_i,\pi_{-i}^*;\mathbf{R}), \forall \pi_i\in\Delta_{|\sA_i|}
    \end{equation}
    for both $i\in\{1, 2\}$. When the RQE is unique, we use $\texttt{RQE}_i$ to denote the value of player $i$ at this equilibrium:
    \begin{equation}
        \texttt{RQE}_i(\mathbf{R}):=f_i(\pi_i^*,\pi_{-i}^*;\mathbf{R}),\quad i\in\{1,2\}.
    \end{equation}
\end{definition}
Similar to other equilibrium notions like Nash, at an RQE $\pi^*$, no deviation from $\pi_i^*$ leads to a better risk-adjusted objective \eqref{eq:intro_risk_averse_regularized_objective} for each player $i$.

Since the objective function \eqref{eq:intro_risk_averse_regularized_objective} is in the form of a minimax optimization problem, it is hard to directly solve for RQE through \eqref{eq:intro_risk_averse_regularized_objective}.
To simplify the computation, we treat this objective function from the view of a 4-player game, with 2 original players $i\in\{1,2\}$ deciding $\pi_i\in\Delta_{|\sA_i|}$ minimizing:
\begin{subequations}\label{eq:intro_4player_joint}
\begin{equation}\label{eq:intro_4player_objective}
    J_i(\pi_i,\pi_{-i},p_i;R_i)=-\pi_i^TR_ip_i-D_i(p_i,\pi_{-i})/\tau_i+\epsilon_i\nu_i(\pi_i);
\end{equation}
and two adversaries deciding $p_i\in\Delta_{|\sA_{-i}|}$ and minimizing:
\begin{equation}\label{eq:intro_4player_adv_objective}
    \bar{J}_i(\pi_i,\pi_{-i},p_i;R_i)=\pi_i^TR_ip_i+D_i(p_i,\pi_{-i})/\tau_i-\epsilon_i\nu_i(\pi_i).
\end{equation}
\end{subequations}
It is proven that the Nash equilibria of the 4-player game can be connected to the RQE in the original 2-player game in the following way:
\begin{proposition}[\citealp{mazumdar2024tractableequilibriumcomputationmarkov}, Proposition 1]\label{prop:equilibrium_2to4_player}
    Let $z^*=(\pi^*,p^*)$ be a Nash equilibrium of the 4-player game characterized by \eqref{eq:intro_4player_objective} and \eqref{eq:intro_4player_adv_objective}. We have that $\pi^*$ is an RQE of the original two-player game characterized by \eqref{eq:intro_risk_averse_regularized_objective}. Furthermore, if $\pi^*$ is an RQE of the two-player game, then $(\pi^*,p^*)$ is a Nash equilibrium of the 4-player game where $p_i^*=\argmax_{p_i\in\Delta_{|\sA_{-i}|}} -\pi_i^*R_ip_i-D_i(p_i,\pi_{-i}^*)/\tau_i$.
\end{proposition}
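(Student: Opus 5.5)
The argument unpacks the Nash conditions of the 4-player game and matches them with the definition of $f_i$ as a supremum over $p_i$. The key identity is
\[
f_i(\pi_i,\pi_{-i})=\sup_{p_i\in\Delta_{|\sA_{-i}|}} J_i(\pi_i,\pi_{-i},p_i).
\]
This holds because $\epsilon_i\nu_i(\pi_i)$ does not depend on $p_i$. Also, adversary $i$ minimizing $\bar J_i=-J_i$ over $p_i$ is the same as maximizing $J_i$ over $p_i$. We assume, as in the setup, that $D_i(\cdot,\pi_{-i})$ is convex and lower semicontinuous on the compact simplex. Then the sup is attained, and it is unique if $D_i$ is strictly convex in $p_i$, so the $\argmax$ in the statement is well defined.

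\emph{First direction.} Let $z^*=(\pi^*,p^*)$ be a Nash equilibrium of the 4-player game. The adversary's best-response condition gives $p_i^*\in\argmax_{p_i}J_i(\pi_i^*,\pi_{-i}^*,p_i)$, hence $f_i(\pi_i^*,\pi_{-i}^*)=J_i(\pi_i^*,\pi_{-i}^*,p_i^*)$. Player $i$'s Nash condition only gives
\[
J_i(\pi_i^*,\pi_{-i}^*,p_i^*)\le J_i(\pi_i,\pi_{-i}^*,p_i^*)\quad\text{for all }\pi_i,
\]
with $p_i^*$ held fixed. That is a saddle-point inequality, not immediately $f_i(\pi_i^*)\le f_i(\pi_i)$. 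However, $J_i(\pi_i,\pi_{-i}^*,p_i^*)\le \sup_{p_i}J_i(\pi_i,\pi_{-i}^*,p_i)=f_i(\pi_i,\pi_{-i}^*)$. Chaining the three relations gives $f_i(\pi_i^*,\pi_{-i}^*)\le f_i(\pi_i,\pi_{-i}^*)$ for every $\pi_i$, so $\pi^*$ is an RQE. This direction needs no convexity.

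\emph{Converse.} Let $\pi^*$ be an RQE and set $p_i^*$ to be the maximizer. The adversary condition holds by construction. For player $i$, we must show that $\pi_i^*$ minimizes $J_i(\cdot,\pi_{-i}^*,p_i^*)$ with $p_i^*$ frozen. This is the main obstacle: minimizing the max-function $f_i(\cdot,\pi_{-i}^*)$ does not by itself give minimization against a fixed maximizer.

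The plan is to use a minimax theorem. The function $g(\pi_i,p_i)=J_i(\pi_i,\pi_{-i}^*,p_i)$ is convex in $\pi_i$ (bilinear term plus convex $\epsilon_i\nu_i$). It is concave in $p_i$ (bilinear term minus convex $D_i/\tau_i$). Both domains are compact convex simplices. Sion's theorem then yields a saddle point $(\tilde\pi_i,\tilde p_i)$, and the saddle-point set is the product of $\argmin_{\pi_i}\max_{p_i}g$ and $\argmax_{p_i}\min_{\pi_i}g$. Since $\pi_i^*$ minimizes $\max_{p_i}g=f_i$, the pair $(\pi_i^*,\tilde p_i)$ is a saddle point.

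It remains to identify $\tilde p_i$ with $p_i^*$. Because $(\pi_i^*,\tilde p_i)$ is a saddle point, $\tilde p_i\in\argmax_{p_i}g(\pi_i^*,p_i)$. By uniqueness of that maximizer (strict concavity from strictly convex $D_i$), $\tilde p_i=p_i^*$. If strict convexity were unavailable, I would instead choose $p_i^*$ to be the saddle component $\tilde p_i$, which is one of the maximizers. The saddle property then gives $J_i(\pi_i^*,\pi_{-i}^*,p_i^*)\le J_i(\pi_i,\pi_{-i}^*,p_i^*)$ for all $\pi_i$, which completes the Nash conditions for both players and both adversaries.
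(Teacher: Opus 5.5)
Your proof is correct and follows the same route the paper takes. The paper only cites this normal-form result, but its proof of the Markov-game analogue, \Cref{lem:equilibrium_2to4_player_Markov}, uses the same two steps: the chain $f_i(\pi_i^*,\pi_{-i}^*)=J_i(\pi_i^*,\pi_{-i}^*,p_i^*)\le J_i(\pi_i,\pi_{-i}^*,p_i^*)\le f_i(\pi_i,\pi_{-i}^*)$ for the first direction, and a minimax theorem for the converse. Your explicit identification of the saddle component with $p_i^*$ (via strict concavity in $p_i$, or by selecting the saddle component when the maximizer is not unique) is a step the paper leaves implicit, and it is genuinely needed, because when the maximizer is not unique an arbitrary element of the $\argmax$ need not make $\pi_i^*$ a best response.
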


\Cref{prop:equilibrium_2to4_player} implies that once we find a Nash equilibrium for the 4-player game, we also obtain an RQE of the original two-player game simply by taking its $\pi$ component.

Indeed, since $\bar{J}_i=-J_i$, the introduction of adversaries creates a partial zero-sum structure in each player-adversary pair, and relaxes the exact dependence of player $i$'s utility on $\pi_{-i}$ in \eqref{eq:intro_rational_objective} to the regularized adversary policy $p_i$ through $D_i(p_i,\pi_{-i})$. This structure is illustrated in \Cref{fig:illustration_4player_structure}. While at first glance, elevating the original two-player game to 4 players makes the problem more complicated, it turns out to simplify equilibrium computation to a great extent.

\begin{figure}[htb]
    \centering
    \includegraphics[width=0.95\linewidth]{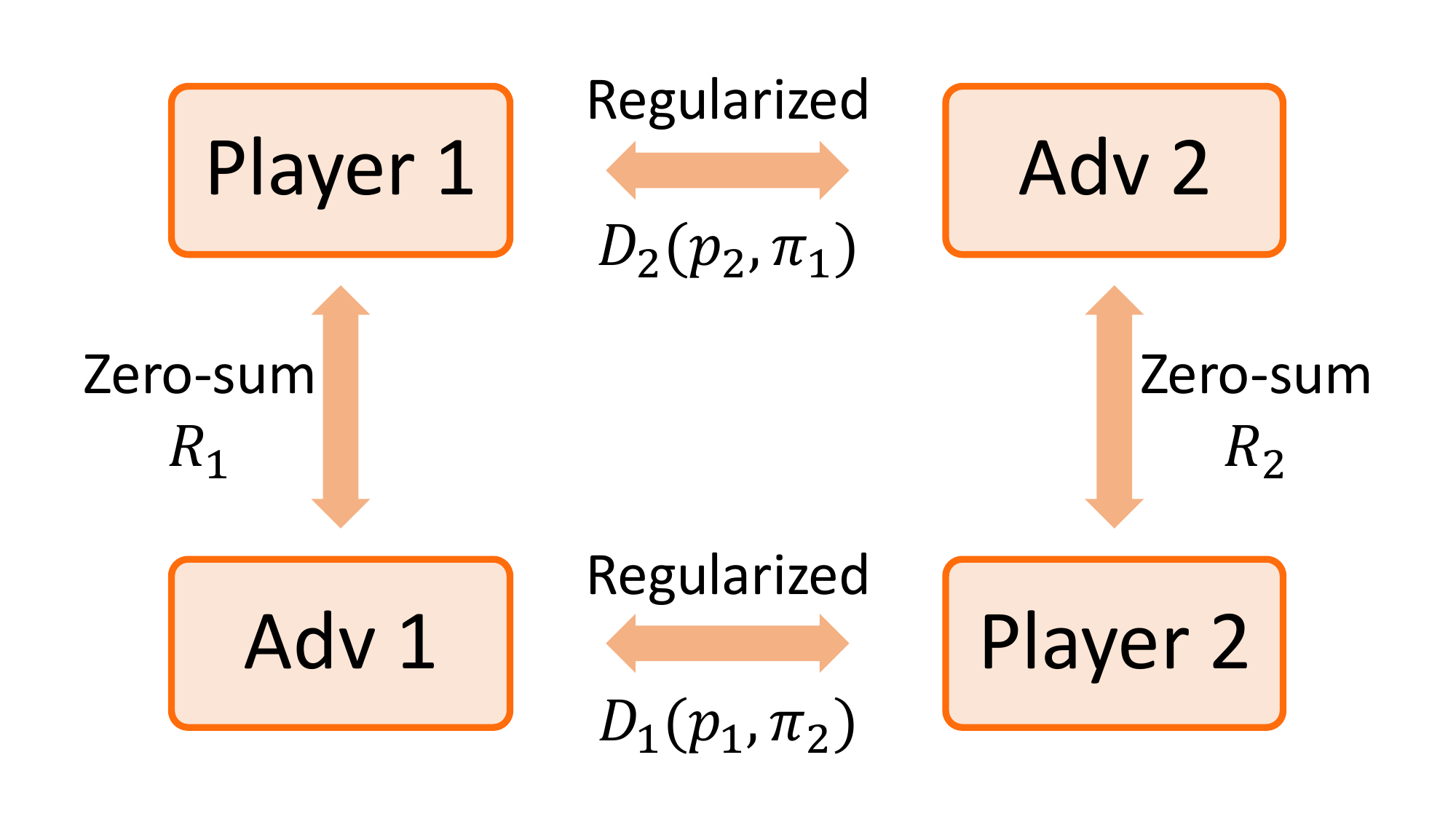}
    \caption{Illustration of the 4-player game structure.}
    \label{fig:illustration_4player_structure}
\end{figure}

Based on the intuition above, \citet{zhang2025convergent} studied the properties of Nash equilibria for the 4-player game through its \textit{monotonicity}. In this work, we consider a generalized monotonicity notion originally referred to as \textit{diagonally (strictly) concave} in \citet{rosen1965existence} as follows:
\begin{definition}\label{def:monotonicity}
    Let $\sZ$ be a subset of $\R^n$ and $\lambda\in \R^n$ satisfying $\lambda>0$.
    An operator $F:\sZ\rightarrow \R^n$ is $\lambda$-monotone if:
    \begin{equation*}
        \langle z-z',F(z)-F(z') \rangle_\lambda\geq 0, \forall z,z'\in \sZ,
    \end{equation*}
    it is $\lambda$-strictly monotone when the inequality is strict when $z\neq z'$, and is $(\mu,\lambda)$-strongly monotone if:
    \begin{equation*}
        \langle z-z',F(z)-F(z') \rangle_\lambda\geq \mu\|z-z'\|_2^2, \forall z,z'\in \sZ.
    \end{equation*}
    An $N$-player game where player $i$ chooses action $z_i$ from a compact and convex action space $\sZ_i\subseteq \Delta_{|\sA_i|}$ with cost function $J_i(z_i,z_{-i})$ is a $\lambda$-monotone (resp. $\lambda$-strictly monotone, $(\mu,\lambda)$-strongly monotone) game if its gradient operator $F:\sZ\rightarrow \R^{\sum_{i\in[N]} |\sA_i|}$ where $\sZ=\prod_{i=1}^N \sZ_i$ defined by $F_i(z)=\nabla_{z_i} J_i(z_i,z_{-i})\in \R^{|\sA_i|}$ is a $\lambda$-monotone (resp. $\lambda$-strictly monotone, $(\mu,\lambda)$-strongly monotone) operator.
\end{definition}

\Cref{def:monotonicity} generalizes the monotonicity definition in \citet{zhang2025convergent} through introducing a weight vector $\lambda$ that controls the weight of each direction in the decision space $\sZ$. Monotone games are known to satisfy many desirable properties. For example, for a $\lambda$-strictly monotone game, there exists a unique Nash equilibrium. If the game is further $(\mu,\lambda)$-strongly monotone for some $\lambda>0$, projected preconditioned gradient descent $z_{t+1}=\Proj_\sZ\left(z_t-\eta\Lambda F(z_t)\right)$ where $\Lambda=\Diag(\lambda)$ is a diagonal matrix with diagonal entries equal $\lambda$ converges linearly to the Nash equilibrium. As we show later in Theorems~\ref{thm:RQE_property_4player_monotonicity} and \ref{thm:monotonicity_condition}, monotonicity brings many desirable properties to RQE, and the condition for the 4-player game to be monotone does not depend on players' payoff matrices.

\subsection{RQE in Discounted Infinite-horizon Markov Games}
In this section we generalize the RQE notion from normal form games to Markov games.
A discounted two-player infinite-horizon general-sum Markov game is specified by a tuple $\mathcal{MG}=\{\sS, \{\sA_i\}_{i=1,2},\{r_i\}_{i=1,2}, \gamma, P ,\rho_0\}$ where $\sS$ is the state space of the underlying MDP, $\sA_i$ is the action space of player $i\in\{1, 2\}$, and we use the notation $\sA=\sA_1\times \sA_2$ to denote the product action space of both players. We assume $|\sS|$ and $|\sA_1|, |\sA_2|$ to be finite. $r_i:\sS\times \sA\rightarrow [0, 1]$ is the reward function of player $i$, which we assume to be deterministic. We use $\mathbf{r}$ to denote the paired reward function $\mathbf{r}:=(r_1,r_2)$. $\gamma\in[0,1)$ is the discount factor and $P:\sS\times \sA\rightarrow \Delta_\sS$ is the transition kernel, where $P(s'|s,\mathbf{a})$ is the probability of the next state being $s'$ given the current state $s$ and the current actions $\mathbf{a}=(a_1,a_2)$ of the players. We use $\rho_0\in \Delta_{\sS}$ to denote the initial state distribution.

We focus on \textit{Markov policies}, the class of policies where the action selection probability only depends on the current state instead of the entire gameplay trajectory, i.e. $\pi=(\pi_1,\pi_2)$ where $\pi_i:\sS\rightarrow \Delta_{|\sA_i|}, i\in\{1,2\}$.
Given a product Markov policy $\pi$, without considering risk-aversion and bounded rationality, player $i$ has an expected discounted cumulative reward given by $\E_\pi[\sum_{t=0}^\infty \gamma^t r_i(s_t,a_t)]$.

To incorporate risk-aversion in discounted infinite-horizon Markov games, we slightly overload the notations in normal-form games and consider the following risk-adjusted objective of player $i$ that minimizes $f_i(\pi_i,\pi_{-i})=\max_{p_i:\sS\rightarrow \Delta_{|\sA_{-i}|}} J_i(\pi_i,\pi_{-i},p_i)$ where $J_i$ is defined as:
\begin{align}
    &\textstyle J_i(\pi_i,\pi_{-i}, p_i)=\E_{\pi,p, s_0\sim\rho_0}\big[\sum_{t=0}^\infty \gamma^t\big(-r_i(s_t,\ba_t)\nonumber\\
    &\textstyle \quad -D_i\left(p_i,\pi_{-i};s_t\right)/\tau_i+\epsilon_i\nu_i(\pi_i;s_t)\big) \big]
\end{align}
where the joint actions $\ba_{t}$ are sampled through $a_{i,t}\sim \pi_i(\cdot|s_t)$, $\ba_{-i,t}\sim p_i(\cdot|s_t)$ and the next state $s_{t+1}$ is sampled from $s_{t+1}\sim P(\cdot|s_t,\ba_t)$. The notations of $D_i(p_i,\pi_{-i};s)$ and $\nu_i(\pi_i;s)$ are abbreviations of $D_i(p_i(\cdot|s),\pi_{-i}(\cdot|s))$ and $\nu_i(\pi_i(\cdot|s))$ respectively. Starting from now, we refer to the Markov game objective above when we don't include payoff matrices $\bR$ in the argument of $J_i$ and refer to the normal form game objective \eqref{eq:intro_4player_objective} when $\bR$ are included. This adjusted objective can be viewed as each player playing against its imaginary adversary who controls $p_i$ that tries to minimize its discounted cumulative reward but constrained by a penalty term $D_i(\cdot,\cdot;s)$ at each possible state $s\in \sS$ from $\pi_{-i}$ for all subsequent time steps. Given a set of original player policies $\pi_i$ and adversarial policies $p_i$, we define the value function for each state $s\in\sS$ as:
\begin{equation}\label{eq:value_function}\begin{aligned}
    &\textstyle V_i^{\pi,p}(s)=\E_{\pi,p, s_0=s}\big[\sum_{t=0}^\infty \gamma^t\big(-r_i(s_t,\ba_t)\\
    &\textstyle \quad-D_i\left(p_i,\pi_{-i};s_t\right)/\tau_i+\epsilon_i\nu_i(\pi_i;s_t)\big) \big]
\end{aligned}\end{equation}
so that $J_i(\pi,p)=\E_{s\sim\rho_0}[V_i^{\pi,p}(s)]$, and the $Q$ function as:
\begin{equation}\label{eq:Q_function}
    Q_i^{\pi,p}(s,\ba)=-r_i(s,\ba)+\gamma \E_{s'\sim P(\cdot|s,\ba)}V_i^{\pi,p}(s').
\end{equation}
Additionally, we use $f_i(\pi_i,\pi_{-i};s)$ to denote the quantity $\max_{p_i:\sS\rightarrow \Delta_{|\sA_{-i}|}} V_i^{\pi,p}(s)$, so that $\E_{s\sim\rho}[f_i(\pi_i,\pi_{-i};s)]=f_i(\pi_i,\pi_{-i})$.
Notice that given the state $s$, we can view $Q_{i}^{\pi,p}(s,\cdot)$ as a payoff matrix of a normal form game, with associated strategies $\pi_i$ and $p_i$, it is easy to verify that:
\begin{equation}\begin{aligned}
    V_i^{\pi,p}(s)=&\pi_i(\cdot|s)^TQ_i^{\pi,p}(s,\cdot)p_i(\cdot|s)\\
    &-D_i(p_i,\pi_{-i};s)/\tau_i+\epsilon_i\nu_i(\pi_i;s).
\end{aligned}\end{equation}
This provides the connection between value function and the objective for the 4-player stage-game:
\begin{equation}\label{eq:4player_stage_game}
    V_i^{\pi,p}(s)=J_i\left(\pi_i(\cdot|s), \pi_{-i}(\cdot|s), p_{i}(\cdot|s);-\bQ^{\pi,p}(s,\cdot)\right).
\end{equation}
Here $J_i$ denotes the objective function \eqref{eq:intro_4player_objective} for normal form games, where we regard $-\bQ^{\pi,p}(s,\cdot)$ as the payoff matrices.
We extend the notion of RQE to Markov games as follows:
\begin{definition}[Stationary Markov RQE]\label{def:RQE_Markov}
    A pair of Markov policies $\pi^*=(\pi_1^*,\pi_2^*)$ where $\pi_i^*:\sS\rightarrow \Delta_{|\sA_i|}$ is said to be an RQE of a two-player Markov game $\mathcal{MG}$ if for both $i\in\{1,2\}$:
    \begin{equation}\label{eq:def_RQE_Markov}
        f_i(\pi_i^*,\pi_{-i}^*;s)\leq f_i(\pi_i,\pi_{-i}^*;s),\forall s\in\sS, \pi_i:\sS\rightarrow \Delta_{|\sA_i|}.
    \end{equation}
\end{definition}
\Cref{def:RQE_Markov} is a simpler version of RQE than that in \citet{zhang2025convergent} which additionally considered risk-aversion against the potential stochasticity of the environment. Despite this simplification, we expect most of our results could be easily translated to their notion. We also note that unlike the widely studied CE or CCE, RQE policies are Markovian and can be executed independently. 

To characterize the computation of RQE in discounted Markov games, assume that the stage game given a $Q$ function pair $\bQ$ has a unique RQE, we define the risk-averse quantal-response Bellman operators as follows:
\begin{definition}\label{def:Bellman_operators}
    Given a two-player discounted Markov game $\mathcal{MG}$, risk-aversion penalty functions $D_i(\cdot,\cdot)$ and regularizers $\nu_i(\cdot)$ where $i\in\{1,2\}$, the risk-averse quantal-response Bellman optimality operator $\sT$ maps a $Q$ function pair $\mathbf{Q}=(Q_1,Q_2)$ where $Q_i:\sS\times\sA\rightarrow \R$ to another $Q$ function pair $\sT \mathbf{Q}$ in the same function space, defined elementwise as:
    \begin{equation}\label{eq:Bellman_optimality_operator}
        (\sT \mathbf{Q})_i(s,\mathbf{a})=-r_i(s,\mathbf{a})+
        \gamma \E_{s'\sim P(\cdot|s,\ba)}[\texttt{RQE}_i(-\mathbf{Q}(s',\cdot))]
    \end{equation}
    here we view $-\mathbf{Q}(s',\cdot)$ as a pair of payoff matrices in some normal form game with action space $\sA=\sA_1\times \sA_2$. Similarly, for a joint policy profile $z=(\pi,p)$, we define the risk-averse quantal-response Bellman evaluation operator with respect to $z$ as:
    \begin{equation}\label{eq:Bellman_evaluation_operator}\begin{aligned}
        (\sT_z\bQ)_i(s,\ba)
        &=-r_i(s,\ba)+\gamma \E_{s'\sim P(\cdot|s,\ba)}[\epsilon_i\nu_i(\pi_i;s')+
        \\&\pi_i^T(\cdot|s')Q_i(s',\cdot)p_i(\cdot|s')
        -D_i(p_i,\pi_{-i};s')/\tau_i]
    \end{aligned}\end{equation}
    also elementwise for $i\in\{1,2\}$.
\end{definition}
When context is clear, we will simply use ``Bellman optimality operator" (or simply ``Bellman operator") and ``Bellman evaluation operator" to refer to the operators defined in \eqref{eq:Bellman_optimality_operator} and \eqref{eq:Bellman_evaluation_operator} respectively. Notice that the Bellman optimality operator is adapted from Definition 4.1 in \citet{zhang2025convergent} through removing the risk-aversion to the environment, and the Bellman evaluation operator can be viewed as evaluating the $Q$ function \eqref{eq:Q_function} corresponding to $z$.

The intuition behind the definition of Bellman operators is that at state $s$, when deciding the first action, each agent is faced with a \textit{``stage game"} as a normal form game with payoff matrices $-\bQ(s,\cdot)$, computed by adding up the immediate reward plus a discounted RQE value it can get by applying its current policy, as the regularized risk-averse version of that originally proposed in the Nash Q-learning algorithm by \citet{hu2003nash}. It can be shown (details in \Cref{app_proof_prop:RQE_iff_Bellman_fixed_point}) that the following property connecting the Bellman operator and the RQE of Markov games holds:
\begin{proposition}\label{prop:RQE_iff_Bellman_fixed_point}
    Let $\bQ^*$ be a fixed point of the Bellman optimality operator $\sT$, the policy profile $\pi^*=(\pi_1^*,\pi_2^*)$ where $\pi_i^*:\sS\rightarrow \Delta_{|\sA_i|}$ given by:
    \begin{equation}\label{eq:RQE_given_Q_function}\begin{aligned}
        \pi_i^*(\cdot|s)=&\argmin_{\pi_i\in\Delta_{|\sA_{i}|}}\max_{p_i\in\Delta_{|\sA_{-i}|}} \pi_i^T Q_i^*(s,\cdot)p_i\\
        &-D_i(p_i,\pi_{-i}(\cdot|s))/\tau_i+\epsilon_i\nu_i(\pi_i)
    \end{aligned}\end{equation}
    is an RQE of the Markov game. Additionally, let $\pi^*=(\pi_1^*,\pi_2^*)$ be an RQE of the Markov game, then for $p^*=(p_1^*,p_2^*)$ where $p^*_i=\argmax_{p_i:\sS\rightarrow \Delta_{|\sA_{-i}|}}V_i^{\pi^*,p}(s), \forall s\in \sS$, the associated $Q$ function $\bQ^{\pi^*,p^*}$ is a fixed point of the Bellman optimality operator $\sT$.
\end{proposition}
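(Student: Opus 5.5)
The proof runs through the 4-player stage-game reformulation of \Cref{prop:equilibrium_2to4_player}, used state by state, together with standard dynamic-programming arguments for the adversary's inner maximization. We assume, as in \Cref{def:Bellman_operators}, that every stage game has a unique RQE, so that $\texttt{RQE}_i(-\bQ(s,\cdot))$ is well defined. The policy in \eqref{eq:RQE_given_Q_function} is read as the RQE of the stage game with payoff matrices $-\bQ^*(s,\cdot)$; this implicit fixed-point reading is needed because $\pi_{-i}$ appears on the right-hand side.

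\textbf{First direction.} Let $\bQ^*=\sT\bQ^*$, and let $\pi^*(\cdot|s)$ be the stage RQE at each $s$. By \Cref{prop:equilibrium_2to4_player}, at each $s$ there is an adversary response $p^*(\cdot|s)$ such that $(\pi^*,p^*)(\cdot|s)$ is a Nash equilibrium of the 4-player stage game.

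The first step is to identify $\bQ^*$ with $\bQ^{\pi^*,p^*}$. The stage-game value at $(\pi^*,p^*)$ is exactly $\texttt{RQE}_i(-\bQ^*(s',\cdot))$. Hence $\sT_{z^*}\bQ^*=\sT\bQ^*=\bQ^*$ with $z^*=(\pi^*,p^*)$. Since $\sT_{z^*}$ is affine in $\bQ$ and a $\gamma$-contraction in sup-norm, its unique fixed point is $\bQ^{\pi^*,p^*}$, by \eqref{eq:Q_function} and the Bellman equation for $V^{\pi,p}$. So $\bQ^*=\bQ^{\pi^*,p^*}$ and $V_i^{\pi^*,p^*}(s)=\texttt{RQE}_i(-\bQ^*(s,\cdot))$.

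Second, fix $\pi_{-i}^*$. Maximizing over $p_i:\sS\to\Delta$ is a single-agent MDP in which $p_i$ controls $a_{-i}$ and the per-step reward is $-r_i-D_i/\tau_i+\epsilon_i\nu_i(\pi_i)$. Its optimal value satisfies a Bellman optimality equation. Because $p_i^*$ is a pointwise best response in the stage game with $Q=\bQ^*$, it satisfies that equation, so $f_i(\pi_i^*,\pi_{-i}^*;s)=V_i^{\pi^*,p^*}(s)$.

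Third, take any deviation $\pi_i$. Define the operator $G(V)(s)=\max_{p}\bigl[\pi_i^T(-r_i+\gamma PV)p-D_i/\tau_i+\epsilon_i\nu_i(\pi_i)\bigr](s)$. It is monotone and a $\gamma$-contraction, and its fixed point is $f_i(\pi_i,\pi_{-i}^*;\cdot)$. Stage optimality of $\pi_i^*$ gives
\[
G(V^*)(s)\ge \min_{\pi_i'}\max_{p}[\cdots]=V^*(s),
\]
where $V^*=V_i^{\pi^*,p^*}$. Iterating this with monotonicity gives $G^k(V^*)\ge V^*$ for all $k$. Letting $k\to\infty$ yields $f_i(\pi_i,\pi_{-i}^*;s)\ge V^*(s)=f_i(\pi_i^*,\pi_{-i}^*;s)$, which proves \eqref{eq:def_RQE_Markov}.

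\textbf{Converse.} Let $\pi^*$ be a Markov RQE, let $p^*$ be as in the statement, and set $\bQ=\bQ^{\pi^*,p^*}$. By the MDP argument of the second step, $p_i^*(\cdot|s)$ maximizes the stage objective with payoff $-\bQ(s,\cdot)$ for the fixed $\pi^*$.

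It remains to show that $\pi_i^*(\cdot|s)$ is a stage best response. Suppose not: at some $s_0$, a stage deviation $\tilde\pi_i(\cdot|s_0)$ strictly lowers $\max_p$ of the stage objective. Define the one-state deviation $\pi_i'=\tilde\pi_i$ at $s_0$ and $\pi_i^*$ elsewhere. Then $G_{\pi_i'}(V^*)\le V^*$, with strict inequality at $s_0$. The same monotone-contraction iteration gives $f_i(\pi_i',\pi_{-i}^*;s_0)<f_i(\pi_i^*,\pi_{-i}^*;s_0)$, contradicting the RQE property.

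Hence $(\pi^*,p^*)(\cdot|s)$ is a 4-player stage Nash equilibrium. By \Cref{prop:equilibrium_2to4_player} and uniqueness, $\pi^*(\cdot|s)$ is the stage RQE, and the stage value equals $V_i^{\pi^*,p^*}(s)$. Substituting into \eqref{eq:Q_function} gives $\sT\bQ=\bQ$.

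\textbf{Main obstacle.} The delicate step is the third: passing from stage-wise optimality to optimality against arbitrary Markov deviations. The adversary's maximization is nested inside the deviation, so the comparison has to be made through the monotone operator $G$ rather than a plain policy-evaluation identity.
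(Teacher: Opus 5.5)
Your argument is correct, but after the first step it takes a genuinely different route from the paper. The identification $\bQ^*=\bQ^{\pi^*,p^*}$ via the contraction $\sT_{z^*}$ is exactly what the paper does.

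For the deviation step, the paper freezes the adversaries at $p^*$ and invokes single-agent optimality of $\pi_i^*$ for the operator $\sT_{*,\pi^*_{-i},p^*}$. That relies on the stage saddle-point property, namely that $\pi_i^*$ best-responds to $p_i^*$. The paper also leaves implicit the identity $f_i(\pi_i^*,\pi_{-i}^*;s)=V_i^{\pi^*,p^*}(s)$, which you prove explicitly. You instead keep the adversary's maximization inside and compare through its Bellman optimality operator $G$ for the deviating $\pi_i$. As a result you only need the min--max stage optimality of $\pi_i^*$, not the saddle point.

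For the converse, the paper builds more machinery. It proves a Shapley-type minimax theorem for each player--adversary pair in the Markov game, and establishes a correspondence between Markov RQE and Nash equilibria of the 4-player Markov game. It then imports Lemma 10 of Hu and Wellman, which is stated for standard unregularized stochastic games, to pass to stage Nash equilibria before applying \Cref{prop:equilibrium_2to4_player}. Your one-state deviation with the sub-solution inequality $G_{\pi_i'}(V^*)\le V^*$, strict at $s_0$, reaches stage optimality directly and is self-contained. What the paper's route buys in exchange is the Markov-level minimax theorem and the 4-player Nash/RQE correspondence as by-products.

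One small wording fix in the converse. What you actually show is that $\pi^*(\cdot|s)$ satisfies the normal-form RQE definition, with $p_i^*(\cdot|s)$ the adversary's argmax, and together with uniqueness that is all you need. The 4-player stage Nash property then follows from the second part of \Cref{prop:equilibrium_2to4_player}, not the other way around.
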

\Cref{prop:RQE_iff_Bellman_fixed_point} suggests that solving an RQE and its corresponding $Q$ functions can be reduced to finding a fixed point of the Bellman optimality operator $\sT$. This, combined with the contraction property of $\sT$ shown in \Cref{prop:contraction_monotonicity}, gives us a practical way of solving RQE in Markov games.

We note that the first statement in \Cref{prop:RQE_iff_Bellman_fixed_point} is stated in Proposition 4.3 (under stronger assumptions) of \citet{zhang2025convergent} but not formally proved. \Cref{prop:RQE_iff_Bellman_fixed_point} generalizes that result by stating that the policy $\pi^*$ is an RQE if and only if its associated $\bQ$ function is a fixed point of the Bellman optimality operator $\sT$.

\section{Results for Normal-form Games}\label{sec:normal_form_games}
In this section, we provide our results for normal-form games. We first provide a result suggesting uniqueness and Lipschitz continuity (with respect to payoff matrices) of RQE under $\lambda$-monotonicity (\Cref{def:monotonicity}) of the 4-player game, and then provide conditions for the game to be $\lambda$-monotone for some $\lambda$. The results presented in this section generalize those in \citet{zhang2025convergent}.

Recall the 4-player game view and the objective functions \eqref{eq:intro_4player_joint}, let $z=(\pi,p)=(\pi_1,\pi_2,p_1,p_2)$ denote the joint strategy of all players (2 original players and 2 adversaries), the gradient operator of the 4-player game is:
\begin{equation}\label{eq:4player_gradient_operator}
    F(z)=\begin{bmatrix}
        \nabla_{\pi_1} J_1\\
        \nabla_{\pi_2} J_2\\
        \nabla_{p_1} \bar{J}_1\\
        \nabla_{p_2} \bar{J}_2
    \end{bmatrix}=\begin{bmatrix}
        -R_1p_1+\epsilon_1\nabla\nu_1(\pi_1)\\
        -R_2p_2+\epsilon_2\nabla\nu_2(\pi_2)\\
        R_1^T\pi_1+\frac{1}{\tau_1}\nabla_{p} D_1(p_1,\pi_2)\\
        R_2^T\pi_2+\frac{1}{\tau_2}\nabla_{p} D_2(p_2,\pi_1)
    \end{bmatrix}
\end{equation}

Our first result, stated in \Cref{thm:RQE_property_4player_monotonicity}, captures uniqueness and Lipschitz continuity of RQE with respect to the payoff matrices, whose proof can be found in \Cref{app:proof_RQE_property_4player_monotonicity}.
\begin{theorem}\label{thm:RQE_property_4player_monotonicity}
    Suppose $\lambda>0$ can be written as $\lambda=(\lambda_1\ind_{|\sA_1|},\lambda_2\ind_{|\sA_2|},\lambda_1\ind_{|\sA_2|},\lambda_2\ind_{|\sA_1|})^T$, we have:
    \begin{enumerate}
        \item If the 4-player game \eqref{eq:intro_4player_joint} is $\lambda$-strictly monotone, the RQE of the original two-player game is unique.
        \item If the 4-player game is $(\mu,\lambda)$-strongly monotone, for two different pairs of payoff matrices $\bR$ and $\bR'$, their corresponding Nash equilibria $z^*=(\pi_1^*,\pi_2^*,p_1^*,p_2^*)$ and $z^\dagger=(\pi_1^\dagger,\pi_2^\dagger,p_1^\dagger,p_2^\dagger)$ satisfy:
        \begin{equation*}
            \|z^*-z^\dagger\|_2\leq \frac{2\|\lambda\|_\infty \left(\sqrt{|\sA_1|}+\sqrt{|\sA_2|}\right) }{\mu}\|\bR-\bR'\|_{\max}.
        \end{equation*}
        As a result, the corresponding RQEs of the original 2-player game satisfy:
        \begin{equation*}
            \|\pi^*-\pi^\dagger\|_2\leq \frac{2\|\lambda\|_\infty \left(\sqrt{|\sA_1|}+\sqrt{|\sA_2|}\right) }{\mu}\|\bR-\bR'\|_{\max}.
        \end{equation*}
    \end{enumerate}
\end{theorem}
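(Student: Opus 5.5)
The plan is to work throughout with the variational-inequality characterization of Nash equilibria of the 4-player game. Because each $J_i$ is convex in $\pi_i$ (through $\nu_i$) and each $\bar J_i$ is convex in $p_i$ (through $D_i$), $z^*$ is a Nash equilibrium iff $\langle F_k(z^*), z_k-z_k^*\rangle\ge 0$ for every block $k$ and every $z_k$ in that block's simplex. Multiplying block $k$ by its scalar weight and summing gives $\langle F(z^*), z-z^*\rangle_\lambda\ge 0$ for all $z\in\sZ$. The block-constant form of $\lambda$ matters exactly here: within a block the weight is a scalar, so the block inequalities can be rescaled and added.

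For \textbf{part 1}, let $z^*$ and $z'$ be two Nash equilibria. Summing the weighted VI at $z^*$ tested with $z'$ and at $z'$ tested with $z^*$ yields $\langle z^*-z', F(z^*)-F(z')\rangle_\lambda\le 0$. Strict $\lambda$-monotonicity then forces $z^*=z'$. Existence of a Nash equilibrium follows from continuity and compactness (Rosen / Kakutani). Next, suppose $\pi^*$ and $\pi^\dagger$ are two RQE. By \Cref{prop:equilibrium_2to4_player}, each extends to a Nash equilibrium $(\pi,p)$ of the 4-player game, with $p$ the adversary best response. Uniqueness of the Nash equilibrium then gives $\pi^*=\pi^\dagger$.

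For \textbf{part 2}, write $F$ and $F'$ for the operators under $\bR$ and $\bR'$, with equilibria $z^*$ and $z^\dagger$. Test the VI for $F$ at $z^*$ with $z^\dagger$, and the VI for $F'$ at $z^\dagger$ with $z^*$. Adding gives
\[
\langle z^*-z^\dagger, F(z^*)-F'(z^\dagger)\rangle_\lambda\le 0 .
\]
Inserting $\pm F(z^\dagger)$, strong monotonicity of $F$ gives
\[
\mu\|z^*-z^\dagger\|_2^2\le \langle z^*-z^\dagger, F'(z^\dagger)-F(z^\dagger)\rangle_\lambda \le \|\lambda\|_\infty\|z^*-z^\dagger\|_2\,\|F'(z^\dagger)-F(z^\dagger)\|_2 .
\]
The regularizer and penalty terms cancel in $F'-F$, leaving the blocks $(R_1-R_1')p_1$, $(R_2-R_2')p_2$, $(R_1-R_1')^T\pi_1$ and $(R_2-R_2')^T\pi_2$.

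Each such block is a matrix difference applied to a probability vector. Every entry is then a convex combination of entries of the difference, hence bounded by $\|\bR-\bR'\|_{\max}$. A block of length $m$ therefore has $\ell_2$ norm at most $\sqrt m\,\|\bR-\bR'\|_{\max}$. The block lengths are $|\sA_1|,|\sA_2|,|\sA_2|,|\sA_1|$, so the total norm is at most $\sqrt{2|\sA_1|+2|\sA_2|}\,\|\bR-\bR'\|_{\max}\le \sqrt2(\sqrt{|\sA_1|}+\sqrt{|\sA_2|})\|\bR-\bR'\|_{\max}$, which lies within the claimed factor $2$. Dividing by $\|z^*-z^\dagger\|_2$ (the bound is trivial if this is zero) gives the bound on $z$. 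The bound on $\pi$ follows because $\pi$ is a sub-vector of $z$.

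The main obstacle is the weighted-VI step. The point to check carefully is that the per-block optimality conditions combine with the weights $\lambda$ and not with some non-diagonal preconditioner. This works precisely because $\lambda$ is constant on each player's block; the specific pairing of weights $\lambda_1$ on $(\pi_1,p_1)$ and $\lambda_2$ on $(\pi_2,p_2)$ is not needed for the VI argument itself. Part 1 also needs the existence of an equilibrium, which relies on convexity and continuity of $\nu_i$ and $D_i$.
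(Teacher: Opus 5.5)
Your proposal is correct and follows essentially the same route as the paper. Both derive the $\lambda$-weighted variational inequality from the per-block first-order (KKT) conditions and add two such inequalities. Strict monotonicity then gives uniqueness, and strong monotonicity with the weighted Cauchy--Schwarz inequality and a blockwise bound on $F(\cdot;\bR)-F(\cdot;\bR')$ gives the Lipschitz estimate. The differences are only cosmetic:
\begin{itemize}
\item You split at $z^\dagger$ using monotonicity of $F(\cdot;\bR)$, where the paper splits at $z^*$ using $F(\cdot;\bR')$.
\item You obtain the sharper constant $\sqrt{2}$ in place of $2$.
\item You spell out the step from Nash-equilibrium uniqueness to RQE uniqueness via \Cref{prop:equilibrium_2to4_player}, which the paper leaves implicit.
\end{itemize}
Your existence remark is not needed for the claim, and Rosen's continuity hypothesis would in any case fail for log-barrier/KL regularizers, which blow up at the boundary of the simplex.
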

\Cref{thm:RQE_property_4player_monotonicity} generalizes Proposition 3.2 and Theorem 3.3 in \citet{zhang2025convergent} through introducing a weight vector $\lambda$ that weighs the components of gradient operator $F$ for player $i$ by $\lambda_i$. When taking $\lambda_1=\lambda_2=1$, we recover the strong monotonicity condition used by \citet{zhang2025convergent} that controls the distance between $z^*$ and $z^\dagger$ in $L_2$ norm.

Now we give the conditions for the game to be $\lambda$-strictly / $(\mu,\lambda)$-strongly monotone:
\begin{theorem}\label{thm:monotonicity_condition}
    We have the following regarding the monotonicity conditions for the $4$-player game:
    \begin{enumerate}
        \item Let $M_i(\lambda,z)$ denote the following matrix:
        \begin{equation*}
            \begin{bmatrix}
                2\lambda_i\epsilon_i\nabla^2\nu_i(z) & \frac{\lambda_{-i}}{\tau_{-i}}\nabla^2_{p\pi}D_{-i}(z)  \\
                \frac{\lambda_{-i}}{\tau_{-i}}\nabla^2_{p\pi}D_{-i}(z) & 2\frac{\lambda_{-i}}{\tau_{-i}}\nabla^2_p D_{-i}(z)
            \end{bmatrix},
        \end{equation*}
        then the game \eqref{eq:intro_4player_joint} is $\lambda$-strictly monotone if for all $z$, $M_i(\lambda,z)\succeq 0, i\in\{1,2\}$, and is $(\mu,\lambda)$-strongly monotone if and only if $\forall z$, $M_i(\lambda,z)\succeq 2\mu I, i\in\{1,2\}$.
        \item If $\nu_i(\cdot)$ are log-barrier functions and $D_i(\cdot,\cdot)$ are KL-divergences, then the game is $(\mu,\lambda)$-strongly monotone for some $\mu,\lambda>0$ as long as $16\epsilon_1\epsilon_2\tau_1\tau_2>1$.
        \item If $\nu_i(\cdot)$ are negative entropy and $D_i(\cdot,\cdot)$ are reverse KL-divergences, then the game is $\lambda$-strictly monotone as long as $16\epsilon_1\epsilon_2\tau_1\tau_2>1$.
    \end{enumerate}
\end{theorem}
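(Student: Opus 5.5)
The plan is to reduce $\lambda$-monotonicity of the gradient operator $F$ in \eqref{eq:4player_gradient_operator} to a pointwise condition on its Jacobian, and then to check that condition explicitly for the two concrete choices of $(\nu_i,D_i)$.

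\textbf{Part 1 (Jacobian reduction).} Let $\Lambda=\Diag(\lambda)$ with $\lambda$ structured as in \Cref{thm:RQE_property_4player_monotonicity}. For $z,z'\in\sZ$ set $d=z-z'$ and $z_t=z'+td$. By the fundamental theorem of calculus,
\begin{equation*}
\langle d,F(z)-F(z')\rangle_\lambda=\int_0^1 d^T\Lambda\,\nabla F(z_t)\,d\,\diff t=\frac12\int_0^1 d^T\big(\Lambda\nabla F(z_t)+\nabla F(z_t)^T\Lambda\big)d\,\diff t .
\end{equation*}
Next I compute the blocks of $\nabla F$.
\begin{itemize}
\item The $(\pi_1,p_1)$ block is $-R_1$ and the $(p_1,\pi_1)$ block is $R_1^T$. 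Both rows carry the same weight $\lambda_1$, which is exactly why $\lambda$ must have this structure. Hence these payoff terms cancel in the symmetrization, and likewise for $R_2$.
\item What survives are the diagonal blocks $\lambda_i\epsilon_i\nabla^2\nu_i$ on $\pi_i$ and $\frac{\lambda_{-i}}{\tau_{-i}}\nabla_p^2D_{-i}$ on $p_{-i}$.
\item There is also the cross block $\frac{\lambda_{-i}}{\tau_{-i}}\nabla^2_{p\pi}D_{-i}$ coupling $p_{-i}$ with $\pi_i$, since $D_{-i}(p_{-i},\pi_i)$ depends on $\pi_i$.
\end{itemize}
So the symmetrized matrix is block diagonal over the pairs $(\pi_i,p_{-i})$, and each block equals $M_i(\lambda,z_t)$. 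Writing $d_i=(d_{\pi_i},d_{p_{-i}})$, this gives
\begin{equation*}
\langle d,F(z)-F(z')\rangle_\lambda=\frac12\int_0^1\sum_i d_i^TM_i(\lambda,z_t)\,d_i\,\diff t .
\end{equation*}
Sufficiency of $M_i\succeq 2\mu I$ for strong monotonicity is then immediate. For necessity, take $z'=z+hv$ with $h\to0$, divide by $h^2$, and use continuity of the Hessians. For strictness, $M_i\succeq0$ gives monotonicity; to get strict inequality I will use that in the concrete cases $M_i$ is positive definite on the relative interior, where the equilibria and almost all of the segment $z_t$ lie. I expect this boundary/strictness subtlety to need the most care.

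\textbf{Part 2 (log-barrier and KL).} The relevant derivatives are
\begin{equation*}
\nabla^2\nu_i=\Diag(1/\pi^2),\qquad \nabla_p^2D=\Diag(1/p),\qquad \nabla^2_{p\pi}D=-\Diag(1/\pi).
\end{equation*}
Thus $M_i$ decouples coordinatewise into $2\times2$ forms. With $x=1/\pi_j\geq1$, $y=1/p_j\geq1$, $a=2\lambda_i\epsilon_i$ and $c=\lambda_{-i}/\tau_{-i}$, each form is
\begin{equation*}
a x^2u^2-2cxuv+2cyv^2 .
\end{equation*}
I look for $\delta>0$ such that $(a-\delta)x^2u^2-2cxuv+(2c-\delta)yv^2\ge0$. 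This holds if $(a-\delta)(2c-\delta)\ge c^2$, since $y\ge1$. Given that, the form is at least $\delta(x^2u^2+yv^2)\ge\delta(u^2+v^2)$, which yields a uniform $\mu=\delta/2$. Such a $\delta$ exists iff $2ac>c^2$, i.e. $4\lambda_i\epsilon_i\tau_{-i}>\lambda_{-i}$ for both $i$. That means $\lambda_2/\lambda_1\in\big(1/(4\epsilon_2\tau_1),\,4\epsilon_1\tau_2\big)$, and this interval is nonempty exactly when $16\epsilon_1\epsilon_2\tau_1\tau_2>1$.

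\textbf{Part 3 (entropy and reverse KL).} With $D(p,\pi)=\sum\pi\log(\pi/p)$, the derivatives are
\begin{equation*}
\nabla^2\nu_i=\Diag(1/\pi),\qquad \nabla_p^2D=\Diag(\pi/p^2),\qquad \nabla^2_{p\pi}D=-\Diag(1/p).
\end{equation*}
Each coordinate block $\begin{bmatrix}2\lambda_i\epsilon_i/\pi&-c/p\\-c/p&2c\pi/p^2\end{bmatrix}$ has determinant $(4\lambda_i\epsilon_ic-c^2)/p^2$. This is positive under the same ratio condition, so $M_i\succ0$ on the interior. The eigenvalues degenerate as $\pi\to0$ (the entry $\pi/p^2$ can vanish), so only strict, not strong, monotonicity follows. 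The integral above is positive for $d\ne0$ because the set of $t$ where $z_t$ lies in the interior has positive measure.
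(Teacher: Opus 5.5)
Your proposal is correct and follows the paper's route. You symmetrize $\Lambda\nabla F$ (your integral identity is exactly the paper's Jacobian lemma), observe that the payoff blocks cancel so the matrix splits into $M_1$ and $M_2$, check the coordinatewise $2\times 2$ blocks for each regularizer pair, and choose $\lambda_2/\lambda_1\in\bigl(1/(4\epsilon_2\tau_1),\,4\epsilon_1\tau_2\bigr)$. Your slack-$\delta$ argument in the log-barrier/KL case is a cleaner substitute for the paper's Schur-complement and explicit-eigenvalue computation. You are also right that the strictness clause of Part 1 needs $M_i\succ 0$, as in the paper's auxiliary lemma, since $M_i\succeq 0$ alone yields only monotonicity.
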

The proof is deferred to \Cref{app:proof_thm_monotonicity_condition}.
As stated in \Cref{thm:monotonicity_condition}, whether the 4-player game is monotone or not does not depend on the payoff matrices of the game, but only depends on the Hessians of the regularizers $\nu_i,D_i$. More specifically, if the regularizers are taken to be KL/log-barrier or reverse KL/negative entropy, the condition for monotonicity simplifies to $16\epsilon_1\epsilon_2\tau_1\tau_2>1$. Notice that when the regularizers are reverse KL/negative entropy, the game can only be strictly monotone (but not strongly monotone) because the Hessian of reverse KL may not be strongly convex even in the interior of the simplex.
When the regularizer pairs are either KL/log-barrier or reverse KL/negative entropy, we compare the regions where the uniqueness of RQE is guaranteed given by \Cref{thm:monotonicity_condition} to that in \citet{mazumdar2024tractableequilibriumcomputationmarkov} and \citet{zhang2025convergent} in \Cref{fig:monotonicity_condition_comparison}:

\begin{figure}[htb]
    \centering
    \includegraphics[width=0.85\linewidth]{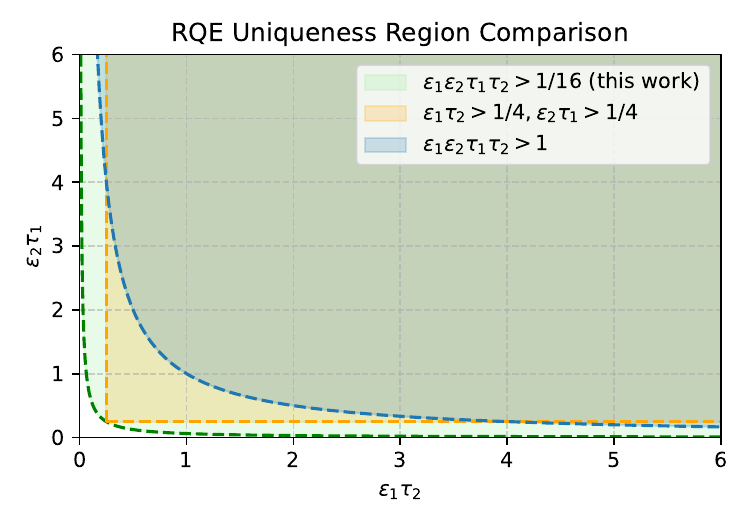}
    \caption{RQE uniqueness region for KL/log-barrier or reverse KL/negative entropy. Green captures the region indicated by Theorems \ref{thm:RQE_property_4player_monotonicity} and \ref{thm:monotonicity_condition}. Orange captures that in \citet{zhang2025convergent} and blue captures that in \citet{mazumdar2024tractableequilibriumcomputationmarkov}.}
    \label{fig:monotonicity_condition_comparison}
\end{figure}
As shown in \Cref{fig:monotonicity_condition_comparison}, if we fix $\epsilon_{-i}$ and $\tau_{-i}$, the minimum $\tau_i$ scales inversely proportionally in $\epsilon_i$. This can be seen as a smooth interpolation between two computationally tractable cases: $\tau_i\rightarrow \infty$, where the optimal policy $\pi_i$ is decoupled with $\pi_{-i}$ as $D_i(p_i,\pi_{-i})/\tau_i\rightarrow 0$, and $\epsilon_i\rightarrow \infty$, where the term $\epsilon_i\nu_i(\pi_i)$ dominates and therefore $\pi_i^*$ is fixed.
Compared with prior work, Theorems~\ref{thm:RQE_property_4player_monotonicity} and \ref{thm:monotonicity_condition} require strictly weaker conditions on the levels of risk-aversion and bounded rationality, as \citet{mazumdar2024tractableequilibriumcomputationmarkov} essentially takes a \textit{social convexity} \citep{even-dar2009concave} approach, which is stronger than the $\lambda$-monotonicity that we use in \Cref{def:monotonicity}. The conditions provided in \citet{zhang2025convergent} didn't introduce $\lambda$ and are equivalent to setting $\lambda=1$ in our setting.

\section{Results for Markov Games and MARL}\label{sec:MG_and_MARL}
In this section we shift our focus to the harder problem of discounted general-sum Markov games. We first prove the contraction property of the Bellman operator, and then use it to design a provably convergent coupled iteration rule with faster actor and slower critic. Finally, we propose an actor-critic algorithm that provably learns the RQE in finite sample through interaction with the environment.
Throughout this section, we assume:
\begin{assumption}\label{assumption:mu_lambda_strong_monotonicity}
    Regarding the 4-player stage game \eqref{eq:4player_stage_game}, there exists $\lambda=(\lambda_1\ind_{|\sA_1|},\lambda_2\ind_{|\sA_2|},\lambda_1\ind_{|\sA_2|},\lambda_2\ind_{|\sA_1|})^T>0$ for it to be $(\mu,\lambda)$-strongly monotone for every $s\in \sS$.
\end{assumption}
Notice that the monotonicity of a stage game depends \textit{only on the level of risk-aversion and bounded rationality} of the players, not on the property of the original MG. Additionally, since the condition in \citet{mazumdar2024tractableequilibriumcomputationmarkov} is stronger than ours, \Cref{assumption:mu_lambda_strong_monotonicity} is realistic, and it captures real-world human behaviors as shown in \citet{mazumdar2024tractableequilibriumcomputationmarkov}.

\subsection{Contraction of Bellman Operator}
In this section we adapt and generalize Theorem 4.2 in \citet{zhang2025convergent} to our setting as follows, whose detailed version and proof can be found in \Cref{app:restate_proof_contraction_monotonicity}:
\begin{proposition}\label{prop:contraction_monotonicity}
    Under \Cref{assumption:mu_lambda_strong_monotonicity}, when either of the following cases hold: (i) The regularizers $D_i(\cdot,\cdot)$ are $L_D$-Lipschitz metrics that satisfy triangle inequality; (ii) If $D_i(\cdot,\cdot)$ are KL-divergence and $\nu_i(\cdot)$ are log-barrier functions, and the $Q$ functions are bounded, if $\gamma$ is smaller than some threshold, there exists $\gamma_0<1$ such that the Bellman optimality operator $\sT$ is a $\gamma_0$-contraction mapping.
\end{proposition}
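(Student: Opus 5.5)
Since $(\sT\bQ)_i(s,\ba)-(\sT\bQ')_i(s,\ba)=\gamma\,\E_{s'\sim P(\cdot|s,\ba)}\big[\texttt{RQE}_i(-\bQ(s',\cdot))-\texttt{RQE}_i(-\bQ'(s',\cdot))\big]$, the plan is to reduce the claim to a Lipschitz bound on the stage-game value map:
\begin{equation*}
|\texttt{RQE}_i(-\bQ(s',\cdot))-\texttt{RQE}_i(-\bQ'(s',\cdot))|\le L\,\|\bQ(s',\cdot)-\bQ'(s',\cdot)\|_{\max},
\end{equation*}
with $L$ independent of $s'$. This gives $\|\sT\bQ-\sT\bQ'\|_{\max}\le \gamma L\|\bQ-\bQ'\|_{\max}$, so $\sT$ is a $\gamma_0$-contraction with $\gamma_0:=\gamma L$ whenever $\gamma<1/L$; this is the threshold on $\gamma$ in the statement. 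By \Cref{assumption:mu_lambda_strong_monotonicity} and \Cref{thm:RQE_property_4player_monotonicity} part 1, each stage game has a unique RQE, so $\texttt{RQE}_i$ is well defined. Moreover, by part 2 the equilibria $z^*$ for $\bR=-\bQ(s',\cdot)$ and $z^\dagger$ for $\bR'=-\bQ'(s',\cdot)$ satisfy $\|z^*-z^\dagger\|_2\le C\|\bR-\bR'\|_{\max}$, where $C=2\|\lambda\|_\infty(\sqrt{|\sA_1|}+\sqrt{|\sA_2|})/\mu$.

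Next, write $\texttt{RQE}_i(\bR)=J_i(\pi_i^*,\pi_{-i}^*,p_i^*;R_i)$, where $J_i$ is linear in $R_i$ and smooth in $z$. Telescope the difference:
\begin{equation*}
J_i(z^*;R_i)-J_i(z^\dagger;R_i')=\big[J_i(z^*;R_i)-J_i(z^*;R_i')\big]+\big[J_i(z^*;R_i')-J_i(z^\dagger;R_i')\big].
\end{equation*}
The first bracket equals $-\pi_i^{*T}(R_i-R_i')p_i^*$, so its absolute value is at most $\|\bR-\bR'\|_{\max}$, because $\pi_i^*,p_i^*$ lie in simplices. The second bracket needs care, since $J_i$ is not simply Lipschitz in $z$. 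I would avoid a crude Lipschitz bound by using the saddle structure. Since $\pi_i^*$ minimizes $f_i(\cdot,\pi^*_{-i})$ and $p_i^*$ is the inner maximizer, a min–max sandwich gives
\begin{equation*}
J_i(\pi_i^*,\pi_{-i}^\dagger,p_i^\dagger;R_i')\;\lesssim\;\text{both values}\;\lesssim\;J_i(\pi_i^\dagger,\pi_{-i}^*,p_i^*;R_i').
\end{equation*}
After this step, only the dependence on the opponent's policy $\pi_{-i}$, which enters solely through $D_i(p_i,\pi_{-i})/\tau_i$, remains to be controlled.

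The two cases in the statement differ only in how this $D_i$ term is bounded.
\begin{itemize}
\item[(i)] If $D_i$ is an $L_D$-Lipschitz metric satisfying the triangle inequality, then $|D_i(p,\pi_{-i}^*)-D_i(p,\pi_{-i}^\dagger)|\le D_i(\pi_{-i}^*,\pi_{-i}^\dagger)\le L_D\|\pi^*-\pi^\dagger\|_2$. Combining with \Cref{thm:RQE_property_4player_monotonicity} gives $L=1+L_D C/\tau_i$.
\item[(ii)] For KL-divergence with log-barrier regularizers, KL is not globally Lipschitz near the simplex boundary, and this is where I expect the main obstacle. The idea is to use bounded $Q$ (say $\|\bQ\|_{\max}\le \frac{1+c}{1-\gamma}$, a ball that $\sT$ maps into itself) to show that all stage equilibria stay in a compact interior region $\{\pi:\pi(a)\ge\delta\}$. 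The first-order conditions help here. The log-barrier forces $\pi_i$ away from zero by a margin that depends on $\epsilon_i$ and the payoff bound. The KL best response $p_i\propto\pi_{-i}\exp(-\tau_i\pi_i^TR_i)$ then inherits a lower bound from $\pi_{-i}$. On this region, $\nabla_{\pi}D_i$ is bounded by $1/\delta$, so $D_i$ is Lipschitz with a constant $L_D(\delta)$, and the argument of case (i) applies with this constant.
\end{itemize}
Either way $L$ is finite, so any $\gamma<1/L$ yields $\gamma_0=\gamma L<1$.
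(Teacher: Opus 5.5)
Your proposal is correct and follows essentially the paper's route. The paper likewise bounds $\texttt{RQE}_i(\bR)-\texttt{RQE}_i(\bR')$ using two best-response inequalities: $\pi_i^*$ against $p_i^*$, which lets $\pi_i^\dagger$ replace $\pi_i^*$, and $p_i^\dagger$ against $\pi_i^\dagger$, which lets $p_i^*$ replace $p_i^\dagger$. This leaves only $(\pi_i^\dagger)^T(R_i'-R_i)p_i^*$ plus a difference of $D_i$ in its second argument, which the paper controls by the triangle inequality in case (i) and by the log-barrier lower bound on $\pi$ (\Cref{lem:policies_bounded_below}) in case (ii), before invoking \Cref{thm:RQE_property_4player_monotonicity}; the one minor difference is that in case (ii) you restrict $\sT$ to an invariant ball, whereas the paper lets the contraction factor depend on $\max\{\textup{sp}(\bQ),\textup{sp}(\bQ')\}$.
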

\Cref{prop:contraction_monotonicity} generalizes Theorem 4.2 in \citet{zhang2025convergent} in two ways: First, it assumes $(\mu,\lambda)$-strong monotonicity, while the assumption is stronger in \citet{zhang2025convergent}, requiring $\lambda=1$. Second, it covers the case for KL/log-barrier regularizer pair, while Theorem 4.2 in \citet{zhang2025convergent} only works when the regularizer $D_i(\cdot,\cdot)$ are Lipschitz metrics.

\subsection{Coupled Iteration Rule for Markov Games}
In principle, given the contraction property of the Bellman operator $\sT$, if we start from some bounded $Q$ function $\bQ_0$ and iteratively apply $\sT$ to it, or more generally conduct the value iteration rule $\bQ_{t+1}=(1-\alpha_t)\bQ_t+\alpha_t\sT\bQ_t$ as stated in Corollary 4.4 in \citet{zhang2025convergent} for some step size sequence $\{\alpha_t\}_{t=0}^T>0$, we know that $\bQ_t$ converges to the unique fixed point $\bQ^*$ of $\sT$ and as suggested by \Cref{prop:RQE_iff_Bellman_fixed_point}, the corresponding policies $\pi^*$ obtained by \eqref{eq:RQE_given_Q_function} is an RQE of the Markov game. However, applying $\sT$ requires an oracle of $\texttt{RQE}_i(\cdot)$ that computes the RQE value of agent $i$ given some $Q$ function $\bQ$. This prevents us from directly conducting value iteration, and we have to design another practical algorithm that converges to RQE without directly computing $\texttt{RQE}_i(\cdot)$. To design that algorithm, we make a smoothness assumption on the regularizers:
\begin{assumption}\label{assumption:regularizers_smoothness}
    The regularizers $D_i(\cdot,\cdot)$ and $\nu_i(\cdot)$ are both $S$-smooth functions for some $S>0$.
\end{assumption}

\Cref{assumption:regularizers_smoothness} is a standard assumption in optimization theory and RL. When the regularizers are KL/log-barrier, although they are not smooth on the entire simplex, we can modify the projection step in the $z$ update in \eqref{eq:iteration_rule} to project onto a subset of $\sZ$ where policies are uniformly lower-bounded, on which both KL and log-barrier are smooth.

Under Assumptions \ref{assumption:mu_lambda_strong_monotonicity} and \ref{assumption:regularizers_smoothness}, if we have a fixed $\bQ$ function, iteratively applying preconditioned GD $z_{t+1}=\Proj_\sZ\left(z_t-\eta \Lambda F(z_t)\right)$ on the joint policy $z$ of all players, it holds that $z$ converges to the unique Nash equilibrium of the 4-player game, which by \Cref{prop:equilibrium_2to4_player} has its $\pi$ component converging to the RQE of the two-player game, corresponding to the stage game given $\bQ$. This can be seen as an approximation of $\texttt{RQE}_i(\cdot)$ with some error.

In light of this rationale, we propose a coupled iteration process regarding the 4-player game with two original agents and two adversaries, that provably converges to the RQE of the Markov game as follows:
\begin{equation}\label{eq:iteration_rule}\begin{aligned}
    z_{t+1}(\cdot|s)\leftarrow& \Proj_\sZ\left(z_t(\cdot|s)-\beta_t \Lambda F(z_t;-\bQ_t)(s)\right), \forall s;\\
    \bQ_{t+1}\leftarrow & (1-\alpha_t)\bQ_t+\alpha_t\sT_{z_{t+1}} \bQ_t.
\end{aligned}\end{equation}
where we initialize $\bQ_0=0$ and $z_0$ to be uniform policies. Here $\alpha_t,\beta_t$ are step sizes (with $\alpha_t\ll\beta_t$) and $\Proj_\sZ$ is the projection operator that projects each component $(\pi_{1,t},\pi_{2,t}, p_{1,t},p_{2,t})$ onto their respective simplexes. In each iterate, we first conduct a projected preconditioned GD step to obtain $z_{t+1}$ using the current gradient operator $F(z_t;-\bQ_t)$, followed by a soft $Q$-update to drive $Q_{t+1}$ closer to the true $Q$ function of $z_{t+1}$. 
Iteration rule \eqref{eq:iteration_rule} can be interpreted as follows: If we conduct many $z$ gradient steps before one $Q$-iteration step, it recovers the risk-averse version of Nash $Q$-Learning \citep{hu2003nash} $\bQ_{t+1}\leftarrow (1-\alpha_t)\bQ_t+\alpha_t\sT \bQ_t$ with an approximation oracle of RQE instead of Nash equilibrium. If we perform many $Q$-iteration steps before one $z$ gradient step, it recovers policy gradient methods.

The convergence guarantee for the iteration rule \eqref{eq:iteration_rule} is stated in \Cref{thm:iter_convergence_thm}, whose detailed version and proof can be found in \Cref{app:proof_iter_convergence_thm}.

\begin{theorem}\label{thm:iter_convergence_thm}
    Under Assumptions \ref{assumption:mu_lambda_strong_monotonicity} and \ref{assumption:regularizers_smoothness} and assume $\sT$ is a $\gamma_0$-contraction mapping, let $z^*$ be the Nash equilibrium of the 4-player game, if the step sizes satisfy $\alpha_t\ll\beta_t\ll 1$, the iteration rule \eqref{eq:iteration_rule} satisfies:
    \begin{enumerate}
        \item For constant step size $\alpha_t=\alpha,\beta_t=\beta$, there exists a constant $D_1$ such that $\|z_t(\cdot|s)-z^*(\cdot|s)\|_2\leq D_1\left(1-\frac{1-\gamma_0}{2}\alpha \right)^t,\forall s\in \sS$.
        \item For diminishing step size $\alpha_t=\frac{\alpha}{t+h},\beta_t=\frac{\beta}{t+h}$, there exists a constant $D_2$ such that $\|z_t(\cdot|s)-z^*(\cdot|s)\|_2\leq D_2\left(\frac{h}{h+t+1} \right)^{\frac{1-\gamma_0}{2}\alpha},\forall s\in \sS$.
    \end{enumerate}
    As a result, the $\pi$ component of $z$ converges to the RQE of the original 2-player game at the same rate.
\end{theorem}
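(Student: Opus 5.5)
We track two error quantities along iteration \eqref{eq:iteration_rule} and show that they satisfy a coupled system of Lyapunov drift inequalities that contracts jointly. The critic error is $e_t := \|\bQ_t-\bQ^*\|_\infty$, where $\bQ^*$ is the unique fixed point of $\sT$. The actor error is $d_t := \max_s \|z_t(\cdot|s)-z^*_{\bQ_t}(\cdot|s)\|_\Lambda$, where $z^*_{\bQ}(\cdot|s)$ is the unique Nash equilibrium of the 4-player stage game with payoff $-\bQ(s,\cdot)$, which exists by \Cref{assumption:mu_lambda_strong_monotonicity} and \Cref{thm:RQE_property_4player_monotonicity}. We work in the norm $\|x\|_\Lambda^2=\langle x,\Lambda^{-1}x\rangle$ so that the preconditioned step is a standard projected gradient step in a weighted geometry. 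The final claim about $\|z_t-z^*\|_2$ then follows from the triangle inequality $\|z_t-z^*\|\le d_t+\|z^*_{\bQ_t}-z^*_{\bQ^*}\|$. The second term is at most $C\,e_t$ by the Lipschitz bound in part 2 of \Cref{thm:RQE_property_4player_monotonicity}, and the $\pi$-component statement is immediate by \Cref{prop:equilibrium_2to4_player}.

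\textbf{Actor drift.} For fixed $\bQ_t$, strong monotonicity (constant $\mu$) and smoothness (\Cref{assumption:regularizers_smoothness} together with boundedness of $\bQ_t$, which holds by induction) imply that projected preconditioned GD is a contraction toward $z^*_{\bQ_t}$:
\[
\|z_{t+1}-z^*_{\bQ_t}\|_\Lambda\le (1-c\mu\beta_t)\,\|z_t-z^*_{\bQ_t}\|_\Lambda .
\]
This uses the standard expansion together with nonexpansiveness of the projection in the $\Lambda$-weighted norm; the projection is blockwise onto simplices and $\Lambda$ is constant on each block, so nonexpansiveness holds. Switching the reference point to $z^*_{\bQ_{t+1}}$ costs $\|z^*_{\bQ_{t+1}}-z^*_{\bQ_t}\|\le L\|\bQ_{t+1}-\bQ_t\|_{\max}$. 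Since $\|\bQ_{t+1}-\bQ_t\|\le \alpha_t(\|\sT_{z_{t+1}}\bQ_t-\sT\bQ_t\|+\|\sT\bQ_t-\bQ_t\|)\le \alpha_t C'(d_{t+1}+e_t)$, this gives
\[
d_{t+1}\le(1-c\mu\beta_t)\,d_t+C\alpha_t(d_{t+1}+e_t).
\]

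\textbf{Critic drift.} Write
\[
\bQ_{t+1}-\bQ^*=(1-\alpha_t)(\bQ_t-\bQ^*)+\alpha_t(\sT\bQ_t-\bQ^*)+\alpha_t(\sT_{z_{t+1}}\bQ_t-\sT\bQ_t).
\]
The middle term is bounded by $\gamma_0 e_t$ by the contraction assumption. For the last term, note that $\sT\bQ_t=\sT_{z^*_{\bQ_t}}\bQ_t$, since the stage-game value equals $J_i$ evaluated at the stage equilibrium. The evaluation operator is Lipschitz in $z$ on the bounded domain, using smoothness of $D_i$ and $\nu_i$ and the bound on $\bQ_t$, so the last term is at most $\gamma C'' d_{t+1}$. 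Hence
\[
e_{t+1}\le\bigl(1-(1-\gamma_0)\alpha_t\bigr)e_t+C''\alpha_t d_{t+1}.
\]

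\textbf{Combining, and the main obstacle.} Form the Lyapunov function $\Phi_t=e_t+\kappa d_t$ and choose $\kappa$ and the ratio $\alpha/\beta$ so that the cross terms are absorbed. The actor contracts at rate $\beta_t\gg\alpha_t$, so with $\alpha_t/\beta_t$ small enough, $\kappa C''\alpha_t$ and $C\alpha_t$ are dominated by $\mu\beta_t$ and half of the critic's contraction $(1-\gamma_0)\alpha_t$. This yields $\Phi_{t+1}\le(1-\tfrac{1-\gamma_0}{2}\alpha_t)\Phi_t$. Iterating gives geometric decay for constant steps. For $\alpha_t=\alpha/(t+h)$, the product $\prod(1-\tfrac{(1-\gamma_0)\alpha}{2(k+h)})\le (h/(t+h+1))^{(1-\gamma_0)\alpha/2}$, which is the rate in part 2. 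The main obstacle is this absorption step. It requires uniform boundedness of $\bQ_t$ and $z_t$; I would prove this by induction from $\bQ_0=0$ using $\|\sT_z\bQ\|\le \|r\|+\gamma(\|\bQ\|+B)$. It also requires careful handling of the implicit $d_{t+1}$ on the right-hand sides, which I would either move to the left-hand side or bound first via the actor inequality, and it requires verifying that the constants are independent of $t$ so that the condition $\alpha_t\ll\beta_t\ll1$ suffices.
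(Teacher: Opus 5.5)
Your architecture matches the paper's proof. You measure the actor error against the current stage equilibrium, pay for the reference-point shift via Lipschitz continuity of the stage RQE, bound the critic bias using $\sT\bQ_t=\sT_{z^*_{\bQ_t}}\bQ_t$, and absorb cross terms in a combined potential with $\beta_t/\alpha_t$ large. However, the actor contraction fails in the norm you chose. With $\|x\|_\Lambda^2=\langle x,\Lambda^{-1}x\rangle$, $x=z_t-z^*_{\bQ_t}$ and $g=F_t(z_t)-F_t(z^*_{\bQ_t})$,
\[
\|x-\beta_t\Lambda g\|_\Lambda^2=\|x\|_\Lambda^2-2\beta_t\langle x,g\rangle+\beta_t^2\langle g,\Lambda g\rangle .
\]
So the drift term is the \emph{unweighted} $\langle x,g\rangle$, whereas \Cref{assumption:mu_lambda_strong_monotonicity} only gives $\langle x,g\rangle_\lambda=\langle x,\Lambda g\rangle\ge\mu\|x\|_2^2$. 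Nonnegativity of $\langle x,g\rangle$ is monotonicity with $\lambda=\mathbf{1}$. That is the stronger condition of \citet{zhang2025convergent} that this paper is designed to relax.

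Block-constancy of $\Lambda$ does not rescue the argument. It does make your projection step valid, but the relevant coupling $\nabla^2_{p\pi}D_{-i}$ links $\pi_i$ (weight $\lambda_i$) with $p_{-i}$ (weight $\lambda_{-i}$). Here is a concrete failure: KL/log-barrier with $|\sA_1|=2$, at the uniform profile (the stage equilibrium for $\bQ=0$). There the $(\pi_1,p_2)$ block of $\nabla F+\nabla F^T$ has a negative direction tangent to $\sZ$ whenever $8\epsilon_1\tau_2<1$. Yet $\epsilon_1\tau_2=1/16$, $\epsilon_2\tau_1=2$ still satisfies $16\epsilon_1\epsilon_2\tau_1\tau_2>1$. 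For $z_t$ displaced from that equilibrium along this direction, one preconditioned step strictly increases the $\|\cdot\|_\Lambda$-distance.

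The fix is to use the Euclidean norm. The update is projected gradient descent for $G=\Lambda F_t$, which is $\mu$-strongly monotone in the standard inner product and $\|\lambda\|_\infty L_F$-Lipschitz. Euclidean projection is nonexpansive, and $z^*_{\bQ_t}$ is a fixed point of $z\mapsto\Proj_\sZ(z-\beta_t\Lambda F_t(z))$ by \Cref{lem:zstar_projection}. This gives the factor $\sqrt{1-2\beta_t\mu+\beta_t^2\|\lambda\|_\infty^2L_F^2}\le 1-\beta_t\mu/2$ for small $\beta_t$, as in \Cref{lem:z_update_decomposition}.

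A smaller point concerns the critic bias. $\|\sT_{z_{t+1}}\bQ_t-\sT\bQ_t\|_{\max}$ is controlled by $\|z_{t+1}-z^*_{\bQ_t}\|$, not by $d_{t+1}$, which is measured against $z^*_{\bQ_{t+1}}$. As the paper does, bound it directly by the one-step actor contraction $\|z_{t+1}-z^*_{\bQ_t}\|_2\le(1-\beta_t\mu/2)\,d_t$. This removes every implicit $d_{t+1}$ from both recursions and leaves an explicit $2\times 2$ linear system in $(d_t,e_t)$. Then take $\kappa$ of order $(1-\gamma_0)/L_{RQE}$ and $\beta/\alpha$ large. Your potential $\Phi_t=e_t+\kappa d_t$ contracts by $1-\tfrac{1-\gamma_0}{2}\alpha_t$, and the rest of your argument, including the diminishing-step product bound and the final triangle inequality, goes through.
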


\Cref{thm:iter_convergence_thm} suggests that when we use constant step sizes, the iterates of \eqref{eq:iteration_rule} converges to the RQE at a linear rate, and if we use diminishing step sizes $\sO(1/t)$, the convergence rate becomes sublinear. Crucially, in contrast to standard policy gradient methods, \Cref{thm:iter_convergence_thm} requires the policy step size $\beta_t$ be much larger than the $Q$ function step size $\alpha_t$ in order to use the property that $\sT$ is a contraction mapping, which is satisfied under either case in \Cref{prop:contraction_monotonicity}.

\subsection{Convergent Actor-Critic through Risk-aversion}\label{sec:MAAC_risk_averse}
In the previous part we proved that under $(\mu,\lambda)$-strong monotonicity assumption of the 4-player game, the iteration rule \eqref{eq:iteration_rule} provably converges to the RQE of the Markov game. Following this, we design an actor-critic style MARL algorithm (\Cref{alg:MAAC} in \Cref{app:alg_MAAC}) that learns the $Q$ function through interacting with the environment and provably converge to the RQE of the game. The main difference between \Cref{alg:MAAC} and \eqref{eq:iteration_rule} is that it conducts $Q$ iteration with stochastic approximation through samples instead of directly applying $\sT_{z_{t+1}}$ which requires knowing the transition matrix. Thanks to its actor-critic nature, \Cref{alg:MAAC} supports both \textit{on-policy} and \textit{off-policy} training, where the only difference is that for on-policy, we use current policies $\pi$ to sample transition data, while for off-policy, transitions are sampled using a fixed reference policy $\pi^r$. To guarantee the Markov game can be sufficiently explored, we make Assumptions \ref{assumption:uniformly_geometrically_ergodic} and \ref{assumption:policy_lower_bounded}, both being common assumptions in modern stochastic approximation and RL literature:
\begin{assumption}\label{assumption:uniformly_geometrically_ergodic}
    For all joint policy $\pi$, the underlying Markov chain induced by transition kernel $P_\pi$ is irreducible and uniformly geometrically ergodic. That is, let $P_\pi^t(s,\cdot)$ denote the state distribution at time step $t$ with initial state being $s$, there exists a unique state distribution $\mu_\pi(\cdot)$ uniformly lower-bounded by $\underline{\mu}$, such that $\|P_{\pi}^t(s,\cdot)-\mu_\pi(\cdot)\|_{\TV}\leq C\rho^t, \forall s\in \sS$ for some constants $C<\infty$ and $\rho\in(0,1)$.
\end{assumption}
\begin{assumption}\label{assumption:policy_lower_bounded}
    For all $t$, $s\in \sS$ and $a\in \sA$, the policies used for sampling are uniformly lower bounded by $\underline{\pi}$.
\end{assumption}

We now present \Cref{thm:MAAC_convergence}, a finite-sample convergence result for \Cref{alg:MAAC} as follows:
\begin{theorem}\label{thm:MAAC_convergence}
    Under Assumptions \ref{assumption:mu_lambda_strong_monotonicity}, \ref{assumption:regularizers_smoothness}, \ref{assumption:uniformly_geometrically_ergodic} and \ref{assumption:policy_lower_bounded}, assume $\sT$ is a $\gamma_0$-contraction mapping, let $z^*$ be the Nash equilibrium of the risk-adjusted 4-player game (and correspondingly $\pi^*$ the RQE of the original game), if the step sizes satisfy $\alpha_t\ll\beta_t\ll 1$, then for both on- and off-policy variants, let $\underline {d}=\underline{\mu}\underline{\pi}$, the iterates of \Cref{alg:MAAC} satisfy:
    \begin{enumerate}
        \item For constant step size $\alpha_t=\alpha,\beta_t=\beta$, there exist constants $D_3,D_4$ such that for all $s\in \sS$:
        \begin{equation*}
            \E[\|z_t(\cdot|s)-z^*(\cdot|s)\|_2^2]\leq (1-\frac{(1-\gamma_0)}{2}\underline{d}\alpha)^tD_3+\alpha D_4.
        \end{equation*}
        \item For diminishing step size $\alpha_t=\frac{\alpha}{t+h},\beta_t=\frac{\beta}{t+h}$, there exists constant $D_5$ such that for all $s\in \sS$:
        \begin{equation*}
            \E[\|z_t(\cdot|s)-z^*(\cdot|s)\|_2^2]\leq D_5\left(\frac{h+1}{h+t}\right)^{\frac{(1-\gamma_0)}{2}\underline{d}\alpha}.
        \end{equation*}
    \end{enumerate}
\end{theorem}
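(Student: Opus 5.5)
We prove \Cref{thm:MAAC_convergence} with a coupled Lyapunov argument that mirrors the deterministic analysis behind \Cref{thm:iter_convergence_thm}. The only change is that the critic update in \Cref{alg:MAAC} is now a Markovian stochastic-approximation step.

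\textbf{Reference point and error quantities.} Let $\bQ^*$ be the fixed point of $\sT$, which exists because $\sT$ is a $\gamma_0$-contraction. By \Cref{prop:RQE_iff_Bellman_fixed_point}, $z^*$ is the stage-game Nash equilibrium at $-\bQ^*(s,\cdot)$. Write $z^*(\bQ)$ for the unique stage-game equilibrium given $\bQ$. By \Cref{thm:RQE_property_4player_monotonicity}, this map is $L_z$-Lipschitz with $L_z=2\|\lambda\|_\infty(\sqrt{|\sA_1|}+\sqrt{|\sA_2|})/\mu$. Define
\[
x_t=\max_s\|z_t(\cdot|s)-z^*(\bQ_t)(\cdot|s)\|_{\Lambda^{-1}}^2,\qquad y_t=\|\bQ_t-\bQ^*\|^2,
\]
where for $y_t$ we use a smoothed surrogate of $\|\cdot\|_\infty^2$, namely a generalized Moreau envelope, so that it is smooth enough for a drift analysis.

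\textbf{Actor drift.} With $\bQ_t$ frozen, the actor step is projected preconditioned gradient descent on a $(\mu,\lambda)$-strongly monotone operator. By Assumption~\ref{assumption:regularizers_smoothness} this operator is Lipschitz with constant $L_F$ depending on $S$ and $\|\bQ_t\|$, and $\|\bQ_t\|$ is uniformly bounded because rewards and regularizers are bounded and the critic is an averaging update. Nonexpansiveness of the $\Lambda^{-1}$-weighted projection then gives
\[
\|z_{t+1}-z^*(\bQ_t)\|_{\Lambda^{-1}}^2\le(1-c\mu\beta_t)\,x_t \quad\text{for } \beta_t\lesssim \mu/L_F^2.
\]
The target then moves from $z^*(\bQ_t)$ to $z^*(\bQ_{t+1})$, and the Lipschitz bound controls this by $L_z\|\bQ_{t+1}-\bQ_t\|=O(\alpha_t)$. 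Using Young's inequality,
\[
\E x_{t+1}\le(1-c\mu\beta_t/2)\,\E x_t+O(\alpha_t^2/\beta_t)\big(\E y_t+1\big).
\]

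\textbf{Critic drift.} Rewrite the critic update as
\[
\bQ_{t+1}=\bQ_t+\alpha_t\big(\bar{\sT}_{z_{t+1}}\bQ_t-\bQ_t\big)+\alpha_t(\text{noise}),
\]
where $\bar{\sT}$ is the asynchronous operator weighted by the sampling distribution. By Assumptions \ref{assumption:uniformly_geometrically_ergodic} and \ref{assumption:policy_lower_bounded}, each $(s,\ba)$ is visited with frequency at least $\underline d=\underline\mu\underline\pi$. We decompose
\[
\sT_{z_{t+1}}\bQ_t=\sT\bQ_t+\big(\sT_{z_{t+1}}\bQ_t-\sT_{z^*(\bQ_t)}\bQ_t\big).
\]
The bracketed term is bounded by $C\|z_{t+1}-z^*(\bQ_t)\|$ via smoothness of the stage objective in $z$. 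The $\gamma_0$-contraction of $\sT$ then yields
\[
\E y_{t+1}\le\big(1-(1-\gamma_0)\underline d\alpha_t\big)\E y_t+C\alpha_t\sqrt{\E x_t\,\E y_t}+(\text{Markov noise}).
\]
We handle the Markov noise with the standard Poisson-equation or mixing-time conditioning argument. It contributes $O(\alpha_t^2\tau_{\rm mix})$, where $\tau_{\rm mix}=O(\log(1/\alpha))$ by geometric ergodicity. In the on-policy case the chain also changes with $z_t$. Since the policies move at most $O(\beta_t)$ per step, the drift over a mixing window costs an extra $O(\alpha_t\beta_t\tau_{\rm mix})$ term, and we absorb it into $\alpha D_4$.

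\textbf{Combining the recursions (main obstacle).} We take the Lyapunov function $W_t=y_t+\kappa x_t$. The cross term is bounded by
\[
C\alpha_t\sqrt{x_ty_t}\le \tfrac{(1-\gamma_0)\underline d}{2}\alpha_t y_t+O(\alpha_t)\,x_t.
\]
Because $\beta_t\gg\alpha_t$, the actor's $\beta_t$-contraction with a suitable $\kappa$ dominates both this $O(\alpha_t)x_t$ term and the $O(\alpha_t^2/\beta_t)y_t$ feedback. This yields
\[
\E W_{t+1}\le\big(1-\tfrac{1-\gamma_0}{2}\underline d\alpha_t\big)\E W_t+O(\alpha_t^2\log(1/\alpha_t)).
\]
Choosing the constants so that these absorptions close is the delicate step, and it is what fixes the required ratio $\alpha/\beta$.

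Unrolling this recursion with constant steps gives the bound $(1-\tfrac{1-\gamma_0}{2}\underline d\alpha)^tD_3+\alpha D_4$, with the $\log$ factor absorbed into $D_4$. With steps $\alpha/(t+h)$, the standard product bound gives the $\big(\tfrac{h+1}{h+t}\big)^{(1-\gamma_0)\underline d\alpha/2}$ rate, provided $\alpha$ is large enough that this exponent does not exceed $1$. Finally, since $\|z_t-z^*\|^2\le 2x_t\|\Lambda\|+2L_z^2y_t$, the bound on $\E W_t$ transfers to $\E\|z_t(\cdot|s)-z^*(\cdot|s)\|_2^2$ for every $s$.
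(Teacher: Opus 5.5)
\textbf{Main gap: the combination step does not close as claimed.} Your actor inequality carries an additive $O(\alpha_t^2/\beta_t)$ term (the ``$+1$''), and that term is genuine. With stochastic transitions, the critic increment $\alpha_t\hat{\boldsymbol{\delta}}$ has non-vanishing conditional variance even at $(\bQ^*,z^*)$, so the target $z^*(\bQ_t)$ keeps moving by $\Theta(\alpha_t)$.

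Under the theorem's schedules, however, $\beta_t/\alpha_t=\beta/\alpha$ is fixed, so $\alpha_t^2/\beta_t=(\alpha/\beta)\,\alpha_t$ is \emph{first} order. Multiplying by $\kappa$ does not help, since $\kappa$ cannot shrink (you divide by it to recover $\E x_t$). So this term is not $O(\alpha_t^2\log(1/\alpha_t))$. What your inequalities actually give is $\E W_{t+1}\le(1-\tfrac{1-\gamma_0}{2}\underline d\alpha_t)\E W_t+c\,\kappa(\alpha/\beta)\,\alpha_t+O(\alpha_t^2\log(1/\alpha_t))$, whose floor is of order $\kappa\alpha/(\beta(1-\gamma_0)\underline d)$. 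For $\alpha_t=\alpha/(t+h)$, $\beta_t=\beta/(t+h)$ this gives no decay at all, so part~2 does not follow; part~1 survives only with $D_4\propto1/\beta$.

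Young's inequality is too lossy here. The missing idea is that $z_{t+1}$ is $\sF_t$-measurable while $\hat{\boldsymbol{\delta}}-\E[\hat{\boldsymbol{\delta}}\,|\,\sF_t]$ has zero conditional mean. Hence, in $\|z_{t+1}-z^*(\bQ_{t+1})\|^2$, the cross term with $z^*(\bQ_{t+1})-z^*(\bQ_t)$ should see only the mean increment $\alpha_t\E[\hat{\boldsymbol{\delta}}\,|\,\sF_t]$, which vanishes at the fixed point, up to $O(\alpha_t^2)$. Making this rigorous needs more than Lipschitz continuity of $\bQ\mapsto z^*(\bQ)$: for instance a Lipschitz derivative, or a multi-step comparison over the actor's memory window.

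The paper's proof never meets this term because it reuses the deterministic recursion \eqref{eq:iter_dynamics_1}, which bounds the target shift through $\alpha_t\|\sT_{z_{t+1}}\bQ_t-\bQ_t\|_{\max}$. In \Cref{alg:MAAC} the realized increment is $\alpha_t\hat{\boldsymbol{\delta}}$, so you are right not to copy that shortcut, but the term then has to be controlled as above.

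\textbf{Norm for the actor step.} Run the actor contraction in the Euclidean norm, not $\|\cdot\|_{\Lambda^{-1}}$. Expanding $\|z-z'-\beta\Lambda(F(z)-F(z'))\|_{\Lambda^{-1}}^2$, the cross term is the \emph{unweighted} $\langle z-z',F(z)-F(z')\rangle$. $(\mu,\lambda)$-strong monotonicity does not control it when $\lambda_1\neq\lambda_2$, which is exactly the enlarged regime of \Cref{thm:monotonicity_condition}. In the Euclidean norm the cross term is $\langle z-z',F(z)-F(z')\rangle_\lambda$ and the Euclidean projection is nonexpansive, as in \Cref{lem:z_update_decomposition}.

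\textbf{Markov noise.} Your mixing-time treatment costs a $\log(1/\alpha)$ factor that cannot be absorbed into a constant $D_4$. Its on-policy drift term $O(\alpha_t\beta_t\tau_{\rm mix})$ also leaves an $O(\beta\tau_{\rm mix})$ floor for constant steps, which pulls against the large $\beta/\alpha$ your absorptions need. The paper sidesteps both through the episode structure of \Cref{alg:MAAC}. Conditioned on $\sF_t$, the sampling policy is frozen for the $K$ samples, so $\E[\hat{\boldsymbol{\delta}}\,|\,\sF_t]=\bar P_K\odot(\sT_{z_{t+1}}\bQ_t-\bQ_t)$ exactly. The bias $|\bar P_K-d_{t+1}|\le C/((1-\rho)K)$ then multiplies a quantity that vanishes at the fixed point, and is absorbed by taking $K\ge 4\sqrt2C/((1-\rho)(1-\gamma_0)d_{\min})$.

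\textbf{What your route gets right.} Your generalized Moreau envelope is the right tool for a sup-norm drift. Conditional-mean-zero cross terms cancel coordinatewise but not after taking $\max_{i,s,\ba}$. The paper expands $(Q_{i,t+1}(s,\ba)-Q_i^*(s,\ba))^2$ coordinatewise and then takes a maximum, which glosses over exactly that.
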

A detailed version of \Cref{thm:MAAC_convergence} and its proof can be found in \Cref{app:proof_MAAC_convergence}. The proof involves constructing and solving a novel coupled Lyapunov drift inequality for both the policies $z_t$ and the $Q$ functions $\bQ_t$. The main technical challenge is that, instead of relying on the policy gradient step to provide a negative drift (which no longer holds in multi-agent game-theoretic setting), we rely on the contraction of Bellman operator $\sT$ to yield a negative drift, which is only possible with faster actor and slower critic, compared to the usual slower actor and faster critic regime used in existing analyses of actor-critic algorithms.

\section{Experiments}
In this section we conduct numerical experiments to showcase the effectiveness of risk-aversion in normal-form games and MARL. We first provide \Cref{alg:MAAC_practical}, a scalable implementation of \Cref{alg:MAAC} that employs policy/Q networks and a replay buffer in \Cref{app:MAAC_practical_implementation}. To emphasize how risk-aversion helps training compared to risk-neutral, we compare the risk-averse version of \Cref{alg:MAAC_practical} against its risk-neutral version in our MARL experiments. We note that although some of our experiment configurations do not satisfy the RQE monotonicity condition stated in \Cref{thm:monotonicity_condition}, adding a mild level of risk-aversion already contributes to better convergence behaviors empirically, indicating the effectiveness of our risk-averse actor-critic framework even beyond theoretical implications.

\subsection{Inspection Game Experiment}
Our first experiment is a simple normal-form \textit{Inspection Game}, whose payoff matrices are specified by $R_1=\begin{bmatrix} 0 & 5 \\ 3 & 3 \end{bmatrix};R_2=\begin{bmatrix} -3 & -5 \\ 0 & 3 \end{bmatrix}$.
Here player 1 is the \textit{Inspectee}, and player 2 is the \textit{Inspector}. The inspector decides whether to audit (left column) or not (right column), and the inspectee chooses whether to defect (top row) or to comply (bottom row). When the inspector inspects, it will enjoy a high utility if the inspectee defects, but will have low utility if the inspectee complies. For the inspectee, when choosing to defect, getting inspected will incur a penalty of $-5$, but if not inspected it will get a high utility of $3$. We plot the learning dynamics for gradient descent when applying $z_{t+1}=\Proj_\sZ\left(z_t-\eta F(z_t;\bR)\right)$ for different risk-aversion levels $\tau$ in \Cref{fig:inspection_game}.
\begin{figure}[htb]
    \centering
    \includegraphics[width=\linewidth]{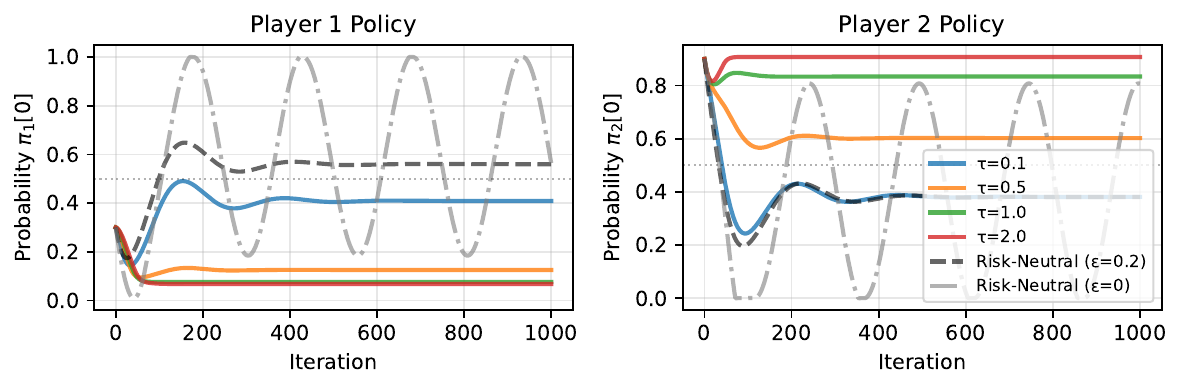}
    \caption{GD dynamics for different $\tau$ with KL and log-barrier risks, $\epsilon_i=0.2$ and $\tau_1=\tau_2$.}
    \label{fig:inspection_game}\vspace{-2ex}
\end{figure}
 We can see that without regularization, the risk-neutral gradient descent fails to converge, and when we fix $\epsilon=0.2$, larger $\tau$ implies faster convergence. Additionally, the RQE is shown in \Cref{fig:inspection_game} as the policies to which gradient descent converge. For larger $\tau$ (more risk-averse), the inspector will less likely choose inspect, and the inspectee will more likely choose comply, leading to a lower utility variance for both players. This shows that our framework indeed induces more risk-averse agent behaviors as $\tau$ gets larger.

\subsection{Gridworld Cooperation Game Experiment}
Our second experiment considers an MARL gridworld environment with two agents where each step agents can choose between \textit{cooperation}, which leads to a medium reward each step and \textit{defection}, which when the other agent cooperates, gets a high reward, but gets no reward when the other agent also defects. We train both agents and compare the training reward curves (moving average of 100 episodes) for risk-averse and risk-neutral training (each for 10 independent runs) in \Cref{fig:gridworld_cooperation_game}.
\begin{figure}[htb]
    \centering
    \includegraphics[width=\linewidth]{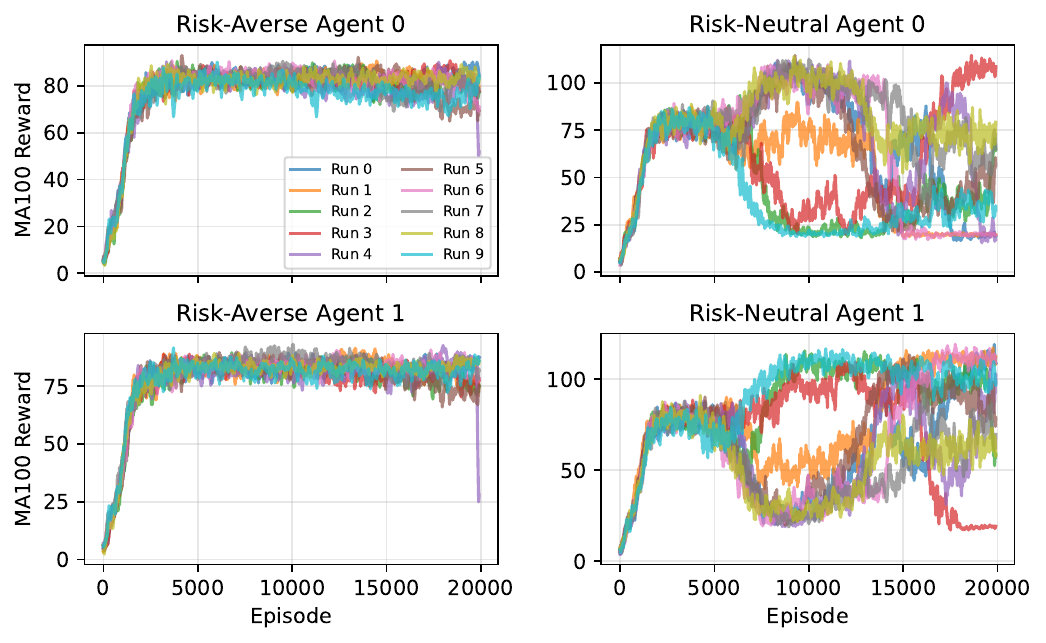}
    \caption{MA100 reward curves of gridworld cooperation game for 10 risk-averse and 10 risk-neutral training runs.}
    \label{fig:gridworld_cooperation_game}\vspace{-3ex}
\end{figure}
We can see that risk-averse training curves are much more consistent and converge much faster, while risk-neutral curves are inconsistent across different runs and may never converge. To demonstrate that this stability stems from risk-aversion rather than other algorithmic choices, we present the training curves (5 independent runs each) for MAPPO \citep{yu2022surprising} and MADDPG \citep{lowe2017multi} as baselines in \Cref{fig:gridworld_baselines}, which shows that neither algorithm provides stable training curves as our risk-averse actor-critic algorithm.
\begin{figure}[htb]
    \centering
    \includegraphics[width=\linewidth]{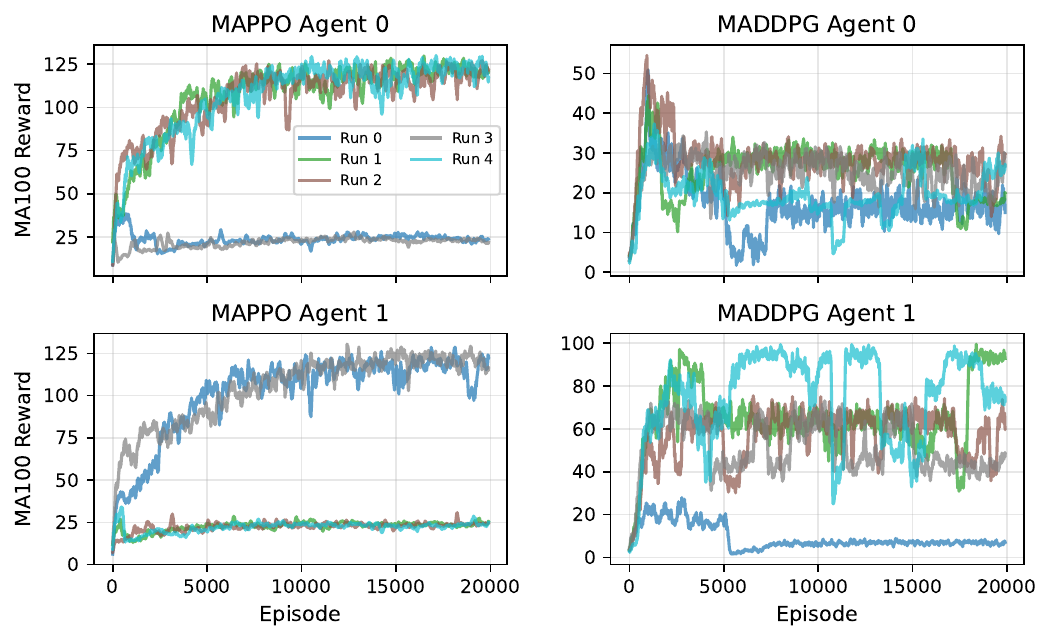}
    \caption{MA100 reward curves of gridworld cooperation game of MAPPO and MADDPG for 5 independent training runs.}
    \label{fig:gridworld_baselines}\vspace{-3ex}
\end{figure}

Details of the environment and algorithm parameters for gridworld cooperation games can be found in \Cref{app:gridworld_cooperation_details}.

\vspace{-1ex}\subsection{MPE Simple Tag Experiment}
To illustrate the effectiveness of risk-aversion in a broader class of games, we conduct an experiment on the Simple Tag environment with fixed good agents of Multi Particle Environments (MPE) \citep{lowe2017multi}. We adopt an MPE environment with 3 agents (1 good agent and 2 adversaries) where the good agent policy is fixed (trained from a previous joint 3-agent training run). When an adversary hits the good agent, both adversaries receive a positive reward. We plot the reward curves for adversaries under 5 independent runs for risk-averse and risk-neutral respectively in \Cref{fig:MPE_tag}.
\begin{figure}[htb]
    \centering
    \includegraphics[width=\linewidth]{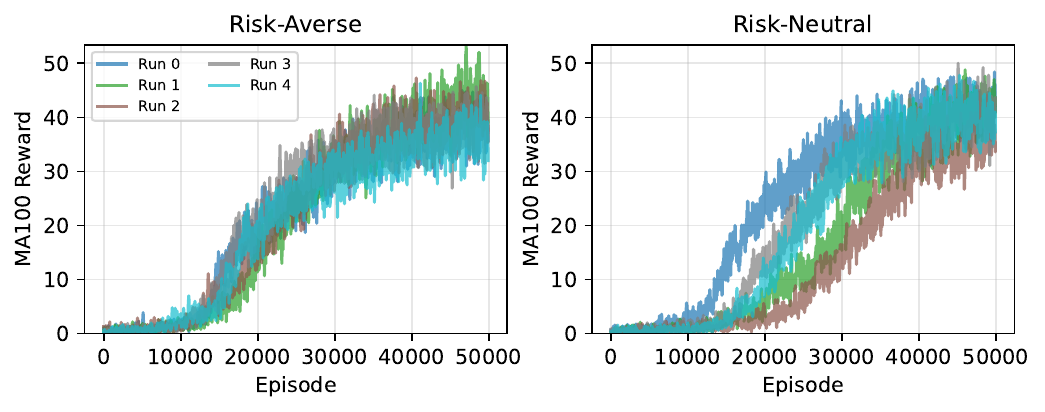}
    \caption{MA100 reward curves of Simple Tag fixing good agents for 5 risk-averse and 5 risk-neutral training runs.}\vspace{-2ex}
    \label{fig:MPE_tag}
\end{figure}
We can see that running risk-averse training induces more consistent training curves while having a similar final performance. Similarly, we also provide the training curves for MAPPO and MADDPG baselines in \Cref{fig:tag_baselines}.
\begin{figure}[htb]
    \centering
    \includegraphics[width=\linewidth]{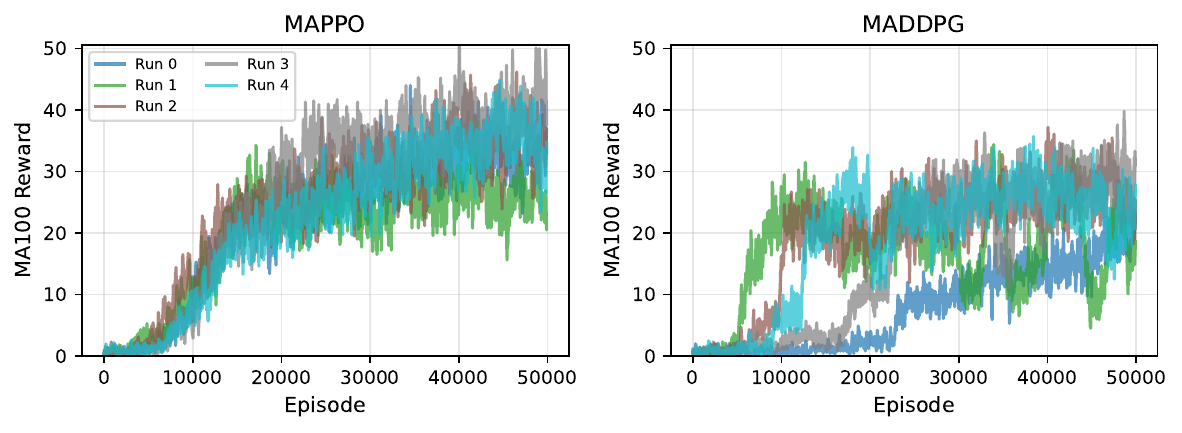}
    \caption{MA100 reward curves of Simple Tag fixing good agents of MAPPO and MADDPG for 5 independent training runs.}\vspace{-2ex}
    \label{fig:tag_baselines}
\end{figure}

We also list the final reward statistics (mean $\pm$ standard deviation) across 5 runs in \Cref{tab:tag_algorithm_comparison}.

\begin{table}[htb]
\centering
\caption{Performance comparison of different algorithms.}
\label{tab:tag_algorithm_comparison}
\small
\setlength{\tabcolsep}{3pt}
\begin{tabular}{lccc}
\toprule
 & Risk-averse AC & MAPPO & MADDPG \\
\midrule
Reward & $36.74{\pm}2.65$ & $35.50{\pm}4.91$ & $30.20{\pm}9.53$ \\
\bottomrule
\end{tabular}
\end{table}
We can see that our risk-averse actor-critic achieves simultaneously higher reward and lower variance than both MAPPO and MADDPG.
This illustrates the effectiveness of risk-aversion even in completing cooperative tasks. Details for MPE simple tag experiment are provided in \Cref{app:MPE_tag_details}.

\section{Conclusion}
In this work, we have addressed the long-standing challenge of designing provably convergent MARL algorithms for discounted general-sum Markov games through shifting the objective from computationally intractable, risk-neutral Nash equilibria to the framework of RQE. Our theoretical analysis demonstrates how risk-aversion effectively regularizes and smooths the MARL optimization landscape, while our experiments verify how risk-aversion changes agent behavior and stabilizes learning. We believe our framework will serve as a foundation for future research in both theoretical and empirical fields, including further improvement in sample efficiency, adaptation of our results to linear function approximation and softmax policy parameterization, and designing independent learning algorithms that do not explicitly require opponent policies. Additionally, extending the RQE framework and the tractability result to extensive form games and even imperfect information games are interesting future directions.

% \section*{Accessibility}

% Authors are kindly asked to make their submissions as accessible as possible
% for everyone including people with disabilities and sensory or neurological
% differences. Tips of how to achieve this and what to pay attention to will be
% provided on the conference website \url{http://icml.cc/}.

% \section*{Software and Data}

% If a paper is accepted, we strongly encourage the publication of software and
% data with the camera-ready version of the paper whenever appropriate. This can
% be done by including a URL in the camera-ready copy. However, \textbf{do not}
% include URLs that reveal your institution or identity in your submission for
% review. Instead, provide an anonymous URL or upload the material as
% ``Supplementary Material'' into the OpenReview reviewing system. Note that
% reviewers are not required to look at this material when writing their review.

% Acknowledgements should only appear in the accepted version.
\section*{Acknowledgements}

EM acknowledges support from NSF Award 2240110.

% \textbf{Do not} include acknowledgements in the initial version of the paper
% submitted for blind review.

% If a paper is accepted, the final camera-ready version can (and usually should)
% include acknowledgements.  Such acknowledgements should be placed at the end of
% the section, in an unnumbered section that does not count towards the paper
% page limit. Typically, this will include thanks to reviewers who gave useful
% comments, to colleagues who contributed to the ideas, and to funding agencies
% and corporate sponsors that provided financial support.

\section*{Impact Statement}

This paper advances the theoretical understanding and algorithmic development of multi-agent reinforcement learning and game-theoretic solution concepts. Our work focuses on equilibrium computation under risk aversion and bounded rationality, and proposes provably convergent algorithms. The results are mathematical in nature and evaluated in controlled experimental environments; we do not deploy agents in real-world systems nor optimize for human behavior. We note that multi-agent learning frameworks have potential applications in socio-technical systems involving strategic interactions (e.g., markets, cybersecurity, or automated negotiation). While such domains could carry societal implications depending on the deployment context, the contributions of this work are methodological and do not in themselves present foreseeable direct ethical risks. We therefore believe that the broader societal impact of this paper aligns with that of advancing the field of machine learning more broadly.

% In the unusual situation where you want a paper to appear in the
% references without citing it in the main text, use \nocite
% \nocite{langley00}

\bibliography{example_paper}
\bibliographystyle{icml2026}

%%%%%%%%%%%%%%%%%%%%%%%%%%%%%%%%%%%%%%%%%%%%%%%%%%%%%%%%%%%%%%%%%%%%%%%%%%%%%%%
%%%%%%%%%%%%%%%%%%%%%%%%%%%%%%%%%%%%%%%%%%%%%%%%%%%%%%%%%%%%%%%%%%%%%%%%%%%%%%%
% APPENDIX
%%%%%%%%%%%%%%%%%%%%%%%%%%%%%%%%%%%%%%%%%%%%%%%%%%%%%%%%%%%%%%%%%%%%%%%%%%%%%%%
%%%%%%%%%%%%%%%%%%%%%%%%%%%%%%%%%%%%%%%%%%%%%%%%%%%%%%%%%%%%%%%%%%%%%%%%%%%%%%%
\newpage
\appendix
\onecolumn
\section{Notations and Basic Definitions}\label[appendix]{app:notations}
The notations used in our analysis is summarized in \Cref{tab:notations}. We also clarify the definition of Lipschitz continuity, smoothness and contraction mapping below:
% --- 1. Lipschitz Continuity ---
\begin{definition}[$L$-Lipschitz Continuity]
A function $f: \mathcal{X} \to \mathcal{Y}$ is said to be $L$-Lipschitz continuous with respect to the norm $\|\cdot\|$ if there exists a constant $L \ge 0$ such that for all $x, y \in \mathcal{X}$,
\begin{equation}
    \|f(x) - f(y)\| \le L \|x - y\|.
\end{equation}
By default, we use $L_2$ norm if not specifically mentioned.
\end{definition}

% --- 2. S-Smoothness ---
\begin{definition}[$S$-Smoothness]
A differentiable function $f: \mathcal{X} \to \mathbb{R}$ is said to be $S$-smooth with respect to the norm $\|\cdot\|$ if its gradient $\nabla f$ is $S$-Lipschitz continuous. That is, for all $x, y \in \mathcal{X}$,
\begin{equation}
    \|\nabla f(x) - \nabla f(y)\|_* \le S \|x - y\|,
\end{equation}
where $S \ge 0$ is the smoothness constant and $\|\cdot\|_*$ denotes the dual norm induced by $\|\cdot\|$. By default, we use $L_2$ norm (whose dual norm is itself) if not specifically mentioned.
\end{definition}

% --- 3. Contraction Mapping ---
\begin{definition}[Contraction Mapping]
Let $(\mathcal{X}, d)$ be a complete metric space. A mapping $T: \mathcal{X} \to \mathcal{X}$ is called a $\gamma$-contraction mapping if there exists a constant $\gamma \in [0, 1)$ such that for all $x, y \in \mathcal{X}$,
\begin{equation}
    d(T(x), T(y)) \le \gamma d(x, y).
\end{equation}
In a normed vector space where $d(x, y) = \|x - y\|$, this condition satisfies $\|T(x) - T(y)\| \le \gamma \|x - y\|$. By default, regarding Bellman operators $\sT,\sT_z$ operating on $\bQ$ functions, we use the max-norm $\|\cdot\|$ if not specifically mentioned.
\end{definition}

\begin{table}[h]
\centering
\renewcommand{\arraystretch}{1.5} % Adds vertical padding for math symbols
\caption{Summary of Notations}
\label{tab:notations}
\begin{tabular}{lp{12cm}}
\hline
\textbf{Notation} & \textbf{Description} \\ \hline
\multicolumn{2}{l}{\textit{General Notation}} \\
$\mathbf{x} = (x_1, \dots, x_n)$ & Tuple constructed by concatenating components for each player (e.g., $\mathbf{R} := (R_1, R_2)$). \\
$i, -i$ & Indices for the player $i$ and all other players $-i$. \\
$\|\cdot\|_1, \|\cdot\|_2, \|\cdot\|_\infty$ & $L_1, L_2$, and $L_\infty$ norms, respectively. \\
$\|A\|_{\max}$ & Max-norm of a matrix $A$, defined as $\max_{i,j} A_{ij}$. \\
$\langle \cdot, \cdot \rangle$ & Euclidean inner product. \\
$\langle x, y \rangle_\lambda$ & $\lambda$-weighted inner product $\sum_i \lambda_i x_i y_i$ for $\lambda > 0$. \\
$\|x\|_\lambda$ & $\lambda$-weighted norm $\sqrt{\langle x, x \rangle_\lambda}$. \\
$\textup{sp}(R)$ & Span of a matrix $R$: $\max_{i,j} R_{ij} - \min_{i,j} R_{ij}$. \\
$\textup{sp}(\mathbf{R})$ & Maximum span of a tuple of matrices: $\max_{k \in [n]} \textup{sp}(R_k)$. \\
$\|\mathbf{R}\|_{\max}$ & Maximum max-norm across a tuple: $\max_{k \in [n]} \|R_k\|_{\max}$. \\
$\text{Proj}_{\mathcal{K}}(\cdot)$ & Euclidean projection operator onto a convex set $\mathcal{K}$. \\
\hline
\multicolumn{2}{l}{\textit{Game \& Strategy Parameters}} \\
$\pi_i,p_i$ & Mixed strategy (policy) of agent $i$ and adversary $i$.\\
$\pi, p$ & Unsubscripted symbols represent joint policies for agents ($\pi$) and adversaries ($p$). \\
$z = (\pi, p)$ & Joint strategy profile combining primal agents and adversaries. \\
$\mathcal{A}_i, \Delta_{|\mathcal{A}_i|}$ & Action space and the probability simplex over actions for player $i$. \\
$\tau_i$ & Risk-aversion parameter for player $i$. \\
$\epsilon_i$ & Bounded rationality parameter for player $i$. \\
$R_i, \mathbf{R}$ & Payoff matrix for player $i$ and the tuple of matrices $(R_1, \dots, R_n)$. \\
$\gamma$ & Discount factor for the Markov game. \\ \hline
\multicolumn{2}{l}{\textit{Functions \& Operators}} \\
$f_i(\pi; \mathbf{R})$ & Global objective function usage in the original 2-player game (e.g., $f_i(\pi_i, \pi_{-i}; R_i)$). \\
$J_i(\pi,p;\bR)$ & Global objective function usage in the 4-player game. \\
$D_i(p_i, \pi_{-i})$ & Penalty function regularizing the adversary (e.g., KL/reverse KL). \\
$\nu_i(\pi_i)$ & Regularization function for the agent (e.g., Entropy/log-barrier). \\
$\mathcal{T}, \mathcal{T}_z$ & Risk-averse Bellman Optimality and Evaluation operators. \\
$\alpha_t, \beta_t$ & Step sizes for Critic (Q-function) and Actor (Policy) updates. \\ \hline
\end{tabular}
\end{table}

\clearpage
\section{Related Work}
In this section we provide a detailed discussion on related work. Our work primarily considers the solution concept of RQE originally proposed by \citet{mazumdar2024tractableequilibriumcomputationmarkov}, where they proved that all CCEs of the 4-player game have their $\pi$ components being an RQE of the original 2-player game, and provided an extension of the solution concept to finite-horizon Markov games. Working further on RQE, \citet{zhang2025convergent} studied the case where the 4-player game is monotone, and proved the uniqueness and Lipschitz continuity of RQE with respect to the payoff matrices. They also considered discounted Markov games and proved the contraction of Bellman operator under the same monotonicity condition. However, the condition provided in \citep{mazumdar2024tractableequilibriumcomputationmarkov} doesn't match that in \citep{zhang2025convergent} (and neither includes the other). To reconcile these disparate conditions, we introduce a generalized class of $\lambda$-monotone games and provide a condition that strictly includes both conditions. We additionally provide a practical algorithm that naturally fits into the Actor-Critic framework, not relying on the RQE oracle, which neither of the above works provides.

Our work is situated at the intersection of algorithm design for MARL, learning in games, risk-aversion, robustness and bounded rationality decision-making and stochastic approximation (especially for convergence analysis of Actor-Critic algorithms). We list the related work for each field in the following paragraphs.

\paragraph{MARL algorithms and approaches.}
Various distinctive MARL algorithms have been proposed and empirically tested, among which MAPPO \citep{yu2022surprising} and QMIX \citep{rashid2020monotonic} are the most empirically successful for fully cooperative environments. For the environments where agents are not fully cooperative, MADDPG \citep{lowe2017multi}, MAAC \citep{iqbal2019actor} and even Individual PPO \citep{schulman2017proximalpolicyoptimizationalgorithms} have been tested to have good empirical performance \citep{rudolph2025reevaluatingpolicygradientmethods}. Focusing on the strategic side of MARL, several techniques have been proposed, including \textit{opponent shaping} (learning with opponent-learning awareness) \citep{foerster2017learning, lu2022model}, \textit{Theory of Mind} \citep{sclar2022symmetric} and \textit{Rationality-preserving} Policy optimization \citep{lauffer2025robust}. Empirical studies of risk-averse MARL have also been extensively conducted \citep{eriksson2022risk, ganesh2019reinforcement, qiu2021rmix, shen2023riskq}. While these methods perform well in practice, they generally lack theoretical convergence guarantees, especially in \textit{general-sum} environments.

\paragraph{Learning in normal-form and Markov games.}
Parallel to empirical advances, the theoretical foundations of learning in games---both normal-form and Markov games (MGs)---have also seen significant development. In normal form games, although Nash equilibria are proven to be computationally intractable for general-sum games \citep{daskalakis2009complexity}, prior work developed learning algorithms like \textit{fictitious play} \citep{robinson1951iterative}, MWU \citep{freund1997decision} and OMD/OGDA \citep{daskalakis2017training, wei2021last-iterate} that provably converge to Nash in zero-sum games, or to \textit{Coarse Correlated equilibria} (CCEs) in general-sum games. In Markov games, \citet{littman1994markov} first formalized the notion of Markov games (discounted) and proposed minimax-Q learning that provably converges in 2-player zero-sum MGs. \citet{bai2020provable} provided the first provably sample-efficient self-play algorithm for finite-horizon zero-sum MGs achieving $\sO(\sqrt{T})$ regret. For finite-horizon general-sum MGs, \citet{jin2021v} proposed a V-learning framework that provably learns its CCE in polynomial complexity with respect to the number of agents, yet the CCE is typically not a Markov policy, and requires joint randomness to execute. Prior work has also assumed access to \textit{equilibrium oracles} to solve Markov Games. \citet{hu2003nash} established Nash Q-learning, which extends Q-learning in single-agent RL to multi-agent RL through solving the stage game at each state. \citet{liu2021sharp} refined the algorithm for better sample-complexity through the V-learning framework, and \citet{zehfroosh2022pac} combined the ideas of Nash Q-learning and delayed Q-learning and built a new algorithm for PAC MARL.

For general-sum games, in addition to learning CCEs, there are also works that try to learn Nash equilibrium for games with additional structure. \citet{monderer1996potential} introduced \textit{potential games}, where a single global function tracks the improvement of any agent's unilateral move, where Nash equilibria are learnable, and \citet{fox2022independent} later generalized this idea to Markov potential games. \citet{rosen1965existence} introduced \textit{monotone games}, where gradient dynamics converges to the unique Nash equilibrium, and many algorithms are designed and proven to have better rates of convergence \citep{cai2023doubly}, or to be robust to noisy gradient steps \citep{mertikopoulos2019learning}. More recently, \citet{even-dar2009concave} explored \textit{socially convex} games where the (weighted) sum of player utilities is convex. There is also a line of works exploring the effectiveness of \textit{regularization} for learning in games \citep{mertikopoulos2016learning, giannou2021convergence, sokota2023unifiedapproachreinforcementlearning, cen2024fast}. However, as shown by \citet{mertikopoulos2017cyclesadversarialregularizedlearning}, regularization itself doesn't provide convergence guarantees even for zero-sum games without additional structure.

Despite various attempts on learning in normal-form and Markov games, no existing algorithm provide provable guarantee for the most natural infinite-horizon general-sum Markov games, as is provided in our work.

\paragraph{Risk-aversion, robustness and bounded-rationality in decision-making.}
The solution concept of RQE naturally unifies three paradigms that have gained significant traction in recent years: behavioral robustness, risk-aversion, and bounded rationality in decision-making. In behavioral economics, risk-aversion \citep{gollier2001economics, goeree2002efficiency, goeree2003risk} and bounded rationality \citep{mckelvey1992experimental, mckelvey1995quantal, mckelvey1998quantal} in human decision-making has been extensively studied, showing that the solution concept of Nash equilibrium does not necessarily capture real-world human decision-making behaviors, where both risk-aversion and bounded rationality are important aspects. In reinforcement learning, robustness and risk-aversion (proven to be equivalent in \citep{zhang2024softrobustmdpsrisksensitive}) has also been studied to tackle stochasticity and uncertainty in the environment \citep{mihatsch2002risk, shen2014risk}. Several more recent works have focused on the theoretical foundations of risk-sensitive MARL \citep{gao2021robust, slumbers2023game, wang2024learning, yekkehkhany2020risk}, yet most of their results still rely on the game to be structured itself. Recent work \citep{lanzetti2025strategically} considered an equilibrium concept of \textit{strategically robust equilibrium} sharing similar expression to RQE but a different motivation of robustness. Contrary to risk-aversion, \textit{risk-seeking} has also been studied recently in MARL by \citet{zhang2025optimism}. There are different formulations of risk-aversion used in the works above, among which our work mainly considers a class of \textit{convex risk measures} proposed by \citet{follmer2002convex}, where they proposed a \textit{dual representation theorem} connecting risk-aversion to regularization in agent behaviors.

Among all these works regarding risk-aversion, our work differentiates itself in two aspects: (i) We mainly consider \textit{strategic} risk-aversion, where agents are risk-averse against the behaviors of other agents rather than the environment; (ii) Our analysis do not rely on the payoff structure of the original game, but only on the level of risk-aversion and bounded rationality.

\paragraph{Stochastic approximation and Actor-Critic algorithms.}
Our theoretical analysis is largely based on the \textit{stochastic approximation} (SA) framework introduced by \citep{robbins1951stochastic} that has served as a fundamental tool for analyzing the convergence of Q-learning \citep{tsitsiklis1994asynchronous, chen2022finite1} and TD learning \citep{srikant2019finite, chen2021lyapunov}. Following a \textit{two-timescale} SA framework \citep{borkar1997stochastic}, the seminal work of \citep{konda1999actor} provided the first rigorous convergence proofs, and more recently, different variants of Actor-Critic algorithms (direct parameterization or linear function approximation, policy gradient or natural policy gradient) have been analyzed in different ways (asymptotic or finite-sample analysis) \citep{dalal2018finite, zhang2020provably, wu2020finite, chen2022finite2}.
Despite the prevalence of two-timescale analyses that require the stepsizes decaying in different asymptotic rates, single-timescale analyses assuming a constant stepsize ratio have also been studied in single-agent settings \citep{fu2020single, chen2023finite, kumar2026convergence}. The single-timescale analyses are generally more complex as they require solving a coupled recursion system between the actor and the critic.
SA has also been used to analyze learning in games. For two-player zero-sum MGs, \citet{chen2023two} proved convergence of two-timescale Q-learning with function approximation, and later on provided an independent learning algorithm that enjoys last-iterate convergence in \citep{chen2024last}.

While the analysis of Actor-Critic algorithms under the SA framework has been extensively studied, nearly all works are for single agent RL and uses a faster critic and a slower actor, whose final convergence relies on analyzing the policy gradient dynamics, with one exception of the so-called Critic-Actor framework proposed by \citet{panda2025two} (yet their convergence analysis still relies on policy gradient instead of contraction).
In comparison, In the MARL problem that we consider, policy gradient no longer yields negative drift, making all previous techniques invalid. To circumvent this, we adopt a ``reverse" step size order with a slower critic with a faster actor in order to utilize the contraction property of the Bellman operator in our convergence proof.

\newpage
\section{Algorithm: Multi-agent Risk-averse Actor-Critic}\label[appendix]{app:alg_MAAC}
In this section we present our algorithm of Multi-agent Risk-averse Actor-Critic, as discussed in \Cref{sec:MAAC_risk_averse}.
\begin{algorithm}[htb]
\caption{Multi-agent Risk-averse Actor-Critic for $n$ agents}
\label{alg:MAAC}
\begin{algorithmic}[1]  % the [1] turns on line numbering
    \STATE {\bfseries Input:} Step sizes $\{\alpha_t\}_{t=0}^T, \{\beta_t\}_{t=0}^T$, agent regularizers $\nu_i(\cdot),D_i(\cdot,\cdot)$, bounded rationality level $\epsilon_i$, risk-aversion level $\tau_i$, weight-vector $\lambda$, number of episodes $T$ and number of samples $K$ per update. {\color{gray}\COMMENT{Off-policy: Behavior policies $\{\pi^r_{i}\}_{i=1}^n$.}}
    \STATE Initialize the environment and receive initial states $s_{0}$.
    \STATE Initialize $\pi_{i,0},p_{i,0}$ to be uniform policies and $Q_{i,0}=0$ for all $i\in\{1,2,\dots,n\}$.
    \FOR{episode $t=0,1,2,\dots,T-1$}
        \FOR{agent $i=1,2,\dots n$ and all $s\in \sS$}
                \STATE Update policy
                \begin{equation}\label{eq:MAAC_actor_update}
                    \pi_{i,t+1}(\cdot|s)\leftarrow \Proj_{\Delta_{|\sA_{i}|}}\left(\pi_{i,t}(\cdot|s)-\beta_t \lambda_i[Q_{i,t}(s,\cdot) p_{i,t}(\cdot|s)+\epsilon_i\nabla \nu_i(\pi_{i,t};s)]\right)
                \end{equation}
                \STATE Update adversary
                \begin{equation}\label{eq:MAAC_adversary_update}
                    p_{i,t+1}(\cdot|s)\leftarrow \Proj_{\Delta_{|\sA_{-i}|}}\left( p_{i,t}(\cdot|s)-\beta_t \lambda_i[-Q^T_{i,t}(s,\cdot)\pi_{i,t}(\cdot|s)+\frac{1}{\tau_i}\nabla_{p_i}D_i(p_i,\pi_{-i};s)] \right)
                \end{equation}
        \ENDFOR
        \FOR{timestep $k=0,1,2,\dots,K-1$}
            \FOR{agent $i=1,2,\dots n$}
                \STATE Sample $a_{i,k}\sim \pi_{i,t}(\cdot|s_{k})$.
                {\color{gray}\COMMENT{Off-policy: $a_{i,k}\sim \pi^r_{i}(\cdot|s_{k})$}}
            \ENDFOR
            \STATE Play $\ba_k=(a_{i,k})_{i=1}^n$, receive $\br_k=(r_{i,k})_{i=1}^n$ and observe $s_{k+1}$.
            \FOR{agent $i=1,2,\dots n$}
                \STATE Construct target:
                \begin{equation}\label{eq:MAAC_hatq}\begin{aligned}
                    \hat{q}_{i,k}=&-r_{i,k}+\gamma\pi_{i,t+1}^T(\cdot|s_{k+1})Q_{i,t}(s_{k+1},\cdot)p_{i,t+1}(\cdot|s_{k+1})\\ &-\gamma\left(\frac{1}{\tau_i} D_i(p_{i,t+1},\pi_{-i,t+1};s_{k+1})-\epsilon_i\nu_i(\pi_{i,t+1};s_{k+1}) \right)
                \end{aligned}\end{equation}
            \ENDFOR
        \ENDFOR
        \FOR{agent $i=1,2,\dots n$}
            \STATE Compute update $\hat{\delta}_i$:
            \begin{equation}
                \hat{\delta}_i(s,\ba)=\frac{1}{K}\sum_{k=0}^{K-1}\left(\hat{q}_{i,k}-Q_{i,t}(s_{k},\ba_{k}) \right)\ind[(s,\ba)=(s_{k},\ba_{k})]
            \end{equation}
            \STATE Update $Q_{i,t+1}(s,\ba)\leftarrow Q_{i,t}(s,\ba)+\alpha_t\hat{\delta}_i(s,\ba)$ for all $s\in \sS,\ba\in \sA$.
        \ENDFOR
        \STATE Set $s_{0}\leftarrow s_{K}$ for the next episode.
    \ENDFOR
    \STATE {\bfseries Output:} Policies $\pi_{i,T},p_{i,T}$.
\end{algorithmic}
\end{algorithm}

\subsection{Practical Implementation of \Cref{alg:MAAC}}\label[appendix]{app:MAAC_practical_implementation}
Although \Cref{alg:MAAC} is already a self-contained MARL algorithm, it doesn't necessarily perform well on practical environments. Therefore, we provide an adapted version that incorporates several usual implementation tricks, most of which being used on existing actor-critic algorithms like SAC \citep{haarnoja2018soft}.

The first adaptation we make is to use neural networks to parametrize actor and critic. For actor, we use $\theta_i$ to denote the policy parameter of agent $i$, and $\bar{\theta}_i$ to denote the policy parameter of adversary $i$. For critic, we adopt the double-Q trick \citep{hasselt2010double} and target networks \citep{mnih2013playing} (notice that target networks matches the $Q$ update in line 20 of \Cref{alg:MAAC}). For agent $i$, we use $\phi_i^1,\phi_i^2$ denote two Q networks and $\phi_i^{\text{targ},1},\phi_i^{\text{targ},2}$ denote the target networks.

The procedure of sampling new transitions only for one update is not sample efficient. Additionally, for large (or continuous) state spaces $\sS$ we cannot afford updating policy for all states $\sS$. Therefore, we introduce a replay buffer $\sD$, from which we sample a batch $B\sim \sD$ and optimize the following loss functions for actors:
\begin{equation}\label{eq:practical_actor_update}\begin{aligned}
    \sL(\theta_i;B)=&\frac{1}{|B|}\sum_{s\in B}\left(\pi_{\theta_i}(\cdot|s)^T\max\{Q_{\phi_i^1},Q_{\phi_i^2}\}(s,\cdot)p_{\bar{\theta}_i}(\cdot|s)+\epsilon_i\nu_i(\pi_{\theta_i};s)\right);\\
    \sL(\bar{\theta}_i;B)=&\frac{1}{|B|}\sum_{s\in B}\left(-\pi_{\theta_i}(\cdot|s)^T\max\{Q_{\phi_i^1},Q_{\phi_i^2}\}(s,\cdot)p_{\bar{\theta}_i}(\cdot|s)+\frac{1}{\tau_i} D_i(p_{\bar{\theta}_i}, \pi_{\theta_{-i}};s)\right).
\end{aligned}\end{equation}
Notice that minimizing \eqref{eq:practical_actor_update} can be carried out using any built-in optimizer from various python libraries like \texttt{torch}, and when the batch $B$ precisely consists of every state $s\in\sS$ once, using preconditioned gradient descent to optimize \eqref{eq:practical_actor_update} is equivalent to \eqref{eq:MAAC_actor_update} and \eqref{eq:MAAC_adversary_update}. Additionally, when the action space is large or continuous, \eqref{eq:practical_actor_update} can be approximated using samples from the \textit{current} policies $\pi_{\theta_i}$ and $p_{\bar{\theta}_i}$ using similar tricks as in SAC \citep{haarnoja2018soft}.

For critic update, we construct a loss function as follows:
\begin{equation}\label{eq:practical_critic_loss}
    \sL(\phi_i^k;B)=\frac{1}{|B|}\sum_{(s,\ba,\br,s')\in B} (\hat{q}^{\text{targ}}_i-Q_{\phi_i^k}(s,\ba))^2, k\in\{1,2\}.
\end{equation}
where $\hat{q}^{\text{targ}}_i$ is computed through \eqref{eq:MAAC_hatq} using target networks for each transition $(s,\ba,r,s')\in B$ in the following way:
\begin{equation}\label{eq:practical_critic_loss_qtarget}
    \hat{q}^{\text{targ}}_i=-r_i+\gamma\left(\pi_{\theta_i}^T(\cdot|s')\max\{Q_{\phi_i^{\text{targ},1}},Q_{\phi_i^{\text{targ},2}}\}(s',\cdot)p_{\bar{\theta}_i}(\cdot|s')-\frac{1}{\tau_i} D_i(p_{\bar{\theta}_i},\pi_{\theta_{-i}};s')+\epsilon_i\nu_i(\pi_{\theta_i};s') \right)
\end{equation}
Similarly, the critic loss \eqref{eq:practical_critic_loss} can also be optimized using built-in optimizers, and can be estimated using samples for large/continuous action spaces. After conducting an actor update step, we conduct a soft update for the target networks
\begin{equation}\label{eq:practical_target_network_update}
    \phi_i^{\text{targ},k}\leftarrow (1-\alpha_t)\phi_i^{\text{targ},k}+\alpha_t \phi_i^k, k\in\{1,2\}.
\end{equation}
To summarize, we present our practical adaptation of \Cref{alg:MAAC} in \Cref{alg:MAAC_practical}.
\begin{algorithm}[htb]
    \caption{Practical Multi-agent Risk-averse Actor-Critic for $n$ agents.}
    \label{alg:MAAC_practical}
    \begin{algorithmic}[1]
        \STATE {\bfseries Input:} Regularizers $\nu_i(\cdot),D_i(\cdot,\cdot)$, bounded rationality level $\epsilon_i$, risk-aversion level $\tau_i$. Initial network parameters $\theta_i,\bar{\theta}_i,\phi_i^1,\phi_i^2,\phi_i^{\text{targ},1},\phi_i^{\text{targ},2}$, empty replay buffer $\sD$.
        \REPEAT
            \STATE Observe state $s$ and sample joint action $\ba$ for all agents with either on-policy or off-policy.
            \STATE Execute $\ba$, observe reward $\br$ and next state $s'$.
            \STATE Store $(s,\ba,\br,s')$ in replay buffer $\sD$.
            \IF{update now}
                \STATE Sample a batch $B$ from $\sD$.
                \STATE Update actors and adversaries with \eqref{eq:practical_actor_update} for all agents.
                \STATE Update critics with \eqref{eq:practical_critic_loss} and \eqref{eq:practical_critic_loss_qtarget} for all agents.
                \STATE Update target networks with \eqref{eq:practical_target_network_update} for all agents.
            \ENDIF
        \UNTIL{convergence}
    \end{algorithmic}
\end{algorithm}

Notice that there exists a risk-neutral version of \Cref{alg:MAAC_practical} that does not maintain adversaries, which we use as benchmarks in our experiments. This can be done by replacing \eqref{eq:practical_actor_update} by:
\begin{equation}
    \sL(\theta_i;B)=\frac{1}{|B|}\sum_{s\in B}\left(\pi_{\theta_i}(\cdot|s)^T\max\{Q_{\phi_i^1},Q_{\phi_i^2}\}(s,\cdot)\pi_{\theta_{-i}}(\cdot|s)+\epsilon_i\nu_i(\pi_{\theta_i};s)\right),
\end{equation}
and replace \eqref{eq:practical_critic_loss_qtarget} by:
\begin{equation}
    \hat{q}^{\text{targ}}_i=-r_i+\gamma\pi_{\theta_i}^T(\cdot|s')\max\{Q_{\phi_i^{\text{targ},1}},Q_{\phi_i^{\text{targ},2}}\}(s',\cdot)\pi_{\theta_{-i}}(\cdot|s')+\gamma\epsilon_i\nu_i(\pi_{\theta_i};s').
\end{equation}
\begin{remark}
    Since we only maintain one shared critic network for each agent-adversary pair, the only additional computational complexity of the risk-averse version of \Cref{alg:MAAC_practical} and its risk-neutral version is to maintain and update the adversary actors $\bar{\theta}_i$. Additionally, thanks to the off-policy nature of actor-critic algorithms, we don't have to maintain separate environments for each agent-adversary pair, but only need to sample from one single environment consisting of ``real" agents. This means the statistical complexity of the risk-averse algorithm stays the same as the risk-neutral version.
\end{remark}

\section{Experiment Details}
In this section we provide details for our experiments.
\subsection{Details for Gridworld Cooperation Game}\label[appendix]{app:gridworld_cooperation_details}
Here we provide details for our gridworld cooperation game experiment. We first specify our environment setup for gridworld cooperation game, and then provide the table of hyperparameters used in our experiments, followed by a presentation of detailed training data for each individual training run.
\subsubsection{Environment Specification}
The gridworld cooperation game is a fully-observable MARL environment with 2 agents on a $5\times 5$ gridworld shown in \Cref{fig:gridworld_layout}.
\begin{figure}[htb]
    \centering
    \includegraphics[width=0.4\linewidth]{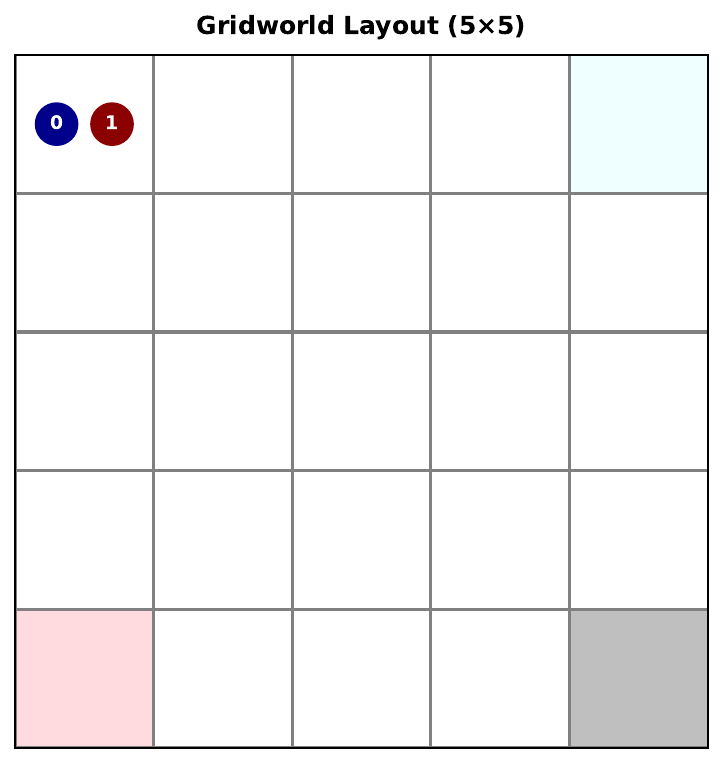}
    \caption{Gridworld Layout. Agent 0 and agent 1 are shown in blue and red dots on the upper-left corner. The defection zones are painted in blue (for agent 0) and red (for agent 1)}
    \label{fig:gridworld_layout}
\end{figure}

In the gridworld cooperation game, both agents are initialized at the upper-left corner of the grid. There are three special cells. The upper-right corner (painted in blue) and lower-left corner (painted in red) are two \textit{defection zones} for agent 0 and agent 1 respectively. The lower-right corner (painted in grey) is the \textit{cooperation zone}. All other cells that are left blank in \Cref{fig:gridworld_layout} are called the \textit{blank area}. At each timestep, if an agent is in the blank area or at the other agent's defection zone, it is not given any reward. If it is in the cooperation zone, it is given a reward of $0.5$ if the other agent is in the defection zone, a reward $1$ if the other agent is in the blank area and a reward $2$ if the other agent is also in the cooperation zone. If the agent is in its defection zone (indicated by the same color, it gets a reward of $3$ if the other agent is in the cooperation zone, and gets $0$ otherwise.

At each time step, each agent has $5$ actions, whether to move to one direction, or to stay in the same cell. If the agent chooses an \textit{infeasible} action (for example, choosing ``up" when at the top row of the grid), it will randomly move to a feasible direction (including staying). When an agent gets into its defection zone, it is forced to stay in its defection zone until the end of the game. When an agent is in the cooperation zone, it is forced to stay in the cooperation zone with a probability of $0.7$ for each step.

For the gridworld cooperation game, the social welfare (sum of rewards) is maximized when both agents choose to cooperate. However, when the other agent chooses to cooperate, choosing to defect always yields a higher reward (3 compared to 2). Therefore, on a high-level, when agents are risk-neutral, there are two symmetric Nash equilibria (defect, cooperate) and (cooperate, defect) in this game. This pair of symmetric equilibria makes risk-neutral learning extremely unstable. In comparison, with risk-averse training, both agents expect the other agent to be ``adversarial", so when it chooses to defect, the other agent won't cooperate and would lead to a zero reward for it. This leads to both agents choosing to cooperate and therefore achieves a higher social welfare.

\subsubsection{Hyperparameter Tables}
We present the hyperparameters used for training in \Cref{tab:gridworld_hyperparameters}.

\begin{table}[ht]
\centering
\caption{Hyperparameters used for gridworld cooperation game experiments.}
\label{tab:gridworld_hyperparameters}
\addtolength{\leftskip} {-2cm} % Adjust if table is too wide
\addtolength{\rightskip}{-2cm}
\begin{tabular}{@{}lll@{}}
\toprule
\textbf{Category} & \textbf{Hyperparameter} & \textbf{Value} \\ \midrule
\textbf{Common} & Optimizer & Adam \\
 & Actor Learning Rate & $5 \times 10^{-4}$ \\
 & Critic Learning Rate & $5 \times 10^{-4}$ \\
 & Discount Factor ($\gamma$) & 0.99 \\
 & Batch Size & 256 \\
 & Network Type (both Actor and Critic) & MLP \\
 & Hidden Layers  & 2 \\
 & Hidden Units per Layer & 128 \\
 & Activation Function & ReLU \\ \midrule
\textbf{Risk-Aversion} & Risk-Aversion Level $\tau_1,\tau_2$ & $5.0, 5.0$ \\
 & Bounded-rationality $\epsilon_1,\epsilon_2$ & $0.2, 0.2$ \\
 & Regularizer Type & KL / Negative entropy \\
 & Target Network Update ($\alpha$) & 0.002 \\ \midrule
\textbf{Environment} & Horizon & 50 steps \\
 & Training Episodes & 20000 \\
 & Moving Average Window & 100 episodes \\ \bottomrule
\end{tabular}
\end{table}

For simplicity, we have fixed the risk-aversion level $\tau_1,\tau_2$ and bounded rationality level $\epsilon_1,\epsilon_2$ to be the same for both agents. We choose the regularizer type to be KL and negative entropy instead of KL and log-barrier to relieve the instability induced by the unboundedness of log-barrier functions.

\subsubsection{Detailed Training Data}
We present the detailed training and evaluation curves and the social welfare comparison for each training run as follows:

\begin{figure}[htbp]
    \centering
    \includegraphics[width=\textwidth]{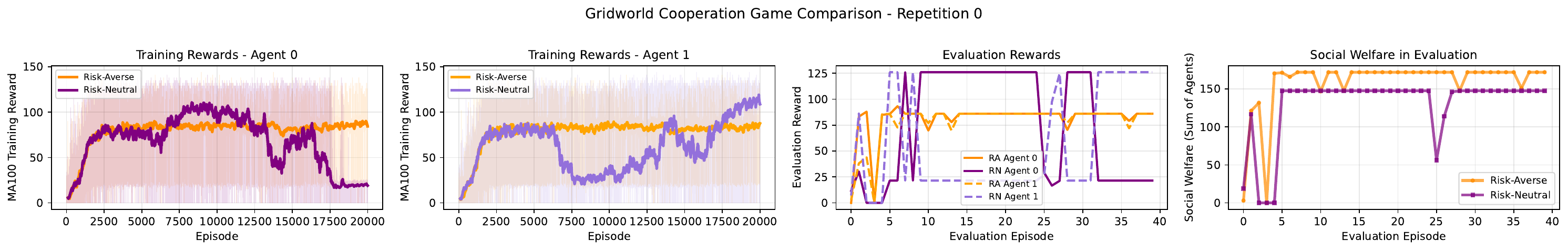}\\
    \includegraphics[width=\textwidth]{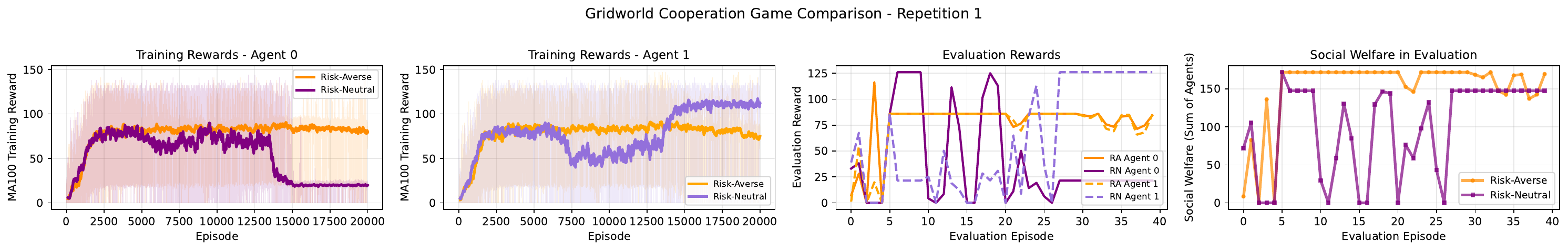}\\
\end{figure}

\begin{figure}[htbp]
    \centering
    \includegraphics[width=\textwidth]{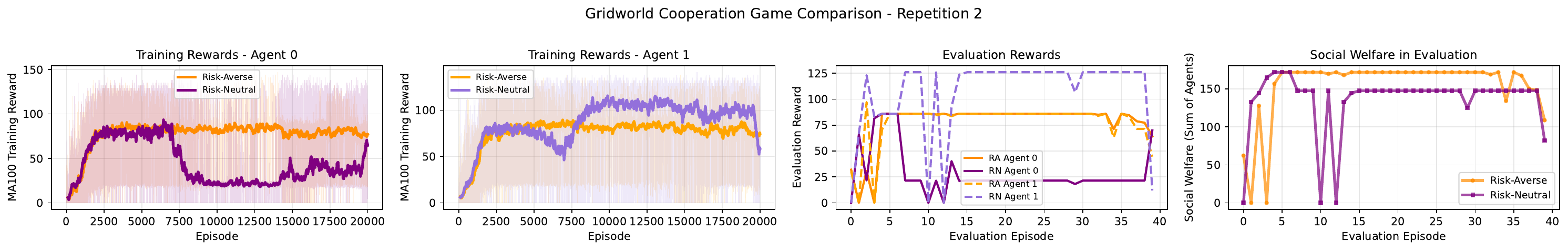}\\
    \includegraphics[width=\textwidth]{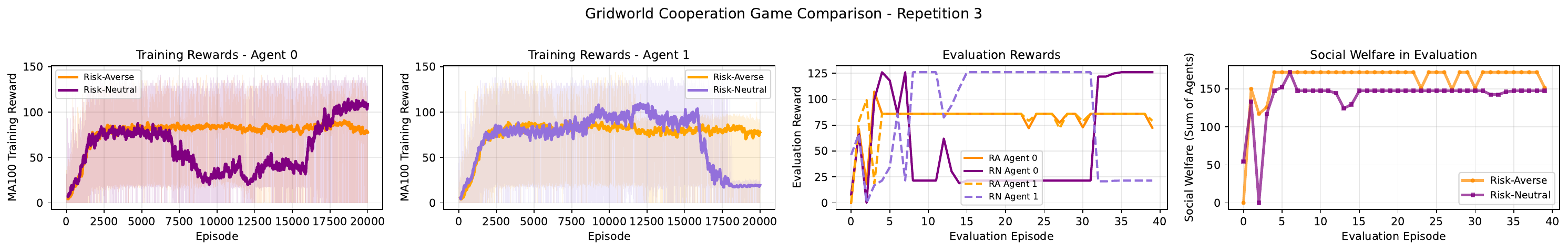}\\
    \includegraphics[width=\textwidth]{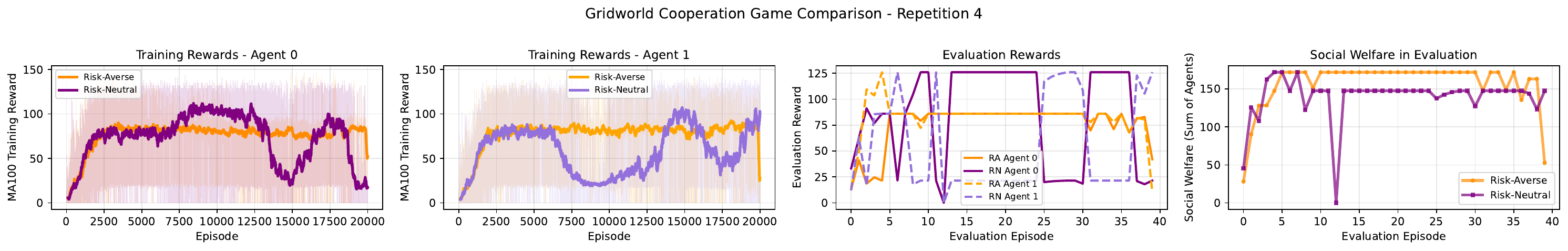}\\
    \includegraphics[width=\textwidth]{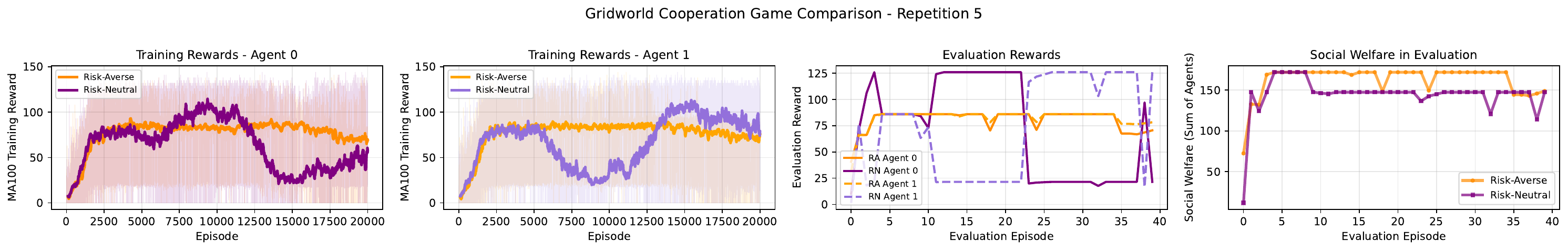}\\
    \includegraphics[width=\textwidth]{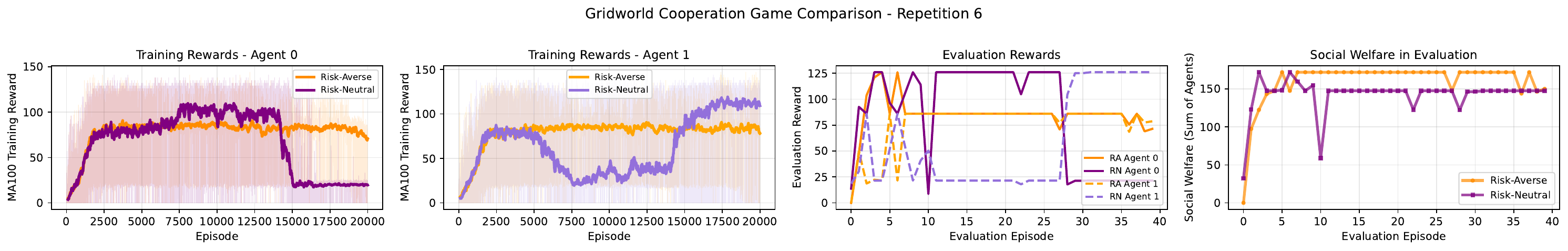}\\
    \includegraphics[width=\textwidth]{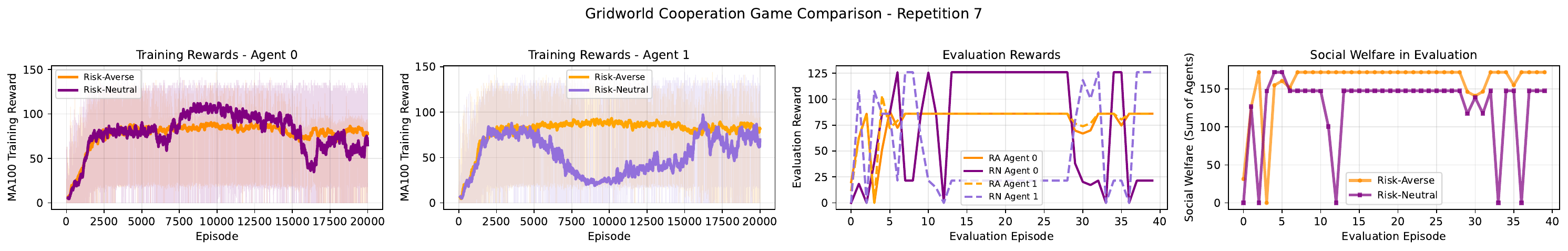}\\
    \includegraphics[width=\textwidth]{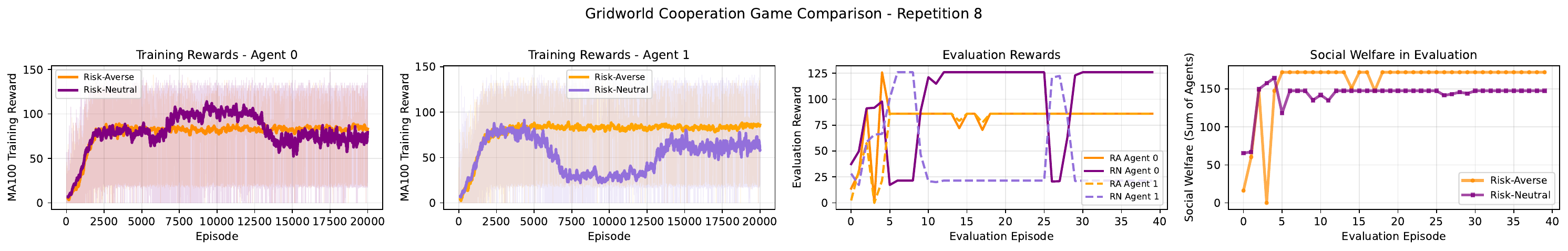}\\
    \includegraphics[width=\textwidth]{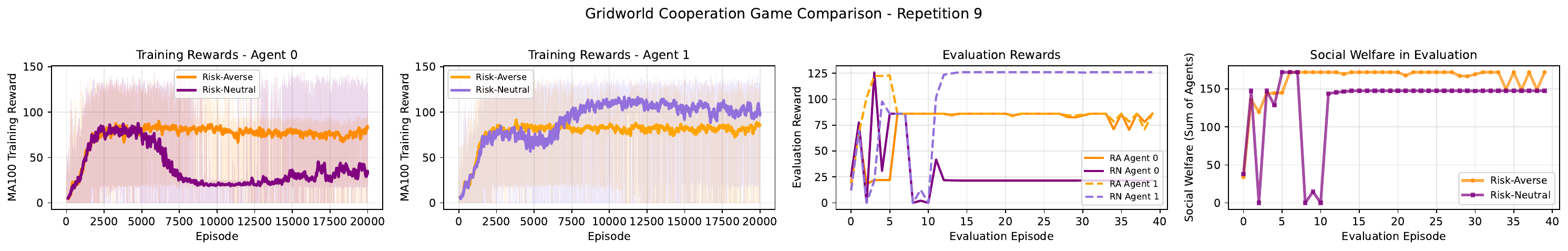}
    \caption{Detailed gridworld cooperation game comparison across 10 independent training runs.}
    \label{fig:gridworld_comparison_detailed_all}
\end{figure}

\newpage
\subsection{Details for MPE Simple Tag}\label[appendix]{app:MPE_tag_details}
In this section we present the details of our MPE simple tag experiment, similar to that for the gridworld cooperation game.

\subsubsection{Environment Specification}
Our experiment is based on the Simple Tag environment provided by Multi Particle Environments (MPE) \citep{lowe2017multi}. We adopt a 3-agent version of the Simple Tag environment where there are 1 good agent and 2 adversaries, shown in \Cref{fig:mpe_tag_layout}.
\begin{figure}[htb]
    \centering
    \includegraphics[width=0.4\linewidth]{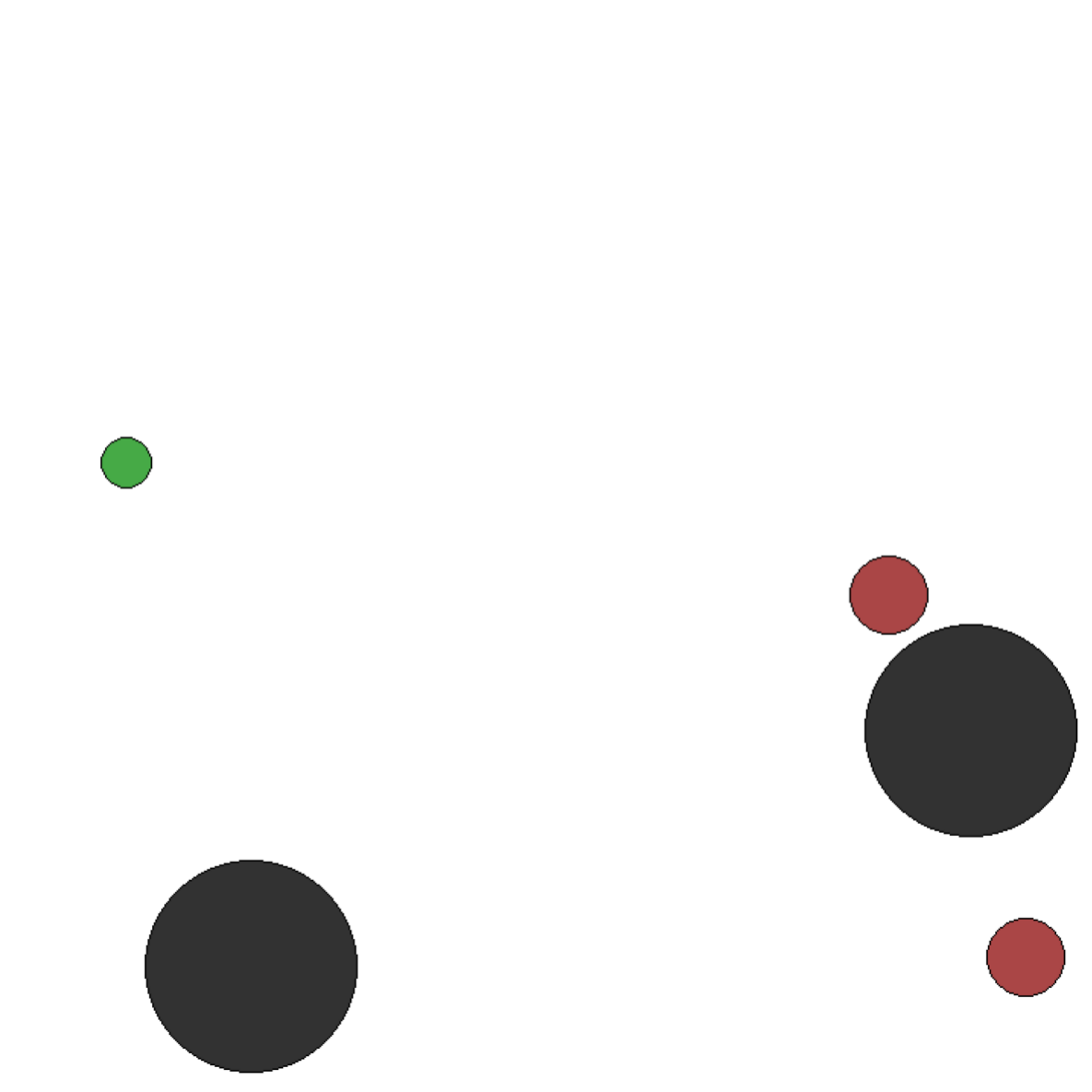}
    \caption{Layout of MPE2 Simple Tag environment with 3 agents. Green circle indicates the good agent, red circles indicate the adversaries, and black circles indicate the obstacles.}
    \label{fig:mpe_tag_layout}
\end{figure}

The MPE Simple Tag environment is a predator-prey environment. The good agent (green circle) is faster and receive a negative reward $-10$ for being hit by adversaries. Adversaries (red circle) are slower and are rewarded $10$ for hitting good agents. Obstacles (large black circles) block the ways of agents. The agents are also penalized for leaving the dedicated area.

In order to create a fully-cooperative environment for MARL based on Simple Tag, we first jointly train three agents using \Cref{alg:MAAC_practical}. After that, we fix the policy of the good agent and turn this environment into a fully cooperative game between two adversaries.

\subsubsection{Hyperparameter Tables}
We provide the hyperparameters used in jointly training 3 agents, and training two adversaries to cooperate in \Cref{tab:tag_3agent_hyperparameters} and \Cref{tab:tag_coop_hyperparameters} respectively. In each training, we use the default discrete state and action space for MPE. Compared with 3-player training, in 2-player training we modify the $\epsilon$ values to be $0.02$ to allow more higher performance in adversary policies (otherwise the adversary policies will be too random to chase the already well-trained good agent). We also increase the number of episodes from 20000 to 50000 for the training rewards to converge.

\begin{table}[ht]
\centering
\caption{Hyperparameters used for MPE Simple Tag experiments.}
\label{tab:tag_3agent_hyperparameters}
\addtolength{\leftskip} {-2cm} % Adjust if table is too wide
\addtolength{\rightskip}{-2cm}
\begin{tabular}{@{}lll@{}}
\toprule
\textbf{Category} & \textbf{Hyperparameter} & \textbf{Value} \\ \midrule
\textbf{Common} & Optimizer & Adam \\
 & Actor Learning Rate & $5 \times 10^{-4}$ \\
 & Critic Learning Rate & $5 \times 10^{-4}$ \\
 & Discount Factor ($\gamma$) & 0.99 \\
 & Batch Size & 256 \\
 & Network Type (both Actor and Critic) & MLP \\
 & Hidden Layers  & 2 \\
 & Hidden Units per Layer & 128 \\
 & Activation Function & ReLU \\ \midrule
\textbf{Risk-Aversion} & Risk-Aversion Level $\tau_1,\tau_2$ & Risk-neutral \\
 & Bounded-rationality $\epsilon_1,\epsilon_2$ & $0.2, 0.2$ \\
 & Regularizer Type & Negative entropy \\
 & Target Network Update ($\alpha$) & 0.002 \\ \midrule
\textbf{Environment} & Horizon & 50 steps \\
 & Training Episodes & 20000 \\
 & Moving Average Window & 100 episodes \\ \bottomrule
\end{tabular}
\end{table}

\begin{table}[ht]
\centering
\caption{Hyperparameters used for MPE Simple Tag experiments.}
\label{tab:tag_coop_hyperparameters}
\addtolength{\leftskip} {-2cm} % Adjust if table is too wide
\addtolength{\rightskip}{-2cm}
\begin{tabular}{@{}lll@{}}
\toprule
\textbf{Category} & \textbf{Hyperparameter} & \textbf{Value} \\ \midrule
\textbf{Common} & Optimizer & Adam \\
 & Actor Learning Rate & $5 \times 10^{-4}$ \\
 & Critic Learning Rate & $5 \times 10^{-4}$ \\
 & Discount Factor ($\gamma$) & 0.99 \\
 & Batch Size & 256 \\
 & Network Type (both Actor and Critic) & MLP \\
 & Hidden Layers  & 2 \\
 & Hidden Units per Layer & 128 \\
 & Activation Function & ReLU \\ \midrule
\textbf{Risk-Aversion} & Risk-Aversion Level $\tau_1,\tau_2$ & $1.0,1.0$ \\
 & Bounded-rationality $\epsilon_1,\epsilon_2$ & $0.02, 0.02$ \\
 & Regularizer Type & KL / Negative entropy \\
 & Target Network Update ($\alpha$) & 0.002 \\ \midrule
\textbf{Environment} & Horizon & 50 steps \\
 & Training Episodes & 50000 \\
 & Moving Average Window & 100 episodes \\ \bottomrule
\end{tabular}
\end{table}

\subsubsection{Detailed Training Data}
We present the detailed training curve for each run in \Cref{fig:all_reps_training_curves_tag}.
\begin{figure}[htb]
    \centering
    \includegraphics[width=\linewidth]{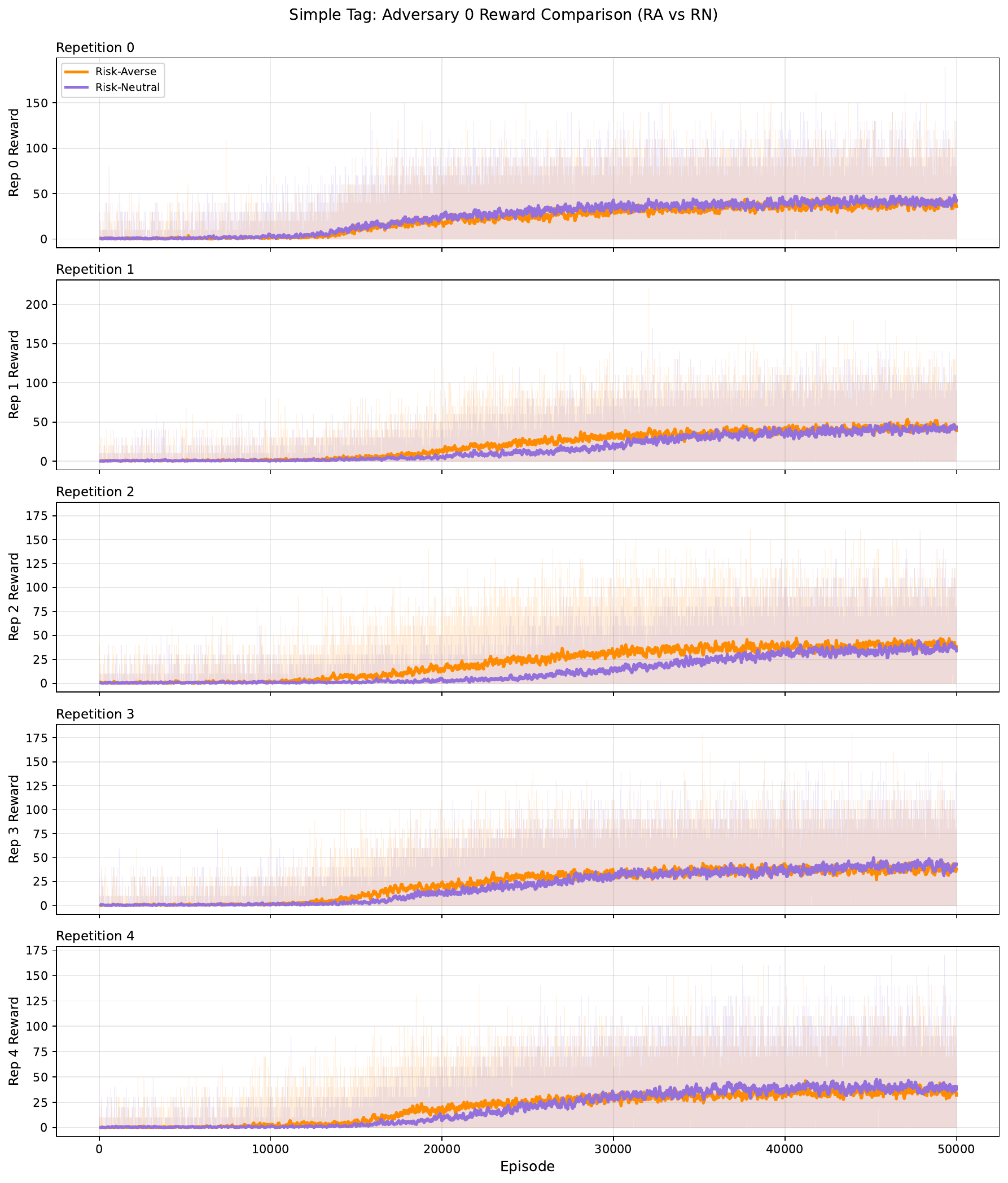}
    \caption{Detailed MPE Simple Tag comparison across 5 independent training runs.}
    \label{fig:all_reps_training_curves_tag}
\end{figure}

\clearpage
\section{Proof of \Cref{prop:RQE_iff_Bellman_fixed_point}}\label[appendix]{app_proof_prop:RQE_iff_Bellman_fixed_point}
In order to prove \Cref{prop:RQE_iff_Bellman_fixed_point}, we first construct an auxiliary system for our analysis. Similar to normal form games, each controlling $p_i,i\in\{1,2\}$ and maximizes $V_i^{\pi,p}(s)$ for all $s\in \sS$. This makes the game a 4-player Markov game.

In addition to the Bellman optimality operator $\sT$ and Bellman evaluation operator $\sT_z$ for some $z=(\pi,p)$, we define $\sT_{*,\pi_{-i},p}$ and  $\sT_{\pi,*}$ (since $Q_i$ does not depend on $p_{-i}$, we drop the notational dependence as well) to be the Bellman optimality operators acting on $Q_i$ with respect to only $\pi_i$ or $p_i$ as follows:
\begin{equation}
    \sT_{*,\pi_{-i},p}Q_i(s,\ba)=-r_i(s,\ba)+\gamma \min_{\pi_i} \E_{s'\sim P(\cdot|s,\ba)}[\pi_i^T(\cdot|s')Q_i(s',\cdot)p_i(\cdot|s')-\frac{1}{\tau_i}D_i(p_i,\pi_{-i};s')+\epsilon_i\nu_i(\pi_i;s')],
\end{equation}
\begin{equation}
    \sT_{\pi,*}Q_i(s,\ba)=-r_i(s,\ba)+\gamma \max_{p_i} \E_{s'\sim P(\cdot|s,\ba)}[\pi_i^T(\cdot|s')Q_i(s',\cdot)p_i(\cdot|s')-\frac{1}{\tau_i}D_i(p_i,\pi_{-i};s')+\epsilon_i\nu_i(\pi_i;s')].
\end{equation}
We can view $\sT_{*,\pi_{-i},p}$ as optimizing $\pi$ assuming $\pi_{-i},p$ being fixed, corresponding to each player optimizing its own policy assuming fixed opponents' policies.

To utilize the partial zero-sum structure between each player and its adversary in the 4-player game, for fixed $\pi_{-i}$, we define $\sT_{\pi_{-i}}$ to be the operator mapping to the minimax solution w.r.t. $Q_i$, as a result of co-optimization between player $i$ and its adversary as follows:
\begin{equation}
    \sT_{\pi_{-i}} Q_i(s,\ba)=-r_i(s,\ba)+\gamma \min_{\pi_i}\max_{p_i} \E_{s'\sim P(\cdot|s,\ba)}[\pi_i^T(\cdot|s')Q_i(s',\cdot)p_i(\cdot|s')-\frac{1}{\tau_i}D_i(p_i,\pi_{-i};s')+\epsilon_i\nu_i(\pi_i;s')].
\end{equation}
We have the following contraction property for all these three operators, and the coordinate-wise monotonicity for $\sT_{*,\pi_{-i},p}$ and $\sT_{\pi,*}$:

\begin{lemma}
    The Bellman operators $\sT_{*,\pi_{-i},p},\sT_{\pi,*}$ and $\sT_{\pi_{-i}}$ satisfy $\|\sT'Q_i-\sT'Q_i'\|_{\max}\leq \gamma \|Q_i-Q_i'\|_{\max}$, for all $Q_i,Q_i':\sS\times\sA\rightarrow \R$, where $\sT'\in\{\sT_{*,\pi_{-i},p},\sT_{\pi,*},\sT_{\pi_{-i}}\}$. Additionally, when $Q_i\geq Q_i'$, we have $\sT_{*,\pi_{-i},p}Q_i\geq \sT_{*,\pi_{-i},p}Q_i'$ and $\sT_{\pi,*}Q_i\geq \sT_{\pi,*}Q_i'$.
\end{lemma}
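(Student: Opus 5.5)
The plan is to reduce all three operators to one pointwise estimate. Every operator here has the form $\sT'Q_i(s,\ba)=-r_i(s,\ba)+\gamma\,\E_{s'\sim P(\cdot|s,\ba)}[G(Q_i;s')]$. I will read the optimization as taken statewise: since the bracket separates across states, the min/max over policies $\pi_i:\sS\to\Delta_{|\sA_i|}$ (or $p_i$) of the expectation equals the expectation of statewise optima over $\Delta_{|\sA_i|}$ (or $\Delta_{|\sA_{-i}|}$), which is how the operators are meant. The stage quantity $G(Q_i;s')$ is obtained by applying $\min$, $\max$, or $\min\max$ to the function
\[
\phi_{Q_i}(x,y;s')=x^TQ_i(s',\cdot)y-\tfrac{1}{\tau_i}D_i(y,\pi_{-i}(\cdot|s'))+\epsilon_i\nu_i(x),
\]
with the unoptimized argument fixed at the given policy. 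The reward term cancels in differences, so $|\sT'Q_i(s,\ba)-\sT'Q_i'(s,\ba)|\le\gamma\sup_{s'}|G(Q_i;s')-G(Q_i';s')|$. It therefore suffices to show that each $G$ is $1$-Lipschitz in $Q_i$ with respect to $\|\cdot\|_{\max}$.

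The key observation is that the regularizers do not depend on $Q_i$. Hence, for any fixed $(x,y)$ in the product of simplices,
\[
|\phi_{Q_i}(x,y;s')-\phi_{Q_i'}(x,y;s')|=|x^T(Q_i-Q_i')(s',\cdot)y|\le\|Q_i-Q_i'\|_{\max},
\]
because $x$ and $y$ are probability vectors, so $x^TMy$ is a convex combination of the entries of $M$. I will then invoke the elementary fact that if $|\phi-\psi|\le c$ pointwise, then $|\inf\phi-\inf\psi|\le c$ and $|\sup\phi-\sup\psi|\le c$. Applying this once gives the $\min$ case ($\sT_{*,\pi_{-i},p}$) and the $\max$ case ($\sT_{\pi,*}$). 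For $\sT_{\pi_{-i}}$, I apply it twice: first for each fixed $x$ to the inner $\max_y$, which gives $|\max_y\phi_{Q_i}(x,\cdot)-\max_y\phi_{Q_i'}(x,\cdot)|\le c$ uniformly in $x$, and then to the outer $\min_x$. This route avoids needing existence of a saddle point or any minimax theorem.

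For monotonicity, note that if $Q_i\ge Q_i'$ entrywise, then $x^T(Q_i-Q_i')y\ge0$ because $x,y\ge0$. So $\phi_{Q_i}\ge\phi_{Q_i'}$ pointwise, and taking an $\inf$ or $\sup$ of both sides preserves the inequality. Since $P(\cdot|s,\ba)\ge0$ and $\gamma\ge0$, the expectation preserves it too, which gives $\sT'Q_i\ge\sT'Q_i'$ for both $\sT_{*,\pi_{-i},p}$ and $\sT_{\pi,*}$; the same argument would also cover $\sT_{\pi_{-i}}$.

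I expect the main obstacle to be well-definedness rather than the estimates. The optima must be finite, which may require the values to be taken as extended reals, or require the regularizers to be bounded below and above on the relevant set, for example on the truncated simplex mentioned after \Cref{assumption:regularizers_smoothness} when log-barrier or KL is used. I would state this convention explicitly, since the Lipschitz and monotonicity arguments only use suprema and infima and go through unchanged.
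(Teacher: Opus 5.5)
Your proposal is correct, but for $\sT_{\pi_{-i}}$ it takes a different route from the paper.

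\textbf{The paper's route.} The paper calls $\sT_{*,\pi_{-i},p}$ and $\sT_{\pi,*}$ standard. For $\sT_{\pi_{-i}}$ it follows Shapley's witness argument. It fixes $\pi_i^\dagger$ attaining the outer minimum for $Q_i'$, and $p_i^\dagger$ attaining the inner maximum for $Q_i$ against $\pi_i^\dagger$. Sandwiching the two min-max values then gives the difference bounded by $(\pi_i^\dagger)^T(Q_i-Q_i')(s',\cdot)p_i^\dagger$. That bound is one-sided; getting the absolute value requires rerunning the argument with $Q_i$ and $Q_i'$ swapped, which the paper leaves implicit.

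\textbf{Your route.} You use that $\inf$ and $\sup$ are nonexpansive under a uniform pointwise perturbation, applied to the inner max (uniformly in $x$) and then to the outer min. This gains several things:
\begin{itemize}
\item it gives both directions at once;
\item it never needs the optima to be attained;
\item it treats all three operators uniformly;
\item it gives monotonicity of $\sT_{\pi_{-i}}$ for free.
\end{itemize}
It works because of the observation you single out: with $\pi_{-i}$ fixed, the regularizers do not depend on $Q_i$, so the objective is perturbed only through $x^T(Q_i-Q_i')y$.

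\textbf{What the witness template buys.} The paper reuses the witness approach in the proof of \Cref{prop:contraction_monotonicity}. There the penalty is evaluated at an opponent equilibrium strategy $\pi_{-i}^*$ that itself moves with the payoff. The objective is then no longer perturbed only through the bilinear term, so your one-line nonexpansiveness argument would not apply directly, and equilibrium witnesses have to be tracked.

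\textbf{Minor remarks.} Your caveat about finiteness of the optima (log-barrier and KL can be infinite on the boundary) is legitimate; the paper glosses over it. The statewise interchange of the policy-level optimization with the expectation is valid, though not needed: your pointwise bound $\|Q_i-Q_i'\|_{\max}$ already holds for the expected objective at the level of whole policies.
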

\begin{proof}
    For $\sT_{*,\pi_{-i},p}$ and $\sT_{\pi,*}$, the proofs of contraction and coordinate-wise monotonicity are standard using the same argument as that in single agent RL. We now prove the contraction of $\sT_{\pi_{-i}}$ following a similar outline as in \citep{shapley1953stochastic}. Let $\pi_i^\dagger(\cdot|s')=\argmin_{\pi_i(\cdot|s')}\max_{p_i(\cdot|s')}\pi_i^T(\cdot|s')Q'_i(s',\cdot)p_i(\cdot|s')-\frac{1}{\tau_i}D_i(p_i,\pi_{-i};s')+\epsilon_i\nu_i(\pi_i;s')$ and $p_i^\dagger(\cdot|s)=\argmax_{p_i(\cdot|s')}(\pi_i^\dagger)^T(\cdot|s')Q_i(s',\cdot)p_i(\cdot|s')-\frac{1}{\tau_i}D_i(p_i,\pi_{-i};s')+\epsilon_i\nu_i(\pi_i^\dagger;s')$
    \begin{equation}\begin{aligned}
        &\|\sT_{\pi_{-i}}Q_i-\sT_{\pi_{-i}}Q_i'\|_{\max}\\
        =&\|\gamma\min_{\pi_i}\max_{p_i} \E_{s'\sim P(\cdot|s,\ba)}[\pi_i^T(\cdot|s')Q_i(s',\cdot)p_i(\cdot|s')-\frac{1}{\tau_i}D_i(p_i,\pi_{-i};s')+\epsilon_i\nu_i(\pi_i;s')]\\
        &-\gamma\min_{\pi_i}\max_{p_i} \E_{s'\sim P(\cdot|s,\ba)}[\pi_i^T(\cdot|s')Q'_i(s',\cdot)p_i(\cdot|s')-\frac{1}{\tau_i}D_i(p_i,\pi_{-i};s')+\epsilon_i\nu_i(\pi_i;s')]\|_{\max}\\
        \stackrel{\text{(i)}}{=}&\gamma\|\E_{s'\sim P(\cdot|s,\ba)}[\min_{\pi_i(\cdot|s')}\max_{p_i(\cdot|s')}\{\pi_i^T(\cdot|s')Q_i(s',\cdot)p_i(\cdot|s')-\frac{1}{\tau_i}D_i(p_i,\pi_{-i};s')+\epsilon_i\nu_i(\pi_i;s')\}\\
        &-\min_{\pi_i'(\cdot|s')}\max_{p_i'(\cdot|s')}\{\pi_i'^T(\cdot|s')Q'_i(s',\cdot)p'_i(\cdot|s')-\frac{1}{\tau_i}D_i(p'_i,\pi_{-i};s')+\epsilon_i\nu_i(\pi'_i;s')\}]\|_{\max}\\
        \stackrel{\text{(ii)}}{\leq}& \gamma\|\E_{s'\sim P(\cdot|s,\ba)}[(\pi_i^\dagger)^T(\cdot|s')(Q_i-Q_i')(s',\cdot)p^\dagger_i(\cdot|s')]\|_{\max}\\
        \leq &\gamma\|Q_i-Q_i'\|_{\max}
    \end{aligned}\end{equation}
    where (i) holds because $\pi_i,p_i$ are defined per state $s'\in \sS$, and (ii) holds by:
\begin{equation}\begin{aligned}
    &\min_{\pi_i(\cdot|s')}\max_{p_i(\cdot|s')}\{\pi_i^T(\cdot|s')Q_i(s',\cdot)p_i(\cdot|s')-\frac{1}{\tau_i}D_i(p_i,\pi_{-i};s')+\epsilon_i\nu_i(\pi_i;s')\}\\
    \leq &\max_{p_i(\cdot|s')}\{(\pi_i^\dagger)^T(\cdot|s')Q_i(s',\cdot)p_i(\cdot|s')-\frac{1}{\tau_i}D_i(p_i,\pi_{-i};s')+\epsilon_i\nu_i(\pi_i^\dagger;s')\}\\
    =& (\pi_i^\dagger)^T(\cdot|s')Q_i(s',\cdot)p^\dagger_i(\cdot|s')-\frac{1}{\tau_i}D_i(p^\dagger_i,\pi_{-i};s')+\epsilon_i\nu_i(\pi_i^\dagger;s'),
\end{aligned}\end{equation}
and similarly,
\begin{equation}\begin{aligned}
    &\min_{\pi_i(\cdot|s')}\max_{p_i(\cdot|s')}\{\pi_i^T(\cdot|s')Q_i'(s',\cdot)p_i(\cdot|s')-\frac{1}{\tau_i}D_i(p_i,\pi_{-i};s')+\epsilon_i\nu_i(\pi_i;s')\}\\
    = &\max_{p_i(\cdot|s')}\{(\pi_i^\dagger)^T(\cdot|s')Q_i'(s',\cdot)p_i(\cdot|s')-\frac{1}{\tau_i}D_i(p_i,\pi_{-i};s')+\epsilon_i\nu_i(\pi_i^\dagger;s')\}\\
    \geq& (\pi_i^\dagger)^T(\cdot|s')Q_i'(s',\cdot)p^\dagger_i(\cdot|s')-\frac{1}{\tau_i}D_i(p^\dagger_i,\pi_{-i};s')+\epsilon_i\nu_i(\pi_i^\dagger;s').
\end{aligned}\end{equation}
This completes the proof of the contraction property.
\end{proof}
Given the contraction property, we proceed to prove the minimax theorem between player $i$ and its adversary in Markov games:
\begin{lemma}
    Given fixed $\pi_{-i}$, the following holds for all $s\in \sS$:
    \begin{equation}
        \min_{\pi_i}\max_{p_i} V_i^{\pi_i,\pi_{-i},p_i}(s)=\max_{p_i}\min_{\pi_i} V_i^{\pi_i,\pi_{-i},p_i}(s)
    \end{equation}
\end{lemma}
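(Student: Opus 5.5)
The plan is the classical Shapley argument: build a saddle point of the Markov game out of state-wise saddle points of stage games whose payoff is the fixed point of $\sT_{\pi_{-i}}$. Fix $\pi_{-i}$ and let $Q_i^\star$ be the unique fixed point of $\sT_{\pi_{-i}}$, which exists by the preceding lemma and Banach's fixed point theorem.

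\textbf{Step 1: stage-game saddle points.} For each $s'$, the stage objective
\[
g_{s'}(\pi_i,p_i)=\pi_i^TQ_i^\star(s',\cdot)p_i-\tfrac{1}{\tau_i}D_i(p_i,\pi_{-i};s')+\epsilon_i\nu_i(\pi_i;s')
\]
is convex in $\pi_i$ (bilinear term plus convex $\nu_i$) and concave in $p_i$ (since $D_i$ is convex in $p_i$), on compact convex simplices. Sion's minimax theorem therefore gives a saddle point $(\pi_i^\star(\cdot|s'),p_i^\star(\cdot|s'))$ with common value $v^\star(s')$. By the fixed-point equation, $Q_i^\star(s,\ba)=-r_i(s,\ba)+\gamma\E_{s'}[v^\star(s')]$.

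\textbf{Step 2: identify value functions.} For the stationary pair $(\pi_i^\star,p_i^\star)$, the evaluation operator gives
\[
Q_i^{\pi^\star,p^\star}=\sT_{(\pi^\star,p^\star)}Q_i^{\pi^\star,p^\star},
\]
and $Q_i^\star$ satisfies the same equation because $g_{s'}(\pi_i^\star,p_i^\star)=v^\star(s')$. Since the evaluation operator is a $\gamma$-contraction, $Q_i^\star=Q_i^{\pi^\star,p^\star}$ and hence $V_i^{\pi^\star,p^\star}(s)=v^\star(s)$.

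\textbf{Step 3: inequalities against arbitrary deviations.} Take any Markov $p_i$ and consider $V_i^{\pi_i^\star,\pi_{-i},p_i}$. Its $Q$-function is the fixed point of the evaluation operator for $(\pi_i^\star,p_i)$. That operator is pointwise dominated by $\sT_{\pi^\star,*}$, and by the saddle property $\sT_{\pi^\star,*}Q_i^\star=Q_i^\star$. Iterating the evaluation operator from $Q_i^\star$ therefore produces a nonincreasing sequence. This uses monotonicity of the evaluation operator: with $\pi_i,p_i$ fixed it is affine with nonnegative weights, so it is order-preserving. The sequence converges to $Q_i^{\pi_i^\star,\pi_{-i},p_i}$, so
\[
Q_i^{\pi_i^\star,\pi_{-i},p_i}\leq Q_i^\star \quad\text{and}\quad V_i^{\pi_i^\star,\pi_{-i},p_i}(s)\leq v^\star(s).
\]
Symmetrically, the coordinate-wise monotonicity of $\sT_{*,\pi_{-i},p^\star}$, whose fixed point is $Q_i^\star$ by the other half of the saddle property, gives $V_i^{\pi_i,\pi_{-i},p_i^\star}(s)\geq v^\star(s)$ for any Markov $\pi_i$.

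\textbf{Step 4: conclude.} Chaining these bounds,
\[
\min_{\pi_i}\max_{p_i}V_i\leq\max_{p_i}V_i^{\pi_i^\star,\cdot}\leq v^\star\leq\min_{\pi_i}V_i^{\cdot,p_i^\star}\leq\max_{p_i}\min_{\pi_i}V_i .
\]
The reverse inequality $\max\min\leq\min\max$ always holds, so equality follows at every $s$ simultaneously.

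\textbf{Main obstacle.} The delicate part is Step 3. One has to verify that one-step optimality at the stage level propagates to the infinite horizon, through monotone iteration of the evaluation operator together with the lemma's contraction and monotonicity statements. One also has to be careful that the regularizer terms enter the $V$'s and not the $Q$'s, so that they are accounted for inside the stage objective $g_{s'}$.
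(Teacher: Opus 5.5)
Your argument is correct. It shares its first half with the paper's proof: take the fixed point of $\sT_{\pi_{-i}}$, pick stage-level minimax solutions, and identify $Q_i^\star=Q_i^{\pi_i^\star,\pi_{-i},p_i^\star}$ by uniqueness of the evaluation fixed point. It closes differently, though.

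The paper introduces the max--min operator $\sT'_{\pi_{-i}}$. It asserts that $\min_{\pi_i}\max_{p_i}V_i(s)=V_i^{\pi_i^*,\pi_{-i},p_i^*}(s)$, and ``by similar rationale'' that a Markov-game max--min pair has $Q$-function equal to the fixed point of $\sT'_{\pi_{-i}}$. It then concludes from the stage-level identity $\sT_{\pi_{-i}}=\sT'_{\pi_{-i}}$. Those two identifications of Markov-game optimal values with Bellman fixed points are stated rather than proved. They amount to the claim that stage-wise optimality propagates to optimality against every stationary deviation.

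Your Step 3 proves exactly this, by monotone iteration of the affine evaluation operator started at $Q_i^\star$. It uses $\sT_{\pi^\star,*}Q_i^\star=Q_i^\star$ and its mirror $\sT_{*,\pi_{-i},p^\star}Q_i^\star=Q_i^\star$. Your Step 4 then needs only weak duality, so $\sT'_{\pi_{-i}}$ and any characterization of the max--min value become unnecessary.

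The cost is that you need an actual stage-level saddle point (attainment, not just equality of values). The paper also uses this implicitly when it picks ``the corresponding minimax solution.'' The gain is a self-contained proof that also exhibits one stationary saddle pair $(\pi_i^\star,p_i^\star)$ valid at all states simultaneously.

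The only step you leave implicit is the passage from $Q_i^{\pi_i^\star,\pi_{-i},p_i}\le Q_i^\star$ to $V_i^{\pi_i^\star,\pi_{-i},p_i}(s)\le v^\star(s)$. It uses $V_i=\pi_i^TQ_ip_i-D_i/\tau_i+\epsilon_i\nu_i$, followed by the stage inequality $g_s(\pi_i^\star(\cdot|s),p_i(\cdot|s))\le v^\star(s)$. This is immediate, and the same holds symmetrically for the $\pi_i$-deviation.
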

\begin{proof}
    This proof is also analogous to that given in \citep{shapley1953stochastic}.
    By contraction property of $\sT_{\pi_{-i}}$ and Banach's fixed point theorem, there exists a unique fixed point $Q_i^{*,\pi_{-i}}$ of $\sT_{\pi_{-i}}$. Let $V_i^{*,\pi_{-i}}$ be the corresponding value function:
    \begin{equation}
        V_i^{*,\pi_{-i}}(s)=\min_{\pi_i}\max_{p_i}\pi_i(\cdot|s)^TQ_i^{*,\pi_{-i}}(s,\cdot)p_i(\cdot|s)-\frac{1}{\tau_i}D_i(p_i,\pi_{-i};s)+\epsilon_i\nu_i(\pi_i;s),
    \end{equation}
    and let $\pi_i^*,p_i^*$ be the corresponding minimax solution, we have that $V_i^{\pi_i^*,\pi_{-i},p_i^*}=V_i^{*,\pi_{-i}}$ because:
    \begin{equation}\begin{aligned}
        &\sT_{\pi_i^*,\pi_{-i},p_i^*} Q_i^{*,\pi_{-i}}(s,\ba)\\
        =&-r_i(s,\ba)+\gamma \E_{s'\sim P(\cdot|s,\ba)}[(\pi_i^*)^T(\cdot|s')Q^{*,\pi_{-i}}_i(s',\cdot)p^*_i(\cdot|s')-\frac{1}{\tau_i}D_i(p_i^*,\pi_{-i};s')+\epsilon_i\nu_i(\pi^*_i;s')]\\
        =&-r_i(s,\ba)+\gamma \E_{s'\sim P(\cdot|s,\ba)}[\min_{\pi_i}\max_{p_i}\pi_i^T(\cdot|s')Q^{*,\pi_{-i}}_i(s',\cdot)p_i(\cdot|s')-\frac{1}{\tau_i}D_i(p_i,\pi_{-i};s')+\epsilon_i\nu_i(\pi_i;s')]\\
        =&\sT_{\pi_{-i}} Q_i^{*,\pi_{-i}}(s,\ba)\\
        =&Q_i^{*,\pi_{-i}}(s,\ba)
    \end{aligned}\end{equation}
    so $Q_i^{*,\pi_{-i}}=Q_i^{\pi_i^*,\pi_{-i},p_i^*}$ because the unique fixed point of $\sT_{\pi_i^*,\pi_{-i},p_i^*}$ is $Q_i^{\pi_i^*,\pi_{-i},p_i^*}$ (by a similar contraction statement on $\sT_{\pi_i^*,\pi_{-i},p_i^*}$).

    Similarly, let $\pi_i^\dagger,p_i^\dagger$ be the maximin solution to $\max_{p_i}\min_{\pi_i} V_i^{\pi_i,\pi_{-i},p_i}(s)$, by similar rationale we know that $Q^{\pi_i^\dagger, \pi_{-i},p_i^\dagger}$ is a unique fixed point of the maximin Bellman operator $\sT_{\pi_{-i}}'$ defined by:
    \begin{equation}
        \sT'_{\pi_{-i}} Q_i(s,\ba)=-r_i(s,\ba)+\gamma \max_{p_i}\min_{\pi_i} \E_{s'\sim P(\cdot|s,\ba)}[\pi_i^T(\cdot|s')Q_i(s',\cdot)p_i(\cdot|s')-\frac{1}{\tau_i}D_i(p_i,\pi_{-i};s')+\epsilon_i\nu_i(\pi_i;s')].
    \end{equation}
    However, for arbitrary $Q_i$ function (not as a function of $\pi_i$ and $p_i$) that can be viewed as a matrix, the minimax theorem for normal-form game holds, such that
    \begin{equation}\begin{aligned}
        &\max_{p_i}\min_{\pi_i} \E_{s'\sim P(\cdot|s,\ba)}[\pi_i^T(\cdot|s')Q_i(s',\cdot)p_i(\cdot|s')-\frac{1}{\tau_i}D_i(p_i,\pi_{-i};s')+\epsilon_i\nu_i(\pi_i;s')]\\
        =&\min_{\pi_i}\max_{p_i} \E_{s'\sim P(\cdot|s,\ba)}[\pi_i^T(\cdot|s')Q_i(s',\cdot)p_i(\cdot|s')-\frac{1}{\tau_i}D_i(p_i,\pi_{-i};s')+\epsilon_i\nu_i(\pi_i;s')]
    \end{aligned}\end{equation}
    so that $\sT_{\pi_{-i}}$ and $\sT_{\pi_{-i}}'$ are indeed identical and share the same fixed point. Therefore, we claim that:
    \begin{equation}
        \min_{\pi_i}\max_{p_i} V_i^{\pi_i,\pi_{-i},p_i}(s)=V_i^{\pi_i^*,\pi_{-i},p_i^*}=V_i^{*,\pi_{-i}}=V_i^{\pi_i^\dagger,\pi_{-i},p_i^\dagger}=\max_{p_i}\min_{\pi_i} V_i^{\pi_i,\pi_{-i},p_i}(s)
    \end{equation}
\end{proof}

Having established the minimax theorem for Markov games, we now prove that similar to the normal-form case, that a Nash equilibrium of the 4-player Markov game corresponds to an RQE of the two-player game, and vice versa:
\begin{lemma}\label{lem:equilibrium_2to4_player_Markov}
    Let $(\pi^*,p^*)$ be a Nash equilibrium of the 4-player Markov game, we have that $\pi^*$ is an RQE of the original two-player Markov game. Furthermore, if $\pi^*$ is an RQE of the two-player Markov game, then $(\pi^*,p^*)$ is a Nash equilibrium of the 4-player Markov game where $p_i^*=\argmax_{p_i:\sS\rightarrow \Delta_{|\sA_{-i}|}} V_i^{\pi^*,p_i}(s)$ for all $s\in \sS$.
\end{lemma}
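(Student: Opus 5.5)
The argument mirrors the normal-form \Cref{prop:equilibrium_2to4_player}, state by state, with the Markov-game minimax lemma just proved replacing the normal-form minimax theorem. Throughout, $f_i(\pi_i,\pi_{-i};s)=\max_{p_i}V_i^{\pi_i,\pi_{-i},p_i}(s)$. The adversary's objective is $\bar J_i=-V_i$. So a Nash equilibrium of the 4-player Markov game means two things for every $s$: $\pi_i^*$ minimizes $V_i^{\cdot,\pi_{-i}^*,p_i^*}(s)$, and $p_i^*$ maximizes $V_i^{\pi_i^*,\pi_{-i}^*,\cdot}(s)$. In other words, for fixed $\pi_{-i}^*$, the pair $(\pi_i^*,p_i^*)$ is a saddle point of $V_i^{\cdot,\pi_{-i}^*,\cdot}(s)$ simultaneously at all states.

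\textbf{First direction.} Fix $i$ and $\pi_{-i}^*$. The saddle-point inequalities give
\[
\max_{p_i}V_i^{\pi_i^*,\pi_{-i}^*,p_i}(s)=V_i^{\pi^*,p^*}(s)=\min_{\pi_i}V_i^{\pi_i,\pi_{-i}^*,p_i^*}(s)\le \max_{p_i}\min_{\pi_i}V_i(s).
\]
Combining this with the minimax lemma ($\max\min=\min\max$) yields
\[
f_i(\pi_i^*,\pi_{-i}^*;s)\le \min_{\pi_i}\max_{p_i}V_i^{\pi_i,\pi_{-i}^*,p_i}(s)=\min_{\pi_i}f_i(\pi_i,\pi_{-i}^*;s).
\]
Hence $f_i(\pi_i^*,\pi_{-i}^*;s)\le f_i(\pi_i,\pi_{-i}^*;s)$ for every $s$ and every Markov $\pi_i$, which is \eqref{eq:def_RQE_Markov}. (Even without the lemma, weak duality gives $\max\min\le\min\max$, so this direction goes through directly.)

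\textbf{Converse.} Let $\pi^*$ be an RQE, and let $p_i^*$ be a maximizer of $V_i^{\pi^*,p_i}(s)$. A single Markov $p_i^*$ must attain the maximum at all states simultaneously. I would justify this with the single-agent MDP fact for the adversary: with $\pi$ fixed, maximizing $V_i$ over $p_i$ is an MDP with state-dependent regularized reward. Its optimal policy is Markov and uniformly optimal, obtained as the greedy policy of the fixed point of $\sT_{\pi,*}$, which is a contraction by the earlier lemma. Then $p_i^*$ is a best response by construction. It remains to show that $\pi_i^*$ minimizes $V_i^{\cdot,\pi_{-i}^*,p_i^*}(s)$ at every $s$.

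\textbf{Main obstacle.} The RQE property says $\pi_i^*$ minimizes $\max_{p_i}V_i$, which is not the same as minimizing $V_i$ against the fixed $p_i^*$. To bridge this, use the minimax lemma: it gives $\min_{\pi_i}\max_{p_i}V_i=\max_{p_i}\min_{\pi_i}V_i$ at every state, attained by policies $(\pi_i^\dagger,p_i^\dagger)$ built from the fixed point $Q_i^{*,\pi_{-i}^*}$ of $\sT_{\pi_{-i}^*}$. The RQE property gives
\[
\max_{p_i}V_i^{\pi_i^*,\pi_{-i}^*,p_i}(s)=V_i^{*,\pi_{-i}^*}(s),
\]
so $\pi_i^*$ is a minimax strategy. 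I would then use the standard exchange argument for finite-state saddle points. Since $p_i^\dagger$ is a maximin strategy, for all $\pi_i$,
\[
V_i^{\pi_i^*,p_i^\dagger}\le \max_{p_i} V_i^{\pi_i^*,p_i}=V^*\le V_i^{\pi_i,p_i^\dagger},
\]
so $(\pi_i^*,p_i^\dagger)$ is a saddle point state-wise. With $p_i^*$ in place of $p_i^\dagger$, the relevant fact is that any best response $p_i^*$ to $\pi_i^*$ yields $V_i^{\pi_i^*,p_i^*}=V^*$. But $\pi_i^*$ need not be a best response to an arbitrary maximizer $p_i^*$ unless the maximizer is unique or chosen to be $p_i^\dagger$.

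I would therefore handle the argmax as follows. If uniqueness holds, which is the case when $D_i$ is strictly convex in $p_i$ (true for the KL choices, since the per-state adversary problem is then strictly concave), then $p_i^*=p_i^\dagger$ and the saddle property is immediate. Otherwise, I would interpret the statement's $p_i^*$ as the selection $p_i^\dagger$. Under this reading, the pair $(\pi^*,p^*)$ satisfies all four players' best-response conditions at every state, completing the proof.
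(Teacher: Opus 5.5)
Your proof is correct and follows essentially the paper's route. The first direction is the same chain of best-response inequalities; the detour through weak duality is harmless but unnecessary, since $f_i(\pi_i^*,\pi_{-i}^*;s)=V_i^{\pi^*,p^*}(s)\le V_i^{\pi_i,\pi_{-i}^*,p^*}(s)\le f_i(\pi_i,\pi_{-i}^*;s)$ already suffices. The converse uses the Markov minimax lemma to turn the RQE min--max property into a state-wise saddle point.

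Your caveat about how $p_i^*$ is selected is well taken, and the paper glosses over this point. Its step $\min_{\pi_i}\max_{p_i}V_i^{\pi_i,\pi_{-i}^*,p_i}(s)=\min_{\pi_i}V_i^{\pi_i,\pi_{-i}^*,p_i^*}(s)$ tacitly requires $p_i^*$ to be a maximin strategy. That holds when $D_i$ is strictly convex in $p_i$, so the best response is unique (as under the strong monotonicity assumption), but it can fail for an arbitrary maximizer, exactly as you note.
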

\begin{proof}
    The first statement can be proved using a similar argument as in the proof of \Cref{prop:equilibrium_2to4_player}. Let $(\pi^*,p^*)$ be a Nash equilibrium of the 4-player Markov game, by definition we have for all $\pi_i:\sS\rightarrow \Delta_{|\sA_i|}$ and $s\in \sS$:
    \begin{equation}
        V_i^{\pi^*,p^*}(s)\leq V_i^{\pi_i,\pi_{-i}^*,p^*}(s),
    \end{equation}
    and for all $p_i:\sS\rightarrow \Delta_{|\sA_{-i}|}$:
    \begin{equation}
        V_i^{\pi^*,p_i,p_{-i}^*}(s)\leq V_i^{\pi^*,p^*}(s).
    \end{equation}
    Therefore, we have:
    \begin{equation}\begin{aligned}
        f_i(\pi_i^*,\pi_{-i}^*;s)=&\max_{p_i:\sS\rightarrow \Delta_{|\sA_{-i}|}} V_i^{\pi^*,p}(s)
        =V_i^{\pi^*,p^*}(s)\\
        \leq & V_i^{\pi_i,\pi_{-i}^*,p^*}(s)
        \leq \max_{p_i':\sS\rightarrow \Delta_{|\sA_{-i}|}}V_i^{\pi_i,\pi_{-i}^*,p_i'}(s)\\
        =&f_i(\pi_i,\pi_{-i}^*;s).
    \end{aligned}\end{equation}
    For the second statement, let $\pi^*$ be an RQE of the two-player Markov game, by the minimax theorem in Markov games, we have a similar argument as in \Cref{prop:equilibrium_2to4_player} that
    \begin{equation}
        \min_{\pi_i:\sS\rightarrow \Delta_{|\sA_{i}|}}\max_{p_i:\sS\rightarrow \Delta_{|\sA_{-i}|}}V_i^{\pi_i,\pi_{-i}^*,p_i}(s)=\min_{\pi_i:\sS\rightarrow \Delta_{|\sA_{i}|}}V_i^{\pi_i,\pi_{-i}^*,p_i^*}(s),
    \end{equation}
    so that for all $s\in \sS$,
    \begin{equation}
        V_i^{\pi^*,p^*}(s)=\min_{\pi_i:\sS\rightarrow \Delta_{|\sA_{i}|}}\max_{p_i:\sS\rightarrow \Delta_{|\sA_{-i}|}}V_i^{\pi_i,\pi_{-i}^*,p_i}(s)=\min_{\pi_i:\sS\rightarrow \Delta_{|\sA_{i}|}}V_i^{\pi_i,\pi_{-i}^*,p_i^*}(s)\leq V_i^{\pi_i',\pi_{-i}^*,p_i^*}(s)
    \end{equation}
    for all $\pi_i':\sS\rightarrow \Delta_{|\sA_{i}|}$, which completes the proof.
\end{proof}

With the auxiliary lemmas stated, we proceed to prove \Cref{prop:RQE_iff_Bellman_fixed_point}.

We first prove the first part. Let $\bQ^*$ be a fixed point of the Bellman optimality operator and $z^*=(\pi^*,p^*)$ denote the minimax policies, we first prove that $\bQ^*=\bQ^{z^*}$. By the definition \eqref{eq:Bellman_optimality_operator} of $\sT$ we have that for all $i\in\{1,2\}$ and $(s,\ba)\in \sS\times \sA$:
\begin{equation}
    \begin{aligned}
        &Q^*_i(s,\ba)\\
        =&(\sT \mathbf{Q^*})_i(s,\mathbf{a})\\
        =&-r_i(s,\mathbf{a})+\gamma \E_{s'\sim P(\cdot|s,\ba)}[\texttt{RQE}_i(-\mathbf{Q}^*(s',\cdot))] \\
        =&-r_i(s,\mathbf{a})+\gamma \E_{s'\sim P(\cdot|s,\ba)}\left[(\pi_i^*)^T(\cdot|s')Q^*_i(s',\cdot)p^*_i(\cdot|s')-\frac{1}{\tau_i}D_i(p^*_i,\pi_{-i}^*;s')+\epsilon_i\nu_i(\pi_i^*;s') \right]\\
        =&-r_i(s,\mathbf{a})+\gamma \E_{s'\sim P(\cdot|s,\ba)}\left[J_i\left(\pi^*_i(\cdot|s'), \pi^*_{-i}(\cdot|s'), p^*_{i}(\cdot|s');-\bQ^*(s',\cdot)\right) \right]
    \end{aligned}
\end{equation}
We claim that $\bQ^*=\bQ^{z^*}$ because we have the following for all $i\in[N], s\in \sS$ and $\ba\in\sA$:
\begin{equation}
    \begin{aligned}
        &Q^*_i(s,\ba)-Q^{z^*}_i(s,\ba)\\
        =&\gamma\E_{s'\sim P(\cdot|s,\ba)}\left[J_i\left(\pi^*_i(\cdot|s'), \pi^*_{-i}(\cdot|s'), p^*_{i}(\cdot|s');-\bQ^*(s',\cdot)\right)-J_i\left(\pi^*_i(\cdot|s'), \pi^*_{-i}(\cdot|s'), p^*_{i}(\cdot|s');-\bQ^{z^*}(s',\cdot)\right)\right]\\
        =&\gamma\E_{s'\sim P(\cdot|s,\ba)}[(\pi_i^*)^T(\cdot|s')(Q^*_i-Q_i^{z^*})(s',\cdot)p_i^*(\cdot|s')]\\
        \leq & \gamma \|\bQ^*-\bQ^{z^*}\|_{\max}.
    \end{aligned}
\end{equation}
Since this inequality holds for all $i, s\in \sS$ and $\ba\in\sA$, we know that $\|\bQ^*-\bQ^{z^*}\|_{\max}=0$.

Therefore, we have that for all policy $\pi_i$ and all states $s\in \sS$,
\begin{equation}
    \begin{aligned}
        f_i(\pi_i,\pi_{-i}^*;s)=&\max_{p_i} V_i^{\pi_i,\pi_{-i}^*,p}(s)\\
        =&\max_{p_i}\pi_i(\cdot|s)^TQ_i^{\pi_i,\pi_{-i}^*,p}(s,\cdot)p_i(\cdot|s)-\frac{1}{\tau_i}D_i(p_i,\pi^*_{-i};s)+\epsilon_i\nu_i(\pi_i;s)\\
        \stackrel{\text{(i)}}{\geq} &\pi_i(\cdot|s)^TQ_i^{\pi_i,\pi_{-i}^*,p^*}(s,\cdot)p^*_i(\cdot|s)-\frac{1}{\tau_i}D_i(p^*_i,\pi^*_{-i};s)+\epsilon_i\nu_i(\pi_i;s)\\
        \stackrel{\text{(ii)}}{\geq} &(\pi_i^*)^T(\cdot|s)Q^*_i(s,\cdot)p^*_i(\cdot|s)-\frac{1}{\tau_i}D_i(p^*_i,\pi^*_{-i};s)+\epsilon_i\nu_i(\pi^*_i;s)\\
        =&\texttt{RQE}_i(-\mathbf{Q}^*(s,\cdot))\\
        =&\texttt{RQE}_i(-\mathbf{Q}^{z^*}(s,\cdot))
    \end{aligned}
\end{equation}
where (i) holds because we have taken maximum over $p_i$. Now we prove that (ii) holds. Notice that now $\pi_{-i}^*$ and $p^*$ are fixed, the problem becomes a single-agent MDP problem where the only policy to maximize is $\pi_i$. Recall that $\sT_{*,\pi^*_{-i},p^*}$ is the Bellman optimality operator acting on $Q_i$ with respect to only $\pi_i$ as follows:
\begin{equation}
    \sT_{*,\pi^*_{-i},p^*}Q_i(s,\ba)=-r_i(s,\ba)+\gamma \min_{\pi_i} \E_{s'\sim P(\cdot|s,\ba)}[\pi_i^T(\cdot|s')Q_i(s',\cdot)p^*_i(\cdot|s')-\frac{1}{\tau_i}D_i(p^*_i,\pi^*_{-i};s')+\epsilon_i\nu_i(\pi_i;s')]
\end{equation}
By construction of $\pi_i^*$ we know that $\bQ^*$ is also a fixed point of $\sT_{*,\pi^*_{-i},p^*}$ as well. By standard result in single-agent MDPs and $\pi_i^*$ is the corresponding optimal policy, we have:
\begin{equation}
    V_i^{\pi_i,\pi_{-i}^*,p^*}(s)\geq V_i^{\pi_i^*,\pi_{-i}^*,p^*}(s), \forall s\in \sS,\pi_i:\sS\rightarrow \Delta_{|\sA_i|}
\end{equation}
which immediately implies (ii) and completes the proof of the first part.

Now we proceed to prove the second part. Let $\pi^*$ be an RQE of the two-player Markov game, we know by \Cref{lem:equilibrium_2to4_player_Markov} that $(\pi^*,p^*)$ is a Nash equilibrium of the 4-player Markov game. By Lemma 10 in \citep{hu2003nash}, we know that $(\pi^*,p^*)$ is also a Nash equilibrium of the stage game characterized by $\bQ^{\pi^*,p^*}(s,\cdot)$ for all $s\in \sS$. Therefore, by \Cref{prop:equilibrium_2to4_player} we know that $\pi^*$ is an RQE of the two-player stage , which suggests that
\begin{equation}
    V_i^{\pi^*,p^*}(s)=\texttt{RQE}_i(-\bQ^{\pi^*,p^*}(s,\cdot))
\end{equation}
This leads to the final result:
\begin{equation}\begin{aligned}
(\sT \mathbf{Q}^{\pi^*,p^*})_i(s,\mathbf{a})=&-r_i(s,\mathbf{a})+
        \gamma \E_{s'\sim P(\cdot|s,\ba)}[\texttt{RQE}_i(-\mathbf{Q}^{\pi^*,p^*}(s',\cdot))]\\
        =&-r_i(s,\mathbf{a})+
        \gamma \E_{s'\sim P(\cdot|s,\ba)}[V_i^{\pi^*,p^*}(s')]\\
        =&Q_i^{\pi^*,p^*}(s,\mathbf{a}).
\end{aligned}\end{equation}
which completes the proof of \Cref{prop:RQE_iff_Bellman_fixed_point}.

\section{Proofs and Additional Details for \Cref{sec:normal_form_games}}
\subsection{Proof of \Cref{thm:RQE_property_4player_monotonicity}}\label[appendix]{app:proof_RQE_property_4player_monotonicity}
We follow a similar approach as in \citep{zhang2025convergent}.

\paragraph{Proof of uniqueness.} Consider the KKT conditions for the objective functions \eqref{eq:intro_4player_objective} and \eqref{eq:intro_4player_adv_objective} over the simplex of each player given $\mathbf{R}$, which must be satisfied at equilibrium point $z^*$. For the original player we have:
\begin{equation}\label{eq:app_B.2_KKT_cond_originalplayer}\begin{aligned}
        -R_ip_i^*+\epsilon_i\nabla\nu_i(\pi_i^*)-\lambda(\pi_i)+\mu(\pi_i)1=0;\\
        \pi_i^*\in \Delta_{|\sA_i|}; \lambda(\pi_i)\geq 0; \mu(\pi_i)\in\R;\lambda(\pi_i)^T\pi_i^*=0.
    \end{aligned}\end{equation}
where $i\in\{1, 2\}$, $\lambda(\pi_i)$ and $\mu(\pi_i)$ are Lagrange multipliers with respect to the simplex constraint, and $\lambda(\pi_i)^T\pi_i^*=0$ denotes complementary slackness. For the adversaries we have:
\begin{equation}\label{eq:app_B.2_KKT_cond_adversary}\begin{aligned}
            R_i^T\pi_i^* +\frac{1}{\tau_i}\nabla_pD_i(p_i^*,\pi_{-i}^*)-\lambda(p_i)+\mu(p_i)1=0;\\
            p_i^*\in\Delta_{|\sA_{-i}|}; \lambda(p_i)\geq 0; \mu(p_i)\in \R; \lambda(p_i)^Tp_i^*=0.
    \end{aligned}\end{equation}
where similarly $i\in\{1,2\}$, $\lambda(p_i), \mu(p_i)$ are Lagrange multipliers and $\lambda(p_i)^Tp_i^*=0$ denotes complementary slackness. We can combine \eqref{eq:app_B.2_KKT_cond_originalplayer} and \eqref{eq:app_B.2_KKT_cond_adversary} in a more compact form:
    \begin{equation}
        F(z^*; \mathbf{R})=\begin{bmatrix}
        \lambda(\pi_1)-\mu(\pi_1) 1\\
        \lambda(\pi_2)-\mu(\pi_2) 1\\
        \lambda(p_1)-\mu(p_1)1\\
        \lambda(p_2)-\mu(p_2)1
    \end{bmatrix}.
    \end{equation}
Therefore, for arbitrary $z\in \sZ$ we have:
\begin{equation}\begin{aligned}\label{eq:app_B.2_optimality_equilibrium_point}
        &\langle z^*-z, F(z^*;\bR)\rangle_\lambda\\
        \stackrel{\text{(i)}}{=} & \lambda_1\langle\pi_1^*-\pi_1,\lambda(\pi_1)-\mu(\pi_1) 1\rangle+\lambda_2\langle\pi_2^*-\pi_2,\lambda(\pi_2)-\mu(\pi_2) 1\rangle\\
        &+\lambda_1\langle p_1^*-p_1,\lambda(p_1)-\mu(p_1)1 \rangle+\lambda_2\langle p_2^*-p_2,\lambda(p_2)-\mu(p_2)1 \rangle\\
        \stackrel{\text{(ii)}}{=}&\lambda_1\langle\pi_1^*-\pi_1,\lambda(\pi_1)\rangle+\lambda_2\langle\pi_2^*-\pi_2,\lambda(\pi_2)\rangle+\lambda_1\langle p_1^*-p_1,\lambda(p_1)\rangle+\lambda_2\langle p_2^*-p_2,\lambda(p_2)\rangle\\
        \stackrel{\text{(ii)}}{=}&-\lambda_1\langle\pi_1,\lambda(\pi_1)\rangle-\lambda_2\langle\pi_2,\lambda(\pi_2)\rangle-\lambda_1\langle p_1,\lambda(p_1)\rangle-\lambda_2\langle p_2,\lambda(p_2)\rangle\\
        \stackrel{\text{(iv)}}{\leq}& 0
\end{aligned}\end{equation}
where (i) uses the definition of $\lambda$-weighted inner product, (ii) holds because $\pi_i\in \Delta_{|\sA_i|},p_i\in \Delta_{|\sA_{-i}|}$ so that $\langle \pi^*_i-\pi_i,1\rangle=0$ and $\langle p^*_i-p_i,1\rangle=0$, (iii) holds by complementary slackness and (iv) holds because $\lambda_i,\pi_i,p_i,\lambda(\pi_i),\lambda(p_i)>0$.

If $z_1^*$ and $z_2^*$ are two Nash equilibria of the 4-player game with respect to $\mathbf{R}$, we must have $\langle z_1-z_2,F(z_1; \mathbf{R})\rangle_\lambda\leq 0$ and $\langle z_2-z_1,F(z_2; \mathbf{R})\rangle_\lambda\leq 0$. Adding these two inequalities up we have $\langle z_1-z_2,F(z_1; \mathbf{R})-F(z_2; \mathbf{R})\rangle_\lambda\leq 0$, combining $\lambda$-strict monotonicity of $F(\cdot;\mathbf{R})$ we have $z_1=z_2$, indicating the uniqueness of the Nash equilibrium.

\paragraph{Proof of Lipschitz Continuity.} In order to prove the Lipschitz continuity of RQE with respect to payoff matrices, let $\bR$ and $\bR'$ be two pairs of payoff matrices, notice that \eqref{eq:app_B.2_optimality_equilibrium_point} implies:
\begin{equation}
    \langle z^*-z^\dagger, F(z^*; \mathbf{R})\rangle_\lambda\leq 0;\quad 
            \langle z^\dagger-z^*, F(z^\dagger; \mathbf{R}')\rangle_\lambda\leq 0.
\end{equation}
adding these two inequalities up, we have:
\begin{align}
    \langle z^*-z^\dagger, F(z^*; \mathbf{R})-F(z^\dagger; \mathbf{R}')\rangle_\lambda \leq 0.
\end{align}
We split the left hand side into two difference terms and get:
\begin{align}
    \langle z^*-z^\dagger, F(z^*; \mathbf{R})-F(z^*; \mathbf{R}')\rangle_\lambda +\langle z^*-z^\dagger, F(z^*; \mathbf{R}')-F(z^\dagger; \mathbf{R}')\rangle_\lambda \leq 0.
\end{align}
For the first term, we can use Cauchy-Schwarz inequality for $\lambda$-weighted norm:
\begin{equation}
    \begin{aligned}
        &\langle z^*-z^\dagger, F(z^*; \mathbf{R})-F(z^*; \mathbf{R}')\rangle_\lambda\\
        \geq & -\|\lambda\|_\infty\|z^*-z^\dagger\|_2 \|F(z^*; \mathbf{R})-F(z^*; \mathbf{R}')\|_2
    \end{aligned}
\end{equation}
For the second term, $(\mu,\lambda)$-strong monotonicity yields:
\begin{align}
    \langle z^*-z^\dagger, F(z^*; \mathbf{R}')-F(z^\dagger; \mathbf{R}')\rangle_\lambda \geq \mu\|z^*-z^\dagger\|_2^2.
\end{align}
Therefore, combining the bounds above, we have:
\begin{align}
        &\mu\|z^*-z^\dagger\|_2^2\leq \|\lambda\|_\infty\|z^*-z^\dagger\|_2 \|F(z^*; \mathbf{R})-F(z^*; \mathbf{R}')\|_2,
\end{align}
canceling out $\|z^*-z^\dagger\|_\lambda$ and rearranging terms we get:
\begin{align}
    \|z^*-z^\dagger\|_2\leq \|\lambda\|_\infty\|F(z^*; \mathbf{R})-F(z^*; \mathbf{R}')\|_2/\mu.
\end{align}
Finally, by expanding $F(z^*; \mathbf{R})-F(z^*; \mathbf{R}')$, we obtain:
\begin{equation}
    \begin{aligned}
        \|F(z^*; \mathbf{R})-F(z^*; \mathbf{R}')\|_2
        = \left\|\begin{bmatrix}
                -(R_1-R_1')p_1^*\\
                -(R_2-R_2')p_2^*\\
                (R_1-R_1')^T\pi_1^*\\
                (R_2-R_2')^T\pi_2^*
            \end{bmatrix} \right\|_2
            \leq 2\left(\sqrt{|\sA_1|}+\sqrt{|\sA_2|}\right)\|\bR-\bR'\|_{\max},
    \end{aligned}
\end{equation}
and therefore,
\begin{equation}
    \|z^*-z^\dagger\|_2\leq \frac{2\|\lambda\|_\infty \left(\sqrt{|\sA_1|}+\sqrt{|\sA_2|}\right) }{\mu}\|\bR-\bR'\|_{\max},
\end{equation}
which completes the proof.

\subsection{Proof of \Cref{thm:monotonicity_condition}}\label[appendix]{app:proof_thm_monotonicity_condition}
\paragraph{Part 1: General condition.}
By \Cref{lem:monotonicity_jacobian_condition}, the monotonicity of the 4-player game (or equivalently, the monotonicity of the gradient operator $F$) can be characterized by the positivity of the operator $\Lambda\nabla F(z)+\nabla F(z)^T\Lambda$, where $\Lambda=\Diag(\lambda)$ is a diagonal matrix with entries identical to $\lambda$. Recall the expression \eqref{eq:4player_gradient_operator} of the gradient operator $F$, we can write the Jacobian of $F$ as:
\begin{equation}
    \nabla F(z)=\begin{bmatrix}
        \epsilon_1\nabla^2\nu_1 & 0 & -R_1 & 0 \\
        0 & \epsilon_2\nabla^2\nu_2 & 0 & -R_2 \\
        R_1^T & \frac{1}{\tau_1}\nabla^2_{p\pi}D_1 & \frac{1}{\tau_1}\nabla^2_p D_1 & 0\\
        \frac{1}{\tau_2}\nabla^2_{p\pi}D_2 & R_2^T & 0 & \frac{1}{\tau_2}\nabla^2_{p}D_2
    \end{bmatrix},
\end{equation}
and therefore, $\Lambda\nabla F(z)$ equals:
\begin{equation}
    \Lambda\nabla F(z)=\begin{bmatrix}
        \lambda_1\epsilon_1\nabla^2\nu_1 & 0 & -\lambda_1R_1 & 0 \\
        0 & \lambda_2\epsilon_2\nabla^2\nu_2 & 0 & -\lambda_2R_2 \\
        \lambda_1R_1^T & \frac{\lambda_1}{\tau_1}\nabla^2_{p\pi}D_1 & \frac{\lambda_1}{\tau_1}\nabla^2_p D_1 & 0\\
        \frac{\lambda_2}{\tau_2}\nabla^2_{p\pi}D_2 & \lambda_2R_2^T & 0 & \frac{\lambda_2}{\tau_2}\nabla^2_{p}D_2\end{bmatrix}.
\end{equation}
Similarly, we have:
\begin{equation}
    \nabla F(z)^T\Lambda=\begin{bmatrix}
        \lambda_1\epsilon_1\nabla^2\nu_1 & 0 & \lambda_1 R_1 & \frac{\lambda_2}{\tau_2}\nabla^2_{p\pi}D_2\\
        0 & \lambda_2\epsilon_2\nabla^2\nu_2 & \frac{\lambda_1}{\tau_1}\nabla^2_{p\pi}D_1 & \lambda_2R_2\\
        -\lambda_1R_1^T & 0 & \frac{\lambda_1}{\tau_1}\nabla^2_p D_1 & 0\\
        0 & -\lambda_2R_2^T & 0 & \frac{\lambda_2}{\tau_2}\nabla^2_{p}D_2
    \end{bmatrix}
\end{equation}
Combining the two expressions above, we obtain:
\begin{equation}
    \Lambda\nabla F(z)+\nabla F(z)^T\Lambda=\begin{bmatrix}
        2\lambda_1\epsilon_1\nabla^2\nu_1 & 0 & 0 & \frac{\lambda_2}{\tau_2}\nabla^2_{p\pi}D_2 \\
        0 & 2\lambda_2\epsilon_2\nabla^2\nu_2 & \frac{\lambda_1}{\tau_1}\nabla^2_{p\pi}D_1 & 0 \\
        0 & \frac{\lambda_1}{\tau_1}\nabla^2_{p\pi}D_1 & 2\frac{\lambda_1}{\tau_1}\nabla^2_p D_1 & 0 \\
        \frac{\lambda_2}{\tau_2}\nabla^2_{p\pi}D_2 & 0 & 0 & 2\frac{\lambda_2}{\tau_2}\nabla^2_{p}D_2
    \end{bmatrix}.
\end{equation}
As a result, the condition stated in \Cref{lem:monotonicity_jacobian_condition} that $\Lambda\nabla F(z)+\nabla F(z)^T\Lambda\succeq 2\mu I$ becomes:
\begin{equation}
    \begin{bmatrix}
        2\lambda_1\epsilon_1\nabla^2\nu_1-2\mu I & 0 & 0 & \frac{\lambda_2}{\tau_2}\nabla^2_{p\pi}D_2 \\
        0 & 2\lambda_2\epsilon_2\nabla^2\nu_2-2\mu I & \frac{\lambda_1}{\tau_1}\nabla^2_{p\pi}D_1 & 0 \\
        0 & \frac{\lambda_1}{\tau_1}\nabla^2_{p\pi}D_1 & 2\frac{\lambda_1}{\tau_1}\nabla^2_p D_1-2\mu I & 0 \\
        \frac{\lambda_2}{\tau_2}\nabla^2_{p\pi}D_2 & 0 & 0 & 2\frac{\lambda_2}{\tau_2}\nabla^2_{p}D_2-2\mu I
    \end{bmatrix}\succeq 0
\end{equation}
By rearranging rows and columns 2 and 4, we turn the matrix into block-diagonal form where positivity condition still holds:
\begin{equation}
    \begin{bmatrix}
        2\lambda_1\epsilon_1\nabla^2\nu_1-2\mu I & \frac{\lambda_2}{\tau_2}\nabla^2_{p\pi}D_2 & 0 & 0 \\
        \frac{\lambda_2}{\tau_2}\nabla^2_{p\pi}D_2 & 2\frac{\lambda_2}{\tau_2}\nabla^2_{p}D_2-2\mu I & 0 & 0 \\
        0 & 0 & 2\lambda_2\epsilon_2\nabla^2\nu_2-2\mu I & \frac{\lambda_1}{\tau_1}\nabla^2_{p\pi}D_1  \\
        0 & 0 & \frac{\lambda_1}{\tau_1}\nabla^2_{p\pi}D_1 & 2\frac{\lambda_1}{\tau_1}\nabla^2_p D_1-2\mu I\\
    \end{bmatrix}\succeq 0,
\end{equation}
this simplifies to the positivity of the two diagonal blocks:
\begin{equation}
    \begin{bmatrix}
        2\lambda_1\epsilon_1\nabla^2\nu_1-2\mu I & \frac{\lambda_2}{\tau_2}\nabla^2_{p\pi}D_2 \\
        \frac{\lambda_2}{\tau_2}\nabla^2_{p\pi}D_2 & 2\frac{\lambda_2}{\tau_2}\nabla^2_{p}D_2-2\mu I \\
    \end{bmatrix}\succeq 0;
    \begin{bmatrix}
        2\lambda_2\epsilon_2\nabla^2\nu_2-2\mu I & \frac{\lambda_1}{\tau_1}\nabla^2_{p\pi}D_1  \\
        \frac{\lambda_1}{\tau_1}\nabla^2_{p\pi}D_1 & 2\frac{\lambda_1}{\tau_1}\nabla^2_p D_1-2\mu I\\
    \end{bmatrix}\succeq 0.
\end{equation}
This concludes the proof of part 1 in \Cref{thm:monotonicity_condition}.

\textit{More discussion:} We can see that a necessary condition for these two matrices to be positive semidefinite is:
\begin{equation}
    \nabla^2\nu_i\succeq \frac{\mu}{\epsilon_i\lambda_i}I; \quad \nabla_p^2D_i\succeq\frac{\mu\tau_i}{\lambda_i}I,\quad i\in\{1,2\}.
\end{equation}
Notice that we further have:
\begin{equation}\begin{aligned}
    &\begin{bmatrix}
        2\lambda_i\epsilon_i\nabla^2\nu_i-2\mu I & \frac{\lambda_{-i}}{\tau_{-i}}\nabla^2_{p\pi}D_{-i}  \\
        \frac{\lambda_{-i}}{\tau_{-i}}\nabla^2_{p\pi}D_{-i} & 2\frac{\lambda_{-i}}{\tau_{-i}}\nabla^2_p D_{-i}-2\mu I\\
    \end{bmatrix}\\
    &=\frac{\lambda_{-i}}{\tau_{-i}}\begin{bmatrix}
        \nabla^2_{\pi}D_{-i} & \nabla^2_{p\pi}D_{-i}  \\
        \nabla^2_{p\pi}D_{-i} & \nabla^2_p D_{-i}
    \end{bmatrix}+\begin{bmatrix}
        2\lambda_i\epsilon_i\nabla^2\nu_i-\frac{\lambda_{-i}}{\tau_{-i}}\nabla^2_{\pi}D_{-i}-2\mu I & 0 \\
        0 & \frac{\lambda_{-i}}{\tau_{-i}}\nabla^2_p D_{-i}-2\mu I 
    \end{bmatrix},
\end{aligned}\end{equation}
where the first term is always positive by joint convexity of $D_1$. Therefore, a sufficient condition for this matrix to be positive semidefinite is:
\begin{equation}
    \frac{\lambda_{-i}}{\tau_{-i}}\nabla^2_p D_{-i}-2\mu I\succeq 0;\quad 2\lambda_i\epsilon_i\nabla^2\nu_i-\frac{\lambda_{-i}}{\tau_{-i}}\nabla^2_{\pi}D_{-i}-2\mu I\succeq 0, \quad i\in\{1,2\}.
\end{equation}
If there exists some $\mu>0$ such that the conditions above are satisfied, we have:
\begin{equation}
    2\lambda_i\epsilon_i\nabla^2\nu_i-\frac{\lambda_{-i}}{\tau_{-i}}\nabla^2_{\pi}D_{-i}\succeq 0.
\end{equation}

\paragraph{Part 1: Log-barrier and KL}
If $\nu_i(\cdot)$ are log-barrier function and $D_i(\cdot,\cdot)$ are KL divergence, we have that:
\begin{equation}\begin{aligned}
    \nabla^2\nu_i(\pi_i)=\diag(\pi_i)^{-2};\nabla_{\pi_{-i}}^2D_i(p_i,\pi_{-i})=\diag(p_i)\diag(\pi_{-i})^{-2};\\
    \nabla_{p_i}^2D_i(p_i,\pi_{-i})=\diag(p_i)^{-1};\nabla_{p_i,\pi_{-i}}^2 D_i(p_i,\pi_{-i})=-\diag(\pi_{-i})^{-1}.
\end{aligned}\end{equation}
Therefore, the original condition:
\begin{equation}
    \begin{aligned}
        \begin{bmatrix}
        2\lambda_i\epsilon_i\nabla^2\nu_i & \frac{\lambda_{-i}}{\tau_{-i}}\nabla^2_{p\pi} D_{-i}^T \\
        \frac{\lambda_{-i}}{\tau_{-i}}\nabla^2_{p\pi} D_{-i} & 2\frac{\lambda_{-i}}{\tau_{-i}}\nabla_p^2 D_{-i}
    \end{bmatrix}\succeq 2\mu I
    \end{aligned}
\end{equation}
can be turned using Schur complement into:
\begin{equation}
    \frac{\lambda_{-i}}{\tau_{-i}}\diag(p_i)^{-1}-\mu I\succeq \frac{\lambda_{-i}^2}{4\tau_{-i}^2}\diag(\pi_{i})^{-1}\left(\lambda_i\epsilon_i\diag(\pi_i)^{-2}-\mu I\right)^{-1}\diag(\pi_{i})^{-1}
\end{equation}
this can be further rearranged into:
\begin{equation}
    \left(\frac{\lambda_{-i}}{\tau_{-i}}\diag(p_i)^{-1}-\mu I\right)\left(\lambda_i\epsilon_i\diag(\pi_i)^{-2}-\mu I\right)\succeq \frac{\lambda_{-i}^2}{4\tau_{-i}^2}\diag(\pi_{i})^{-2}.
\end{equation}
Solving the inequality above, we obtain the precise condition of:
\begin{equation}\label{eq:log-barrier-KL_strong_monotone_cond1}
    \mu I\preceq \frac{\frac{\lambda_{-i}}{\tau_{-i}}\diag(p_i)^{-1}+\lambda_i\epsilon_i\diag(\pi_i)^{-2}-\sqrt{(\frac{\lambda_{-i}}{\tau_{-i}}\diag(p_i)^{-1}+\lambda_i\epsilon_i\diag(\pi_i)^{-2})^2-\frac{\lambda_{-i}}{\tau_{-i}}\diag(\pi_i)^{-2}(4\lambda_i\epsilon_i \diag(p_i)^{-1}-\frac{\lambda_{-i}}{\tau_{-i}}) }}{2}
\end{equation}
We can see that to guarantee the right hand side of \eqref{eq:log-barrier-KL_strong_monotone_cond1} to be positive, it suffices to have (taking $\mu\rightarrow 0$):
\begin{equation}
    \lambda_i\epsilon_i\diag(p_i)^{-1}\succ\frac{\lambda_{-i}}{4\tau_{-i}} I,
\end{equation}
and since $\diag(p_i)\prec I$, the condition can be simplified as:
\begin{equation}
    4\epsilon_i\tau_{-i}\geq \frac{\lambda_{-i}}{\lambda_i}.
\end{equation}
If we only require there exists \textit{some} $\lambda$ satisfying this condition, we would only need:
\begin{equation}
    4\epsilon_1\tau_{2}\geq \frac{\lambda_{2}}{\lambda_1}\geq \frac{1}{4\epsilon_2\tau_1},
\end{equation}
that is, $16\epsilon_1\epsilon_2\tau_1\tau_2>1$. To obtain an exact lower bound on $\mu$, notice that we have $A-\sqrt{A^2-\epsilon}>\frac{\epsilon}{2A}$, the right hand side of \eqref{eq:log-barrier-KL_strong_monotone_cond1} can be lower bounded by:
\begin{equation}\begin{aligned}
    \frac{\frac{\lambda_{-i}}{\tau_{-i}}\diag(\pi_i)^{-2}(4\lambda_i\epsilon_i \diag(p_i)^{-1}-\frac{\lambda_{-i}}{\tau_{-i}})}{4\left(\frac{\lambda_{-i}}{\tau_{-i}}\diag(p_i)^{-1}+\lambda_i\epsilon_i\diag(\pi_i)^{-2} \right)}=\frac{\frac{\lambda_{-i}}{\tau_{-i}}\left(4\lambda_i\epsilon_i-\frac{\lambda_{-i}}{\tau_{-i}}\diag(p_i)\right)}{4\left(\frac{\lambda_{-i}}{\tau_{-i}}\diag(\pi_i)^2+\lambda_i\epsilon_i\diag(p_i)\right)}
    \geq \frac{\frac{\lambda_{-i}}{\tau_{-i}}\left(4\lambda_i\epsilon_i-\frac{\lambda_{-i}}{\tau_{-i}}\right)}{4\left(\frac{\lambda_{-i}}{\tau_{-i}}+\lambda_i\epsilon_i\right)}
\end{aligned}\end{equation}

\paragraph{Part 2: Negative entropy and reverse KL.}
If $\nu_i(\cdot)$ are negative entropy and $D_i(\cdot,\cdot)$ are reverse KL divergence, we have that:
\begin{equation}\begin{aligned}
    \nabla^2\nu_i(\pi_i)=\diag(\pi_i)^{-1}; \nabla_{\pi_{-i}}^2D_i(p_i,\pi_{-i})=\diag(\pi_{-i})^{-1};\\
    \nabla_{p_i}^2D_i(p_i,\pi_{-i})=\diag(\pi_{-i})\diag(p_i)^{-2};\nabla_{p_i,\pi_{-i}}^2D_i(p_i,\pi_{-i})=-\diag(p_i)^{-1}.
\end{aligned}\end{equation}
Following a similar argument we have:
\begin{equation}
    \left(\frac{\lambda_{-i}}{\tau_{-i}}\diag(\pi_{i})\diag(p_i)^{-2}-\mu I\right)\left(\lambda_i \epsilon_i\diag(\pi_{i})^{-1}-\mu I\right)\succeq \frac{\lambda_{-i}^2}{4\tau_{-i}^2}\diag(p_i)^{-2}
\end{equation}
so we get the same strict monotonicity condition $4\epsilon_i\tau_{-i}>\frac{\lambda_{-i}}{\lambda_i}$. Therefore, as long as $16\epsilon_1\epsilon_2\tau_1\tau_2>1$, there exists $\lambda$ such that the game is strictly monotone. However, we can see that there is no uniform $\mu$ such that the game is $(\mu,\lambda)$-strongly monotone, as it requires
\begin{equation}
    \frac{\lambda_{-i}}{\tau_{-i}}\diag(\pi_{i})-\mu \diag(p_i)^{2}\succeq 0
\end{equation}
which cannot hold on the entire simplex.

\section{Proofs for \Cref{sec:MG_and_MARL}}
\subsection{Detailed Statement and Proof of \Cref{prop:contraction_monotonicity}}\label[appendix]{app:restate_proof_contraction_monotonicity}
We now provide a more detailed version of \Cref{prop:contraction_monotonicity} as follows:
\begin{proposition}[\Cref{prop:contraction_monotonicity}, detailed]
    If the 4-player stage game specified by \eqref{eq:4player_stage_game} is $\mu$-strongly monotone for every state $s\in \sS$ (notice that whether a stage game is monotone does not depend on the $Q$ functions but only the regularizers), then we have the following:
    \begin{enumerate}
        \item If the regularizers $D_i(\cdot,\cdot)$ are $L_D$-Lipschitz metrics that satisfy triangle inequality, and the discount factor $\gamma$ satisfies:
    \begin{equation}
        \gamma\leq \frac{\tau_{\min}\mu}{\tau_{\min}\mu + 2\|\lambda\|_\infty L_D(\sqrt{|\sA_1|}+ \sqrt{|\sA_2|})},
    \end{equation}
    where $\tau_{\min}=\min\{\tau_1,\tau_2\}$, then the Bellman optimality operator $\sT$ is a $\gamma_0$-contraction mapping for $\gamma_0=\gamma \left(1+\frac{2\|\lambda\|_\infty L_D\left(\sqrt{|\sA_1|}+\sqrt{|\sA_2|}\right)}{\tau_{\min}\mu}\right)$.
    \item If $D_i(\cdot,\cdot)$ are KL-divergence and $\nu_i(\cdot)$ are log-barrier functions, given
        \begin{equation}
            \gamma_0=\gamma \left(1+2\|\lambda\|_\infty\left(|\sA_{\max}|+\frac{\max\{\textup{sp}(\bQ),\textup{sp}(\bQ')\}}{\epsilon_{\min}}\right)\frac{\left(\sqrt{|\sA_1|}+\sqrt{|\sA_2|}\right)}{\tau_{\min}^2\mu}\right)<1,
        \end{equation}
        where $\epsilon_{\min}=\min\{\epsilon_1,\epsilon_2\}$ and $|\sA_{\max}|=\max\{|\sA_1|,|\sA_2|\}$, then the Bellman optimality operator $\sT$ is a $\gamma_0$-contraction mapping.
    \end{enumerate}
\end{proposition}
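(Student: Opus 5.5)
We reduce the contraction of $\sT$ to a per-state bound on the RQE values of two stage games. For any two pairs $\bQ,\bQ'$ and any $(s,\ba)$,
\begin{equation*}
|(\sT\bQ)_i(s,\ba)-(\sT\bQ')_i(s,\ba)|\le \gamma \max_{s'}\big|\texttt{RQE}_i(-\bQ(s',\cdot))-\texttt{RQE}_i(-\bQ'(s',\cdot))\big|.
\end{equation*}
So it suffices to show, for a fixed state, that $|\texttt{RQE}_i(-\bQ)-\texttt{RQE}_i(-\bQ')|\le c\,\|\bQ-\bQ'\|_{\max}$ with $c=\gamma_0/\gamma$.

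\textbf{Splitting the value difference.} Let $z^*=(\pi^*,p^*)$ and $z^\dagger=(\pi^\dagger,p^\dagger)$ be the unique 4-player equilibria for $-\bQ$ and $-\bQ'$. Uniqueness follows from \Cref{assumption:mu_lambda_strong_monotonicity} and \Cref{thm:RQE_property_4player_monotonicity}. Write $\texttt{RQE}_i(-\bQ)=\min_{\pi_i}\max_{p_i}\{\pi_i^TQ_ip_i-D_i(p_i,\pi^*_{-i})/\tau_i+\epsilon_i\nu_i(\pi_i)\}$. This is a min-max over player $i$ and adversary $i$ with $\pi_{-i}$ frozen, so it has the form $g(Q_i,\pi_{-i})$. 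We then split
\begin{equation*}
|g(Q_i,\pi^*_{-i})-g(Q_i',\pi^\dagger_{-i})|\le |g(Q_i,\pi^*_{-i})-g(Q_i',\pi^*_{-i})|+|g(Q_i',\pi^*_{-i})-g(Q_i',\pi^\dagger_{-i})|.
\end{equation*}
The first term is at most $\|Q_i-Q_i'\|_{\max}$, by the same Shapley-type sandwich argument as in the lemma on $\sT_{\pi_{-i}}$: plug the minimizer for one problem into the other and use that $p_i,\pi_i$ are distributions. For the second term, the same sandwich trick with the $D_i$-terms bounds it by $\tfrac{1}{\tau_i}\sup_{p_i}|D_i(p_i,\pi^*_{-i})-D_i(p_i,\pi^\dagger_{-i})|$, evaluated at the relevant optimizers.

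\textbf{Case (i): Lipschitz metric.} The triangle inequality gives $|D_i(p,\pi^*_{-i})-D_i(p,\pi^\dagger_{-i})|\le D_i(\pi^*_{-i},\pi^\dagger_{-i})\le L_D\|\pi^*-\pi^\dagger\|_2$. \Cref{thm:RQE_property_4player_monotonicity} part 2 then bounds $\|\pi^*-\pi^\dagger\|_2\le \frac{2\|\lambda\|_\infty(\sqrt{|\sA_1|}+\sqrt{|\sA_2|})}{\mu}\|\bQ-\bQ'\|_{\max}$. Summing the two terms gives exactly $c=1+\frac{2\|\lambda\|_\infty L_D(\sqrt{|\sA_1|}+\sqrt{|\sA_2|})}{\tau_{\min}\mu}$. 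The stated bound on $\gamma$ is precisely the condition $\gamma c<1$.

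\textbf{Case (ii): KL and log-barrier.} Here $D_i$ is not a metric, so we bound $|\mathrm{KL}(p\|\pi^*_{-i})-\mathrm{KL}(p\|\pi^\dagger_{-i})|=|\sum_a p(a)\log(\pi^\dagger_{-i}(a)/\pi^*_{-i}(a))|$. This quantity requires lower bounds on the equilibrium policies, and that is the main obstacle. We would use the KKT conditions of the log-barrier player and the KL adversary. For the player, $\epsilon_i/\pi_i(a)$ equals (shifted) payoff entries plus a multiplier. Comparing coordinates bounds the ratio of $\pi_i$ entries by $1+\mathrm{sp}(Q)/\epsilon_i$, hence $\pi_i(a)\gtrsim 1/(|\sA_i|+\mathrm{sp}(\bQ)/\epsilon_{\min})$. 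This makes $\log$ Lipschitz on the relevant region with constant of order $|\sA_{\max}|+\max\{\mathrm{sp}(\bQ),\mathrm{sp}(\bQ')\}/\epsilon_{\min}$.

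The adversary's optimality, which is a Gibbs-type form $p_i\propto \pi_{-i}e^{-\tau_i Q^T\pi_i}$, should contribute the additional $1/\tau$ factor that appears as $\tau_{\min}^2$. Combining this with \Cref{thm:RQE_property_4player_monotonicity} part 2 yields the stated $\gamma_0$. We expect this constant bookkeeping to be the delicate step. The bounded-$Q$ hypothesis is needed so that $\mathrm{sp}$ stays controlled along the iteration.
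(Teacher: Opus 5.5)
Your Case (i) is correct and is essentially the paper's argument. The paper bounds $\texttt{RQE}_i(\bR)-\texttt{RQE}_i(\bR')$ from one side. It deviates player $i$ to $\pi_i^\dagger$ (against $p_i^*$) and adversary $i$ to $p_i^*$, and reaches $(\pi_i^\dagger)^T(R_i'-R_i)p_i^*+\frac{1}{\tau_i}\bigl(D_i(p_i^*,\pi_{-i}^\dagger)-D_i(p_i^*,\pi_{-i}^*)\bigr)$. Your split through the intermediate value $g(Q_i',\pi_{-i}^*)$ produces the same two terms, two-sidedly. Both arguments then finish with the triangle inequality and \Cref{thm:RQE_property_4player_monotonicity}. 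One small slip: the stated condition on $\gamma$ only gives $\gamma c\le 1$, not $\gamma c<1$; the paper has the same slip.

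In Case (ii) your ingredients match the paper's, but the last step is a gap. The shared ingredients are a lower bound on the equilibrium policies from the log-barrier KKT conditions, which makes $\log$ Lipschitz with constant $|\sA_{\max}|+\max\{\textup{sp}(\bQ),\textup{sp}(\bQ')\}/\epsilon_{\min}$, followed by \Cref{thm:RQE_property_4player_monotonicity}. The gap is that the adversary's Gibbs form does not supply a second factor $1/\tau$.

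Done honestly, your chain bounds the second term by $\frac{1}{\tau_i}\cdot\frac{1}{\underline{\pi}}\cdot\frac{2\|\lambda\|_\infty(\sqrt{|\sA_1|}+\sqrt{|\sA_2|})}{\mu}\|\bQ-\bQ'\|_{\max}$. That has exactly one $1/\tau_i$, coming from the $D_i/\tau_i$ prefactor, because the Lipschitz constant in \Cref{thm:RQE_property_4player_monotonicity} contains no $\tau$. By the envelope theorem, the $\pi_{-i}$-sensitivity of the value is $\frac{1}{\tau_i}p_i^*/\pi_{-i}$. The Gibbs form only bounds $p_i^*/\pi_{-i}$ by quantities like $1/\pi_{-i}$ or $e^{\tau_i\textup{sp}(Q_i)}$, and neither decays in $\tau_i$.

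The paper's proof reaches $\tau_{\min}^2$ only by quoting the RQE Lipschitz constant at that step as $2\|\lambda\|_\infty(\sqrt{|\sA_1|}+\sqrt{|\sA_2|})/(\tau_{\min}\mu)$. The theorem does not give that constant. So what this argument (yours or the paper's) actually proves is the contraction factor with $\tau_{\min}$ in place of $\tau_{\min}^2$. That yields the stated $\gamma_0$ only when $\tau_{\min}\le 1$. You should state the version you can prove rather than look for an extra $1/\tau$.

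A smaller point concerns the policy lower bound. The ratio bound $\pi_i(a')/\pi_i(a)\le 1+\textup{sp}/\epsilon_i$ only gives $1/\pi_i(a)\le|\sA_i|(1+\textup{sp}/\epsilon_i)$. To get the constant $|\sA_i|+\textup{sp}/\epsilon_i$, compare reciprocals additively: $\epsilon_i/\pi_i(a)\le\epsilon_i/\pi_i(a')+\textup{sp}\le\epsilon_i|\sA_i|+\textup{sp}$, where $a'$ is a most likely action, as in \Cref{lem:policies_bounded_below}.
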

\begin{proof}
    For two $Q$ function pairs $\mathbf{Q}$ and $\mathbf{Q}'$, we have:
    \begin{equation}\label{eq:proof_contraction_1}
        (\sT \mathbf{Q}-\sT \mathbf{Q}')_i(s,\mathbf{a}) 
        \leq \gamma \E_{s'\sim P(\cdot|s,\ba)}\left[\texttt{RQE}_i(\mathbf{Q}'(s',\cdot))-\texttt{RQE}_i(\mathbf{Q}(s',\cdot))\right]
    \end{equation}
    For two different payoff matrices $\bR$ and $\bR'$, $(\pi^*,p^*)$ and $(\pi^\dagger,p^\dagger)$ are the RQEs w.r.t. $\mathbf{R}$ and $\mathbf{R}'$ respectively, we have the following bound for the RQE difference term:
    \begin{align*}
        &\texttt{RQE}_i(\mathbf{R})-\texttt{RQE}_i(\mathbf{R}')\\
        =& -(\pi_i^*)^TR_ip^*_i-\frac{1}{\tau_i}D_i(p_i^*,\pi_{-i}^*)
        +\epsilon_i\nu_i(\pi_i^*)+(\pi_i^\dagger)^TR_i'p^\dagger_i+\frac{1}{\tau_i}D_i(p_i^\dagger,\pi_{-i}^\dagger)-\epsilon_i\nu_i(\pi_i^\dagger)\\
        \leq & -(\pi_i^\dagger)^TR_ip^*_i-\frac{1}{\tau_i}D_i(p_i^*,\pi_{-i}^*)+\epsilon_i\nu_i(\pi_i^\dagger)
        +(\pi_i^\dagger)^TR_i'p^\dagger_i+\frac{1}{\tau_i}D_i(p_i^\dagger,\pi_{-i}^\dagger)-\epsilon_i\nu_i(\pi_i^\dagger)\\
        =&(\pi_i^\dagger)^T(R_i'p^\dagger_i-R_ip_i^*)-\frac{1}{\tau_i}D_i(p_i^*,\pi_{-i}^*)+\frac{1}{\tau_i}D_i(p_i^\dagger,\pi_{-i}^\dagger)\\
        \leq &(\pi_i^\dagger)^T(R_i'-R_i)p_i^* -\frac{1}{\tau_i}D_i(p_i^*,\pi_{-i}^*)+\frac{1}{\tau_i}D_i(p_i^*,\pi_{-i}^\dagger)
    \end{align*}
    with the inequalities follow from \Cref{def:RQE_normal_form}.
    \paragraph{Proof for Part 1.}
    We first consider the case that $D_i$ are $L_D$-Lipschitz and satisfies triangle inequality, by \Cref{thm:RQE_property_4player_monotonicity} we have:
    \begin{equation}\begin{aligned}
        &\texttt{RQE}_i(\mathbf{R})-\texttt{RQE}_i(\mathbf{R}')\\
        \leq& (\pi_i^\dagger)^T(R_i'-R_i)p_i^* + \frac{1}{\tau_i}D_i(\pi_{-i}^*, \pi_{-i}^\dagger)\\
        \leq& (\pi_i^\dagger)^T(R_i'-R_i)p_i^* +\frac{L_D}{\tau_i}\|\pi_{-i}^*-\pi_{-i}^\dagger\|_2\\
        \leq& \left(1+\frac{2\|\lambda\|_\infty L_D\left(\sqrt{|\sA_1|}+\sqrt{|\sA_2|}\right)}{\tau_i\mu}\right)\|\mathbf{R}-\mathbf{R}'\|_{\max}.
    \end{aligned}\end{equation}
    Now we take max-norm with respect to all possible $(s,\mathbf{a})$ pairs in \eqref{eq:proof_contraction_1} and obtain:
    \begin{equation}\begin{aligned}
        \|(\sT \mathbf{Q}-\sT \mathbf{Q}')_i\|_{\max}
        \leq \gamma \left(1+\frac{2\|\lambda\|_\infty L_D\left(\sqrt{|\sA_1|}+\sqrt{|\sA_2|}\right)}{\tau_i\mu}\right) \|\mathbf{Q}-\mathbf{Q}'\|_{\max}
    \end{aligned}\end{equation}
    since this upper bound holds for both $i\in\{1,2\}$, the left hand side can be simply rewritten as $\|\sT \mathbf{Q}-\sT \mathbf{Q}'\|_{\max}$, therefore $\sT$ is a contraction when
    \begin{equation}
        \gamma \left(1+\frac{2\|\lambda\|_\infty L_D\left(\sqrt{|\sA_1|}+\sqrt{|\sA_2|}\right)}{\tau_i\mu}\right)\leq 1.
    \end{equation}

    \paragraph{Proof for Part 2.}
    For the case where $D_i$ are KL-divergence and $\nu_i$ are log-barrier functions, from \Cref{lem:policies_bounded_below} we obtain that $\pi^*,\pi^\dagger$ are bounded below by $\underline{\pi}=\frac{\epsilon_i}{\epsilon_i|\sA_i|+\max\{\textup{sp}(\bR),\textup{sp}(\bR')\}}$, and therefore:
    \begin{equation}\begin{aligned}
        &D_i(p_i^*,\pi_{-i}^*)-D_i(p_i^*,\pi_{-i}^\dagger)\\
        &=(p_i^*)^T(\log \pi_{-i}^\dagger - \log \pi_{-i}^*)\\
        &\leq \|p_i^*\|_2\|\frac{\pi_{-i}^\dagger-\pi_{-i}^*}{\underline{\pi}}\|_2\\
        &\leq \frac{1}{\underline{\pi}}\|\pi_{-i}^\dagger-\pi_{-i}^*\|_2\\
        &=\left(|\sA_i|+\frac{\max\{\textup{sp}(\bR),\textup{sp}(\bR')\}}{\epsilon_i}\right)\|\pi_{-i}^\dagger-\pi_{-i}^*\|_2,
    \end{aligned}\end{equation}
    this gives the result that:
    \begin{equation}
    \begin{aligned}
        &\texttt{RQE}_i(\mathbf{R})-\texttt{RQE}_i(\mathbf{R}')\\
        \leq & (\pi_i^\dagger)^T(R_i'-R_i)p_i^* -\frac{1}{\tau_i}D_i(p_i^*,\pi_{-i}^*)+\frac{1}{\tau_i}D_i(p_i^*,\pi_{-i}^\dagger)\\
        \leq & (\pi_i^\dagger)^T(R_i'-R_i)p_i^* + \left(\frac{|\sA_i|}{\tau_i}+\frac{\max\{\textup{sp}(\bR),\textup{sp}(\bR')\}}{\epsilon_i\tau_i}\right)\|\pi_{-i}^\dagger-\pi_{-i}^*\|_2\\
        \leq & \|\bR-\bR'\|_{\max} + \|\lambda\|_\infty\left(\frac{|\sA_i|}{\tau_i}+\frac{\max\{\textup{sp}(\bR),\textup{sp}(\bR')\}}{\epsilon_i\tau_i}\right)\frac{2\left(\sqrt{|\sA_1|}+\sqrt{|\sA_2|}\right)}{\tau_{\min}\mu}\|\bR-\bR'\|_{\max}\\
        \leq & \left(1+2\|\lambda\|_\infty\left(|\sA_i|+\frac{\max\{\textup{sp}(\bR),\textup{sp}(\bR')\}}{\epsilon_{\min}}\right)\frac{\left(\sqrt{|\sA_1|}+\sqrt{|\sA_2|}\right)}{\tau_{\min}^2\mu}\right)\|\bR-\bR'\|_{\max},
    \end{aligned}
\end{equation}
so that
\begin{equation}
        \|\sT \bQ-\sT\bQ'\|_{\max} \leq \gamma \left(1+2\|\lambda\|_\infty\left(|\sA_{\max}|+\frac{\max\{\textup{sp}(\bQ),\textup{sp}(\bQ')\}}{\epsilon_{\min}}\right)\frac{\left(\sqrt{|\sA_1|}+\sqrt{|\sA_2|}\right)}{\tau_{\min}^2\mu}\right)\|\bQ-\bQ'\|_{\max}.
    \end{equation}
which completes the proof.
\end{proof}

\subsection{Detailed Statement and Proof of \Cref{thm:iter_convergence_thm}}\label[appendix]{app:proof_iter_convergence_thm}
We first provide a detailed statement of \Cref{thm:iter_convergence_thm} as follows:
\begin{theorem}[\Cref{thm:iter_convergence_thm}, detailed]\label{thm:iter_convergence_thm_detailed}
    If there exists $\gamma_0$ such that the Bellman optimality operator $\sT$ is a contraction mapping, and additionally $D_i(\cdot,\cdot)$ are $L_D$-Lipschitz in either argument when the other is fixed, $\nu_i(\cdot)$ are $L_\nu$-Lipschitz with respect to its input, and the gradient operator of the modified 4-player game is $(\mu,\lambda)$-strongly monotone and $L_F$-Lipschitz for every state $s\in \sS$, let
    \begin{equation}
        C_\sT=\gamma (\frac{\sqrt{|\sA_{\max}|}}{2(1-\gamma)}\left(1+\gamma(\frac{2\sqrt{2}L_D}{\tau_{\min}}+\sqrt{2}\epsilon_{\max}L_\nu)\right) +\frac{L_D}{\tau_{\min}}+\max\{\frac{L_D}{\tau_{\min}}, \epsilon_{\max}L_\nu\})
    \end{equation}
    then \eqref{eq:iteration_rule} with stepsize choice satisfying the following:
    \begin{equation}
        \begin{aligned}
            &\beta_t\leq \min\{\frac{1}{\mu},\frac{\mu}{L_F^2}\};\\
            &\frac{\beta_t}{\alpha_t}\geq \left((1-\gamma_0)+2L_{RQE}C_\sT \left(1-\gamma_0+2\frac{1+\gamma_0}{1-\gamma_0}\right)\right),
        \end{aligned}
    \end{equation}
    where $L_{RQE}$ is the Lipschitz continuity constant of RQE with respect to the payoff matrix in normal form games as indicated by \Cref{thm:RQE_property_4player_monotonicity},
    then the sequence of $z_t$ converges to the RQE $z^*$ of the Markov game and the $Q$ function $\bQ_t$ converges to the corresponding $\bQ^*$ to $z^*$ at the following rates:
    \begin{enumerate}
        \item If we use constant step sizes $\alpha_t=\alpha, \beta_t=\beta$, then $\forall s\in \sS$:
        \begin{equation}\begin{aligned}
            &\|z_t(\cdot|s)-z^*(\cdot|s)\|_2\leq \left(1-\frac{\tilde{\alpha}}{2}\right)^t\left(1+\frac{L_{RQE}}{C_2}\right)\left(\left(\sqrt{|\sA_1|}+\sqrt{|\sA_2|} \right)+C_2Q_{\max} \right);\\
            &\|\bQ_t-\bQ^*\|_{\max} \leq \left(1-\frac{\tilde{\alpha}}{2}\right)^t\left(\frac{1}{C_2}\left(\sqrt{|\sA_1|}+\sqrt{|\sA_2|} \right)+Q_{\max} \right).
        \end{aligned}\end{equation}
        where $\tilde{\alpha}=(1-\gamma_0)\alpha$, $C_2=\frac{1+\gamma_0}{1-\gamma_0}L_{RQE}$, and
        \begin{equation}
            Q_{\max}=\frac{1}{1-\gamma}+\frac{\gamma}{1-\gamma}\left(\frac{1}{\tau_{\min}}(D_{\min}+2\sqrt{2}L_D)+\epsilon_{\max}\left(\nu_{\min}+\sqrt{2}L_\nu\right) \right).
        \end{equation}
        \item If we use diminishing step sizes $\alpha_t=\frac{\alpha}{t+h}$ and $\beta_t=\frac{\beta}{t+h}$, then $\forall s\in \sS$:
        \begin{equation}
            \begin{aligned}
                &\|z_t(\cdot|s)-z^*(\cdot|s)\|_2\leq \left(1+\frac{L_{RQE}}{C_2}\right)\left(\left(\sqrt{|\sA_1|}+\sqrt{|\sA_2|} \right)+C_2Q_{\max} \right)\left(\frac{h}{h+t+1}\right)^{\frac{\tilde{\alpha}}{2}};\\
                &\|\bQ_t-\bQ^*\|_{\max} \leq  \left(\frac{1}{C_2}\left(\sqrt{|\sA_1|}+\sqrt{|\sA_2|} \right)+Q_{\max} \right)\left(\frac{h}{h+t+1}\right)^{\frac{\tilde{\alpha}}{2}}.
            \end{aligned}
        \end{equation}
        where $\tilde{\alpha}, C_2$ and $Q_{\max}$ the same as above.
    \end{enumerate}
\end{theorem}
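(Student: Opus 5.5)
We track two error sequences: the per-state policy error $x_t:=\max_s\|z_t(\cdot|s)-z^*_{\bQ_t}(\cdot|s)\|_2$, measured against the stage-game Nash equilibrium $z^*_{\bQ_t}$ induced by the \emph{current} critic, and the critic error $y_t:=\|\bQ_t-\bQ^*\|_{\max}$. The aim is a coupled linear recursion $(x_{t+1},y_{t+1})\preceq M_t(x_t,y_t)$, followed by a Lyapunov function $W_t=x_t+C_2 y_t$ with $C_2=\frac{1+\gamma_0}{1-\gamma_0}L_{RQE}$. We will show $W_{t+1}\le(1-\tilde\alpha_t/2)W_t$, where $\tilde\alpha_t=(1-\gamma_0)\alpha_t$. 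The final bound on $\|z_t-z^*\|_2$ then comes from
\[
\|z_t-z^*\|\le x_t+\|z^*_{\bQ_t}-z^*_{\bQ^*}\|\le x_t+L_{RQE}\,y_t\le\Bigl(1+\tfrac{L_{RQE}}{C_2}\Bigr)W_t ,
\]
which uses \Cref{thm:RQE_property_4player_monotonicity}. Here $z^*$ is the stage equilibrium of $\bQ^*$, which is the RQE by \Cref{prop:RQE_iff_Bellman_fixed_point}. Before starting we record a uniform bound $\|\bQ_t\|_{\max}\le Q_{\max}$. It follows by induction, because $\bQ_{t+1}$ is a convex combination of $\bQ_t$ and $\sT_{z_{t+1}}\bQ_t$, and the regularizer terms are bounded by $D_{\min}+2\sqrt2L_D$ and $\nu_{\min}+\sqrt2L_\nu$ via Lipschitzness over the simplex. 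This bound gives the initial value $W_0\le(\sqrt{|\sA_1|}+\sqrt{|\sA_2|})+C_2Q_{\max}$.

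\textbf{Actor step.} For fixed $\bQ_t$, projected preconditioned GD with $\beta_t\le\min\{1/\mu,\mu/L_F^2\}$ is a contraction toward $z^*_{\bQ_t}$ in the $\lambda$-norm. This uses strong monotonicity together with $L_F$-Lipschitzness and nonexpansiveness of the projection. It gives $\|z_{t+1}-z^*_{\bQ_t}\|\le(1-c\mu\beta_t)x_t$. Changing the reference point, we then get $x_{t+1}\le(1-c\mu\beta_t)x_t+L_{RQE}\|\bQ_{t+1}-\bQ_t\|_{\max}$, and $\|\bQ_{t+1}-\bQ_t\|=\alpha_t\|\sT_{z_{t+1}}\bQ_t-\bQ_t\|=O(\alpha_t)$. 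This drift term is only $O(\alpha_t)$, which is why $\beta_t/\alpha_t$ must be large: the contraction $c\mu\beta_t x_t$ then dominates.

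\textbf{Critic step.} We write $\bQ_{t+1}-\bQ^*=(1-\alpha_t)(\bQ_t-\bQ^*)+\alpha_t(\sT\bQ_t-\sT\bQ^*)+\alpha_t(\sT_{z_{t+1}}\bQ_t-\sT\bQ_t)$. The middle term is controlled by the $\gamma_0$-contraction of $\sT$. The last term is the key error: $\sT\bQ_t$ evaluates the stage objective $J_i$ at $z^*_{\bQ_t}$, while $\sT_{z_{t+1}}\bQ_t$ evaluates it at $z_{t+1}$. Since $J_i$ is Lipschitz in $z$, with constant involving $\|Q_t\|_{\max}\sqrt{|\sA_{\max}|}$, $L_D/\tau_{\min}$ and $\epsilon_{\max}L_\nu$, this term is bounded by $\alpha_t C_\sT\|z_{t+1}-z^*_{\bQ_t}\|\le\alpha_t C_\sT x_t$. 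Together these give $y_{t+1}\le(1-(1-\gamma_0)\alpha_t)y_t+\alpha_tC_\sT x_t$.

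\textbf{Combining.} The main obstacle is choosing the weights and step ratio so that the cross terms close. The critic recursion feeds $\alpha_tC_\sT x_t$ into $y$, and the actor recursion feeds $L_{RQE}\alpha_t(\cdot)$ into $x$. I will bound $\|\sT_{z_{t+1}}\bQ_t-\bQ_t\|\le(1+\gamma_0)y_t+C_\sT x_t$, using $\bQ^*=\sT\bQ^*$. Substituting into $W_{t+1}$, the coefficient of $y_t$ becomes $C_2(1-(1-\gamma_0)\alpha_t)+L_{RQE}(1+\gamma_0)\alpha_t$. With the stated $C_2$, this is at most $C_2(1-\tilde\alpha_t/2)$. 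The coefficient of $x_t$ is $1-c\mu\beta_t+\alpha_tC_\sT(C_2+L_{RQE})$, and this is at most $1-\tilde\alpha_t/2$ exactly under the stated lower bound on $\beta_t/\alpha_t$. This gives $W_{t+1}\le(1-\tilde\alpha_t/2)W_t$. For constant steps, unrolling yields the geometric rate. For $\alpha_t=\alpha/(t+h)$, the bound $\prod_k(1-\frac{\tilde\alpha}{2(k+h)})\le\exp(-\frac{\tilde\alpha}{2}\sum_k\frac1{k+h})\le(\frac{h}{h+t+1})^{\tilde\alpha/2}$ gives the polynomial rate. The $\bQ_t$ bound follows from $y_t\le W_t/C_2$.
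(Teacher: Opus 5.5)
Your architecture matches the paper's proof. It uses the same error pair $x_t=\|z_t-z^*_{\bQ_t}\|_2$ and $y_t=\|\bQ_t-\bQ^*\|_{\max}$. The actor estimate is the same: contraction toward the current stage equilibrium plus an $L_{RQE}\alpha_t\|\sT_{z_{t+1}}\bQ_t-\bQ_t\|_{\max}$ drift, with $\|\sT_{z_{t+1}}\bQ_t-\bQ_t\|_{\max}\le C_\sT x_t+(1+\gamma_0)y_t$. The critic recursion and the use of a linear Lyapunov function are also the same.

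The combining step, however, fails with your weight. For $W_t=x_t+k\,y_t$, the coefficient of $y_t$ is $k(1-\tilde\alpha_t)+L_{RQE}(1+\gamma_0)\alpha_t$. Since $L_{RQE}(1+\gamma_0)\alpha_t=C_2\tilde\alpha_t$, this equals $k-(k-C_2)\tilde\alpha_t$. With $k=C_2$ it is exactly $C_2$, so there is no decay in the critic direction and $W_{t+1}\le(1-\tilde\alpha_t/2)W_t$ does not follow. You need $k\ge 2C_2$; the paper takes $k=2C_2$. That makes the $y_t$-coefficient $2C_2(1-\tilde\alpha_t/2)$. It turns the $x_t$-condition into $\frac{\mu}{2}\beta_t\ge\alpha_t\bigl(\frac{1-\gamma_0}{2}+C_\sT(L_{RQE}+2C_2)\bigr)$. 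This is what the argument actually requires. It carries a factor $1/\mu$ and is not literally the displayed ratio bound, so your claim that it holds ``exactly under the stated lower bound'' is not justified. With $k=2C_2$ you still recover the stated $\bQ$-bound from $y_t\le W_t/(2C_2)$. However, $\|z_t-z^*\|_2\le(1+\frac{L_{RQE}}{2C_2})W_t$ gives the stated $z$-constant only up to a factor of at most $2$. The stated constants are exactly what $k=C_2$ would produce, and that weight does not close. The paper's own derivation has the same issue: it writes $\|z_t-z_t^*\|_2\le W_t\le 2(\cdots)$ and then drops the $2$ when assembling the $z$-bound.

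There are two smaller points.

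\emph{The constant $C_\sT$.} Bounding $\sT_{z_{t+1}}\bQ_t-\sT\bQ_t$ by plain Lipschitzness of $J_i$ with $\|\bQ_t\|_{\max}$ does not reproduce the stated $C_\sT$, whose leading term is $\frac{\sqrt{|\sA_{\max}|}}{2}Q_{span}$. You would then need a correspondingly stronger step-ratio condition. The paper obtains $C_\sT$ in two steps:
(i) It uses $\textup{sp}(Q_i)$ instead of the max norm. This is legitimate because $\pi_i-\pi_i^*$ and $p_i-p_i^*$ sum to zero. It pairs this with a separate span recursion $\textup{sp}(\bQ_{t+1})\le(1-(1-\gamma)\alpha_t)\textup{sp}(\bQ_t)+\alpha_t\bigl(1+\gamma(\frac{2\sqrt2L_D}{\tau_{\min}}+\sqrt2\epsilon_{\max}L_\nu)\bigr)$, which your max-norm induction does not provide.
(ii) It uses one-sided bounds based on the equilibrium property of $z^*_{\bQ_t}$. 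In the subtracted equilibrium term it replaces $\pi_i^*$ by $\pi_i$ for the lower bound and $p_i^*$ by $p_i$ for the upper bound, so only one bilinear difference appears on each side.

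\emph{The norm for the actor step.} The actor contraction holds in the Euclidean norm, not the $\lambda$-norm. Expanding $\|z-z^*-\beta_t\Lambda(F(z)-F(z^*))\|_2^2$ produces the cross term $\langle z-z^*,F(z)-F(z^*)\rangle_\lambda$, which $(\mu,\lambda)$-strong monotonicity controls. This is combined with nonexpansiveness of $\Proj$ and the fixed-point identity $z^*=\Proj(z^*-\beta_t\Lambda F(z^*))$.
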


Although the conditions of \Cref{thm:iter_convergence_thm_detailed} look stronger than \Cref{thm:iter_convergence_thm}, the condition that $D_i(\cdot,\cdot)$ and $\nu_i(\cdot)$ are both smooth indeed implies the conditions in \Cref{thm:iter_convergence_thm_detailed}. This is because the domains of $D_i$ and $\nu_i$ are both compact sets, where smoothness implies they are Lipschitz continuous, and \Cref{lem:ragulrizer_smoothness->F_Lipschitz} implies the Lipschitz continuity of $F$ as well.

\begin{proof}[Proof of \Cref{thm:iter_convergence_thm_detailed}]
    For notational simplicity, we use $F_t(\cdot)$ to denote $F(\cdot;-\bQ_t)$. Let $z_t^*(\cdot|s)$ denote the state-wise RQE induced by $\bQ_t(s,\cdot)$. We first prove that the $Q$ iterations are bounded throughout:
    \begin{lemma}\label{lem:Q_iter_bounded}
        If $D_i(\cdot,\cdot)$ are $L_D$-Lipschitz and $\nu_i(\cdot)$ are $L_\nu$-Lipschitz, then we have:
        \begin{equation}
            \textup{sp}(\bQ_t)\leq Q_{span},\quad \forall t.\quad  \textup{sp}(\bQ^*)\leq Q_{span}.
        \end{equation}
        where
        \begin{equation}
            Q_{span}=\frac{1}{1-\gamma}\left(1+\gamma(\frac{2\sqrt{2}L_D}{\tau_{\min}}+\sqrt{2}\epsilon_{\max}L_\nu)\right).
        \end{equation}
        Additionally, we have:
        \begin{equation}
            \|\bQ_t\|_{\max}\leq Q_{\max}, \quad \forall t.\quad  \|\bQ^*\|_{\max}\leq Q_{\max}.
        \end{equation}
        where
        \begin{equation}
            Q_{\max}=\frac{1}{1-\gamma}+\frac{\gamma}{1-\gamma}\left(\frac{1}{\tau_{\min}}(D_{\min}+2\sqrt{2}L_D)+\epsilon_{\max}\left(\nu_{\min}+\sqrt{2}L_\nu\right) \right).
        \end{equation}
    \end{lemma}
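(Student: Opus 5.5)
We will prove both bounds by induction on $t$, using the fact that the $Q$-update in \eqref{eq:iteration_rule} is a convex combination of $\bQ_t$ and $\sT_{z_{t+1}}\bQ_t$. Since $\bQ_0=0$ satisfies both bounds trivially and $\alpha_t\in(0,1]$, it suffices to show that whenever $\bQ_t$ satisfies the span (resp. max-norm) bound, so does $\sT_{z_{t+1}}\bQ_t$. Span is a seminorm, so it is convex, and the max-norm is convex. Hence $\textup{sp}(\bQ_{t+1})\leq (1-\alpha_t)\textup{sp}(\bQ_t)+\alpha_t\,\textup{sp}(\sT_{z_{t+1}}\bQ_t)$, and similarly for $\|\cdot\|_{\max}$.

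\textbf{Span bound.} Fix any $z=(\pi,p)$ and any $\bQ$. We bound the span of $(\sT_z\bQ)_i(s,\ba)$ over all $(s,\ba)$. The reward term $-r_i$ lies in $[-1,0]$ and contributes span at most $1$. The remaining term is $\gamma$ times an expectation over $s'$ of
\[
g(s') = \pi_i^T(\cdot|s')Q_i(s',\cdot)p_i(\cdot|s') - \tfrac{1}{\tau_i}D_i(p_i,\pi_{-i};s') + \epsilon_i\nu_i(\pi_i;s').
\]
Its span over $s'$ is bounded by three contributions.
\begin{itemize}
\item The bilinear term is an average of entries of $Q_i$, so its span is at most $\textup{sp}(Q_i)$.
\item For the $D_i$ term, two points of the product of simplices differ in $\ell_2$ norm by at most $\sqrt2$ in each argument. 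Lipschitzness in each argument therefore gives variation at most $2\sqrt2 L_D$, contributing $2\sqrt2 L_D/\tau_{\min}$.
\item Likewise the $\nu_i$ term varies by at most $\sqrt2 L_\nu$, contributing $\sqrt2\epsilon_{\max}L_\nu$.
\end{itemize}
Taking expectations preserves these bounds. Hence
\[
\textup{sp}(\sT_z\bQ)\leq 1+\gamma\big(\textup{sp}(\bQ)+\tfrac{2\sqrt2L_D}{\tau_{\min}}+\sqrt2\epsilon_{\max}L_\nu\big).
\]
If $\textup{sp}(\bQ)\leq Q_{span}$, the right-hand side is at most $Q_{span}$, by the defining equation $Q_{span}=1+\gamma(Q_{span}+\cdots)$. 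This closes the induction.

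\textbf{Max-norm bound.} The argument is the same with absolute values. We have $|{-r_i}|\leq1$ and $|\pi_i^TQ_ip_i|\leq\|Q_i\|_{\max}$. On the compact domain, $|D_i|\leq D_{\min}+2\sqrt2L_D$, where $D_{\min}$ is the value at some reference point, and similarly $|\nu_i|\leq \nu_{\min}+\sqrt2L_\nu$. This gives
\[
\|\sT_z\bQ\|_{\max}\leq 1+\gamma\|\bQ\|_{\max}+\gamma\Big(\tfrac{1}{\tau_{\min}}(D_{\min}+2\sqrt2L_D)+\epsilon_{\max}(\nu_{\min}+\sqrt2L_\nu)\Big).
\]
The fixed point of this affine recursion is exactly $Q_{\max}$, so the bound is preserved.

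\textbf{Bounds for $\bQ^*$.} At the fixed point, $\bQ^*=\sT\bQ^*=\sT_{z^*}\bQ^*$, where $z^*$ is the stage RQE (this uses \Cref{prop:RQE_iff_Bellman_fixed_point}). The same inequalities give $\textup{sp}(\bQ^*)\leq 1+\gamma(\textup{sp}(\bQ^*)+\cdots)$, and solving yields $\textup{sp}(\bQ^*)\leq Q_{span}$. For the max-norm, we use that $\bQ^*$ is finite (it is the limit of a contraction iteration started from $0$). Equivalently, $\bQ^*$ is the limit of the iterates $\sT^k 0$, each of which satisfies the bounds by the argument above, and the bounds are preserved under limits.

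The main care point is the Lipschitz-diameter step: we must use the $\sqrt2$ diameter of each simplex in $\ell_2$, and account for $D_i$ having two arguments, which is where the factor $2\sqrt2$ comes from.
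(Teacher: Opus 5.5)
Your proposal is correct and follows the same route as the paper. Both proofs bound $\textup{sp}$ and $\|\cdot\|_{\max}$ of the convex-combination update by those of $\sT_{z_{t+1}}\bQ_t$. They control the latter through the reward range, the averaging in the bilinear term, and the $\sqrt{2}$-diameter Lipschitz bounds on $D_i,\nu_i$, then induct from $\bQ_0=0$ and treat $\bQ^*$ via $\bQ^*=\sT_{z^*}\bQ^*$. One point to tighten, shared with the paper, which never defines these constants: in the max-norm step $D_{\min}$ and $\nu_{\min}$ must be read as absolute values of the reference values, since those values may be negative.
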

    \begin{proof}
    We provide proofs for span and max-norm respectively. 
    \paragraph{Result for span of $Q$.}
        By the update rule
        \begin{equation}
            \bQ_{t+1}= (1-\alpha_t)\bQ_t+\alpha_t\sT_{z_{t+1}}\bQ_t
        \end{equation}
        we have that
        \begin{equation}\begin{aligned}
            \textup{sp}(\bQ_{t+1})\leq (1-\alpha_t)\textup{sp}(\bQ_t)+\alpha_t\textup{sp}(\sT_{z_{t+1}}\bQ_t)
        \end{aligned}\end{equation}
        for the latter term, recall the definition \eqref{eq:Bellman_evaluation_operator} of $\sT_{z_{t+1}}$, we have:
        \begin{equation}
            \begin{aligned}
                &\textup{sp}(\sT_{z_{t+1}}\bQ_t)\\
                =&\max_{i,s}\left(\max_{\ba}(\sT_{z_{t+1}}\bQ_t)_i(s,\ba)-\min_{\ba'}(\sT_{z_{t+1}}\bQ_t)_i(s,\ba')\right)\\
                \leq & \max_{i,s}\Bigg(\max_{\ba}\{-r_i(s,\ba)+\gamma \E_{s'\sim P(\cdot|s,\ba)}[\pi_{i,t+1}^T(\cdot|s')Q_{i,t}(s',\cdot)p_{i,t+1}(\cdot|s')-\frac{1}{\tau_i}D_i(p_{i,t+1},\pi_{-i,t+1};s')+\epsilon_i\nu_i(\pi_{i,t+1};s')]\}\\
                &\quad -\min_{\ba'}\{-r_i(s,\ba')+\gamma \E_{s'\sim P(\cdot|s,\ba')}[\pi_{i,t+1}^T(\cdot|s')Q_{i,t}(s',\cdot)p_{i,t+1}(\cdot|s')-\frac{1}{\tau_i}D_i(p_{i,t+1},\pi_{-i,t+1};s')+\epsilon_i\nu_i(\pi_{i,t+1};s')]\}\Bigg)\\
                \leq & 1+\gamma \max_{i}\left(\max_{s,\ba} Q_{i,t}(s,\ba)-\min_{s,\ba} Q_{i,t}(s,\ba)\right)+\gamma \max_{i}\frac{1}{\tau_i}\left(\max_{p,\pi}D_i(p,\pi) -\min_{p,\pi}D_i(p,\pi)\right)\\
                &+\gamma \max_{i}\epsilon_i\left(\max_{\pi}\nu_i(\pi) -\min_{\pi}\nu_i(\pi)\right)
            \end{aligned}
        \end{equation}
        Since $\nu_i(\cdot)$ are $L_\nu$-Lipschitz and $\pi\in\Delta$, we have:
        \begin{equation}\label{eq:Lipschitz_nu_difference_on_simplex}
            \max_{\pi}\nu_i(\pi) -\min_{\pi}\nu_i(\pi)\leq \sqrt{2}L_\nu,
        \end{equation}
        and similarly,
        \begin{equation}\label{eq:Lipschitz_D_difference_on_simplex}
            \max_{p,\pi}D_i(p,\pi) -\min_{p,\pi}D_i(p,\pi)\leq 2\sqrt{2}L_D
        \end{equation}
        therefore, we have:
        \begin{equation}
            \textup{sp}(\sT_{z_{t+1}}\bQ_t)\leq 1+\gamma\textup{sp}(\bQ_t)+\gamma(\frac{2\sqrt{2}L_D}{\tau_{\min}}+\sqrt{2}\epsilon_{\max}L_\nu)
        \end{equation}
        Therefore, we have:
        \begin{equation}\begin{aligned}
            \textup{sp}(\bQ_{t+1})\leq & (1-\alpha_t)\textup{sp}(\bQ_t)+\alpha_t\textup{sp}(\sT_{z_{t+1}}\bQ_t)\\
            \leq & (1-(1-\gamma)\alpha_t)\textup{sp}(\bQ_t)+\alpha_t\left(1+\gamma(\frac{2\sqrt{2}L_D}{\tau_{\min}}+\sqrt{2}\epsilon_{\max}L_\nu)\right)
        \end{aligned}\end{equation}
        since $\bQ_0=0$, using a simple induction argument we obtain:
        \begin{equation}
            \textup{sp}(\bQ_t)\leq \frac{1}{1-\gamma}\left(1+\gamma(\frac{2\sqrt{2}L_D}{\tau_{\min}}+\sqrt{2}\epsilon_{\max}L_\nu)\right),\quad \forall t.
        \end{equation}
        For $\textup{sp}(\bQ^*)$, notice that $\bQ^*=\sT_{z^*}\bQ^*$ and the span bound holds for all $z_{t+1}$ gives the result.
        
        \paragraph{Result for max-norm of $Q$.} Similarly by the update rule, we only have to control $\|\sT_{z_{t+1}}\bQ_t\|_{\max}$. We have:
        \begin{equation}\begin{aligned}
            &\|\sT_{z_{t+1}}\bQ_t\|_{\max}\\
            =&\max_{i,s,\ba}\{|-r_i(s,\ba)+\gamma \E_{s'\sim P(\cdot|s,\ba)}[\pi_{i,t+1}^T(\cdot|s')Q_{i,t}(s',\cdot)p_{i,t+1}(\cdot|s')-\frac{1}{\tau_i}D_i(p_{i,t+1},\pi_{-i,t+1};s')+\epsilon_i\nu_i(\pi_{i,t+1};s')]|\}\\
            \leq & \gamma \left(1+ \|\bQ_t\|_{\max}+ \frac{1}{\tau_{\min}}(D_{\min}+2\sqrt{2}L_D)+\epsilon_{\max}\left(\nu_{\min}+\sqrt{2} L_\nu\right) \right)
        \end{aligned}\end{equation}
        where we have used \eqref{eq:Lipschitz_nu_difference_on_simplex} and \eqref{eq:Lipschitz_D_difference_on_simplex} in the inequality. Using the same reasoning as above, we have:
        \begin{equation}
            \|\bQ_t\|_{\max}\leq \frac{1}{1-\gamma}+\frac{\gamma}{1-\gamma}\left(\frac{1}{\tau_{\min}}(D_{\min}+2\sqrt{2}L_D)+\epsilon_{\max}\left(\nu_{\min}+\sqrt{2} L_\nu\right) \right),
        \end{equation}
        and using the same argument again, we have:
        \begin{equation}
            \|\bQ^*\|_{\max}\leq \frac{1}{1-\gamma}+\frac{\gamma}{1-\gamma}\left(\frac{1}{\tau_{\min}}(D_{\min}+2\sqrt{2}L_D)+\epsilon_{\max}\left(\nu_{\min}+\sqrt{2}L_\nu\right) \right),
        \end{equation}
        which completes the proof.
    \end{proof}
    
    We now focus on the $z$ update. We have the following lemma on the update rule of $z_t$ for arbitrary $s\in \sS$, where we drop the dependence on $s$ for simplicity:
\begin{lemma}\label{lem:z_update_decomposition}
    Consider the following update rule on $z$:
    \begin{equation}
        z_{t+1}\leftarrow \Proj\left(z_t-\beta_t\Lambda F_t(z_t)\right),
    \end{equation}
    If $F_t$ is $L_F$-Lipschitz continuous and $(\mu,\lambda)$-strongly monotone in $z$, we have the following:
    \begin{align}
        \|z_{t+1}-z_t^*\|_2\leq & \sqrt{1-2\beta_t\mu + \beta_t^2L_F^2}\|z_t-z_t^*\|_2;\\
        \|z_{t+1}-z_{t+1}^*\|_2\leq & \sqrt{1-2\beta_t\mu + \beta_t^2L_F^2}\|z_t-z_t^*\|_2+\|z_{t+1}^*-z_t^*\|_2.
    \end{align}
\end{lemma}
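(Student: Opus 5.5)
The plan is the standard contraction argument for projected gradient descent on a strongly monotone variational inequality, done in the $\lambda$-weighted geometry, followed by a triangle inequality. Fix a state $s$ and drop it from the notation. Write $z_t^*$ for the unique Nash equilibrium of the 4-player stage game with payoffs $-\bQ_t(s,\cdot)$. It is unique by \Cref{thm:RQE_property_4player_monotonicity}, since $F_t$ is $(\mu,\lambda)$-strongly monotone.

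\textbf{Step 1: $z_t^*$ is a fixed point of the preconditioned step.} I would first show that $z_t^*=\Proj_\sZ\left(z_t^*-\beta_t\Lambda F_t(z_t^*)\right)$. The key structural fact is that $\sZ$ is a product of simplices and $\Lambda$ is constant on each block, equal to $\lambda_1$ or $\lambda_2$. Hence the Euclidean projection splits blockwise, and on block $k$ the step reads $\Proj_{\Delta}\left(z_{t,k}^*-\beta_t\lambda_{(k)}F_{t,k}(z_t^*)\right)$. Each player's optimality condition (the KKT system in the proof of \Cref{thm:RQE_property_4player_monotonicity}) gives $\langle z_k-z_{t,k}^*,F_{t,k}(z_t^*)\rangle\ge 0$ for all $z_k$ in the simplex. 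Multiplying by the positive scalar $\beta_t\lambda_{(k)}$ yields exactly the projection characterization of the fixed point.

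\textbf{Step 2: one-step contraction toward $z_t^*$.} Next I would use nonexpansiveness of the Euclidean projection. Setting $d=z_t-z_t^*$ and $g=F_t(z_t)-F_t(z_t^*)$, this gives
\begin{equation*}
\|z_{t+1}-z_t^*\|_2^2\le \|d-\beta_t\Lambda g\|_2^2=\|d\|_2^2-2\beta_t\langle d,g\rangle_\lambda+\beta_t^2\|\Lambda g\|_2^2 .
\end{equation*}
Strong monotonicity bounds the middle term by $-2\beta_t\mu\|d\|_2^2$. The last term is controlled by Lipschitz continuity, giving $\beta_t^2L_F^2\|d\|_2^2$. Taking square roots then yields the first claimed inequality.

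\textbf{Step 3: tracking the moving equilibrium.} The second inequality follows from the triangle inequality $\|z_{t+1}-z_{t+1}^*\|_2\le\|z_{t+1}-z_t^*\|_2+\|z_t^*-z_{t+1}^*\|_2$ combined with the first inequality.

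\textbf{Main obstacle.} The delicate point is the interplay between the preconditioner and the norms. The Euclidean projection agrees with a $\Lambda^{-1}$-weighted projection only because $\Lambda$ is blockwise scalar. This is exactly why the restricted form of $\lambda$ in \Cref{assumption:mu_lambda_strong_monotonicity} matters, and Step 1 needs to be checked carefully. In addition, $\|\Lambda g\|_2\le\|\lambda\|_\infty L_F\|d\|_2$, so obtaining the stated constant $L_F^2$ requires one of two readings. Either $L_F$ is taken as the Lipschitz constant of the preconditioned operator $\Lambda F_t$, or $\lambda$ is normalized so that $\|\lambda\|_\infty\le 1$, which is harmless because $\lambda$ can be rescaled together with $\mu$. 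I would state this convention explicitly. Since the step-size condition $\beta_t\le\mu/L_F^2$ makes the factor $1-2\beta_t\mu+\beta_t^2L_F^2\le 1-\beta_t\mu<1$, the conclusion is unaffected either way.
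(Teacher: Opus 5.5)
Your proposal is correct and follows the paper's proof essentially step for step: the fixed-point property of $z_t^*$ (the paper's \Cref{lem:zstar_projection}, which you verify blockwise rather than through the $\lambda$-weighted variational inequality), then nonexpansiveness of the projection with the expansion of $\|d-\beta_t\Lambda g\|_2^2$ bounded using strong monotonicity and Lipschitzness, and finally the triangle inequality. Your caveat on the quadratic term is well taken: the paper writes $\beta_t^2\|F_t(z_t)-F_t(z_t^*)\|_2^2$ where the expansion actually gives $\beta_t^2\|\Lambda(F_t(z_t)-F_t(z_t^*))\|_2^2$, so the stated constant $L_F^2$ does require $\|\lambda\|_\infty\le 1$ or reading $L_F$ as the Lipschitz constant of $\Lambda F_t$ (in general one gets $\|\lambda\|_\infty^2L_F^2$).
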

\begin{proof}
For $\|z_{t+1}-z_t^*\|_2$, we obtain by our update rule and \Cref{lem:zstar_projection} that:
\begin{equation}\begin{aligned}
    \|z_{t+1}-z_t^*\|_2\leq & \|\Proj(z_t-\beta_t\Lambda F_t(z_t))-\Proj(z_t^*-\beta_t\Lambda F_t(z_t^*))\|_2\\
    \leq & \|z_t-z_t^*-\beta_t\Lambda(F_t(z_t)-F_t(z_t^*))\|_2\\
\end{aligned}\end{equation}
Notice that we further have:
\begin{equation}\label{eq:app_B.2_2}\begin{aligned}
    &\|z_t-z_t^*-\beta_t\Lambda (F_t(z_t)-F_t(z_t^*))\|_2^2\\
    =&\|z_t-z_t^*\|_2^2-2\beta_t\langle z_t-z_t^*, F_t(z_t)-F_t(z_t^*)\rangle_\lambda +\beta_t^2\|F_t(z_t)-F_t(z_t^*)\|_2^2\\
        \leq& \|z_t-z_t^*\|_2^2-2\beta_t\mu \|z_t-z_t^*\|_2^2 +\beta_t^2L_F^2\|z_t-z_t^*\|_2^2\\
        =&(1-2\beta_t\mu + \beta_t^2L_F^2)\|z_t-z_t^*\|_2^2
\end{aligned}\end{equation}
where we have used the $(\mu,\lambda)$-strong monotonicity of $F_t$ and $L_F$-Lipschitz continuity of $F_t$ in the inequality, we conclude that the first line of our results hold.

The second line holds by noticing that:
\begin{equation}\label{eq:app_B.2_1}\begin{aligned}
    \|z_{t+1}-z_{t+1}^*\|_2\leq & \|z_{t+1}-z_t^*\|_2+\|z_{t+1}^*-z_t^*\|_2\\
    \leq &\sqrt{1-2\beta_t\mu + \beta_t^2L_F^2}\|z_t-z_t^*\|_2+\|z_{t+1}^*-z_t^*\|_2,
\end{aligned}\end{equation}
which completes the proof of \Cref{lem:z_update_decomposition}.
\end{proof}
Notice that $z_{t+1}^*$ is the state-wise RQE of $\bQ_{t+1}$ and $z_t^*$ is the state-wise RQE of (the normal form) $\bQ$, we have:
\begin{equation}\begin{aligned}
    \|z^*_{t+1}-z^*_t\|_2\leq& L_{RQE} \|\bQ_{t+1}-\bQ_t\|_{\max}\\
    =&\alpha_t L_{RQE}\|\sT_{z_{t+1}}\bQ_t-\bQ_t\|_{\max}\\
    \leq&\alpha_tL_{RQE}\left(\|\sT_{z_{t+1}}\bQ_t-\sT\bQ_t\|_{\max}+\|\sT\bQ_t-\bQ_t\|_{\max}\right)\\
    =&\alpha_tL_{RQE}\left(\|\sT_{z_{t+1}}\bQ_t-\sT_{z_t^*}\bQ_t\|_{\max}+\|\sT\bQ_t-\bQ_t\|_{\max}\right)\\
    \leq &\alpha_tL_{RQE}\left(\|\sT_{z_{t+1}}\bQ_t-\sT_{z_t^*}\bQ_t\|_{\max}+\|\sT\bQ_t-\sT\bQ^*\|_{\max}+\|\sT\bQ^*-\bQ_t\|_{\max}\right)\\
    \leq & \alpha_tL_{RQE}\left(\|\sT_{z_{t+1}}\bQ_t-\sT_{z_t^*}\bQ_t\|_{\max}+\gamma_0\|\bQ_t-\bQ^*\|_{\max}+\|\bQ^*-\bQ_t\|_{\max}\right)\\
    = & \alpha_tL_{RQE}\left(\|\sT_{z_{t+1}}\bQ_t-\sT_{z_t^*}\bQ_t\|_{\max}+(1+\gamma_0)\|\bQ_t-\bQ^*\|_{\max}\right),
\end{aligned}\end{equation}
where $L_{RQE}$ is the Lipschitz continuity constant of RQE given by \Cref{thm:RQE_property_4player_monotonicity} with respect to the difference term $\|z_{t+1}-z_t\|_2$. 

Now we consider the iteration step of $\bQ$. Let $\bQ^*$ be the $Q$ function corresponding to the RQE of the Markov game, we have:
\begin{equation}
    \begin{aligned}
        \|\bQ_{t+1}-\bQ^*\|_{\max}\leq& (1-\alpha_t)\|\bQ_t-\bQ^*\|_{\max}+ \alpha_t\|\sT_{z_{t+1}}\bQ_t-\sT\bQ_t \|_{\max}+\alpha_t\|\sT\bQ_t-\sT \bQ^* \|_{\max}\\
        \leq& \left(1-(1-\gamma_0)\alpha_t\right)\|\bQ_t-\bQ^*\|_{\max}+\alpha_t\|\sT_{z_{t+1}}\bQ_t-\sT\bQ_t \|_{\max}\\
        =& \left(1-(1-\gamma_0)\alpha_t\right)\|\bQ_t-\bQ^*\|_{\max}+\alpha_t\|\sT_{z_{t+1}}\bQ_t-\sT_{z_t^*}\bQ_t \|_{\max},
    \end{aligned}
\end{equation}
where we have used the $\gamma_0$-contraction property of $\sT$ in the second inequality, and the fact that $z_t^*$ is the state-wise RQE of $\bQ_t$. For the difference term $\|\sT_{z_{t+1}}\bQ_t-\sT_{z_t^*}\bQ_t\|_{\max}$, we have the following lemma:
\begin{lemma}\label{lem:approximate_Bellman_backup_error}
    Given a set of $Q$ functions $\bQ$, for arbitrary $z\in \Delta_{|\sA_1|}\times \Delta_{|\sA_2|}\times \Delta_{|\sA_2|}\times \Delta_{|\sA_1|}$, the following bound holds:
    \begin{equation}
        \|\sT_z\bQ-\sT\bQ\|_{\max}\leq \gamma (\frac{\sqrt{|\sA_{\max}|}}{2}\textup{sp}(\bQ) +\frac{L_D}{\tau_{\min}}+\max\{\frac{L_D}{\tau_{\min}}, \epsilon_{\max}L_\nu\})\|z-z^*\|_2
    \end{equation}
    where $z^*$ is the state-wise RQE of $\bQ$ and $L_D,L_\nu$ are the maximum Lipschitz continuity constant of $D_i$ and $\nu_i$ respectively.
\end{lemma}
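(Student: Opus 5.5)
We reduce the comparison to a single stage game at each next state $s'$. Since $z^*$ is the state-wise RQE of $\bQ$, $(\sT\bQ)_i(s,\ba)=(\sT_{z^*}\bQ)_i(s,\ba)$, because $\texttt{RQE}_i(-\bQ(s',\cdot))=J_i(\pi_i^*,\pi_{-i}^*,p_i^*;-\bQ(s',\cdot))$. The reward terms cancel, so
\begin{equation*}
|(\sT_z\bQ-\sT\bQ)_i(s,\ba)|\leq \gamma\max_{s'}\big|J_i(\pi_i,\pi_{-i},p_i;-\bQ(s',\cdot))-J_i(\pi_i^*,\pi_{-i}^*,p_i^*;-\bQ(s',\cdot))\big|.
\end{equation*}
It therefore suffices to bound this difference of stage objectives by the bracketed constant times $\|z(\cdot|s')-z^*(\cdot|s')\|_2$. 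We then take the maximum over $i,s,\ba$, using $\|z(\cdot|s')-z^*(\cdot|s')\|_2\le\|z-z^*\|_2$.

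We split the stage difference into three pieces.

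\textbf{Bilinear term.} The first piece is $\pi_i^TQ_ip_i-(\pi_i^*)^TQ_ip_i^*$, which we write as $(\pi_i-\pi_i^*)^TQ_ip_i+(\pi_i^*)^TQ_i(p_i-p_i^*)$. The key observation is that $\pi_i-\pi_i^*$ and $p_i-p_i^*$ sum to zero, being differences of simplex points. Hence we may replace $Q_i$ by $Q_i-c\ind\ind^T$ with $c$ the midpoint of the range of $Q_i(s',\cdot)$. Every entry of the shifted matrix has magnitude at most $\textup{sp}(\bQ)/2$.

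Each term is then bounded as follows. For the first, $|(\pi_i-\pi_i^*)^T\tilde Q_i p_i|\le \|\pi_i-\pi_i^*\|_2\|\tilde Q_ip_i\|_2$, and $\|\tilde Q_ip_i\|_2\le\sqrt{|\sA_i|}\,\textup{sp}(\bQ)/2$ since each entry of $\tilde Q_ip_i$ is a convex combination of entries bounded by $\textup{sp}/2$. The second term is handled symmetrically with $\sqrt{|\sA_{-i}|}$. Combining the two via Cauchy–Schwarz on the vector $(\|\pi_i-\pi_i^*\|,\|p_i-p_i^*\|)$ gives at most $\frac{\sqrt{|\sA_{\max}|}}{2}\textup{sp}(\bQ)\|z-z^*\|_2$. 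This is the step needing the most care to land exactly on the stated constant. If a factor $\sqrt2$ appears, we would absorb it by using the $\ell_1$/$\ell_\infty$ pairing, $|x^T\tilde Q y|\le\|x\|_1\,\textup{sp}/2$, together with the fact that the sum of differences is zero.

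\textbf{Risk term.} The second piece is $\frac{1}{\tau_i}|D_i(p_i,\pi_{-i})-D_i(p_i^*,\pi_{-i}^*)|$. Inserting the intermediate point $D_i(p_i^*,\pi_{-i})$ and using $L_D$-Lipschitzness in each argument separately bounds it by $\frac{L_D}{\tau_{\min}}(\|p_i-p_i^*\|_2+\|\pi_{-i}-\pi_{-i}^*\|_2)$.

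\textbf{Regularizer term.} The third piece gives $\epsilon_{\max}L_\nu\|\pi_i-\pi_i^*\|_2$.

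We then group the terms. The risk piece contributes $\frac{L_D}{\tau_{\min}}\|\pi_{-i}-\pi_{-i}^*\|_2$ on its own, which gives the standalone $\frac{L_D}{\tau_{\min}}$ term. The remaining $\frac{L_D}{\tau_{\min}}\|p_i-p_i^*\|_2+\epsilon_{\max}L_\nu\|\pi_i-\pi_i^*\|_2$ is at most $\max\{\frac{L_D}{\tau_{\min}},\epsilon_{\max}L_\nu\}(\|p_i-p_i^*\|_2+\|\pi_i-\pi_i^*\|_2)$. The main obstacle is bookkeeping: this last sum is bounded by $\sqrt2\|z-z^*\|_2$, not $\|z-z^*\|_2$. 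To recover the stated constant we would bound the sum by $\|z-z^*\|_2$ in the sense of comparing norms of disjoint blocks; if that fails, the stated constant should be read up to an absolute factor, which the downstream argument tolerates.
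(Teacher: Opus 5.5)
Your proposal has a genuine gap at the bilinear step, and it is not a bookkeeping issue. Cauchy--Schwarz on $(\|\pi_i-\pi_i^*\|_2,\|p_i-p_i^*\|_2)$ only gives $\|\pi_i-\pi_i^*\|_2+\|p_i-p_i^*\|_2\le\sqrt{2}\,\|z-z^*\|_2$, not $\|z-z^*\|_2$. The $\ell_1/\ell_\infty$ fallback still adds two block norms, so converting back to $\|z-z^*\|_2$ costs the same factor.

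More fundamentally, your decomposition uses nothing about $z^*$ beyond $\sT\bQ=\sT_{z^*}\bQ$ and Lipschitz estimates, so it would prove the bound for an arbitrary pair of points. For arbitrary pairs your claimed bilinear estimate is false. Take two actions per player, $Q_i(s',\cdot)=\begin{bmatrix}1&-1\\-1&-1\end{bmatrix}$ (span $2$), $\pi_i^*=p_i^*=(1,0)$, $\pi_i=p_i=(1-t,t)$, and all other blocks unchanged. Then $|\pi_i^TQ_ip_i-(\pi_i^*)^TQ_ip_i^*|=4t-2t^2$, whereas $\frac{\sqrt{2}}{2}\cdot 2\cdot\|z-z^*\|_2=2\sqrt{2}\,t$, which is smaller for small $t>0$. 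So, as you concede at the end, your route yields only a weaker constant, not the stated lemma. That weaker constant would still serve downstream after enlarging $C_\sT$.

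The missing idea is to use that $z^*$ is a Nash equilibrium of the four-player stage game, and to prove two one-sided bounds that each contain a single bilinear leg. Write $J_i(z)=\pi_i^TQ_i(s',\cdot)p_i-\frac{1}{\tau_i}D_i(p_i,\pi_{-i};s')+\epsilon_i\nu_i(\pi_i;s')$.

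Since $\pi_i^*$ minimizes $J_i(\cdot,\pi_{-i}^*,p_i^*)$, you may replace $\pi_i^*$ by $\pi_i$ in the subtracted term; the $\nu_i$ terms then cancel, giving
\begin{align*}
J_i(z)-J_i(z^*)&\ge \pi_i^TQ_i(p_i-p_i^*)-\tfrac{1}{\tau_i}\big(D_i(p_i,\pi_{-i})-D_i(p_i^*,\pi_{-i}^*)\big)\\
&\ge-\Big(\tfrac{\sqrt{|\sA_{-i}|}}{2}\textup{sp}(\bQ)+\tfrac{2L_D}{\tau_i}\Big)\|z-z^*\|_2 .
\end{align*}
Since $p_i^*$ maximizes $J_i(\pi_i^*,\pi_{-i}^*,\cdot)$, you may replace $p_i^*$ by $p_i$, giving
\begin{align*}
J_i(z)-J_i(z^*)&\le (\pi_i-\pi_i^*)^TQ_ip_i-\tfrac{1}{\tau_i}\big(D_i(p_i,\pi_{-i})-D_i(p_i,\pi_{-i}^*)\big)+\epsilon_i\big(\nu_i(\pi_i)-\nu_i(\pi_i^*)\big)\\
&\le\Big(\tfrac{\sqrt{|\sA_i|}}{2}\textup{sp}(\bQ)+\tfrac{L_D}{\tau_i}+\epsilon_iL_\nu\Big)\|z-z^*\|_2 .
\end{align*}
Your centering trick handles each single leg, and each block difference is bounded by the full $\|z-z^*\|_2$ with no $\sqrt{2}$. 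Then $|J_i(z)-J_i(z^*)|$ is controlled by the larger of the two constants. That maximum is exactly where $\frac{L_D}{\tau_{\min}}+\max\{\frac{L_D}{\tau_{\min}},\epsilon_{\max}L_\nu\}$ in the statement comes from.
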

\begin{proof}
    Given state-action pair $(s,\ba)$, let $z=(\pi_1,\pi_2,p_1,p_2)$ and $z^*=(\pi_1^*,\pi_2^*,p_1^*,p_2^*)$, we have the following lower bound:
    \begin{equation}
        \begin{aligned}
            &(\sT_{z}\bQ-\sT\bQ)_i(s,\ba)\\
            =&\gamma \E_{s'\sim P(\cdot|s,\ba)}[\pi_{i}^T Q_i(s',\cdot)p_{i}-\frac{1}{\tau_i}D_i(p_{i},\pi_{-i};s')+\epsilon_i\nu_i(\pi_{i};s')]\\
            &-\gamma \E_{s'\sim P(\cdot|s,\ba)}[(\pi_{i}^*)^T Q_i(s',\cdot)p^*_{i}-\frac{1}{\tau_i}D_i(p^*_{i},\pi^*_{-i};s')+\epsilon_i\nu_i(\pi^*_{i};s')]\\
            \geq &\gamma \E_{s'\sim P(\cdot|s,\ba)}[\pi_{i}^T Q_i(s',\cdot)p_{i}-\frac{1}{\tau_i}D_i(p_{i},\pi_{-i};s')+\epsilon_i\nu_i(\pi_{i};s')]\\
            &-\gamma \E_{s'\sim P(\cdot|s,\ba)}[\pi_{i}^T Q_i(s',\cdot)p^*_{i}-\frac{1}{\tau_i}D_i(p^*_{i},\pi^*_{-i};s')+\epsilon_i\nu_i(\pi_{i};s')]\\
            =&\gamma \E_{s'\sim P(\cdot|s,\ba)}[\pi_{i}^T Q_i(s',\cdot)(p_{i}-p^*_{i})-\frac{1}{\tau_i}(D_i(p_{i},\pi_{-i};s')-D_i(p^*_{i},\pi^*_{-i};s'))]\\
            \geq & -\gamma (\frac{\sqrt{|\sA_{-i}|}}{2}\textup{sp}(Q_i)+\frac{2L_D}{\tau_i})\|z-z^*\|_2,
        \end{aligned}
    \end{equation}
    and a similar upper bound:
    \begin{equation}
        \begin{aligned}
            &(\sT_{z}\bQ-\sT_{z^*}\bQ)_i(s,\ba)\\
            =&\gamma \E_{s'\sim P(\cdot|s,\ba)}[\pi_{i}^T Q_i(s',\cdot)p_{i}-\frac{1}{\tau_i}D_i(p_{i},\pi_{-i};s')+\epsilon_i\nu_i(\pi_{i};s')]\\
            &-\gamma \E_{s'\sim P(\cdot|s,\ba)}[(\pi_{i}^*)^T Q_i(s',\cdot)p^*_{i}-\frac{1}{\tau_i}D_i(p^*_{i},\pi^*_{-i};s')+\epsilon_i\nu_i(\pi^*_{i};s')]\\
            \leq &\gamma \E_{s'\sim P(\cdot|s,\ba)}[\pi_{i}^T Q_i(s',\cdot)p_{i}-\frac{1}{\tau_i}D_i(p_{i},\pi_{-i};s')+\epsilon_i\nu_i(\pi_{i};s')]\\
            &-\gamma \E_{s'\sim P(\cdot|s,\ba)}[(\pi_{i}^*)^T Q_i(s',\cdot)p_{i}-\frac{1}{\tau_i}D_i(p_{i},\pi^*_{-i};s')+\epsilon_i\nu_i(\pi^*_{i};s')]\\
            =&\gamma \E_{s'\sim P(\cdot|s,\ba)}[(\pi_{i}-\pi_{i}^*)^TQ_i(s',\cdot)p_{i}\\&-\frac{D_i(p_{i},\pi_{-i};s')-D_i(p_{i},\pi^*_{-i};s')}{\tau_i}+\epsilon_i(\nu_i(\pi_{i};s')-\nu_i(\pi^*_{i};s'))]\\
            \leq &\gamma (\frac{\sqrt{|\sA_i|}}{2}\textup{sp}(Q_i)+\frac{L_D}{\tau_i}+\epsilon_iL_\nu)\|z-z^*\|_2
        \end{aligned}
    \end{equation}
    taking max-norm on all possible $(s,\ba)$ pairs completes the proof.
\end{proof}

Using \Cref{lem:approximate_Bellman_backup_error}, we obtain:
\begin{equation}\begin{aligned}
    \|\sT_{z_{t+1}}\bQ_t-\sT_{z_t^*}\bQ_t\|_{\max}&\leq \gamma (\frac{\sqrt{|\sA_{\max}|}}{2}\textup{sp}(\bQ_t) +\frac{L_D}{\tau_{\min}}+\max\{\frac{L_D}{\tau_{\min}}, \epsilon_{\max}L_\nu\})\|z_{t+1}-z_t^*\|_2\\
    &\leq \gamma (\frac{\sqrt{|\sA_{\max}|}}{2}Q_{span} +\frac{L_D}{\tau_{\min}}+\max\{\frac{L_D}{\tau_{\min}}, \epsilon_{\max}L_\nu\})\|z_{t+1}-z_t^*\|_2
\end{aligned}\end{equation}

Let $C_\sT=\gamma (\frac{\sqrt{|\sA_{\max}|}}{2}Q_{span} +\frac{L_D}{\tau_{\min}}+\max\{\frac{L_D}{\tau_{\min}}, \epsilon_{\max}L_\nu\})$ denote this coefficient, our iteration dynamics can be written as:
\begin{equation}\label{eq:iter_dynamics_1}\begin{aligned}
    &\|z_{t+1}-z_{t+1}^*\|_2\\
    \leq& \sqrt{1-2\beta_t\mu + \beta_t^2L_F^2}\|z_t-z_t^*\|_2+\alpha_tL_{RQE}\left(\|\sT_{z_{t+1}}\bQ_t-\sT_{z_t^*}\bQ_t\|_{\max}+(1+\gamma_0)\|\bQ_t-\bQ^*\|_{\max}\right)\\
    =&\sqrt{1-2\beta_t\mu + \beta_t^2L_F^2}\|z_t-z_t^*\|_2+\alpha_tL_{RQE}\left(C_\sT \|z_{t+1}-z_t^*\|_2+(1+\gamma_0)\|\bQ_t-\bQ^*\|_{\max}\right)\\
    \leq &\sqrt{1-2\beta_t\mu + \beta_t^2L_F^2}\|z_t-z_t^*\|_2\\
    &+\alpha_tL_{RQE}\left(C_\sT\sqrt{1-2\beta_t\mu + \beta_t^2L_F^2} \|z_{t}-z_t^*\|_2+(1+\gamma_0)\|\bQ_t-\bQ^*\|_{\max}\right)\\
    =&(1+\alpha_tL_{RQE}C_{\sT})\sqrt{1-2\beta_t\mu + \beta_t^2L_F^2}\|z_t-z_t^*\|_2+(1+\gamma_0)\alpha_tL_{RQE}\|\bQ_t-\bQ^*\|_{\max}.
\end{aligned}\end{equation}
where we have used \eqref{eq:app_B.2_2} in the second inequality, and
\begin{equation}\label{eq:iter_dynamics_2}\begin{aligned}
    \|\bQ_{t+1}-\bQ^*\|_{\max}\leq&\left(1-(1-\gamma_0)\alpha_t\right)\|\bQ_t-\bQ^*\|_{\max}+\alpha_t\|\sT_{z_{t+1}}\bQ_t-\sT_{z_t^*}\bQ_t \|_{\max}\\
    \leq & \left(1-(1-\gamma_0)\alpha_t\right)\|\bQ_t-\bQ^*\|_{\max}+\alpha_tC_\sT \|z_{t+1}-z_t^*\|_2\\
    =&\left(1-(1-\gamma_0)\alpha_t\right)\|\bQ_t-\bQ^*\|_{\max}+\alpha_tC_\sT\sqrt{1-2\beta_t\mu + \beta_t^2L_F^2} \|z_t-z_t^*\|_2.
\end{aligned}\end{equation}
Notice that when $\beta_t\leq \min\{\frac{1}{\mu},\frac{\mu}{L_F^2}\}$, we have:
\begin{equation}
    \sqrt{1-2\beta_t\mu + \beta_t^2L_F^2}\leq 1-\beta_t\mu +\frac{\beta_t^2L_F^2}{2}\leq 1-\frac{\beta_t}{2}\mu
\end{equation}

Let $\tilde{\alpha}_t=(1-\gamma_0)\alpha_t$ and $\tilde{\beta}_t=\frac{\mu}{2}\beta_t$, our recursion becomes:
\begin{align}
    \|z_{t+1}-z_{t+1}^*\|_2\leq & (1+C_1\tilde{\alpha}_t)(1-\tilde{\beta}_t)\|z_t-z_t^*\|_2+C_2\tilde{\alpha}_t\|\bQ_t-\bQ^*\|_{\max} \label{eq:simplified_iter_dynamics_1_const_stepsize}\\
    \|\bQ_{t+1}-\bQ^*\|_{\max}\leq &(1-\tilde{\alpha}_t)\|\bQ_t-\bQ^*\|_{\max}+C_3\tilde{\alpha}_t(1-\tilde{\beta}_t)\|z_t-z_t^*\|_2\label{eq:simplified_iter_dynamics_2_const_stepsize}
\end{align}
where $C_1=L_{RQE}C_\sT, C_2=\frac{1+\gamma_0}{1-\gamma_0}L_{RQE}$ and $C_3=\frac{C_\sT}{1-\gamma_0}$.
Let $v_t=\|\bQ_t-\bQ^*\|_{\max}$ and $u_t=\|z_t-z_t^*\|_2$, we can rewrite the recursion in vector form:
\begin{equation}
    \begin{bmatrix}
        u_{t+1}\\
        v_{t+1}
    \end{bmatrix}\leq \begin{bmatrix}
        (1+C_1\tilde{\alpha}_t)(1-\tilde{\beta}_t) & C_2\tilde{\alpha}_t\\
        C_3\tilde{\alpha}_t(1-\tilde{\beta}_t) & 1-\tilde{\alpha}_t
    \end{bmatrix}\cdot \begin{bmatrix}
        u_t\\ v_t
    \end{bmatrix}
\end{equation}

we give different convergence results under different step size assumptions:

\textit{Scenario 1:} Assume we are using constant step sizes $\tilde{\alpha}_t=\tilde{\alpha},\tilde{\beta}_t=\tilde{\beta}, \tilde{\alpha}\ll \tilde{\beta}$. The recursion becomes:
\begin{equation}
    \begin{bmatrix}
        u_{t+1}\\
        v_{t+1}
    \end{bmatrix}\leq \begin{bmatrix}
        (1+C_1\tilde{\alpha})(1-\tilde{\beta}) & C_2\tilde{\alpha}\\
        C_3\tilde{\alpha}(1-\tilde{\beta}) & 1-\tilde{\alpha}
    \end{bmatrix}\cdot \begin{bmatrix}
        u_t\\ v_t
    \end{bmatrix}
\end{equation}
Let $W_t=u_t+kv_t$ for some $k>0$, we have:
\begin{equation}\begin{aligned}
    W_{t+1}=&u_{t+1}+kv_{t+1}\\
    =&\left((1+C_1\tilde{\alpha})(1-\tilde{\beta})+kC_3\tilde{\alpha}(1-\tilde{\beta})\right)u_t+\left(C_2\tilde{\alpha}+k(1-\tilde{\alpha})\right)v_t\\
    \leq &\max\left\{(1+C_1\tilde{\alpha})(1-\tilde{\beta})+kC_3\tilde{\alpha}(1-\tilde{\beta}), \frac{C_2\tilde{\alpha}}{k}+(1-\tilde{\alpha})\right\}(u_t+kv_t)\\
    =&\max\left\{(1+(C_1+kC_3)\tilde{\alpha})(1-\tilde{\beta}), \frac{C_2\tilde{\alpha}}{k}+(1-\tilde{\alpha})\right\}W_t\\
    \leq &\max\left\{1-\tilde{\beta}+(C_1+kC_3)\tilde{\alpha},1-(1-\frac{C_2}{k})\tilde{\alpha} \right\}W_t
\end{aligned}\end{equation}
take $k=2C_2$ and $\tilde{\beta}\geq \left(\frac{1}{2}+C_1+2C_2C_3 \right)\tilde{\alpha}$, we have:
\begin{equation}
    W_t\leq \left(1-\frac{\tilde{\alpha}}{2}\right)^tW_0\leq \left(1-\frac{\tilde{\alpha}}{2}\right)^t\left(2\left(\sqrt{|\sA_1|}+\sqrt{|\sA_2|} \right)+k\|\bQ^*\|_{\max} \right),
\end{equation}
and consequently,
\begin{align}
    \|z_t-z_t^*\|_2&\leq W_t\leq 2\left(1-\frac{\tilde{\alpha}}{2}\right)^t\left(\left(\sqrt{|\sA_1|}+\sqrt{|\sA_2|} \right)+C_2Q_{\max} \right);\\
    \|\bQ_t-\bQ^*\|_{\max} &\leq \frac{W_t}{k}\leq \left(1-\frac{\tilde{\alpha}}{2}\right)^t\left(\frac{1}{C_2}\left(\sqrt{|\sA_1|}+\sqrt{|\sA_2|} \right)+Q_{\max} \right).
\end{align}
Let $z^*$ be the RQE of the Markov game, we know that $z^*$ is also the state-wise RQE of $\bQ^*$. \Cref{thm:RQE_property_4player_monotonicity} implies:
\begin{equation}
    \|z_t^*-z^*\|_2\leq L_{RQE}\|\bQ_t-\bQ^*\|_{\max},
\end{equation}
so that
\begin{equation}\begin{aligned}
    \|z_t-z^*\|_2\leq & \|z_t-z_t^*\|_2+\|z_t^*-z^*\|_2\\
    \leq & \left(1-\frac{\tilde{\alpha}}{2}\right)^t\left(\left(\sqrt{|\sA_1|}+\sqrt{|\sA_2|} \right)+C_2Q_{\max} \right)\\
    &+L_{RQE}\left(1-\frac{\tilde{\alpha}}{2}\right)^t\left(\frac{1}{C_2}\left(\sqrt{|\sA_1|}+\sqrt{|\sA_2|} \right)+\|\bQ^*\|_{\max} \right)\\
    \leq & \left(1-\frac{\tilde{\alpha}}{2}\right)^t\left(1+\frac{L_{RQE}}{C_2}\right)\left(\left(\sqrt{|\sA_1|}+\sqrt{|\sA_2|} \right)+C_2Q_{\max} \right).
\end{aligned}\end{equation}
    
\textit{Scenario 2:} Assume we are using diminishing step sizes where $\tilde{\alpha}_t=\frac{\tilde{\alpha}}{t+h}$ and $\tilde{\beta}_t=\frac{\tilde{\beta}}{t+h}, \tilde{\alpha} \ll \tilde{\beta}$ for some fixed integer $h$. The recursion now has the form:
\begin{equation}
    \begin{bmatrix}
        u_{t+1}\\
        v_{t+1}
    \end{bmatrix}\leq \begin{bmatrix}
        (1+C_1\frac{\tilde{\alpha}}{t+h})(1-\frac{\tilde{\beta}}{t+h}) & C_2\frac{\tilde{\alpha}}{t+h}\\
        C_3\frac{\tilde{\alpha}}{t+h}(1-\frac{\tilde{\beta}}{t+h}) & 1-\frac{\tilde{\alpha}}{t+h}
    \end{bmatrix}\cdot \begin{bmatrix}
        u_t\\ v_t
    \end{bmatrix}.
\end{equation}
Similarly, let $W_t=u_t+kv_t$ for some $k>0$, the recursion on $W_t$ becomes:
\begin{equation}
    \begin{aligned}
        W_{t+1}=&u_{t+1}+kv_{t+1}\\
        =&\left( (1+C_1\frac{\tilde{\alpha}}{t+h})(1-\frac{\tilde{\beta}}{t+h})+ kC_3\frac{\tilde{\alpha}}{t+h}(1-\frac{\tilde{\beta}}{t+h}) \right)u_t+\left( C_2\frac{\tilde{\alpha}}{t+h}+k(1-\frac{\tilde{\alpha}}{t+h}) \right)v_t\\
        \leq & \max\left\{(1+C_1\frac{\tilde{\alpha}}{t+h})(1-\frac{\tilde{\beta}}{t+h})+ kC_3\frac{\tilde{\alpha}}{t+h}(1-\frac{\tilde{\beta}}{t+h}),\frac{C_2}{k}\frac{\tilde{\alpha}}{t+h}+(1-\frac{\tilde{\alpha}}{t+h}) \right\}(u_t+kv_t)\\
        \leq &\max\left\{(1+(C_1+kC_3)\frac{\tilde{\alpha}}{t+h})(1-\frac{\tilde{\beta}}{t+h}),\frac{C_2}{k}\frac{\tilde{\alpha}}{t+h}+(1-\frac{\tilde{\alpha}}{t+h}) \right\} W_t\\
        \leq &\max\left\{1-\frac{\tilde{\beta}-\tilde{\alpha}(C_1+kC_3)}{t+h}, 1-\frac{(1-C_2/k)\tilde{\alpha}}{t+h}\right\}W_t.
    \end{aligned}
\end{equation}
Setting $k=2C_2$ and $\tilde{\beta}\geq \left(\frac{1}{2}+C_1+2C_2C_3\right)\tilde{\alpha}$, we have:
\begin{equation}
    W_{t+1}\leq \left(1-\frac{\tilde{\alpha}}{2(t+h)}\right)W_t
\end{equation}
and as a result,
\begin{equation}\begin{aligned}
    W_{t+1}\leq & W_0\prod_{i=0}^t \left(1-\frac{\tilde{\alpha}}{2(i+h)}\right)\\
    \leq & W_0\prod_{i=0}^t \exp\left( -\frac{\tilde{\alpha}}{2(i+h)}\right)\\
    \leq & W_0\exp\left(-\frac{\tilde{\alpha}}{2}\sum_{i=0}^t \frac{1}{i+h}\right)\\
    =&W_0\exp\left(-\frac{\tilde{\alpha}}{2}(H_{h+t}-H_{h-1})\right).
\end{aligned}\end{equation}
where $H_t$ denotes the harmonic series. Since we have the following upper bound:
\begin{equation}
    H_{h-1}-H_{h+t}\leq \log \frac{h}{h+t+1}
\end{equation}
we obtain the upper bound of:
\begin{equation}
    W_{t+1}\leq W_0\left(\frac{h}{h+t+1}\right)^{\frac{\tilde{\alpha}}{2}}
\end{equation}
which leads to the final result:
\begin{align}
    \|z_t-z_t^*\|_2\leq &2\left(\left(\sqrt{|\sA_1|}+\sqrt{|\sA_2|} \right)+C_2Q_{\max} \right)\left(\frac{h}{h+t+1}\right)^{\frac{\tilde{\alpha}}{2}};\\
    \|\bQ_t-\bQ^*\|_{\max} \leq & \left(\frac{1}{C_2}\left(\sqrt{|\sA_1|}+\sqrt{|\sA_2|} \right)+Q_{\max} \right)\left(\frac{h}{h+t+1}\right)^{\frac{\tilde{\alpha}}{2}}.
\end{align}
and:
\begin{equation}
    \|z_t-z^*\|_2\leq \left(1+\frac{L_{RQE}}{C_2}\right)\left(\left(\sqrt{|\sA_1|}+\sqrt{|\sA_2|} \right)+C_2Q_{\max} \right)\left(\frac{h}{h+t+1}\right)^{\frac{\tilde{\alpha}}{2}}.
\end{equation}
\end{proof}

\section{Detailed Statement and Proof of \Cref{thm:MAAC_convergence}}\label[appendix]{app:proof_MAAC_convergence}
We first present a more detailed statement of \Cref{thm:MAAC_convergence} as follows:
\begin{theorem}[\Cref{thm:MAAC_convergence}, detailed]\label{thm:MAAC_convergence_detailed}
    Under Assumptions \ref{assumption:uniformly_geometrically_ergodic} and \ref{assumption:policy_lower_bounded}, if there exists $\gamma_0$ such that the Bellman optimality operator $\sT$ is a contraction mapping, and additionally $D_i(\cdot,\cdot)$ are $L_D$-Lipschitz in either argument when the other is fixed, $\nu_i(\cdot)$ are $L_\nu$-Lipschitz with respect to its input, and the gradient operator of the modified 4-player game is $(\mu,\lambda)$-strongly monotone and $L_F$-Lipschitz for every state $s\in \sS$, then both on- and off-policy variants of \Cref{alg:MAAC} with the following parameters:
    \begin{align}
        K\geq & \frac{4\sqrt{2}C}{(1-\rho)(1-\gamma_0)d_{\min}};\\
        \alpha_t< &1;\\
        \beta_t\leq & \min\{\frac{1}{\mu},\frac{\mu}{L_F^2}\};\\
        \frac{\beta_t}{\alpha_t}\geq & \frac{(1-\gamma_0)d_{\min{}}/2+\left(1+(1+\gamma_0)^2+2C_{\sT}+\frac{8\gamma^2C_{\sT}^2}{d_{\min{}}^3}\left(\frac{1+\gamma_0}{1-\gamma_0}\right)^2\right)L_{RQE}}{\mu+\left(1+(1+\gamma_0)^2+2C_{\sT}+\frac{8\gamma^2C_{\sT}^2}{d_{\min{}}^3}\left(\frac{1+\gamma_0}{1-\gamma_0}\right)^2\right)L_{RQE}}.
    \end{align}
    then in expectation, the sequence of $z_t$ converges to the RQE $z^*$ of the Markov game and the $Q$ function $\bQ_t$ converges to the corresponding $\bQ^*$ to $z^*$ at the following rates:
    \begin{enumerate}
        \item If we use constant step sizes $\alpha_t=\alpha,\beta_t=\beta$, then $\forall s\in \sS$:
        \begin{equation}\begin{aligned}
            &\E[\|\bQ_t-\bQ^*\|_{\max}^2]\leq (1-\frac{1}{2}(1-\gamma_0)d_{\min{}}\alpha)^t(Q_{\max}^2+8k)+\frac{2(C_3+kC_6)}{(1-\gamma_0)d_{\min{}}}\alpha;\\
            &\E[\|z_t(\cdot|s)-z^*(\cdot|s)\|_2^2]\leq(2+2kL_{RQE}^2)\left((1-\frac{1}{2}(1-\gamma_0)d_{\min{}}\alpha)^t(Q_{\max}^2/k+8)+\frac{2(C_3/k+C_6)}{(1-\gamma_0)d_{\min{}}}\alpha\right).
        \end{aligned}\end{equation}
        \item If we use diminishing step sizes $\alpha_t=\frac{\alpha}{t+h},\beta_t=\frac{\beta}{t+h}$ for some $h\geq 1$, then $\forall s\in \sS$:
        \begin{equation}
            \begin{aligned}
                &\E[\|\bQ_t-\bQ^*\|_{\max}^2]\leq \left(Q_{\max}^2+8k+\frac{(C_3+kC_6)\alpha^2}{h^2}\frac{h+2-\frac{1}{2}(1-\gamma_0)d_{\min{}}\alpha}{1-\frac{1}{2}(1-\gamma_0)d_{\min{}}\alpha}\right)\left(\frac{h+1}{h+t}\right)^{\frac{1}{2}(1-\gamma_0)d_{\min{}}\alpha};\\
                &\E[\|z_t(\cdot|s)-z^*(\cdot|s)\|_2^2]\leq (2/k+2L_{RQE}^2) \left(Q_{\max}^2+8k+\frac{(C_3+kC_6)\alpha^2}{h^2}\frac{h+2-\frac{1}{2}(1-\gamma_0)d_{\min{}}\alpha}{1-\frac{1}{2}(1-\gamma_0)d_{\min{}}\alpha}\right)\left(\frac{h+1}{h+t}\right)^{\frac{1}{2}(1-\gamma_0)d_{\min{}}\alpha}.
            \end{aligned}
        \end{equation}
    \end{enumerate}
    where in both cases,
    \begin{equation}
        \begin{aligned}
            d_{\min}=&\underline{\mu}\underline{\pi};\\
            k=&\frac{(1-\gamma_0)d_{\min{}}}{4L_{RQE}(1+\gamma_0)^2};\\
            C_3=&4(1-\gamma_0)^2\left(\frac{1}{K}+\frac{64C\rho}{(1-\rho)K}+2\left(\frac{C}{(1-\rho)K}\right)^2\right)Q_{\max}^2\\
    &+\left(1+\frac{1}{2}(1-\gamma_0)d_{\min{}}\alpha+\frac{1}{K}+\frac{64C\rho}{(1-\rho)K}+2\left(\frac{C}{(1-\rho)K}\right)^2\right)\nonumber \\
    &\quad \times 8\gamma^2 C_{\sT}^2(1-\beta)^2+\frac{8}{K}Q_{span}^2;\\
    C_6=&(2+C_{\sT})L_{RQE}^2C_\sT(1-\beta)^2+L_{RQE}^3C_\sT^2\alpha_t(1-\beta)^2+4L_{RQE}^2(1+\gamma_0)^2Q_{\max{}}
        \end{aligned}
    \end{equation}
    here $Q_{\max},Q_{span}$ are those provided in \Cref{lem:Q_iter_bounded}, and $L_{RQE},C_\sT$ are the same as those in \Cref{thm:iter_convergence_thm_detailed}.
\end{theorem}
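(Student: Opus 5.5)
The plan mirrors the deterministic analysis of \Cref{thm:iter_convergence_thm_detailed}, with the critic now a stochastic approximation step, and works with squared errors in expectation. The actor step in \Cref{alg:MAAC} is exactly the deterministic step of \eqref{eq:iteration_rule} applied state-wise with $F(\cdot;-\bQ_t)$. So \Cref{lem:z_update_decomposition} still gives, pathwise,
\[
\|z_{t+1}-z_t^*\|_2\leq (1-\tfrac{\mu}{2}\beta_t)\|z_t-z_t^*\|_2 ,
\]
where $z_t^*$ is the state-wise RQE of $\bQ_t$. Next, $\|z_{t+1}^*-z_t^*\|_2\leq L_{RQE}\|\bQ_{t+1}-\bQ_t\|_{\max}$ by \Cref{thm:RQE_property_4player_monotonicity}. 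The critic increment is $\alpha_t\hat\delta$, whose magnitude is bounded by $\alpha_t$ times a constant using \Cref{lem:Q_iter_bounded}; that lemma carries over since $\hat q$ is a convex-combination-type update of bounded quantities. Squaring and using $(a+b)^2\leq(1+\eta)a^2+(1+1/\eta)b^2$ yields a recursion for $u_t=\E\|z_t-z_t^*\|_2^2$ of the form $u_{t+1}\leq(1-c\beta_t+c'\alpha_t)u_t+c''\alpha_t v_t+O(\alpha_t^2)$, where $v_t=\E\|\bQ_t-\bQ^*\|_{\max}^2$.

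For the critic, I would write the update as
\[
\bQ_{t+1}=\bQ_t+\alpha_t\bigl(\bar D(\sT_{z_{t+1}}\bQ_t-\bQ_t)+\text{noise}_t\bigr),
\]
where $\bar D$ is the diagonal matrix of stationary state-action frequencies of the sampling chain, with entries at least $d_{\min}=\underline\mu\underline\pi$ by Assumptions \ref{assumption:uniformly_geometrically_ergodic} and \ref{assumption:policy_lower_bounded}. The deterministic part is then a $(1-(1-\gamma_0)d_{\min})$-contraction toward $\bQ^*$, up to the approximation term $\|\sT_{z_{t+1}}\bQ_t-\sT\bQ_t\|_{\max}\leq C_\sT\|z_{t+1}-z_t^*\|_2$ from \Cref{lem:approximate_Bellman_backup_error}. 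I would use a max-norm-based argument directly, squaring the asynchronous contraction inequality, rather than a generalized Moreau envelope, since the stated constants are in $\|\cdot\|_{\max}^2$. The noise splits into two parts. The first is a martingale-like part with variance $O(Q_{span}^2/K)$. The second is a Markov bias from the $K$-step empirical average not matching $\bar D$, which is controlled via geometric mixing as $\sum_k C\rho^k/K=O(C/((1-\rho)K))$; the condition on $K$ ensures this bias is dominated by half the contraction. The result is
\[
v_{t+1}\leq\bigl(1-(1-\gamma_0)d_{\min}\alpha_t\bigr)v_t+O(\alpha_t)\,C_\sT^2u_t+\alpha_t^2C_3 .
\]

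Finally, I would combine the two recursions with the Lyapunov function $W_t=v_t+k\,u_t$, taking $k$ proportional to $(1-\gamma_0)d_{\min}/(L_{RQE}(1+\gamma_0)^2)$ as in the statement. The step-size ratio condition $\beta_t/\alpha_t\geq\cdots$ is chosen so that the negative drift $-\mu\beta_t u_t$ from the fast actor absorbs every $\alpha_t u_t$ cross term. These cross terms arise from RQE drift and from the critic's dependence on $z$, giving $W_{t+1}\leq(1-\tfrac12(1-\gamma_0)d_{\min}\alpha_t)W_t+\alpha_t^2(C_3+kC_6)$. Unrolling this with constant steps gives the geometric term plus the $O(\alpha)$ floor; with $\alpha_t=\alpha/(t+h)$, the standard product bound $\prod(1-a/(i+h))\leq((h+1)/(h+t))^a$ plus summing the $\alpha_i^2$ tail gives the polynomial rate. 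To convert to $z^*$, I would use $\|z_t-z^*\|^2\leq2\|z_t-z_t^*\|^2+2L_{RQE}^2\|\bQ_t-\bQ^*\|_{\max}^2$. The main obstacle I expect is the Markov-noise step: conditioning is messy because $z_{t+1}$ and $\bQ_t$ depend on the past trajectory while samples come from a chain started at $s_K$ of the previous episode. I would handle it by conditioning on $\mathcal F_t$ (which fixes $\bQ_t$ and $z_{t+1}$), bounding the episode's empirical-measure error via the uniform TV mixing bound, and bounding the within-episode covariance terms by $\sum_{j}C\rho^{j}$.
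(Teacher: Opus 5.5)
Your architecture matches the paper's: the Lyapunov function $v_t+k\,u_t$ with the same $k$, the critic error split into a $\bd_{t+1}$-weighted contraction, a Markov bias controlled by $|\bar P_K-\bd_{t+1}|\le C/((1-\rho)K)$, martingale/covariance terms, and the same final conversion to $z^*$.

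\textbf{Actor recursion.} This step does not go through as you describe it. Since $\bQ_{t+1}-\bQ_t=\alpha_t\hat{\boldsymbol\delta}$ is random given $\sF_t$, squaring $\|z_{t+1}-z_{t+1}^*\|_2\le\|z_{t+1}-z_t^*\|_2+L_{RQE}\alpha_t\|\hat{\boldsymbol\delta}\|_{\max}$ produces the cross term $2L_{RQE}\alpha_t\|z_{t+1}-z_t^*\|_2\|\hat{\boldsymbol\delta}\|_{\max}$. The part of this term coming from $\E[\hat{\boldsymbol\delta}|\sF_t]$ gives your $c'\alpha_tu_t+c''\alpha_tv_t$. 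The sampling-noise part, however, is first order in $\alpha_t$ and only linear in $\|z_t-z_t^*\|_2$. Its conditional mean does not vanish, because the noise enters through a norm and through the map $\bQ\mapsto z^*(\bQ)$, which \Cref{thm:RQE_property_4player_monotonicity} only makes Lipschitz. Young's inequality with a weight $\eta\lesssim\mu\beta_t$ (a larger weight destroys the actor contraction) leaves an additive $O(\alpha_t^2/\beta_t)$ term, not $O(\alpha_t^2)$. Under the theorem's schedules $\beta_t/\alpha_t$ is constant, so this term is $O(\alpha_t)$. For $\alpha_t=\alpha/(t+h)$, $\beta_t=\beta/(t+h)$ the recursion then only bounds $u_t$ by a floor of order $(\alpha/\beta)^2$ instead of giving decay.

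To close this, expand $\|z_{t+1}-z_{t+1}^*\|_2^2$ exactly and use that $z_{t+1}-z_t^*$ is $\sF_t$-measurable. Then prove $\|\E[z_{t+1}^*-z_t^*|\sF_t]\|_2\le L_{RQE}\alpha_t\|\E[\hat{\boldsymbol\delta}|\sF_t]\|_{\max}+O(\alpha_t^2)$. That requires a Lipschitz Jacobian for the RQE map, e.g.\ via the implicit function theorem at interior equilibria. The alternative is a genuinely two-timescale schedule with $\alpha_t/\beta_t\to0$.

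\textbf{Critic recursion.} This has the analogous problem, caused by your choice to square the max-norm contraction directly rather than use a smooth surrogate. Write $\bQ_{t+1}-\bQ^*=X_t+\alpha_tN_t$ with $X_t$ $\sF_t$-measurable and $\E[N_t|\sF_t]=0$. The cross terms vanish coordinatewise. However, $\E[\|X_t+\alpha_tN_t\|_{\max}^2|\sF_t]$ can exceed $\|X_t\|_{\max}^2+\alpha_t^2\E[\|N_t\|_{\max}^2|\sF_t]$ by a term of order $\alpha_t\|X_t\|_{\max}$ times the noise level; two tied coordinates with antithetic noise already show this. So you do not obtain the $\alpha_t^2C_3$ term, only another $O(\alpha_t)$ perturbation. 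The generalized Moreau envelope, or $\tfrac12\|\cdot\|_p^2$ with $p\approx\log(2|\sS||\sA|)$, exists precisely for this. It is smooth, so the martingale cross term has zero conditional mean, and it is equivalent to $\|\cdot\|_{\max}^2$ up to constants.

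\textbf{Relation to the paper's proof.} Do not expect the paper to supply either fix. For the actor, it asserts that the stochastic dynamics coincide with \eqref{eq:iter_dynamics_1}, which silently replaces $\alpha_t\hat{\boldsymbol\delta}$ by $\alpha_t(\sT_{z_{t+1}}\bQ_t-\bQ_t)$. For the critic, it takes max-norms of an entrywise identity, which controls $\max_j\E[\cdot]$ rather than $\E[\max_j\cdot]$. Your proposal therefore reproduces the paper's argument together with these two unjustified steps.
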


\textit{Proof of \Cref{thm:MAAC_convergence_detailed}:}

In this section, we provide a proof of \Cref{thm:MAAC_convergence_detailed} following similar ideas as that of \Cref{thm:iter_convergence_thm_detailed}. However, since the iterates of \Cref{alg:MAAC} incurs coupled random updates, we use a coupled Lyapunov drift approach following prior work in stochastic approximation that considers the Lyapunov function of $\|\bQ-\bQ'\|_{\max}^2$ as opposed to $\|\bQ-\bQ'\|_{\max}$ in the proof of \Cref{thm:iter_convergence_thm_detailed}. We present the coupled Lyapunov drift inequalities respectively in this section.

\subsection{Lyapunov Drift Inequality for $Q$ Functions}
Before presenting the Lyapunov drift inequality, we first prove a lemma that will be useful throughout:
\begin{lemma}\label{lem:app_B.3_TQ-Q_difference}
    Assume the Bellman optimality operator satisfy $\gamma_0$-contraction property, the following bound holds for all $t$ and arbitrary constant $c>0$:
    \begin{equation}
        \|\sT_{z_{t+1}}\bQ_t-\bQ_t\|_{\max}^2\leq (1+c)\|\sT_{z_{t+1}}\bQ_t-\sT_{z_t^*}\bQ_t\|^2_{\max}+\left(1+\frac{1}{c}\right)(1-\gamma_0)^2\|\bQ_t-\bQ^*\|^2_{\max}.
    \end{equation}
\end{lemma}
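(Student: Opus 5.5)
The plan is a triangle inequality followed by Young's inequality. Since $z_t^*$ is, by definition, the state-wise RQE of the stage games with payoffs $-\bQ_t(s,\cdot)$, the evaluation operator at $z_t^*$ coincides with the optimality operator at $\bQ_t$, i.e.\ $\sT_{z_t^*}\bQ_t=\sT\bQ_t$. This identity was already used in the proof of \Cref{thm:iter_convergence_thm_detailed}. So I would write
\begin{equation*}
\sT_{z_{t+1}}\bQ_t-\bQ_t=\bigl(\sT_{z_{t+1}}\bQ_t-\sT_{z_t^*}\bQ_t\bigr)+\bigl(\sT\bQ_t-\bQ_t\bigr),
\end{equation*}
and take max-norms, giving $\|\sT_{z_{t+1}}\bQ_t-\bQ_t\|_{\max}\le a+b$. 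Here $a=\|\sT_{z_{t+1}}\bQ_t-\sT_{z_t^*}\bQ_t\|_{\max}$ and $b=\|\sT\bQ_t-\bQ_t\|_{\max}$.

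Squaring, I would apply the elementary inequality $(a+b)^2\le(1+c)a^2+(1+1/c)b^2$ for any $c>0$. This follows from $2ab\le ca^2+b^2/c$. It produces exactly the first term of the claimed bound with coefficient $(1+c)$. What remains is to bound $b^2$ by $(1-\gamma_0)^2\|\bQ_t-\bQ^*\|_{\max}^2$.

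For $b$, I would use the fixed-point identity $\bQ^*=\sT\bQ^*$ and the $\gamma_0$-contraction of $\sT$ (\Cref{prop:contraction_monotonicity}), comparing $\sT\bQ_t-\bQ_t$ with $\bQ_t-\bQ^*$. The natural decomposition is $\sT\bQ_t-\bQ_t=(\sT\bQ_t-\sT\bQ^*)-(\bQ_t-\bQ^*)$. The first piece has norm at most $\gamma_0\|\bQ_t-\bQ^*\|_{\max}$.

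\textbf{This is the step I expect to be the main obstacle.} The reverse triangle inequality only gives the lower bound $b\ge(1-\gamma_0)\|\bQ_t-\bQ^*\|_{\max}$. The plain triangle inequality gives the upper bound $b\le(1+\gamma_0)\|\bQ_t-\bQ^*\|_{\max}$, not $(1-\gamma_0)$. The stated factor therefore does not follow from contraction alone.

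I would first try to exploit structure specific to the iterates: coordinatewise monotonicity of $\sT$ together with the sign pattern of $\bQ_t-\bQ^*$. The hope is that the two pieces partially cancel rather than add.

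If that fails, the honest conclusion of the argument is the same inequality with $(1+\gamma_0)^2$ in place of $(1-\gamma_0)^2$. That version suffices for the downstream drift analysis, since the term is multiplied by $\alpha_t^2$ and only affects the constants $C_3,C_6$ and the step-size ratio condition. I would record this dependence so that the later Lyapunov drift inequality for $\bQ$ uses whichever constant is actually established.
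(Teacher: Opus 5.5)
Your decomposition is the paper's. The paper inserts $\sT_{z_t^*}\bQ_t=\sT\bQ_t$, applies the triangle inequality and squares with Young's inequality. It then closes the step you flagged in one line, asserting
\[
\|\sT_{z_t^*}\bQ_t-\sT\bQ^*-\bQ_t+\bQ^*\|_{\max}\le(1-\gamma_0)\|\bQ_t-\bQ^*\|_{\max}
\]
``by contraction''. Your objection is correct, and the gap lies in the lemma, not in your argument. Contraction only gives the factor $1+\gamma_0$; the factor $1-\gamma_0$ is what the reverse triangle inequality gives as a \emph{lower} bound. The monotonicity and sign-pattern route cannot rescue it, because the inequality as stated is false.

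Since the bound must hold for every $c>0$, minimizing the right-hand side over $c$ shows the lemma is equivalent to
\[
\|\sT_{z_{t+1}}\bQ_t-\bQ_t\|_{\max}\le\|\sT_{z_{t+1}}\bQ_t-\sT\bQ_t\|_{\max}+(1-\gamma_0)\|\bQ_t-\bQ^*\|_{\max}.
\]
Take a single absorbing state with $r_1\equiv r_2\equiv 1$ and the paper's regularizers.
\begin{itemize}
\item With $\bQ_0=0$, the uniform initialization $z_0$ is the stage equilibrium $z_0^*$: each $p_i$ equals $\pi_{-i}$, and each $\pi_i$ minimizes $\nu_i$.
\item By \Cref{lem:zstar_projection}, $z_1=z_0^*$, so the first term on the right vanishes.
\item For constant $\bQ$ the bilinear term is constant, so the stage equilibrium stays uniform. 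Hence $\sT$ maps the constant $q_i$ to $-1+\gamma(q_i+\epsilon_i\nu_i(u_i))$, with $u_i$ uniform.
\item Therefore $\bQ^*$ is constant and $\|\sT\bQ_0-\bQ_0\|_{\max}=(1-\gamma)\|\bQ_0-\bQ^*\|_{\max}$. Here $\bQ^*\neq 0$ unless $\gamma\epsilon_i\nu_i(u_i)=1$ for both $i$.
\end{itemize}
So the lemma at $t=0$ would force $\gamma_0\le\gamma$. But the modulus in \Cref{prop:contraction_monotonicity} is $\gamma(1+\cdots)>\gamma$, and any larger $\gamma_0'<1$ is also a valid modulus.

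Your fallback, $\|\sT\bQ_t-\bQ_t\|_{\max}\le(1+\gamma_0)\|\bQ_t-\bQ^*\|_{\max}$, is the correct statement, giving the lemma with $(1+\gamma_0)^2$. It is exactly the bound the paper itself uses for this quantity in the proof of \Cref{thm:iter_convergence_thm_detailed}. It does suffice downstream, but your account of where it enters is slightly off.
\begin{itemize}
\item It appears in the $O(\alpha_t^2)$ terms $\alpha_t^2\sL_{1,3}$, $\sL_2$ and $\sL_4$, which changes only $C_3$.
\item It also enters the first-order drift coefficient through $2\alpha_t\sL_{1,2}$ in the proof of \Cref{thm:MAAC_convergence_detailed}, with prefactor $(C/((1-\rho)K))^2$. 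There the paper bounded $(1-\gamma_0)^2\le 1$; with $(1+\gamma_0)^2\le 4$ the same argument goes through after doubling the lower bound on $K$.
\end{itemize}
$C_6$ and the step-size ratio condition are unaffected.
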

\begin{proof}
    We can write the difference term as:
    \begin{equation}\begin{aligned}
        &\|\sT_{z_{t+1}}\bQ_t-\bQ_t\|_{\max}\\
        = & \|\sT_{z_{t+1}}\bQ_t-\sT_{z_t^*}\bQ_t+\sT_{z_t^*}\bQ_t-\bQ^*-\bQ_t+\bQ^*\|_{\max}\\
        \leq & \|\sT_{z_{t+1}}\bQ_t-\sT_{z_t^*}\bQ_t\|_{\max}+\|\sT_{z_t^*}\bQ_t-\sT \bQ^*-\bQ_t+\bQ^*\|_{\max}\\
        \leq & \|\sT_{z_{t+1}}\bQ_t-\sT_{z_t^*}\bQ_t\|_{\max}+(1-\gamma_0)\|\bQ_t-\bQ^*\|_{\max}
    \end{aligned}\end{equation}
    therefore, we can bound the square term as:
    \begin{equation}
        \begin{aligned}
            &\|\sT_{z_{t+1}}\bQ_t-\bQ_t\|_{\max}^2\\
            \leq &\left(\|\sT_{z_{t+1}}\bQ_t-\sT_{z_t^*}\bQ_t\|_{\max}+(1-\gamma_0)\|\bQ_t-\bQ^*\|_{\max} \right)^2\\
            \leq & (1+c)\|\sT_{z_{t+1}}\bQ_t-\sT_{z_t^*}\bQ_t\|^2_{\max}+\left(1+\frac{1}{c}\right)(1-\gamma_0)^2\|\bQ_t-\bQ^*\|^2_{\max}
        \end{aligned}
    \end{equation}
    where we have used \Cref{lem:(a+b)^2_bound} in the last inequality.
\end{proof}

Let $\sF_{t,k}$ denote the filtration at timestep $k$ in episode $t$ and $\sF_t$ denote $\sF_{t,0}$, for $w_{i,k}=\hat{q}_{i,k}-\sT_{z_{t+1}}Q_{i,t}(s_{k},\ba_{k})$, we have that $\E[w_{i,k}|\sF_{t,k}]=0$. Therefore, let $\xi_{i,k}(s,\ba)$ denote $\xi_{i,k}(s,\ba)=w_{i,k}\ind[(s,\ba)=(s_{k},\ba_{k})]$, for the conditional expectation given $\sF_t$, it holds that:
\begin{equation}\begin{aligned}
    \E[\xi_{i,k}(s,\ba)|\sF_{t,k}]
    =&\E[w_{i,k}\ind[(s,\ba)=(s_{k},\ba_{k})]|\sF_{t,k}]\\
    =&\E[w_{i,k}|\sF_{t,k}]\ind[(s,\ba)=(s_{k},\ba_{k})]\\
    =&0
\end{aligned}\end{equation}
and
\begin{equation}
    \begin{aligned}
        \E[\xi_{i,k}(s,\ba)|\sF_t]
        =&\E[w_{i,k}\ind[(s,\ba)=(s_{k},\ba_{k})]|\sF_t]\\
        =&\E[\E[w_{i,k}\ind[(s,\ba)=(s_{k},\ba_{k})]|\sF_{t,k}]|\sF_t]\\
        =&\E[\E[w_{i,k}|\sF_{t,k}]\ind[(s,\ba)=(s_{k},\ba_{k})]|\sF_t]\\
        =&0
    \end{aligned}
\end{equation}
therefore, $\xi_{i,k}(s,\ba)$ is a martingale difference sequence with respect to $\{\sF_{t,k}\}$. Let $M_{i,t}(s,\ba)=\frac{1}{K}\sum_{k=0}^{K-1}\xi_{i,k}(s,\ba)$, we can see that $\E[M_{i,t}(s,\ba)|\sF_t]=0$. For the update $\hat{\delta}_i$, we have:
\begin{equation}\begin{aligned}
    &\E[\hat{\delta}_i(s,\ba)|\sF_{t}]\\
    =&\frac{1}{K}\sum_{k=0}^{K-1}\E\left[\left(\hat{q}_{i,k}-Q_{i,t}(s_{k},\ba_{k}) \right)\ind[(s,\ba)=(s_{k},\ba_{k})]\middle|\sF_t\right]\\
    =&\frac{1}{K}\sum_{k=0}^{K-1}\E\left[\left(w_{i,k}+\sT_{z_{t+1}}Q_{i,t}(s_{k},\ba_{k})-Q_{i,t}(s_{k},\ba_{k})\right) \ind[(s,\ba)=(s_{k},\ba_{k})]\middle|\sF_t\right]\\
    =&\frac{1}{K}\sum_{k=0}^{K-1}\E\left[\left(\sT_{z_{t+1}}Q_{i,t}(s_{k},\ba_{k})-Q_{i,t}(s_{k},\ba_{k})\right) \ind[(s,\ba)=(s_{k},\ba_{k})]\middle|\sF_t\right]+\E[M_{i,t}(s,\ba)|\sF_t]\\
    =&\frac{1}{K}\sum_{k=0}^{K-1}\E\left[\left(\sT_{z_{t+1}}Q_{i,t}(s_{k},\ba_{k})-Q_{i,t}(s_{k},\ba_{k})\right) \ind[(s,\ba)=(s_{k},\ba_{k})]\middle|\sF_t\right]\\
    =&\frac{1}{K}\sum_{k=0}^{K-1}\E\left[\left(\sT_{z_{t+1}}Q_{i,t}(s,\ba)-Q_{i,t}(s,\ba)\right) \ind[(s,\ba)=(s_{k},\ba_{k})]\middle|\sF_t\right]\\
    =&\left(\sT_{z_{t+1}}Q_{i,t}(s,\ba)-Q_{i,t}(s,\ba)\right)\frac{1}{K}\sum_{k=0}^{K-1}\Pr((s_{k},\ba_{k})=(s,\ba)|\sF_t)
\end{aligned}\end{equation}
let $\bar{P}_K(s,\ba|\sF_t)$ denote $\frac{1}{K}\sum_{k=0}^{K-1}\Pr((s_{k},\ba_{k})=(s,\ba)|\sF_t)$, we conclude that
\begin{equation}\label{eq:app_B.3_E_delta_hat}
\E[\hat{\delta}_i(s,\ba)|\sF_{t}]=\bar{P}_K(s,\ba|\sF_t)\left(\sT_{z_{t+1}}Q_{i,t}(s,\ba)-Q_{i,t}(s,\ba)\right)
\end{equation}

\paragraph{Critic Update Decomposition.} We first provide a decomposition result for $(Q_{i,t+1}(s,\ba)-Q_i^*(s,\ba))^2$, notice that the update rule of \Cref{alg:MAAC} can be rewritten as:
\begin{equation}\begin{aligned}
    Q_{i,t+1}(s,\ba)=& Q_{i,t}(s,\ba)+\alpha_t\hat{\delta}_i(s,\ba)\\
    =& Q_{i,t}(s,\ba)+\alpha_t\frac{1}{K}\sum_{k=0}^{K-1}\left(\hat{q}_{i,k}-Q_{i,t}(s_{k},\ba_{k}) \right)\ind[(s,\ba)=(s_{k},\ba_{k})]\\
    =& Q_{i,t}(s,\ba)+\alpha_t\frac{1}{K}\sum_{k=0}^{K-1}\left(w_{i,k}+\sT_{z_{t+1}}Q_{i,t}(s_{k},\ba_{k})-Q_{i,t}(s_{k},\ba_{k}) \right)\ind[(s,\ba)=(s_{k},\ba_{k})]\\
    =& Q_{i,t}(s,\ba)+\alpha_t\frac{1}{K}\sum_{k=0}^{K-1}\xi_{i,k}(s,\ba)\\
    &+\alpha_t\frac{1}{K}\sum_{k=0}^{K-1}\left(\sT_{z_{t+1}}Q_{i,t}(s_{k},\ba_{k})-Q_{i,t}(s_{k},\ba_{k}) \right)\ind[(s,\ba)=(s_{k},\ba_{k})]\\
    =& Q_{i,t}(s,\ba)+\alpha_t\E[\hat{\delta}_i(s,\ba)|\sF_{t}]\\
    &+\alpha_t\frac{1}{K}\sum_{k=0}^{K-1}\left(\sT_{z_{t+1}}Q_{i,t}(s_{k},\ba_{k})-Q_{i,t}(s_{k},\ba_{k}) \right)\ind[(s,\ba)=(s_{k},\ba_{k})]-\alpha_t\E[\hat{\delta}_i(s,\ba)|\sF_{t}]\\
    &+\alpha_t\frac{1}{K}\sum_{k=0}^{K-1}\xi_{i,k}(s,\ba)
\end{aligned}\end{equation}
since we have that
\begin{equation}
    \begin{aligned}
        &\E\left[\left(Q_{i,t}(s,\ba)+\alpha_t\E[\hat{\delta}_i(s,\ba)|\sF_{t}]- Q^*_i(s,\ba)\right)\left(\alpha_t\frac{1}{K}\sum_{k=0}^{K-1}\xi_{i,k}(s,\ba)\right)\middle|\sF_t \right]\\
        =&\left(Q_{i,t}(s,\ba)+\alpha_t\E[\hat{\delta}_i(s,\ba)|\sF_{t}]- Q^*_i(s,\ba)\right)\E\left[\left(\alpha_t\frac{1}{K}\sum_{k=0}^{K-1}\xi_{i,k}(s,\ba)\right)\middle|\sF_t \right]\\
        =&\left(Q_{i,t}(s,\ba)+\alpha_t\E[\hat{\delta}_i(s,\ba)|\sF_{t}]- Q^*_i(s,\ba)\right)\alpha_t\frac{1}{K}\sum_{k=0}^{K-1}\E\left[\xi_{i,k}(s,\ba)\middle|\sF_t \right]\\
        =&0;
    \end{aligned}
\end{equation}
and
\begin{equation}\begin{aligned}
    &\E\Bigg[\left(Q_{i,t}(s,\ba)+\alpha_t\E[\hat{\delta}_i(s,\ba)|\sF_{t}]- Q^*_i(s,\ba)\right)\times\\
    &\left( \alpha_t\frac{1}{K}\sum_{k=0}^{K-1}\left(\sT_{z_{t+1}}Q_{i,t}(s_{k},\ba_{k})-Q_{i,t}(s_{k},\ba_{k}) \right)\ind[(s,\ba)=(s_{k},\ba_{k})]-\alpha_t\E[\hat{\delta}_i(s,\ba)|\sF_{t}]\right) \Bigg|\sF_t\Bigg]\\
    =&\left(Q_{i,t}(s,\ba)+\alpha_t\E[\hat{\delta}_i(s,\ba)|\sF_{t}]- Q^*_i(s,\ba)\right)\times\\
    &\E\left[\alpha_t\frac{1}{K}\sum_{k=0}^{K-1}\left(\sT_{z_{t+1}}Q_{i,t}(s_{k},\ba_{k})-Q_{i,t}(s_{k},\ba_{k}) \right)\ind[(s,\ba)=(s_{k},\ba_{k})]-\alpha_t\E[\hat{\delta}_i(s,\ba)|\sF_{t}] \middle|\sF_t\right]\\
    =&\left(Q_{i,t}(s,\ba)+\alpha_t\E[\hat{\delta}_i(s,\ba)|\sF_{t}]- Q^*_i(s,\ba)\right)\times\\
    &\left(\alpha_t\frac{1}{K}\sum_{k=0}^{K-1}\E\left[\left(\sT_{z_{t+1}}Q_{i,t}(s_{k},\ba_{k})-Q_{i,t}(s_{k},\ba_{k}) \right)\ind[(s,\ba)=(s_{k},\ba_{k})]\middle|\sF_t\right]-\alpha_t\E[\hat{\delta}_i(s,\ba)|\sF_{t}]\right)\\
    =&0.
\end{aligned}\end{equation}
We can rewrite the squared difference term $(Q_{i,t+1}(s,\ba)-Q_i^*(s,\ba))^2$ into the sum of 4 different terms:
\begin{equation}\label{eq:app_B.3_Q_Lyapunov_function}
    \begin{aligned}
        &\E\left[\left(Q_{i,t+1}(s,\ba)-Q^*_i(s,\ba)\right)^2\middle|\sF_t \right]\\
        =&\underbrace{\E\left[\left(Q_{i,t}(s,\ba)+\alpha_t\E[\hat{\delta}_i(s,\ba)|\sF_{t}]- Q^*_i(s,\ba)\right)^2\middle|\sF_t \right]}_{\sL_1'}\\
        &+\underbrace{\E\left[\left( \alpha_t\frac{1}{K}\sum_{k=0}^{K-1}\left(\sT_{z_{t+1}}Q_{i,t}(s_{k},\ba_{k})-Q_{i,t}(s_{k},\ba_{k}) \right)\ind[(s,\ba)=(s_{k},\ba_{k})]-\alpha_t\E[\hat{\delta}_i(s,\ba)|\sF_{t}]\right)^2\middle|\sF_t\right]}_{\sL_2'}\\
        &+\underbrace{\E\left[\left(\alpha_t\frac{1}{K}\sum_{k=0}^{K-1}\xi_{i,k}(s,\ba)\right)^2\middle| \sF_t\right]}_{\sL_3'}\\
        &+2\E\Bigg[\left(\alpha_t\frac{1}{K}\sum_{k=0}^{K-1}\xi_{i,k}(s,\ba)\right)\times\\
        &\underbrace{\left( \alpha_t\frac{1}{K}\sum_{k=0}^{K-1}\left(\sT_{z_{t+1}}Q_{i,t}(s_{k},\ba_{k})-Q_{i,t}(s_{k},\ba_{k}) \right)\ind[(s,\ba)=(s_{k},\ba_{k})]-\alpha_t\E[\hat{\delta}_i(s,\ba)|\sF_{t}]\right) \Bigg|\sF_t\Bigg]}_{\sL_4'}
    \end{aligned}
\end{equation}
Here $\sL_1'$ indicates the expected update of $Q_i$ conditioned on $\sF_t$, $\sL_2', \sL_3'$ and $\sL_4'$ are error terms induced by the stochasticity induced by sampling in the update rule of \Cref{alg:MAAC}. We can take max-norm on both sides of \eqref{eq:app_B.3_Q_Lyapunov_function} and get:
\begin{equation}\label{eq:app_B.3_Q_Lyapunov_function_2}
    \begin{aligned}
        &\E\left[\|\bQ_{t+1}-\bQ^*\|_{\max}^2\middle|\sF_t \right]\\
        \leq & \underbrace{\E\left[\left\|\bQ_{t}+\alpha_t\E[\hat{\boldsymbol{\delta}}|\sF_{t}]- \bQ^*\right\|_{\max}^2\middle|\sF_t \right]}_{\sL_1}\\
        & + \underbrace{\E\left[\left\| \alpha_t\frac{1}{K}\sum_{k=0}^{K-1}\left(\sT_{z_{t+1}}\bQ_{t}(s_{k},\ba_{k})-\bQ_{t}(s_{k},\ba_{k}) \right)\otimes e_{(s_{k},\ba_{k})}-\alpha_t\E[\hat{\boldsymbol{\delta}}|\sF_{t}]\right\|_{\max}^2\middle|\sF_t\right]}_{\sL_2}\\
        & + \underbrace{\E\left[\left\|\alpha_t\frac{1}{K}\sum_{k=0}^{K-1}\boldsymbol{\xi}_{k}\right\|_{\max}^2\middle| \sF_t\right]}_{\sL_3}\\
        & + 2\underbrace{\E\left[\left\| \alpha_t\frac{1}{K}\sum_{k=0}^{K-1}\left(\sT_{z_{t+1}}\bQ_{t}(s_{k},\ba_{k})-\bQ_{t}(s_{k},\ba_{k}) \right)\otimes e_{(s_{k},\ba_{k})}-\alpha_t\E[\hat{\boldsymbol{\delta}}|\sF_{t}]\right\|_{\max}\times\left\|\alpha_t\frac{1}{K}\sum_{k=0}^{K-1}\boldsymbol{\xi}_{k}\right\|_{\max}\middle| \sF_t\right]}_{\sL_4}
    \end{aligned}
\end{equation}
To obtain a Lyapunov drift inequality on $\|\bQ_{t}-\bQ^*\|_{\max}^2$, we only have to bound $\sL_1,\sL_2,\sL_3$ and $\sL_4$, and each $\sL_i$ term can be bounded individually, as we will show as follows.

\paragraph{Further decomposing the expected update term $\sL_1$.} Although the update expressed in term $\sL_1$ is no longer stochastic conditioned on $\sF_t$, it is not directly tied to the Bellman operator $\sT_{z_{t+1}}$ and therefore cannot directly be processed by a contraction argument. Therefore, we have to further decompose $\sL_1$ into terms that we can directly bound. To accommodate a cleaner representation, notice that
\begin{equation}\begin{aligned}
    &Q_{i,t}(s,\ba)+\alpha_t\E[\hat{\delta}_i(s,\ba)|\sF_{t}]- Q^*_i(s,\ba)\\
    =& Q_{i,t}(s,\ba)+\alpha_t d_{t+1}(s,\ba)\left(\sT_{z_{t+1}}Q_{i,t}(s,\ba)-Q_{i,t}(s,\ba) \right)- Q^*_i(s,\ba)\\
    &+\alpha_t\E[\hat{\delta}_i(s,\ba)|\sF_{t}]-\alpha_td_{t+1}(s,\ba)\left(\sT_{z_{t+1}}Q_{i,t}(s,\ba)-Q_{i,t}(s,\ba) \right)
\end{aligned}\end{equation}
we have the following decomposition for $\sL_1'$:
\begin{equation}
    \sL_1'=\sL_{1,1}'+2\alpha_t\sL_{1,2}'+\alpha_t^2\sL_{1,3}'
\end{equation}
where
\begin{align}
    \sL_{1,1}'=&\E\left[\left(Q_{i,t}(s,\ba)+\alpha_t d_{t+1}(s,\ba)\left(\sT_{z_{t+1}}Q_{i,t}(s,\ba)-Q_{i,t}(s,\ba) \right)- Q^*_i(s,\ba)\right)^2\middle|\sF_t \right];\\
    \sL_{1,2}'=&\E\Big[\left(Q_{i,t}(s,\ba)+\alpha_t d_{t+1}(s,\ba)\left(\sT_{z_{t+1}}Q_{i,t}(s,\ba)-Q_{i,t}(s,\ba) \right)- Q^*_i(s,\ba)\right)\nonumber\\
        &\quad \times \left(\E[\hat{\delta}_i(s,\ba)|\sF_{t}]-d_{t+1}(s,\ba)\left(\sT_{z_{t+1}}Q_{i,t}(s,\ba)-Q_{i,t}(s,\ba) \right)\right) \Big| \sF_t\Big];\\
    \sL_{1,3}'=&\E\left[\left(\E[\hat{\delta}_i(s,\ba)|\sF_{t}]-d_{t+1}(s,\ba)\left(\sT_{z_{t+1}}Q_{i,t}(s,\ba)-Q_{i,t}(s,\ba) \right)\right)^2 \middle|\sF_t\right].
\end{align}
here $d_{t+1}$ is the steady state-action distribution induced by the sampling policy at time step $t+1$: For on-policy updates,
\begin{equation}
    d_{t+1}(s,\ba)=\mu_{\pi_{t+1}}(s)\pi_{t+1}(\ba|s).
\end{equation}
and for off-policy updates,
\begin{equation}
    d_{t+1}(s,\ba)=\mu_{\pi^r}(s)\pi^r(\ba|s).
\end{equation}
for the corresponding state distribution $\mu_{\pi_{t+1}}$ or $\mu_{\pi^r}$. Taking max-norm of each term above, we obtain the following bound:
\begin{equation}\label{eq:sL_1_decomposition}
    \sL_1\leq \sL_{1,1}+2\alpha_t\sL_{1,2}+\alpha_t^2\sL_{1,3}
\end{equation}
where
\begin{align}
    \sL_{1,1}=&\E\left[\left\|\bQ_{t}+\alpha_t \bd_{t+1} \odot\left(\sT_{z_{t+1}}\bQ_{t}-\bQ_{t} \right)- \bQ^*\right\|_{\max}^2\middle|\sF_t \right];\\
    \sL_{1,2}=&\E\Big[\left\|\bQ_{t}+\alpha_t \bd_{t+1} \odot\left(\sT_{z_{t+1}}\bQ_{t}-\bQ_{t} \right)- \bQ^*\right\|_{\max}\times \left\|\E[\hat{\boldsymbol{\delta}}|\sF_{t}]-\bd_{t+1}\odot\left(\sT_{z_{t+1}}\bQ_t-\bQ_t \right)\right\|_{\max} \Big| \sF_t\Big];\\
    \sL_{1,3}=&\E\left[\left\|\E[\hat{\boldsymbol{\delta}}|\sF_{t}]-\bd_{t+1}\odot\left(\sT_{z_{t+1}}\bQ_t-\bQ_t \right)\right\|_{\max}^2 \middle|\sF_t\right].
\end{align}
Notice that we have the following decomposition:
\begin{equation}\label{eq:app_B.3_L1.1_decomposition_1}\begin{aligned}
    &Q_{i,t}(s,\ba)+\alpha_t d_{t+1}(s,\ba)\left(\sT_{z_{t+1}}Q_{i,t}(s,\ba)-Q_{i,t}(s,\ba) \right)- Q^*_i(s,\ba)\\
    =&(1-\alpha_t d_{t+1}(s,\ba))(Q_{i,t}(s,\ba)-Q_{i}^*(s,\ba))+\alpha_t d_{t+1}(s,\ba)\left(\sT_{z_{t+1}}Q_{i,t}(s,\ba)-\sT_{z^*}Q^*_i(s,\ba)\right)\\
    =&(1-\alpha_td_{t+1}(s,\ba))(Q_{i,t}(s,\ba)-Q_{i}^*(s,\ba))\\
        & +\alpha_td_{t+1}(s,\ba)\left(\sT_{z_{t+1}}Q_{i,t}(s,\ba)-\sT_{z_t^*}Q_{i,t}(s,\ba)+\sT_{z_t^*}Q_{i,t}(s,\ba)-\sT_{z^*}Q^*_i(s,\ba)\right)\\
    =&(1-\alpha_td_{t+1}(s,\ba))(Q_{i,t}(s,\ba)-Q_{i}^*(s,\ba))+\alpha_t d_{t+1}(s,\ba)(\sT Q_{i,t}(s,\ba)-\sT Q_{i}^*(s,\ba))\\
    &+\alpha_td_{t+1}(s,\ba)\left(\sT_{z_{t+1}}Q_{i,t}(s,\ba)-\sT_{z_t^*}Q_{i,t}(s,\ba)\right)
\end{aligned}\end{equation}
where we have used the fact that $\sT_{z_t^*}Q_{i,t}=\sT Q_{i,t},\sT_{z^*}Q_i^*=\sT Q_i^*$ in the last equality. Therefore, we have:
\begin{equation}\label{eq:app_B.3_L1.1_decomposition_2}
    \begin{aligned}
        &\left\|\bQ_{t}+\alpha_t \bd_{t+1} \odot\left(\sT_{z_{t+1}}\bQ_{t}-\bQ_{t} \right)- \bQ^*\right\|_{\max}\\
        =& \max_{i,s,\ba}\{Q_{i,t}(s,\ba)+\alpha_t d_{t+1}(s,\ba)\left(\sT_{z_{t+1}}Q_{i,t}(s,\ba)-Q_{i,t}(s,\ba) \right)- Q^*_i(s,\ba)\}\\
        =&\max_{i,s,\ba}\{(1-\alpha_td_{t+1}(s,\ba))(Q_{i,t}(s,\ba)-Q_{i}^*(s,\ba))+\alpha_t d_{t+1}(s,\ba)(\sT Q_{i,t}(s,\ba)-\sT Q_{i}^*(s,\ba))\\
    &+\alpha_td_{t+1}(s,\ba)\left(\sT_{z_{t+1}}Q_{i,t}(s,\ba)-\sT_{z_t^*}Q_{i,t}(s,\ba)\right)\}\\
    \leq & (1-\alpha_td_{t+1,\min{}})\|\bQ_t-\bQ^*\|_{\max}+\gamma_0\alpha_td_{t+1,\max{}}\|\bQ_t-\bQ^*\|_{\max}+\alpha_td_{t+1,\max{}}\|\sT_{z_{t+1}}\bQ_t-\sT_{z_t^*}\bQ_t\|_{\max}\\
    \leq & (1-(1-\gamma_0)\alpha_td_{t+1,\min{}})\|\bQ_t-\bQ^*\|_{\max}+\alpha_td_{t+1,\max{}}\|\sT_{z_{t+1}}\bQ_t-\sT_{z_t^*}\bQ_t\|_{\max}
    \end{aligned}
\end{equation}
where $d_{t+1,\min{}}=\min_{s,\ba} d_{t+1}(s,\ba)$ and $d_{t+1,\max{}}=\max_{s,\ba} d_{t+1}(s,\ba)$.
We can now rewrite the non-stochastic drift term $\sL_{1,1}$ as:
\begin{equation}\label{eq:app_B.3_sL_1.1_1}
    \begin{aligned}
        &\sL_{1,1}\\
        =&\E\left[\left\|\bQ_{t}+\alpha_t \bd_{t+1} \odot\left(\sT_{z_{t+1}}\bQ_{t}-\bQ_{t} \right)- \bQ^*\right\|_{\max}^2\middle|\sF_t \right]\\
        \leq &\left((1-(1-\gamma_0)\alpha_td_{t+1,\min{}})\|\bQ_t-\bQ^*\|_{\max}+\alpha_td_{t+1,\max{}}\|\sT_{z_{t+1}}\bQ_t-\sT_{z_t^*}\bQ_t\|_{\max}\right)^2\\
        \leq&(1+c_1)(1-(1-\gamma_0)\alpha_td_{t+1,\min{}})^2\|\bQ_t-\bQ^*\|^2_{\max}+\left(1+\frac{1}{c_1}\right)\alpha^2_td^2_{t+1,\max{}}\|\sT_{z_{t+1}}\bQ_t-\sT_{z_t^*}\bQ_t\|^2_{\max}
    \end{aligned}
\end{equation}
where we have used \eqref{eq:app_B.3_L1.1_decomposition_2} in the first inequality and \Cref{lem:(a+b)^2_bound} in the second inequality. Recall \Cref{lem:z_update_decomposition} and \Cref{lem:approximate_Bellman_backup_error}, we have that (notice that we are still taking maximum over all $s\in \sS$ for $z$, such that $\|z_{t}-z_t^*\|_2=\max_{s\in\sS}\|z_t(\cdot|s)-z_t^*(\cdot|s)\|_2$):
\begin{equation}\label{eq:app_B.3_sL_1.1_2}
    \|\sT_{z_{t+1}}\bQ_t-\sT_{z_t^*}\bQ_t\|^2_{\max}\leq \gamma^2 C^2_{\sT}\|z_{t+1}-z_t^*\|^2_2\leq \gamma^2 C_{\sT}^2(1-2\beta_t\mu + \beta_t^2L_F^2)\|z_{t}-z_t^*\|^2_2.
\end{equation}
Combining \eqref{eq:app_B.3_sL_1.1_1} and \eqref{eq:app_B.3_sL_1.1_2} we obtain:
\begin{equation}\label{eq:sL_1.1_bound}\begin{aligned}
    \sL_{1,1}\leq & (1+c_1)(1-(1-\gamma_0)\alpha_td_{t+1,\min{}})^2\|\bQ_t-\bQ^*\|^2_{\max}\\
    &+\left(1+\frac{1}{c_1}\right)\alpha^2_td^2_{t+1,\max{}}\gamma^2 C_{\sT}^2(1-2\beta_t\mu + \beta_t^2L_F^2)\|z_{t}-z_t^*\|^2_2.
\end{aligned}\end{equation}

For term $\sL_{1,2}$, recall that $\bar{P}_K(s,\ba|\sF_t)=\frac{1}{K}\sum_{k=0}^{K-1}\Pr((s_{k},\ba_{k})=(s,\ba)|\sF_t)$, by \Cref{assumption:uniformly_geometrically_ergodic}, we know that $\bar{P}_K(s,\ba|\sF_t)$ converges to $d_{t+1}$ with a rate of (here we focus on the on-policy case, the same proof goes through for the off-policy case as well):
\begin{equation}
    \begin{aligned}
        |\Pr((s_{k},\ba_{k})=(s,\ba)|\sF_t)-d_{t+1}(s,\ba)|
        =&|P_{\pi_{t+1}}^{k}(s_{0},s|\sF_t)\pi_{t+1}(\ba|s)-d_{t+1}(s,\ba)|\\
        \leq& \|P_{\pi_{t+1}}^{k}(s_{0},s|\sF_t)-\mu_{\pi_{t+1}}(\cdot)\|_\TV\\
        \leq &C\rho^k
    \end{aligned}
\end{equation}
and therefore:
\begin{equation}\label{eq:app_B.3_avg_visitation_bias_control}
    \begin{aligned}
        &|\bar{P}_K(s,\ba|\sF_t)-d_{t+1}(s,\ba)|\\
        = & |\frac{1}{K}\sum_{k=0}^{K-1}\Pr((s_{k},\ba_{k})=(s,\ba)|\sF_t)-d_{t+1}(s,\ba)|\\
        \leq & \frac{1}{K}\sum_{k=0}^{K-1}|\Pr((s_{k},\ba_{k})=(s,\ba)|\sF_t)-d_{t+1}(s,\ba)|\\
        \leq & \frac{C}{K}\sum_{k=0}^{K-1}\rho^k\\
        \leq &\frac{C}{K}\frac{1-\rho^{K}}{1-\rho}\\
        \leq &\frac{C}{(1-\rho)K}.
    \end{aligned}
\end{equation}
Recall \eqref{eq:app_B.3_E_delta_hat}, we have the following upper bound on $\sL_{1,2}$:
\begin{equation}
    \begin{aligned}
        &\sL_{1,2}\\
        =&\E\Big[\left\|\bQ_{t}+\alpha_t \bd_{t+1} \odot\left(\sT_{z_{t+1}}\bQ_{t}-\bQ_{t} \right)- \bQ^*\right\|_{\max}\times \left\|\E[\hat{\boldsymbol{\delta}}|\sF_{t}]-\bd_{t+1}\odot\left(\sT_{z_{t+1}}\bQ_t-\bQ_t \right)\right\|_{\max} \Big| \sF_t\Big]\\
        =&\E\Big[\left\|\bQ_{t}+\alpha_t \bd_{t+1} \odot\left(\sT_{z_{t+1}}\bQ_{t}-\bQ_{t} \right)- \bQ^*\right\|_{\max}\times \left\|(\bar{P}_K(\cdot,\cdot|\sF_t)-\bd_{t+1})\odot\left(\sT_{z_{t+1}}\bQ_t-\bQ_t \right)\right\|_{\max} \Big| \sF_t\Big]\\
        =&\E\Big[\left\|\bQ_{t}+\alpha_t \bd_{t+1} \odot\left(\sT_{z_{t+1}}\bQ_{t}-\bQ_{t} \right)- \bQ^*\right\|_{\max} \Big| \sF_t\Big]\times \left\|(\bar{P}_K(\cdot,\cdot|\sF_t)-\bd_{t+1})\odot\left(\sT_{z_{t+1}}\bQ_t-\bQ_t \right)\right\|_{\max}\\
        \stackrel{\text{(i)}}{\leq} & \left((1-(1-\gamma_0)\alpha_td_{t+1,\min{}})\|\bQ_t-\bQ^*\|_{\max}+\alpha_td_{t+1,\max{}}\|\sT_{z_{t+1}}\bQ_t-\sT_{z_t^*}\bQ_t\|_{\max}\right)\\
        &\times \frac{C}{(1-\rho)K}\|\sT_{z_{t+1}}\bQ_t-\bQ_t\|_{\max}\\
        \stackrel{\text{(ii)}}{\leq} & \frac{c_2}{2}\left((1-(1-\gamma_0)\alpha_td_{t+1,\min{}})\|\bQ_t-\bQ^*\|_{\max}+\alpha_td_{t+1,\max{}}\|\sT_{z_{t+1}}\bQ_t-\sT_{z_t^*}\bQ_t\|_{\max}\right)^2\\
        &+\frac{1}{2c_2}\left(\frac{C}{(1-\rho)K}\|\sT_{z_{t+1}}\bQ_t-\bQ_t\|_{\max}\right)^2\\
        \stackrel{\text{(iii)}}{\leq} & \frac{c_2}{2}(1+c_3)(1-(1-\gamma_0)\alpha_td_{t+1,\min{}})^2\|\bQ_t-\bQ^*\|^2_{\max}+\frac{c_2}{2}\left(1+\frac{1}{c_3}\right)\alpha^2_td^2_{t+1,\max{}}\|\sT_{z_{t+1}}\bQ_t-\sT_{z_t^*}\bQ_t\|^2_{\max}\\
        &+\frac{1}{2c_2}\left(\frac{C}{(1-\rho)K}\|\sT_{z_{t+1}}\bQ_t-\bQ_t\|_{\max}\right)^2
    \end{aligned}
\end{equation}
where the (i) holds by \eqref{eq:app_B.3_L1.1_decomposition_2}, (ii) holds because $ab=\sqrt{c_2}a\cdot\frac{b}{\sqrt{c_2}}<\frac{c_2}{2}a^2+\frac{1}{2c_2} b^2$ and (iii) uses \Cref{lem:(a+b)^2_bound}. We further apply \eqref{eq:app_B.3_sL_1.1_2} and \Cref{lem:app_B.3_TQ-Q_difference} to the last line and and obtain:
\begin{equation}\label{eq:sL_1.2_bound}\begin{aligned}
    &\sL_{1,2}\\
    \leq & \frac{c_2}{2}(1+c_3)(1-(1-\gamma_0)\alpha_td_{t+1,\min{}})^2\|\bQ_t-\bQ^*\|^2_{\max}\\
    &+\frac{c_2}{2}\left(1+\frac{1}{c_3}\right)\alpha^2_td^2_{t+1,\max{}}\|\sT_{z_{t+1}}\bQ_t-\sT_{z_t^*}\bQ_t\|^2_{\max}\\
    &+\frac{1}{2c_2}\left(\frac{C}{(1-\rho)K}\right)^2\left((1+c_4)\|\sT_{z_{t+1}}\bQ_t-\sT_{z_t^*}\bQ_t\|^2_{\max}+\left(1+\frac{1}{c_4}\right)(1-\gamma_0)^2\|\bQ_t-\bQ^*\|^2_{\max}\right)\\
    =&\left(\frac{c_2}{2}(1+c_3)(1-(1-\gamma_0)\alpha_td_{t+1,\min{}})^2+\frac{1}{2c_2}\left(\frac{C}{(1-\rho)K}\right)^2\left(1+\frac{1}{c_4}\right)(1-\gamma_0)^2\right)\|\bQ_t-\bQ^*\|^2_{\max}\\
    &+\left(\frac{c_2}{2}\left(1+\frac{1}{c_3}\right)\alpha^2_td^2_{t+1,\max{}}+\frac{1+c_4}{2c_2}\left(\frac{C}{(1-\rho)K}\right)^2\right)\|\sT_{z_{t+1}}\bQ_t-\sT_{z_t^*}\bQ_t\|^2_{\max}\\
    \leq &\left(\frac{c_2}{2}(1+c_3)(1-(1-\gamma_0)\alpha_td_{t+1,\min{}})^2+\frac{1}{2c_2}\left(\frac{C}{(1-\rho)K}\right)^2\left(1+\frac{1}{c_4}\right)(1-\gamma_0)^2\right)\|\bQ_t-\bQ^*\|^2_{\max}\\
    &+\left(\frac{c_2}{2}\left(1+\frac{1}{c_3}\right)\alpha^2_td^2_{t+1,\max{}}+\frac{1+c_4}{2c_2}\left(\frac{C}{(1-\rho)K}\right)^2\right)\gamma^2 C_{\sT}^2(1-2\beta_t\mu + \beta_t^2L_F^2)\|z_{t}-z_t^*\|^2_2\\
\end{aligned}\end{equation}

Similarly, for term $\sL_{1,3}$, combining \eqref{eq:app_B.3_E_delta_hat} and \eqref{eq:app_B.3_avg_visitation_bias_control} we obtain:
\begin{equation}
    \sL_{1,3}\leq \left(\frac{C}{(1-\rho)K}\|\sT_{z_{t+1}}\bQ_t-\bQ_t\|_{\max}\right)^2,
\end{equation}
which can be further transformed using \eqref{eq:app_B.3_sL_1.1_2} and \Cref{lem:app_B.3_TQ-Q_difference} to:
\begin{equation}\label{eq:sL_1.3_bound}\begin{aligned}
    &\sL_{1,3}\leq \left(\frac{C}{(1-\rho)K}\right)^2\left((1+c_5)\|\sT_{z_{t+1}}\bQ_t-\sT_{z_t^*}\bQ_t\|^2_{\max}+\left(1+\frac{1}{c_5}\right)(1-\gamma_0)^2\|\bQ_t-\bQ^*\|^2_{\max}\right)\\
    \leq & \left(\frac{C}{(1-\rho)K}\right)^2\left((1+c_5)\gamma^2 C_{\sT}^2(1-2\beta_t\mu + \beta_t^2L_F^2)\|z_{t}-z_t^*\|^2_2+\left(1+\frac{1}{c_5}\right)(1-\gamma_0)^2\|\bQ_t-\bQ^*\|^2_{\max}\right).
\end{aligned}\end{equation}

Having each individual term bounded in \eqref{eq:sL_1.1_bound}, \eqref{eq:sL_1.2_bound} and \eqref{eq:sL_1.3_bound} respectively, we can now sum them up to get an overall bound for $\sL_1$ using \eqref{eq:sL_1_decomposition} as follows:
\begin{equation}
    \begin{aligned}
        &\sL_1\\
        \leq & \sL_{1,1}+2\alpha_t\sL_{1,2}+\alpha_t^2\sL_{1,3}\\
        \leq & (1+c_1)(1-(1-\gamma_0)\alpha_td_{t+1,\min{}})^2\|\bQ_t-\bQ^*\|^2_{\max}\\
        &+\left(1+\frac{1}{c_1}\right)\alpha^2_td^2_{t+1,\max{}}\gamma^2 C_{\sT}^2(1-2\beta_t\mu + \beta_t^2L_F^2)\|z_{t}-z_t^*\|^2_2\\
        &+\alpha_t \left(c_2(1+c_3)(1-(1-\gamma_0)\alpha_td_{t+1,\min{}})^2+\frac{1}{c_2}\left(\frac{C}{(1-\rho)K}\right)^2\left(1+\frac{1}{c_4}\right)(1-\gamma_0)^2\right)\|\bQ_t-\bQ^*\|^2_{\max}\\
        &+\alpha_t \left(c_2\left(1+\frac{1}{c_3}\right)\alpha^2_td^2_{t+1,\max{}}+\frac{(1+c_4)}{c_2}\left(\frac{C}{(1-\rho)K}\right)^2\right)\gamma^2 C_{\sT}^2(1-2\beta_t\mu + \beta_t^2L_F^2)\|z_{t}-z_t^*\|^2_2\\
        &+\alpha_t^2\left(\frac{C}{(1-\rho)K}\right)^2(1+c_5)\gamma^2 C_{\sT}^2(1-2\beta_t\mu + \beta_t^2L_F^2)\|z_{t}-z_t^*\|^2_2\\
        &+\alpha_t^2\left(\frac{C}{(1-\rho)K}\right)^2\left(1+\frac{1}{c_5}\right)(1-\gamma_0)^2\|\bQ_t-\bQ^*\|^2_{\max}\\
        =&\Bigg((1+c_1+c_2\alpha_t(1+c_3))(1-(1-\gamma_0)\alpha_td_{t+1,\min{}})^2\\
        &\quad+\left(\frac{1}{c_2}\alpha_t\left(1+\frac{1}{c_4}\right)+\alpha_t^2\left(1+\frac{1}{c_5}\right)\right)\left(\frac{C}{(1-\rho)K}\right)^2(1-\gamma_0)^2\Bigg)\|\bQ_t-\bQ^*\|^2_{\max}\\
        &+\Bigg(\left(1+\frac{1}{c_1}+c_2\alpha_t\left(1+\frac{1}{c_3}\right)\right)\alpha^2_td^2_{t+1,\max{}}\\
        & \quad \quad+(\frac{1}{c_2}\alpha_t(1+c_4)+ \alpha_t^2(1+c_5))\left(\frac{C}{(1-\rho)K}\right)^2\Bigg)\gamma^2 C_{\sT}^2(1-2\beta_t\mu + \beta_t^2L_F^2)\|z_{t}-z_t^*\|^2_2
    \end{aligned}
\end{equation}
For simplicity, we take $c_1=c\alpha_t$, $c_2=\frac{c}{2}$ where $c=\frac{1}{2}(1-\gamma_0)d_{t+1,\min{}}$ and $c_3=c_4=c_5=1$, when $K\geq \frac{2\sqrt{2}C}{(1-\rho)(1-\gamma_0)d_{t+1,\min}}$, we have:
\begin{equation}\label{eq:sL_1_final_bound}\begin{aligned}
    &\sL_1\leq\\
    & (1+(c+2c_2)\alpha_t)(1-(1-\gamma_0)\alpha_td_{t+1,\min{}})^2\|\bQ_t-\bQ^*\|^2_{\max}+2(\frac{1}{c_2}\alpha_t+\alpha_t^2)\left(\frac{C}{(1-\rho)K}\right)^2(1-\gamma_0)^2\|\bQ_t-\bQ^*\|_{\max}^2\\
    &+\frac{\alpha_t}{c}d^2_{t+1,\max{}}\gamma^2 C_{\sT}^2(1-2\beta_t\mu + \beta_t^2L_F^2)\|z_{t}-z_t^*\|^2_2+(1+2c_2\alpha_t)\alpha_t^2d_{t+1,\max{}}^2\gamma^2 C_{\sT}^2(1-2\beta_t\mu + \beta_t^2L_F^2)\|z_{t}-z_t^*\|^2_2\\
    &+2(\frac{1}{c_2}\alpha_t+\alpha_t^2)\left(\frac{C}{(1-\rho)K}\right)^2\gamma^2 C_{\sT}^2(1-2\beta_t\mu + \beta_t^2L_F^2)\|z_{t}-z_t^*\|^2_2\\
    \leq & (1+(1-\gamma_0)\alpha_td_{t+1,\min{}}\alpha_t)(1-(1-\gamma_0)\alpha_td_{t+1,\min{}})^2\|\bQ_t-\bQ^*\|^2_{\max}\\
    &+2(\frac{4}{(1-\gamma_0)d_{t+1,\min{}}}\alpha_t+\alpha_t^2)\left(\frac{C}{(1-\rho)K}\right)^2(1-\gamma_0)^2\|\bQ_t-\bQ^*\|_{\max}^2\\
    &+\frac{2\alpha_t}{(1-\gamma_0)d_{t+1,\min{}}}d^2_{t+1,\max{}}\gamma^2 C_{\sT}^2(1-2\beta_t\mu + \beta_t^2L_F^2)\|z_{t}-z_t^*\|^2_2\\
    &+(1+\frac{1}{2}(1-\gamma_0)d_{t+1,\min{}}\alpha_t)\alpha_t^2d_{t+1,\max{}}^2\gamma^2 C_{\sT}^2(1-2\beta_t\mu + \beta_t^2L_F^2)\|z_{t}-z_t^*\|^2_2\\
    &+2(\frac{4}{(1-\gamma_0)d_{t+1,\min{}}}\alpha_t+\alpha_t^2)\left(\frac{C}{(1-\rho)K}\right)^2\gamma^2 C_{\sT}^2(1-2\beta_t\mu + \beta_t^2L_F^2)\|z_{t}-z_t^*\|^2_2\\
    \leq & (1-(1-\gamma_0)\alpha_td_{t+1,\min{}})\|\bQ_t-\bQ^*\|^2_{\max}\\
    &+2(\frac{4}{(1-\gamma_0)d_{t+1,\min{}}}\alpha_t+\alpha_t^2)\left(\frac{C}{(1-\rho)K}\right)^2(1-\gamma_0)^2\|\bQ_t-\bQ^*\|_{\max}^2\\
    &+\frac{2\alpha_t}{(1-\gamma_0)d_{t+1,\min{}}}d^2_{t+1,\max{}}\gamma^2 C_{\sT}^2(1-2\beta_t\mu + \beta_t^2L_F^2)\|z_{t}-z_t^*\|^2_2\\
    &+(1+\frac{1}{2}(1-\gamma_0)d_{t+1,\min{}}\alpha_t)\alpha_t^2d_{t+1,\max{}}^2\gamma^2 C_{\sT}^2(1-2\beta_t\mu + \beta_t^2L_F^2)\|z_{t}-z_t^*\|^2_2\\
    &+2(\frac{4}{(1-\gamma_0)d_{t+1,\min{}}}\alpha_t+\alpha_t^2)\left(\frac{C}{(1-\rho)K}\right)^2\gamma^2 C_{\sT}^2(1-2\beta_t\mu + \beta_t^2L_F^2)\|z_{t}-z_t^*\|^2_2\\
    \leq & \left(1-\left((1-\gamma_0)d_{t+1,\min{}}-\frac{8}{(1-\gamma_0)d_{t+1,\min{}}}\left(\frac{C}{(1-\rho)K}\right)^2\right)\alpha_t\right)\|\bQ_t-\bQ^*\|^2_{\max}\\
    &+ 2\alpha_t^2\left(\frac{C}{(1-\rho)K}\right)^2(1-\gamma_0)^2\|\bQ_t-\bQ^*\|_{\max}^2\\
    &+\left(\frac{2d_{t+1,\max{}}^2}{(1-\gamma_0)d_{t+1,\min{}}}+\frac{8}{(1-\gamma_0)d_{t+1,\min{}}}\left(\frac{C}{(1-\rho)K}\right)^2\right)\alpha_t\gamma^2 C_{\sT}^2(1-2\beta_t\mu + \beta_t^2L_F^2)\|z_{t}-z_t^*\|^2_2\\
    &+ \left((1+\frac{1}{2}(1-\gamma_0)d_{t+1,\min{}}\alpha_t)d_{t+1,\max{}}^2+2\left(\frac{C}{(1-\rho)K}\right)^2\right)\alpha_t^2\gamma^2 C_{\sT}^2(1-2\beta_t\mu + \beta_t^2L_F^2)\|z_{t}-z_t^*\|^2_2
\end{aligned}\end{equation}

\paragraph{Bounding the additional error terms $\sL_2,\sL_3$ and $\sL_4$.}
We now focus on the additional bias terms $\sL_2,\sL_3$ and $\sL_4$. Since each of the terms is in the order of $\alpha_t^2$, we only need to bound the scales of each term in order to obtain the convergence result. For $\sL_2$, notice that $\sL_2$ can be obtained by directly taking max-norm on $\sL_2'$, we analyze $\sL_2'$ instead. Recall the definition of $\bar{P}_K(s,\ba|\sF_t)$ and \eqref{eq:app_B.3_E_delta_hat}, we have:
\begin{equation}
    \begin{aligned}
    &\sL_2'\\
        =&\E\left[\left( \alpha_t\frac{1}{K}\sum_{k=0}^{K-1}\left(\sT_{z_{t+1}}Q_{i,t}(s_{k},\ba_{k})-Q_{i,t}(s_{k},\ba_{k}) \right)\ind[(s,\ba)=(s_{k},\ba_{k})]-\alpha_t\E[\hat{\delta}_i(s,\ba)|\sF_{t}]\right)^2\middle|\sF_t\right]\\
        =&\E\left[\left( \alpha_t\frac{1}{K}\sum_{k=0}^{K-1}\left(\sT_{z_{t+1}}Q_{i,t}(s,\ba)-Q_{i,t}(s,\ba) \right)\left(\ind[(s,\ba)=(s_{k},\ba_{k})]-\bar{P}_K(s,\ba|\sF_t)\right)  \right)^2\middle|\sF_t\right]\\
        =&\E\left[\alpha_t^2\left(\sT_{z_{t+1}}Q_{i,t}(s,\ba)-Q_{i,t}(s,\ba) \right)^2\left(\frac{1}{K}\sum_{k=0}^{K-1}\left(\ind[(s,\ba)=(s_{k},\ba_{k})]-\bar{P}_K(s,\ba|\sF_t)\right)\right)^2  \middle|\sF_t\right]\\
        =&\alpha_t^2\left(\sT_{z_{t+1}}Q_{i,t}(s,\ba)-Q_{i,t}(s,\ba) \right)^2\E\left[\left(\frac{1}{K}\sum_{k=0}^{K-1}\left(\ind[(s,\ba)=(s_{k},\ba_{k})]-\bar{P}_K(s,\ba|\sF_t)\right)\right)^2  \middle|\sF_t\right]
    \end{aligned}
\end{equation}
The term inside the expectation can be bounded using geometric ergodicity: Let $Y_k=\ind[(s_{k},\ba_{k})=(s,\ba)]-\Pr((s_{k},\ba_{k})=(s,\ba)|\sF_t)$
\begin{equation}\begin{aligned}
    &\E\left[\left(\frac{1}{K}\sum_{k=0}^{K-1}\left(\ind[(s,\ba)=(s_{k},\ba_{k})]-\bar{P}_K(s,\ba|\sF_t)\right)\right)^2  \middle|\sF_t\right]\\
    =&\E\left[\left(\frac{1}{K}\sum_{k=0}^{K-1}Y_k\right)^2  \middle|\sF_t\right]\\
    =&\frac{1}{K^2}\left(\sum_{k=0}^{K-1}\Var(Y_k)+\sum_{0\leq k_1\neq k_2\leq K-1}\Cov(Y_{k_1},Y_{k_2})\right)
\end{aligned}\end{equation}
Since $\E[Y_k]=0$ and $Y_k$ only takes two values, we have $\Var(Y_k)\leq \frac{1}{4}$. For $\Cov(Y_{k_1},Y_{k_2})$ we obtain from \Cref{lem:cov_bound_alpha_mixing} and \Cref{lem:ergodic_to_mixing} that
\begin{equation}
    \Cov(Y_{k_1},Y_{k_2})\leq 8C\rho^{|k_1-k_2|}
\end{equation}
combining the two equations above, we have:
\begin{equation}
    \E\left[\left(\frac{1}{K}\sum_{k=0}^{K-1}\left(\ind[(s,\ba)=(s_{k},\ba_{k})]-\bar{P}_K(s,\ba|\sF_t)\right)\right)^2  \middle|\sF_t\right]\leq\frac{1}{4K}+\frac{16C\rho}{(1-\rho)K}
\end{equation}
and as a result,
\begin{equation}
    \sL_2'\leq \alpha_t^2\left(\sT_{z_{t+1}}Q_{i,t}(s,\ba)-Q_{i,t}(s,\ba) \right)^2 \left(\frac{1}{4K}+\frac{16C\rho}{(1-\rho)K}\right).
\end{equation}
Now we take max-norm and obtain the bound for $\sL_2$:
\begin{equation}
    \sL_2\leq \alpha_t^2 \left(\frac{1}{4K}+\frac{16C\rho}{(1-\rho)K}\right)\|\sT_{z_{t+1}}\bQ_t-\bQ_t\|_{\max}^2.
\end{equation}
Using \eqref{eq:app_B.3_sL_1.1_2} and \Cref{lem:app_B.3_TQ-Q_difference} with $c=1$ we obtain:
\begin{equation}\label{eq:sL_2_final_bound}\begin{aligned}
    \sL_2\leq& \alpha_t^2 \left(\frac{1}{2K}+\frac{32C\rho}{(1-\rho)K}\right)\gamma^2 C_{\sT}^2(1-2\beta_t\mu + \beta_t^2L_F^2)\|z_{t}-z_t^*\|^2_2\\
    &+\alpha_t^2 \left(\frac{1}{2K}+\frac{32C\rho}{(1-\rho)K}\right)(1-\gamma_0)^2\|\bQ_t-\bQ^*\|^2_{\max}
\end{aligned}\end{equation}

For $\sL_3$, recall the definition $M_{i,t}(s,\ba)=\frac{1}{K}\sum_{k=0}^{K-1}\xi_{i,k}(s,\ba)$ and the property that $\E[M_{i,t}(s,\ba)|\sF_t]=0$, we have:
\begin{equation}\label{eq:app_B.3_xi_sum_variance}\begin{aligned}
    \E[M^2_{i,t}(s,\ba)|\sF_t]=&\E\left[\left(\frac{1}{K}\sum_{k=0}^{K-1}\xi_{i,k}(s,\ba)\right)^2\middle|\sF_t\right]\\
    =&\frac{1}{K^2}\E\left[\sum_{k=0}^{K-1}\xi^2_{i,k}(s,\ba)\middle|\sF_t\right]+\frac{1}{K^2}\E\left[\sum_{k_1\neq k_2}^{K-1}\xi_{i,k_1}(s,\ba)\xi_{i,k_2}(s,\ba)\middle|\sF_t\right]\\
    =&\frac{1}{K^2}\sum_{k=0}^{K-1}\E\left[\xi^2_{i,k}(s,\ba)\middle|\sF_t\right]+\frac{1}{K^2}\sum_{k_1\neq k_2}^{K-1}\E\left[\xi_{i,k_1}(s,\ba)\xi_{i,k_2}(s,\ba)\middle|\sF_t\right]\\
    =&\frac{1}{K^2}\sum_{k=0}^{K-1}\E\left[\xi^2_{i,k}(s,\ba)\middle|\sF_t\right]+\frac{2}{K^2}\sum_{k_1< k_2}^{K-1}\E\left[\E[\xi_{i,k_1}(s,\ba)\xi_{i,k_2}(s,\ba)|\sF_{t,k_1}]\middle|\sF_t\right]\\
    =&\frac{1}{K^2}\sum_{k=0}^{K-1}\E\left[\xi^2_{i,k}(s,\ba)\middle|\sF_t\right]+\frac{2}{K^2}\sum_{k_1< k_2}^{K-1}\E\left[\E[\xi_{i,k_2}(s,\ba)|\sF_{t,k_1}]\xi_{i,k_1}(s,\ba)\middle|\sF_t\right]\\
    =&\frac{1}{K^2}\sum_{k=0}^{K-1}\E\left[\xi^2_{i,k}(s,\ba)\middle|\sF_t\right]+\frac{2}{K^2}\sum_{k_1< k_2}^{K-1}\E\left[\E[\E[\xi_{i,k_2}(s,\ba)|\sF_{t,k_2}]|\sF_{t,k_1}]\xi_{i,k_1}(s,\ba)\middle|\sF_t\right]\\
    =&\frac{1}{K^2}\sum_{k=0}^{K-1}\E\left[\xi^2_{i,k}(s,\ba)\middle|\sF_t\right]
\end{aligned}\end{equation}
and as a result,
\begin{equation}\begin{aligned}
    \E\left[\left(\alpha_t\frac{1}{K}\sum_{k=0}^{K-1}\xi_{i,k}(s,\ba)\right)^2\middle| \sF_t\right]\leq \frac{\alpha_t^2}{K^2}\sum_{k=0}^{K-1}\E[\xi_{i,k}^2(s,\ba)|\sF_t]\leq \frac{\alpha_t^2}{K}\max_k\|\xi_{i,k}\|_{\max}^2
\end{aligned}\end{equation}
and since this bound hold for all $(s,\ba)$ pairs, we have:
\begin{equation}\label{eq:sL_3_final_bound}
    \sL_3\leq \frac{\alpha_t^2}{K}\max_k\|\xi_{i,k}\|_{\max}^2\leq \frac{4\alpha_t^2}{K}\textup{sp}(\bQ)^2.
\end{equation}

For $\sL_4$, by using the fact that $ab\leq \frac{1}{2}a^2+\frac{1}{2}b^2$, we obtain:
\begin{equation}\label{eq:sL_4_final_bound}
    \begin{aligned}
        2\sL_4=&2\E\left[\left\| \alpha_t\frac{1}{K}\sum_{k=0}^{K-1}\left(\sT_{z_{t+1}}\bQ_{t}(s_{k},\ba_{k})-\bQ_{t}(s_{k},\ba_{k}) \right)\otimes e_{(s_{k},\ba_{k})}-\alpha_t\E[\hat{\boldsymbol{\delta}}|\sF_{t}]\right\|_{\max}\times\left\|\alpha_t\frac{1}{K}\sum_{k=0}^{K-1}\boldsymbol{\xi}_{k}\right\|_{\max}\middle| \sF_t\right]\\
        \leq & \E\left[\left\| \alpha_t\frac{1}{K}\sum_{k=0}^{K-1}\left(\sT_{z_{t+1}}\bQ_{t}(s_{k},\ba_{k})-\bQ_{t}(s_{k},\ba_{k}) \right)\otimes e_{(s_{k},\ba_{k})}-\alpha_t\E[\hat{\boldsymbol{\delta}}|\sF_{t}]\right\|^2_{\max}\middle| \sF_t\right]\\
        &+\E\left[\left\|\alpha_t\frac{1}{K}\sum_{k=0}^{K-1}\boldsymbol{\xi}_{k}\right\|^2_{\max}\middle| \sF_t\right]\\
        = & \sL_2+\sL_3\\
        \leq & \alpha_t^2 \left(\frac{1}{2K}+\frac{32C\rho}{(1-\rho)K}\right)\gamma^2 C_{\sT}^2(1-2\beta_t\mu + \beta_t^2L_F^2)\|z_{t}-z_t^*\|^2_2\\
        &+\alpha_t^2 \left(\frac{1}{2K}+\frac{32C\rho}{(1-\rho)K}\right)(1-\gamma_0)^2\|\bQ_t-\bQ^*\|^2_{\max}+\frac{4\alpha_t^2}{K}\textup{sp}(\bQ)^2
    \end{aligned}
\end{equation}

\paragraph{Obtaining the Lyapunov Drift Inequality for $Q$.} Having obtained the upper bounds for $\sL_1,\sL_2,\sL_3$ and $\sL_4$, we finally combine the error terms to obtain the Lyapunov drift inequality for $Q$ functions. Recall \eqref{eq:app_B.3_Q_Lyapunov_function_2} and substituting in \eqref{eq:sL_1_final_bound}, \eqref{eq:sL_2_final_bound}, \eqref{eq:sL_3_final_bound}, and \eqref{eq:sL_4_final_bound}, we have:
\begin{equation}\label{eq:lyapunov_drift_inequality_q}
    \begin{aligned}
        &\E\left[\|\bQ_{t+1}-\bQ^*\|_{\max}^2\middle|\sF_t \right]\\
        \leq & \sL_1+\sL_2+\sL_3+2\sL_4\\
        =& \left(1-\left((1-\gamma_0)d_{t+1,\min{}}-\frac{8}{(1-\gamma_0)d_{t+1,\min{}}}\left(\frac{C}{(1-\rho)K}\right)^2\right)\alpha_t\right)\|\bQ_t-\bQ^*\|^2_{\max}\\
        &+\left(\frac{1}{K}+\frac{64C\rho}{(1-\rho)K}+2\left(\frac{C}{(1-\rho)K}\right)^2\right)\alpha_t^2(1-\gamma_0)^2\|\bQ_t-\bQ^*\|^2_{\max}\\
        &+\left(\frac{2d_{t+1,\max{}}}{(1-\gamma_0)d_{t+1,\min{}}^2}+\frac{8}{(1-\gamma_0)d_{t+1,\min{}}}\left(\frac{C}{(1-\rho)K}\right)^2\right)\alpha_t\gamma^2 C_{\sT}^2(1-2\beta_t\mu + \beta_t^2L_F^2)\|z_{t}-z_t^*\|^2_2\\
        &+ \left((1+\frac{1}{2}(1-\gamma_0)d_{t+1,\min{}}\alpha_t)d_{t+1,\max{}}^2+\frac{1}{K}+\frac{64C\rho}{(1-\rho)K}+2\left(\frac{C}{(1-\rho)K}\right)^2\right)\\
        &\quad \times\alpha_t^2\gamma^2 C_{\sT}^2(1-2\beta_t\mu + \beta_t^2L_F^2)\|z_{t}-z_t^*\|^2_2+\frac{8\alpha_t^2}{K}\textup{sp}(\bQ)^2
    \end{aligned}
\end{equation}

\subsection{Lyapunov Drift Inequality for Policies}
We now shift our focus on the Lyapunov drift inequality for policies. Notice that given the $Q$ function $\bQ_t$ at time step $t$, the policy update is exactly the same as the non-stochastic update rule \eqref{eq:iteration_rule} because:
\begin{equation}
    \Lambda F(z;-\bQ_t(s,\cdot))=\begin{bmatrix}
        \lambda_1(Q_1(s,\cdot)p_1(\cdot|s)+\epsilon_1\nabla\nu_1(\pi_1;s))\\
        \lambda_2(Q_2(s,\cdot)p_2(\cdot|s)+\epsilon_2\nabla\nu_2(\pi_2;s))\\
        \lambda_1(-Q_1^T(s,\cdot)\pi_1(\cdot|s)+\frac{1}{\tau_1}\nabla_{p_1} D_1(p_1,\pi_2;s))\\
        \lambda_2(-Q_2^T(s,\cdot)\pi_2(\cdot|s)+\frac{1}{\tau_2}\nabla_{p_2} D_2(p_2,\pi_1;s))
    \end{bmatrix}
\end{equation}
this provides the same iteration dynamic as \eqref{eq:iter_dynamics_1}:
\begin{equation}
    \|z_{t+1}-z_{t+1}^*\|_2\leq (1+\alpha_tL_{RQE}C_{\sT})\sqrt{1-2\beta_t\mu + \beta_t^2L_F^2}\|z_t-z_t^*\|_2+(1+\gamma_0)\alpha_tL_{RQE}\|\bQ_t-\bQ^*\|_{\max}.
\end{equation}
However, in order to fit the squared term in the drift inequality for $Q$ functions, we apply \Cref{lem:(a+b)^2_bound} to the above inequality and obtain:
\begin{equation}\begin{aligned}
    \|z_{t+1}-z_{t+1}^*\|^2_2\leq& (1+c_z)(1+\alpha_tL_{RQE}C_{\sT})^2\left(1-2\beta_t\mu + \beta_t^2L_F^2\right)\|z_t-z_t^*\|^2_2\\
    &+\left(1+\frac{1}{c_z}\right)(1+\gamma_0)^2\alpha_t^2L_{RQE}^2\|\bQ_t-\bQ^*\|^2_{\max}.
\end{aligned}\end{equation}
Taking $c_z=c_6\alpha_tL_{RQE}$, we obtain:
\begin{equation}\label{eq:lyapunov_drift_inequality_z}\begin{aligned}
    \|z_{t+1}-z_{t+1}^*\|^2_2\leq& (1+c_6\alpha_tL_{RQE})(1+\alpha_tL_{RQE}C_{\sT})^2\left(1-2\beta_t\mu + \beta_t^2L_F^2\right)\|z_t-z_t^*\|^2_2\\
    &+\left(\alpha_t^2L_{RQE}^2+\frac{\alpha_tL_{RQE}}{c_6}\right)(1+\gamma_0)^2\|\bQ_t-\bQ^*\|^2_{\max}.
\end{aligned}\end{equation}

\subsection{Solving the Coupled Lyapunov Inequalities}
Having obtained the Lyapunov drift inequality \eqref{eq:lyapunov_drift_inequality_q} for $Q$ and \eqref{eq:lyapunov_drift_inequality_z} for $z$, we now proceed to solve the coupled Lyapunov drift inequalities. We first simplify \eqref{eq:lyapunov_drift_inequality_q} as follows: For $Q$ functions, assume the iterates $\bQ_t$ are bounded, the coefficient with term $\alpha_t^2$ is bounded, and we can write \eqref{eq:lyapunov_drift_inequality_q} as:
\begin{equation}\label{eq:lyapunov_drift_inequality_q_simplified}
    \E\left[\|\bQ_{t+1}-\bQ^*\|_{\max}^2\middle|\sF_t \right]\leq (1-C_1\alpha_t)\|\bQ_t-\bQ^*\|^2_{\max}+C_2\alpha_t(1-\tilde{\beta}_t)^2\|z_t-z_t^*\|_2^2+C_3\alpha_t^2
\end{equation}
where
\begin{align}
    C_1=&(1-\gamma_0)d_{t+1,\min{}}-\frac{8}{(1-\gamma_0)d_{t+1,\min{}}}\left(\frac{C}{(1-\rho)K}\right)^2;\\
    C_2=&\gamma^2 C_{\sT}^2\left(\frac{2d_{t+1,\max{}}}{(1-\gamma_0)d_{t+1,\min{}}^2}+\frac{8}{(1-\gamma_0)d_{t+1,\min{}}}\left(\frac{C}{(1-\rho)K}\right)^2\right);\\
    C_3=&\max_t(1-\gamma_0)^2\left(\frac{1}{K}+\frac{64C\rho}{(1-\rho)K}+2\left(\frac{C}{(1-\rho)K}\right)^2\right)\|\bQ_t-\bQ^*\|_{\max}^2\\
    &+\left((1+\frac{1}{2}(1-\gamma_0)d_{t+1,\min{}}\alpha_t)d_{t+1,\max{}}^2+\frac{1}{K}+\frac{64C\rho}{(1-\rho)K}+2\left(\frac{C}{(1-\rho)K}\right)^2\right)\nonumber \\
    &\quad \times \gamma^2 C_{\sT}^2(1-\tilde{\beta}_t)^2\|z_{t}-z_t^*\|^2_2+\frac{8}{K}\textup{sp}(\bQ)^2.\nonumber
\end{align}
Similarly, we can rewrite \eqref{eq:lyapunov_drift_inequality_z} as:
\begin{equation}\label{eq:lyapunov_drift_inequality_z_simplified}
    \|z_{t+1}-z_{t+1}^*\|_2^2\leq (1+C_4\alpha_t)(1-\tilde{\beta}_t)^2\|z_t-z_t^*\|_2^2+C_5\alpha_t\|\bQ_t-\bQ\|_{\max}^2+C_6\alpha_t^2
\end{equation}
where (for simplicity we take $c_6=1$):
\begin{align}
    C_4=&(2C_\sT+1)L_{RQE};\\
    C_5=&L_{RQE}(1+\gamma_0)^2;\\
    C_6=&\max_t(2+C_{\sT})L_{RQE}^2C_\sT(1-\tilde{\beta}_t)^2+L_{RQE}^3C_\sT^2\alpha_t(1-\tilde{\beta}_t)^2\\
    &+L_{RQE}^2(1+\gamma_0)^2\|\bQ_t-\bQ^*\|_{\max}^2. \nonumber
\end{align}
Let $U_t=\E[\|\bQ_t-\bQ^*\|_{\max}^2]$ and $V_t=\E[\|z_t-z_t^*\|_2^2]$, by taking expectations on both sides of \eqref{eq:lyapunov_drift_inequality_q_simplified} and \eqref{eq:lyapunov_drift_inequality_z_simplified}, we obtain:
\begin{equation}\begin{aligned}
    U_{t+1}+kV_{t+1}\leq &(1-C_1\alpha_t) U_t +C_2\alpha_t(1-\tilde{\beta}_t)^2V_t+C_3\alpha_t^2\\
    & + k(1+C_4\alpha_t)(1-\tilde{\beta}_t)^2 V_t+kC_5\alpha_tU_t+kC_6\alpha_t^2\\
    = & (1-(C_1-kC_5)\alpha_t)U_t+(k+(C_2+kC_4)\alpha_t)(1-\tilde{\beta}_t)^2V_t+(C_3+kC_6)\alpha_t^2\\
    \leq & \max\{(1-(C_1-kC_5)\alpha_t),(1+(C_4+C_2/k)\alpha_t)(1-\tilde{\beta}_t)^2\}(U_t+kV_t)+(C_3+kC_6)\alpha_t^2,
\end{aligned}\end{equation}
We can now set conditions of $K$ and $k$ to simplify the above expression. When $K$ satisfies:
\begin{equation}\label{eq:MAAC_convergence_condition_1}
    K\geq \frac{4\sqrt{2}C}{(1-\gamma_0)(1-\rho)d_{t+1,\min{}}},
\end{equation}
we have:
\begin{equation}
    \frac{8}{(1-\gamma_0)d_{t+1,\min{}}}\left(\frac{C}{(1-\rho)K}\right)^2\leq \frac{(1-\gamma_0)d_{t+1,\min{}}}{4},
\end{equation}
and when
\begin{equation}\label{eq:MAAC_convergence_condition_2}
    k=\frac{(1-\gamma_0)d_{t+1,\min{}}}{4L_{RQE}(1+\gamma_0)^2}.
\end{equation}
we further have:
\begin{equation}\begin{aligned}
    C_1-kC_5\geq & (1-\gamma_0)d_{t+1,\min{}}-\frac{1}{4}(1-\gamma_0)d_{t+1,\min{}}-\frac{1}{4}(1-\gamma_0)d_{t+1,\min{}}\\
    =&\frac{1}{2}(1-\gamma_0)d_{t+1,\min{}}
\end{aligned}\end{equation}
so that
\begin{equation}
    \begin{aligned}
        1-(C_1-kC_5)\alpha_t\leq 1-\frac{1}{2}(1-\gamma_0)d_{t+1,\min{}}\alpha_t.
    \end{aligned}
\end{equation}
Also, we have:
\begin{equation}\begin{aligned}
    C_4+\frac{C_2}{k}\leq & (2C_\sT+1)L_{RQE}+\gamma^2 C_{\sT}^2\left(\frac{2d_{t+1,\max{}}}{(1-\gamma_0)d_{t+1,\min{}}^2}+\frac{8}{(1-\gamma_0)d_{t+1,\min{}}}\left(\frac{C}{(1-\rho)K}\right)^2\right)\frac{4L_{RQE}(1+\gamma_0)^2}{(1-\gamma_0)d_{t+1,\min{}}}\\
    \leq &(2C_\sT+1)L_{RQE}+\gamma^2 C_{\sT}^2\left(\frac{2d_{t+1,\max{}}}{(1-\gamma_0)d_{t+1,\min{}}^2}+\frac{(1-\gamma_0)d_{t+1,\min{}}}{4}\right)\frac{4L_{RQE}(1+\gamma_0)^2}{(1-\gamma_0)d_{t+1,\min{}}}\\
    =&\left(2C_\sT+1+(1+\gamma_0)^2\right)L_{RQE}+8L_{RQE}\gamma^2C_{\sT}^2\left(\frac{1+\gamma_0}{1-\gamma_0}\right)^2\frac{d_{t+1,\max{}}}{d_{t+1,\min{}}^3}\\
    =&\left(1+(1+\gamma_0)^2+2C_{\sT}+8\gamma^2C_{\sT}^2\left(\frac{1+\gamma_0}{1-\gamma_0}\right)^2\frac{d_{t+1,\max{}}}{d_{t+1,\min{}}^3}\right)L_{RQE}
\end{aligned}\end{equation}
recall that when $\beta_t\leq \min\{\frac{1}{\mu},\frac{\mu}{L_F^2}\}$, we have $(1-\tilde{\beta}_t)^2=1-2\beta_t\mu+\beta_t^2L_F^2\leq 1-\beta_t\mu$, and when
\begin{equation}\label{eq:stepsize_condition_MAAC_1}
    (C_1-kC_5+C_4+\frac{C_2}{k})\alpha_t\leq \mu\beta_t+(C_4+\frac{C_2}{k})\alpha_t\beta_t
\end{equation}
we would have $1-(C_1-kC_5)\alpha_t\geq (1+(C_4+C_2/k)\alpha_t)(1-\tilde{\beta}_t)^2$. Let $\alpha_t<1$ and combining the conditions above, and then \eqref{eq:stepsize_condition_MAAC_1} simplifies to:
\begin{equation}\label{eq:MAAC_convergence_condition_3}\begin{aligned}
    \frac{\beta_t}{\alpha_t} \geq & \frac{C_1-kC_5+C_4+C_2/k}{\mu+C_4+C_2/k}\\
    =&\frac{(1-\gamma_0)d_{t+1,\min{}}/2+\left(1+(1+\gamma_0)^2+2C_{\sT}+8\gamma^2C_{\sT}^2\left(\frac{1+\gamma_0}{1-\gamma_0}\right)^2\frac{d_{t+1,\max{}}}{d_{t+1,\min{}}^3}\right)L_{RQE}}{\mu+\left(1+(1+\gamma_0)^2+2C_{\sT}+8\gamma^2C_{\sT}^2\left(\frac{1+\gamma_0}{1-\gamma_0}\right)^2\frac{d_{t+1,\max{}}}{d_{t+1,\min{}}^3}\right)L_{RQE}}.
\end{aligned}\end{equation}
When \eqref{eq:MAAC_convergence_condition_1}, \eqref{eq:MAAC_convergence_condition_2} and \eqref{eq:MAAC_convergence_condition_3} are satisfied, our recursion becomes:
\begin{equation}
    U_{t+1}+kV_{t+1}\leq (1-\frac{1}{2}(1-\gamma_0)d_{t+1,\min{}}\alpha_t)(U_T+kV_t)+(C_3+kC_6)\alpha_t^2
\end{equation}
To clean up the recursion, notice that we naturally have $d_{t+1,\max{}}\leq 1$, and by \Cref{assumption:uniformly_geometrically_ergodic} and \Cref{assumption:policy_lower_bounded}, we have
\begin{equation}
    d_{t+1,\min{}}\geq d_{\min{}}:= \underline{\mu}\underline{\pi}
\end{equation}
and by a similar argument to \Cref{lem:Q_iter_bounded}, we have $\|\bQ_t\|_{\max}\leq Q_{\max},\textup{sp}(\bQ_t)\leq Q_{span}$ and $\|\bQ^*\|_{\max}\leq Q_{\max}$, and therefore, $C_3$ can be rewritten as:
\begin{equation}\begin{aligned}
    C_3=&4(1-\gamma_0)^2\left(\frac{1}{K}+\frac{64C\rho}{(1-\rho)K}+2\left(\frac{C}{(1-\rho)K}\right)^2\right)Q_{\max}^2\\
    &+\left(1+\frac{1}{2}(1-\gamma_0)d_{\min{}}\alpha_t+\frac{1}{K}+\frac{64C\rho}{(1-\rho)K}+2\left(\frac{C}{(1-\rho)K}\right)^2\right)\nonumber \\
    &\quad \times \gamma^2 C_{\sT}^2(1-\tilde{\beta}_t)^2\|z_{t}-z_t^*\|^2_2+\frac{8}{K}Q_{span}^2,
\end{aligned}\end{equation}
and $C_6$ can be written as:
\begin{equation}
    \begin{aligned}
        C_6=&(2+C_{\sT})L_{RQE}^2C_\sT(1-\beta)^2+L_{RQE}^3C_\sT^2\alpha_t(1-\beta)^2+4L_{RQE}^2(1+\gamma_0)^2Q_{\max{}}.
    \end{aligned}
\end{equation}
Additionally, \Cref{eq:MAAC_convergence_condition_3} further simplifies to:
\begin{equation}
    \frac{\beta_t}{\alpha_t}\geq \frac{(1-\gamma_0)d_{\min{}}/2+\left(1+(1+\gamma_0)^2+2C_{\sT}+\frac{8\gamma^2C_{\sT}^2}{d_{\min{}}^3}\left(\frac{1+\gamma_0}{1-\gamma_0}\right)^2\right)L_{RQE}}{\mu+\left(1+(1+\gamma_0)^2+2C_{\sT}+\frac{8\gamma^2C_{\sT}^2}{d_{\min{}}^3}\left(\frac{1+\gamma_0}{1-\gamma_0}\right)^2\right)L_{RQE}}.
\end{equation}

Now by applying \Cref{lem:stochastic_iter_convergence}, we obtain:

\textit{Scenario 1:} Assuming we are using constant step sizes $\alpha_t=\alpha$, $\beta_t=\beta$, we have:
\begin{equation}
    U_t+kV_t\leq (1-\frac{1}{2}(1-\gamma_0)d_{\min{}}\alpha)^t(U_0+kV_0)+\frac{2(C_3+kC_6)}{(1-\gamma_0)d_{\min{}}}\alpha
\end{equation}
and therefore,
\begin{equation}
    \E[\|\bQ_t-\bQ^*\|_{\max}^2]\leq (1-\frac{1}{2}(1-\gamma_0)d_{\min{}}\alpha)^t(Q_{\max}^2+8k)+\frac{2(C_3+kC_6)}{(1-\gamma_0)d_{\min{}}}\alpha
\end{equation}
and
\begin{equation}
    \E[\|z_t-z_t^*\|_2^2]\leq (1-\frac{1}{2}(1-\gamma_0)d_{\min{}}\alpha)^t(Q_{\max}^2/k+8)+\frac{2(C_3/k+C_6)}{(1-\gamma_0)d_{\min{}}}\alpha
\end{equation}
and since
\begin{equation}\begin{aligned}
    \|z_t-z^*\|_2^2\leq & (\|z_t-z_t^*\|_2+\|z_t^*-z^*\|_2)^2\\
    \leq &(\|z_t-z_t^*\|_2+L_{RQE}\|\bQ_t-\bQ^*\|_{\max})^2\\
    \leq & 2\|z_t-z_t^*\|_2^2+2L_{RQE}^2\|\bQ_t-\bQ^*\|_{\max}^2
\end{aligned}\end{equation}
we have:
\begin{equation}\begin{aligned}
    \E[\|z_t-z^*\|_2^2]\leq & 2\E[\|z_t-z_t^*\|_2^2]+2L_{RQE}^2\E[\|\bQ_t-\bQ^*\|_{\max}^2]\\
    \leq & (2+2kL_{RQE}^2)\left((1-\frac{1}{2}(1-\gamma_0)d_{\min{}}\alpha)^t(Q_{\max}^2/k+8)+\frac{2(C_3/k+C_6)}{(1-\gamma_0)d_{\min{}}}\alpha\right)
\end{aligned}\end{equation}

\textit{Scenario 2:} Assuming we are using diminishing step sizes $\alpha_t=\frac{\alpha}{t+h}$, $\beta_t=\frac{\beta}{t+h}$, we have:
\begin{equation}
    U_t+kV_t\leq \left(U_0+kV_0+\frac{(C_3+kC_6)\alpha^2}{h^2}\frac{h+2-\frac{1}{2}(1-\gamma_0)d_{\min{}}\alpha}{1-\frac{1}{2}(1-\gamma_0)d_{\min{}}\alpha}\right)\left(\frac{h+1}{h+t}\right)^{\frac{1}{2}(1-\gamma_0)d_{\min{}}\alpha},
\end{equation}
as a result,
\begin{equation}
    \E[\|\bQ_t-\bQ^*\|_{\max}^2]\leq \left(Q_{\max}^2+8k+\frac{(C_3+kC_6)\alpha^2}{h^2}\frac{h+2-\frac{1}{2}(1-\gamma_0)d_{\min{}}\alpha}{1-\frac{1}{2}(1-\gamma_0)d_{\min{}}\alpha}\right)\left(\frac{h+1}{h+t}\right)^{\frac{1}{2}(1-\gamma_0)d_{\min{}}\alpha},
\end{equation}
and
\begin{equation}
    \E[\|z_t-z^*\|_2^2]\leq (2/k+2L_{RQE}^2) \left(Q_{\max}^2+8k+\frac{(C_3+kC_6)\alpha^2}{h^2}\frac{h+2-\frac{1}{2}(1-\gamma_0)d_{\min{}}\alpha}{1-\frac{1}{2}(1-\gamma_0)d_{\min{}}\alpha}\right)\left(\frac{h+1}{h+t}\right)^{\frac{1}{2}(1-\gamma_0)d_{\min{}}\alpha}
\end{equation}

\section{Auxiliary Lemmas}
In this section we state and prove some auxiliary lemmas that are used in our proofs.
\begin{lemma}\label{lem:monotonicity_jacobian_condition}
    Let $\Lambda=\Diag(\lambda)$ , a differentiable mapping $F:\sZ\rightarrow \R^N$ is $\lambda$-strictly monotone if for all $z\in \sZ$,
    \begin{equation}
        \Lambda\nabla F(z)+\nabla F(z)^T\Lambda \succ 0.
    \end{equation}
    Further, it is $(\mu,\lambda)$-strongly monotone if and only if for all $z\in \sZ$,
    \begin{equation}
        \Lambda\nabla F(z)+\nabla F(z)^T\Lambda \succeq 2\mu I.
    \end{equation}
    Here $\nabla F(z)$ denotes the Jacobian of $F(z)$.
\end{lemma}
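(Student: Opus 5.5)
The plan is to reduce both claims to a pointwise quadratic-form inequality along the segment joining two points, via the fundamental theorem of calculus. Fix $z,z'\in\sZ$ and set $h=z-z'$. Because $\sZ$ is convex, $z_s=z'+sh\in\sZ$ for all $s\in[0,1]$. Differentiability of $F$ then gives
\begin{equation*}
F(z)-F(z')=\int_0^1 \nabla F(z_s)\,h\,\diff s .
\end{equation*}
Taking the $\lambda$-weighted inner product with $h$ and writing $\Lambda=\Diag(\lambda)$, I get
\begin{equation*}
\langle h, F(z)-F(z')\rangle_\lambda=\int_0^1 h^T\Lambda\nabla F(z_s)h\,\diff s=\frac12\int_0^1 h^T\bigl(\Lambda\nabla F(z_s)+\nabla F(z_s)^T\Lambda\bigr)h\,\diff s,
\end{equation*}
where the last step is symmetrization of the quadratic form.

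\emph{Sufficiency.} If the symmetrized matrix is $\succ 0$ everywhere and $h\neq 0$, the integrand is strictly positive for every $s$. It is continuous in $s$ when $\nabla F$ is continuous; otherwise it is simply positive and integrable, so the integral is still strictly positive. This gives $\lambda$-strict monotonicity. Similarly, if the matrix is $\succeq 2\mu I$, the integrand is at least $2\mu\|h\|_2^2$, so the inner product is at least $\mu\|h\|_2^2$. This gives $(\mu,\lambda)$-strong monotonicity.

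\emph{Necessity for strong monotonicity.} Fix $z$ and a direction $h$ with $z+th\in\sZ$ for small $t>0$. Applying the definition to $z+th$ and $z$ and dividing by $t^2$ gives
\begin{equation*}
\frac{1}{t}\bigl\langle h, F(z+th)-F(z)\bigr\rangle_\lambda\ge \mu\|h\|_2^2 .
\end{equation*}
Letting $t\to0$, differentiability yields $h^T\Lambda\nabla F(z)h\ge\mu\|h\|_2^2$, that is, $\tfrac12 h^T(\Lambda\nabla F+\nabla F^T\Lambda)h\ge\mu\|h\|_2^2$.

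The main obstacle is that $\sZ$ is a product of simplices, which has empty interior in $\R^N$. Feasible directions at interior points of $\sZ$ only span the tangent space $\{h:\sum_{\text{block}}h=0\}$, so this argument gives the matrix inequality on that subspace, and on all of $\R^N$ only if the matrix conditions are read as restricted to that subspace. I would handle this by stating the "only if" direction for directions $h$ in the affine hull of $\sZ$, which is exactly what every later application uses, since differences of strategies lie there. For a non-interior $z$ (on the boundary), I would extend by continuity of $\nabla F$ from the relative interior, or else argue directly with all feasible directions, which span the same subspace at relative-interior points. With that reading the equivalence holds, and the proof is complete.
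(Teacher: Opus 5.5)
Your argument is essentially the paper's: both write $\langle z'-z,F(z')-F(z)\rangle_\lambda$ as $\int_0^1 v^T S(z+tv)v\,\diff t$ with $v=z'-z$ and $S=\tfrac12(\Lambda\nabla F+\nabla F^T\Lambda)$. Both then bound the integrand for sufficiency and take difference quotients along the segment for necessity.

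Your caveat on the ``only if'' direction is correct, and it marks exactly where the paper's proof overreaches. The paper passes from $v^T(\Lambda\nabla F+\nabla F^T\Lambda)v\ge 2\mu\|v\|_2^2$ for $v=z'-z$ only to $\succeq 2\mu I$ on all of $\R^N$, which fails in general. For example, on $\Delta_2$ the map $F(z)=(I-10\,\ind\ind^T)z$ satisfies the definition with $\mu=1$, $\lambda=\ind$, yet $\nabla F+\nabla F^T$ has eigenvalue $-38$ along $\ind$. So the tangent-subspace reading you adopt is the right statement.

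One small refinement: at boundary points you do not need continuity of $\nabla F$. Apply the definition to $z+ta$ and $z+tb$ for two feasible directions $a,b$, divide by $t^2$, and let $t\to 0$; the differences $a-b$ already span the tangent space.
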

\begin{proof}
    We provide a proof for general $\Lambda$.
    For $z,z'\in \sZ$ and $v=z'-z$, let
    \begin{equation}
        S(z)=\frac{\Lambda \nabla F(z)+\nabla F(z)^T \Lambda}{2},
    \end{equation}
    define a function $\varphi(t):=z+tv,t\in[0,1]$, we have that $\varphi(0)=z$ and $\varphi(1)=z'$. Consider the function
    \begin{equation}
        g(t):=F(\varphi(t))^T\Lambda v,
    \end{equation}
    we have that:
    \begin{equation}
        \begin{aligned}
            &\langle F(z')-F(z),z'-z\rangle_\Lambda\\
            =&\langle F(z')-F(z),v\rangle_\Lambda\\
            =&\left(F(z')-F(z)\right)^T\Lambda v\\
            =&g(1)-g(0).
        \end{aligned}
    \end{equation}
    For $g(1)-g(0)$, we can write is as the following integral:
    \begin{equation}
        \begin{aligned}
            g(1)-g(0)=&\int_0^1g'(t)\diff t\\
            =&\int_0^1\frac{\diff}{\diff t}\left(F(\varphi(t))^T\Lambda v\right)\diff t\\
            =&\int_0^1 v^T \nabla F(\varphi(t))^T\Lambda v\diff t.
        \end{aligned}
    \end{equation}
    Also, a symmetric expansion of $\frac{\diff}{\diff t}\left(F(\varphi(t))^T\Lambda v\right)$ yields:
    \begin{equation}
        g(1)-g(0)=\int_0^1 v^T \Lambda \nabla F(\varphi(t)) v\diff t.
    \end{equation}
    Therefore, we have that:
    \begin{equation}
        g(1)-g(0)=\int_0^1 v^T S(\varphi(t))v\diff t.
    \end{equation}
    \paragraph{From Jacobian condition to monotonicity.} If $S(z)\succ 0$, we have that $v^T S(\varphi(t))v>0$ for all $t$, and therefore, $\langle F(z')-F(z),z'-z\rangle_\Lambda=g(1)-g(0)>0$, $\lambda$-strict monotonicity holds.
    
    If we further have
    \begin{equation}
        S(z)\succeq \mu I,
    \end{equation}
    it holds that
    \begin{equation}\begin{aligned}
        &\langle F(z')-F(z),z'-z\rangle_\Lambda\\
        =&g(1)-g(0)\\
        =&\int_0^1 v^T S(\varphi(t))v\diff t\\
        \geq & \mu \int_0^1 v^T v\diff t\\
        =&\mu\|v\|_2^2\\
        =&\mu\|z'-z\|_2^2,
    \end{aligned}\end{equation}
    indicating that $(\mu,\lambda)$-strong monotonicity holds.
    \paragraph{From strong monotonicity to Jacobian Condition.} If $F$ is $(\mu,\lambda)$-strongly monotone, for all $t_1<t_2, t_1,t_2\in [0,1]$, we have that
    \begin{equation}
        (t_2-t_1)\left(g(t_2)-g(t_1)\right)=\langle F(z+t_2v)-F(z+t_1v),t_2v-t_1v \rangle_\Lambda\geq \mu(t_2-t_1)^2\|v\|_2^2
    \end{equation}
    and therefore, 
    \begin{equation}
        g(t_2)-g(t_1)\geq \mu(t_2-t_1)\|v\|_2^2.
    \end{equation}
    Since $F$ is differentiable, we have that $g'(t)\geq \mu\|v\|_2^2$. Now we expand $g'(t)$ to obtain:
    \begin{equation}
        g'(t)=\frac{\diff}{\diff t}\left(F(\varphi(t))^T\Lambda v\right)=v^T\nabla F(\varphi(t))^T\Lambda v
    \end{equation}
    and similarly using symmetry,
    \begin{equation}
        g'(t)=v^T\Lambda\nabla F(\varphi(t)) v
    \end{equation}
    we have that:
    \begin{equation}
        v^T\left(\Lambda\nabla F(\varphi(t))+\nabla F(\varphi(t))^T\Lambda \right)v\geq 2\mu v^T v,
    \end{equation}
    that gives:
    \begin{equation}
        \Lambda\nabla F(\varphi(t))+\nabla F(\varphi(t))^T\Lambda \succeq 2\mu I.
    \end{equation}
    This completes the proof.
\end{proof}

\begin{lemma}\label{lem:policies_bounded_below}
    Consider a normal form game with bounded payoff matrices $\bR$, under the following cases the RQE of the game is bounded away from zero:
    \begin{enumerate}
        \item $D_i(\cdot,\cdot)$ are KL-divergence and $\nu_i(\cdot)$ are log-barrier functions, then $\pi_i^*\geq \frac{\epsilon_i}{\epsilon_i |\sA_i|+\textup{sp}(\bR)}$;
        \item $D_i(\cdot,\cdot)$ are reverse KL-divergence and $\nu_i(\cdot)$ are negative entropy, then $\pi_i^*\geq \frac{1}{|\sA_i|}\exp\left(-\textup{sp}(\bR)/\epsilon_i\right)$ and $p_i\geq \frac{\exp\left(-\textup{sp}(\bR)/\epsilon_i\right)}{|\sA_i|(|\sA_{-i}|+\tau_i\textup{sp}(\bR)/\epsilon_i)}$.
    \end{enumerate}
\end{lemma}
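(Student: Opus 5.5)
We will prove both lower bounds from the first-order (KKT) optimality conditions at the equilibrium, as written in the proof of \Cref{thm:RQE_property_4player_monotonicity}: for the original player, $-R_ip_i^*+\epsilon_i\nabla\nu_i(\pi_i^*)-\lambda(\pi_i)+\mu(\pi_i)\ind=0$, and for the adversary, $R_i^T\pi_i^*+\frac{1}{\tau_i}\nabla_pD_i(p_i^*,\pi_{-i}^*)-\lambda(p_i)+\mu(p_i)\ind=0$. In both cases the regularizer has a gradient that blows up at the boundary of the simplex. Hence the optimum is interior, complementary slackness forces $\lambda(\cdot)=0$, and the conditions become equalities that can be solved coordinatewise. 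The bound then comes from comparing the smallest and largest coordinates, using that $(R_ip_i^*)_a$ ranges over an interval of length at most $\textup{sp}(\bR)$.

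\emph{Case 1 (log-barrier).} Here $\nabla\nu_i(\pi)=-1/\pi$ coordinatewise, so $\epsilon_i/\pi_i^*(a)=\mu(\pi_i)-(R_ip_i^*)_a$ for every $a$. Let $a_0$ be a coordinate with $\pi_i^*(a_0)\ge 1/|\sA_i|$. Then $\mu(\pi_i)-(R_ip_i^*)_{a_0}\le\epsilon_i|\sA_i|$. For any other $a$,
\[
\frac{\epsilon_i}{\pi_i^*(a)}\le \epsilon_i|\sA_i|+\bigl((R_ip_i^*)_{a_0}-(R_ip_i^*)_a\bigr)\le\epsilon_i|\sA_i|+\textup{sp}(\bR),
\]
which gives the claim. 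Only the player's own condition is used, so the choice of $D_i$ is irrelevant here.

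\emph{Case 2 (entropy / reverse KL).} With negative entropy, the player's condition gives the Gibbs form $\pi_i^*(a)\propto\exp((R_ip_i^*)_a/\epsilon_i)$. The ratio of any two weights is at least $\exp(-\textup{sp}(\bR)/\epsilon_i)$, so $\pi_i^*(a)\ge\frac1{|\sA_i|}\exp(-\textup{sp}(\bR)/\epsilon_i)$. The same bound holds for $\pi_{-i}^*$ with $\epsilon_{-i}$.

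For the adversary, we use $\nabla_p D_i(p,\pi)=\log(p/\pi)+\ind$, so $\log(p_i^*(b)/\pi_{-i}^*(b))=-\tau_i(R_i^T\pi_i^*)_b+\text{const}$. This yields $p_i^*(b)\propto\pi_{-i}^*(b)\exp(-\tau_i(R_i^T\pi_i^*)_b)$. Using the lower bound on $\pi_{-i}^*$ together with the spread of the exponent then bounds $p_i^*$ from below.

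The expected obstacle is matching the stated form $\frac{\exp(-\textup{sp}(\bR)/\epsilon_i)}{|\sA_i|(|\sA_{-i}|+\tau_i\textup{sp}(\bR)/\epsilon_i)}$. That form suggests the intended argument is the log-barrier-style manipulation applied to the adversary equation: write $\frac{1}{\tau_i}\log\frac{p}{\pi}$ and bound the multiplier using a coordinate with $p_i^*(b_0)\ge 1/|\sA_{-i}|$. One would then apply an inequality like $\log x\ge 1-1/x$ to linearize, which produces a denominator of the form $|\sA_{-i}|+\tau_i\cdot(\text{spread})$. I would first try this linearization. If the constants do not line up exactly, I would instead prove the (stronger) exponential bound $p_i^*(b)\ge\frac{\min_b\pi_{-i}^*(b)}{|\sA_{-i}|}e^{-\tau_i\textup{sp}(\bR)}$ and note how it relates to the stated one. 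I would also flag the subscripts: whether $\epsilon_i$ or $\epsilon_{-i}$ should appear depends on which player's policy $\pi_{-i}^*$ enters, and this needs checking.
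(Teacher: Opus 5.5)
Your Case~1 and the $\pi_i^*$ bound of Case~2 are correct and match the paper's proof. In both you set the inequality multiplier to zero, use the reciprocal form $\epsilon_i/\pi_i^*(a)=\mu(\pi_i)-(R_ip_i^*)_a$ controlled through one heavy coordinate, and use the Gibbs form for entropy. Picking any $a_0$ with $\pi_i^*(a_0)\ge 1/|\sA_i|$ is, if anything, cleaner than the paper's appeal to $a_{\max}$.

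The gap is the adversary bound in Case~2, where you used the wrong divergence. In this paper ``reverse KL'' means $D_i(p_i,\pi_{-i})=\sum_b\pi_{-i}(b)\log\frac{\pi_{-i}(b)}{p_i(b)}$, so $\nabla_{p_i}D_i=-\pi_{-i}/p_i$. This is the gradient the paper's proof uses, and it matches the Hessian $\nabla^2_{p_i}D_i=\diag(\pi_{-i})\diag(p_i)^{-2}$ in the proof of \Cref{thm:monotonicity_condition}. The gradient $\log(p/\pi)+\ind$ you wrote is that of the forward KL of Case~1, and under that reading the stated bound is false. Take $|\sA_{-i}|=2$, let the two columns of $R_i$ be constantly $0$ and $s$, and set $R_{-i}\equiv 0$. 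Then $\pi_i^*$ and $\pi_{-i}^*$ are uniform, and $p_i^*\propto\pi_{-i}^*e^{-\tau_iR_i^T\pi_i^*}$ puts mass at most $e^{-\tau_i s}$ on the second action. For large $\tau_i$ this is below any bound decaying like $1/\tau_i$, which is what the stated form does.

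Two consequences follow. First, no linearization of the logarithm (e.g.\ via $\log x\ge 1-1/x$) can produce the denominator $|\sA_{-i}|+\tau_i(\cdot)$. Second, your fallback $e^{-\tau_i\textup{sp}(\bR)}$ bound is not stronger than the stated one but exponentially weaker in $\tau_i$, so it does not imply it.

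With the correct gradient, the adversary's stationarity condition is $\frac{\pi_{-i}^*(b)}{\tau_ip_i^*(b)}=(R_i^T\pi_i^*)_b+\mu(p_i)$. This has exactly the reciprocal structure of the log-barrier case, so your Case~1 argument transfers. A coordinate $b_0$ with $p_i^*(b_0)\ge 1/|\sA_{-i}|$ gives $\tau_i\bigl((R_i^T\pi_i^*)_{b_0}+\mu(p_i)\bigr)\le|\sA_{-i}|$. Hence $p_i^*(b)\ge\frac{\pi_{-i}^*(b)}{|\sA_{-i}|+\tau_i\textup{sp}(\bR)}\ge\frac{\exp(-\textup{sp}(\bR)/\epsilon_{-i})}{|\sA_{-i}|(|\sA_{-i}|+\tau_i\textup{sp}(\bR))}$. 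This is the stated form up to the index slips you suspected: $\epsilon_{-i}$ and $|\sA_{-i}|$ in place of $\epsilon_i$ and $|\sA_i|$, and no extra $1/\epsilon_i$. Note that the paper itself only asserts the $p_i$ bound without derivation.
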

\begin{proof}
    Consider the gradient operator of the game:
    \begin{equation}
        F(z;\bR)=\begin{bmatrix}
        -R_1p_1+\epsilon_1\nabla\nu_1(\pi_1)\\
        -R_2p_2+\epsilon_2\nabla\nu_2(\pi_2)\\
        R_1^T\pi_1+\frac{1}{\tau_1}\nabla_{p_1}D_1(p_1,\pi_2)\\
        R_2^T\pi_2+\frac{1}{\tau_2}\nabla_{p_2}D_2(p_2,\pi_1)
    \end{bmatrix}
    \end{equation}
    \textit{Case 1:} When $D_i$ are KL-divergence and $\nu_i$ are log-barrier functions, their gradients can be written as:
\begin{equation}
    \nabla \nu_i(\pi_i)=-(\pi_i)^{-1}; \nabla_{p_i} D_i(p_i,\pi_{-i})=\log\frac{p_i}{\pi_{-i}}+1.
\end{equation}
Since $\nu_1(\pi_1)$ and $\nu_2(\pi_2)$ goes to infinity on the boundary of the simplex, the RQE $\pi^*$ must satisfy the KKT condition for its optimality:
\begin{equation}
    R_ip_i^*-\mu1=\epsilon_i\nabla\nu_i(\pi_i^*)=-\frac{\epsilon_i}{\pi_i^*}.
\end{equation}
which implies that $\pi_i^*$ has the form:
\begin{equation}
    \pi_i^*(a)=\frac{\epsilon_i}{\mu-R_ip_i^*(a)}
\end{equation}
here we overload the notation $\mu$ to denote Lagrange multiplier such that:
\begin{equation}
    \sum_{a\in \mathcal{A}_i}\frac{\epsilon_i}{\mu-R_ip_i^*(a)}=1.
\end{equation}
Let $a_{\max}=\argmax_{a\in \sA_i}R_ip_i^*(a)$, since the probability of choosing $a_{\max}$ is at least $\frac{1}{|\sA_i|}$, we have:
\begin{equation}
    \frac{\epsilon_i}{\mu-R_ip_i^*(a_{\max})}\geq \frac{1}{|\sA_i|}
\end{equation}
which implies:
\begin{equation}
    \mu\leq R_ip_i^*(a_{\max}) + \epsilon_i |\sA_i|
\end{equation}
This leads to the conclusion that:
\begin{equation}
    \pi_i^*(a)\geq \frac{\epsilon_i}{\epsilon_i|\sA_i|+R_ip_i^*(a_{\max})-R_ip_i^*(a)}\geq \frac{\epsilon_i}{\epsilon_i |\sA_i|+\textup{sp}(\bR)}.
\end{equation}

\textit{Case 2:} When $D_i$ are reverse KL-divergence and $\nu_i$ are negative entropy, their gradients can be written as:
\begin{equation}
    \nabla \nu_i(\pi_i)=\log \pi_i+1;\nabla_{p_i} D_i(p_i,\pi_{-i})=-\frac{\pi_{-i}}{p_i}.
\end{equation}
Similarly, we have that
\begin{equation}
    \pi_i^*(a)=\frac{\exp\left(\frac{R_ip_i^*(a)}{\epsilon_i}\right)}{\sum_{a'} \exp\left(\frac{R_ip_i^*(a)}{\epsilon_i}\right)}\geq \frac{1}{|\sA_i|}\exp\left(-\textup{sp}(\bR)/\epsilon_i\right)
\end{equation}
and additionally we have:
\begin{equation}
    p_i\geq \frac{\exp\left(-\textup{sp}(\bR)/\epsilon_i\right)}{|\sA_i|(|\sA_{-i}|+\tau_i\textup{sp}(\bR)/\epsilon_i)}.
\end{equation}
\end{proof}

\begin{lemma}\label{lem:zstar_projection}
    Let $F$ be a strictly $\lambda$-monotone mapping and $z^*$ be the unique solution to the $\lambda$-weighted variational inequality:
    \begin{equation}
        \langle z^*-z, F(z^*)\rangle_\lambda \leq 0, \forall z\in\Delta
    \end{equation}
    in some closed convex set $\Delta$, let $\Proj$ denote the Euclidean projection onto $\Delta$, we have that:
    \begin{equation}
        \Proj(z^*-\eta \Lambda F(z^*))=z^*,
    \end{equation}
    where $\Lambda=\Diag(\lambda)$.
\end{lemma}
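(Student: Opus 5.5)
The plan is to use the standard variational characterization of the Euclidean projection onto a closed convex set. For any $y$ and $x\in\Delta$,
\[
x=\Proj(y) \iff \langle y-x, z-x\rangle\le 0 \quad \forall z\in\Delta .
\]
So it suffices to take $y=z^*-\eta\Lambda F(z^*)$ and $x=z^*$, and to verify that $\langle y-z^*, z-z^*\rangle\le 0$ for every $z\in\Delta$.

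For the verification step, we have $y-z^*=-\eta\Lambda F(z^*)$. The required quantity is therefore
\[
\langle -\eta\Lambda F(z^*), z-z^*\rangle=\eta\,\langle z^*-z,\Lambda F(z^*)\rangle=\eta\,\langle z^*-z,F(z^*)\rangle_\lambda .
\]
Here we use that $\Lambda=\Diag(\lambda)$ is symmetric, so that $\langle a,\Lambda b\rangle=\sum_j\lambda_j a_j b_j=\langle a,b\rangle_\lambda$. By the hypothesis that $z^*$ solves the $\lambda$-weighted variational inequality, this is $\le 0$ for all $z\in\Delta$, provided $\eta\ge0$; I would state that $\eta>0$ is a step size. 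The projection characterization then gives $\Proj(z^*-\eta\Lambda F(z^*))=z^*$.

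I expect no real obstacle. The only points needing care are recalling (or briefly proving, via minimizing $\|y-x\|^2$ along the segment $x+t(z-x)$) the obtuse-angle characterization of projection, and noting that strict monotonicity is used only to make $z^*$ well defined and unique, not in the fixed-point argument. When $\Delta$ is a product of simplices, the projection acts blockwise, but the argument above applies to the product directly since the product is itself closed and convex.
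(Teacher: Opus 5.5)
Your proposal is correct and follows the paper's proof essentially verbatim: both use the obtuse-angle characterization of the Euclidean projection, and with $\hat z=z^*$ the required inequality reduces to $\eta\langle z^*-z,F(z^*)\rangle_\lambda\le 0$, which is exactly the weighted variational inequality. Your two side remarks are also accurate and are left implicit in the paper: that $\eta\ge 0$ is needed, and that strict monotonicity serves only to make $z^*$ unique.
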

\begin{proof}
    Recall that Euclidean projection $\hat{z}$ of $z^*-\eta\Lambda F(z^*)$ is characterized by:
    \begin{equation}
        \langle z^*-\eta\Lambda F(z^*)-\hat{z},z-\hat{z} \rangle \leq 0,\forall z\in\Delta
    \end{equation}
    Notice that the optimality condition of $z_t^*$:
    \begin{equation}
        \langle z^*-z, F(z^*)\rangle_\lambda=(z^*-z)^T\Lambda F(z^*)\leq 0, \forall z\in\Delta
    \end{equation}
    would immediately imply that taking $\hat{z}=z^*$ satisfies the characterization, conclude that $z^*$ is the projection.
\end{proof}

\begin{lemma}\label{lem:ragulrizer_smoothness->F_Lipschitz}
    If the payoff matrix $\bR$ is bounded and the regularizers  $\nu_i(\cdot)$ are $S_\nu$-smooth and $D_i(p_i,\pi_{-i})$ are $S_D$-smooth, then the gradient operator given by:
    \begin{equation}
        F(z;\bR)=\begin{bmatrix}
        -R_1p_1+\epsilon_1\nabla\nu_1(\pi_1)\\
        -R_2p_2+\epsilon_2\nabla\nu_2(\pi_2)\\
        R_1^T\pi_1+\frac{1}{\tau_1}\nabla_{p_1} D_1(p_1,\pi_2)\\
        R_2^T\pi_2+\frac{1}{\tau_2}\nabla_{p_2} D_2(p_2,\pi_1)
    \end{bmatrix}
    \end{equation}
    is $L_F$-Lipschitz continuous for
    \begin{equation}
        L_F=2\sqrt{|\sA_1||\sA_2|}\textup{sp}(\bR)+(\epsilon_1+\epsilon_2)S_\nu+(\frac{1}{\tau_1}+\frac{1}{\tau_2})S_D.
    \end{equation}
\end{lemma}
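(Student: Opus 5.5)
The plan is to bound $\|F(z;\bR)-F(z';\bR)\|_2$ block by block and add the four contributions with the triangle inequality. Write $z=(\pi_1,\pi_2,p_1,p_2)$ and $z'=(\pi_1',\pi_2',p_1',p_2')$. The four blocks of the difference are
\begin{equation*}
-R_1(p_1-p_1')+\epsilon_1\bigl(\nabla\nu_1(\pi_1)-\nabla\nu_1(\pi_1')\bigr),\qquad R_1^T(\pi_1-\pi_1')+\tfrac{1}{\tau_1}\bigl(\nabla_{p}D_1(p_1,\pi_2)-\nabla_pD_1(p_1',\pi_2')\bigr),
\end{equation*}
together with the analogous two blocks for player $2$. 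Each block splits into a bilinear payoff part and a regularizer-gradient part, and I treat these two parts separately before summing.

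For the regularizer parts I use the smoothness assumptions directly. $S_\nu$-smoothness gives $\|\nabla\nu_i(\pi_i)-\nabla\nu_i(\pi_i')\|_2\le S_\nu\|\pi_i-\pi_i'\|_2\le S_\nu\|z-z'\|_2$. For $D_i$, $\nabla_p D_i$ is a sub-vector of the full gradient $\nabla D_i$ with respect to the pair $(p_i,\pi_{-i})$. So $S_D$-smoothness of $D_i$ in its joint argument gives $\|\nabla_pD_i(p_i,\pi_{-i})-\nabla_pD_i(p_i',\pi_{-i}')\|_2\le S_D\|(p_i-p_i',\pi_{-i}-\pi_{-i}')\|_2\le S_D\|z-z'\|_2$. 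Summing over the four blocks contributes $(\epsilon_1+\epsilon_2)S_\nu+(\tfrac1{\tau_1}+\tfrac1{\tau_2})S_D$ to the constant.

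For the payoff parts I bound the operator norm of $R_i$ (and of $R_i^T$) by its Frobenius norm, $\|R_i\|_2\le\sqrt{|\sA_1||\sA_2|}\,\|R_i\|_{\max}$. Summing $\|R_1\|_2+\|R_1^T\|_2+\|R_2\|_2+\|R_2^T\|_2$ gives a factor of the form $2\sqrt{|\sA_1||\sA_2|}\,(\|R_1\|_{\max}+\|R_2\|_{\max})$.

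The main obstacle is reconciling this with the stated constant, which has $\textup{sp}(\bR)$ rather than a max-norm. The max-norm is not invariant under constant shifts, but I will argue the shift is harmless: replacing $R_i$ by $R_i-c\,\ind\ind^T$ changes $-R_ip_i$ by $c\ind$ (since $\ind^Tp_i=1$) and $R_i^T\pi_i$ by $-c\ind$. These are constant vectors, orthogonal to the tangent space of the simplex, so they do not affect the Euclidean projection onto the simplex and leave the game and the iteration unchanged. I therefore choose $c$ as the midpoint of the entries of $R_i$, which gives $\|R_i\|_{\max}\le\tfrac12\textup{sp}(R_i)\le \tfrac12\textup{sp}(\bR)$. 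Under this centering, $\|R_1\|_{\max}+\|R_2\|_{\max}\le\textup{sp}(\bR)$, so the payoff contribution $2\sqrt{|\sA_1||\sA_2|}\,(\|R_1\|_{\max}+\|R_2\|_{\max})$ is at most $2\sqrt{|\sA_1||\sA_2|}\,\textup{sp}(\bR)$. Combining this with the regularizer contribution gives $L_F$.
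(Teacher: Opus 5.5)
Your proposal is correct and follows essentially the same route as the paper. You split $F(z;\bR)-F(z';\bR)$ into a bilinear payoff part and a regularizer-gradient part, bound the payoff part through the Frobenius norm, and bound the regularizer part by $S_\nu$- and $S_D$-smoothness; your centering step is exactly what makes the paper's terse step ``(i) uses that $\pi_i,p_i$ are on the simplex'' rigorous.

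The cleanest justification for the centering is that $\ind^T(p_i-p_i')=\ind^T(\pi_i-\pi_i')=0$, so $R_i(p_i-p_i')=(R_i-c\,\ind\ind^T)(p_i-p_i')$, and likewise for $R_i^T$. Hence $F(z;\bR)-F(z';\bR)$ is literally unchanged by the shift. The appeal to invariance of the projection and of the iteration is unnecessary, since the lemma concerns $F$ itself.
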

\begin{proof}
    To prove the Lipschitz continuity of $F$, we consider the difference between $F(z;\bR)$ and $F(z';\bR)$ for $z=(\pi_1,\pi_2,p_1,p_2)$ and $z'=(\pi_1',\pi_2',p_1',p_2')$. We have:
    \begin{equation}\begin{aligned}
        &\|F(z;\bR)-F(z';\bR)\|_2\\
        =&\left\|\begin{bmatrix}
        -R_1(p_1-p_1')+\epsilon_1(\nabla\nu_1(\pi_1)-\nabla\nu_1(\pi_1'))\\
        -R_2(p_2-p_2')+\epsilon_2(\nabla\nu_2(\pi_2)-\nabla\nu_2(\pi_2'))\\
        R_1^T(\pi_1-\pi_1')+\frac{1}{\tau_1}(\nabla_{p_1} D_1(p_1,\pi_2)-\nabla_{p_1}D_1(p_1',\pi_2'))\\
        R_2^T(\pi_2-\pi_2')+\frac{1}{\tau_2}(\nabla_{p_2} D_2(p_2,\pi_1)-\nabla_{p_2} D_2(p_2',\pi_1'))
        \end{bmatrix}\right\|_2\\
        \leq &\left\|\begin{bmatrix}
        -R_1(p_1-p_1')\\
        -R_2(p_2-p_2')\\
        R_1^T(\pi_1-\pi_1')\\
        R_2^T(\pi_2-\pi_2')
        \end{bmatrix}\right\|_2+\left\|\begin{bmatrix}
        \epsilon_1(\nabla\nu_1(\pi_1)-\nabla\nu_1(\pi_1'))\\
        \epsilon_2(\nabla\nu_2(\pi_2)-\nabla\nu_2(\pi_2'))\\
        \frac{1}{\tau_1}(\nabla_{p_1} D_1(p_1,\pi_2)-\nabla_{p_1}D_1(p_1',\pi_2'))\\
        \frac{1}{\tau_2}(\nabla_{p_2} D_2(p_2,\pi_1)-\nabla_{p_2} D_2(p_2',\pi_1'))
        \end{bmatrix}\right\|_2\\
        \stackrel{\text{(i)}}{\leq} &2\sqrt{|\sA_1||\sA_2|}\textup{sp}(\bR)\|z-z'\|_2+\epsilon_1\|\nabla\nu_1(\pi_1)-\nabla\nu_1(\pi_1')\|_2+\epsilon_2\|\nabla\nu_2(\pi_2)-\nabla\nu_2(\pi_2')\|_2\\
        &+\frac{1}{\tau_1}\|\nabla_{p_1} D_1(p_1,\pi_2)-\nabla_{p_1}D_1(p_1',\pi_2')\|_2+\frac{1}{\tau_2}\|\nabla_{p_2} D_2(p_2,\pi_1)-\nabla_{p_2} D_2(p_2',\pi_1') \|_2\\
        \leq &2\sqrt{|\sA_1||\sA_2|}\textup{sp}(\bR)\|z-z'\|_2+\epsilon_1 S_\nu\|\pi_1-\pi_1'\|_2+\epsilon_2S_\nu\|\pi_2-\pi_2'\|_2\\
        &+\frac{S_D}{\tau_1}\|(p_1,\pi_2)-(p_1',\pi_2')\|_2+\frac{S_D}{\tau_2}\|(p_2,\pi_1)-(p_2',\pi_1') \|_2\\
        \leq & \left(2\sqrt{|\sA_1||\sA_2|}\textup{sp}(\bR)+(\epsilon_1+\epsilon_2)S_\nu+(\frac{1}{\tau_1}+\frac{1}{\tau_2})S_D\right)\|z-z'\|_2,
    \end{aligned}\end{equation}
    where we have used the fact that $\pi_i,p_i$ are on the simplex in (i). This completes the proof.
\end{proof}

\begin{lemma}\label{lem:cov_bound_alpha_mixing}[Covariance bound for $\alpha$-mixing sequences, Lemma 1.2 in \citep{ibragimov1962some}]
Let $U$ and $V$ be bounded random variables such that $U$ is measurable with respect to 
$\mathcal{F}_{-\infty}^t$ and $V$ with respect to $\mathcal{F}_{t+k}^{+\infty}$. 
Then
\begin{equation}
    \big| \operatorname{Cov}(U,V) \big|
    \;\le\;
    4\,\alpha(k)\,\|U\|_\infty\,\|V\|_\infty,
\end{equation}
where
\begin{equation}
    \alpha(k)=\sup_{t\in \Z}\sup_{A\in\mathcal{F}_{-\infty}^t,B\in\mathcal{F}_{t+k}^{+\infty}}|\Pr(A\cap B)-\Pr(A)\Pr(B)|
\end{equation}
$\alpha(k)$ is the $\alpha$-mixing coefficient at lag $k$.
\end{lemma}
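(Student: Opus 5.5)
The plan is the classical two-step "sign-reduction" argument. It replaces $U$ and then $V$ by $\pm1$-valued random variables measurable in the respective $\sigma$-fields, at a cost of a factor $\|U\|_\infty$ and $\|V\|_\infty$ respectively. The resulting covariance of two sign variables is then bounded directly by event probabilities, which is exactly what the $\alpha$-mixing coefficient controls. Throughout I write $\mathcal{G}=\mathcal{F}_{-\infty}^t$ and $\mathcal{H}=\mathcal{F}_{t+k}^{+\infty}$. Since $\alpha(k)$ is a supremum over $t$, it suffices to fix $t$ and bound $|\Pr(A\cap B)-\Pr(A)\Pr(B)|\le\alpha(k)$ for $A\in\mathcal{G}$, $B\in\mathcal{H}$.

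\textbf{Step 1 (remove $U$).} Using that $U$ is $\mathcal{G}$-measurable, I write $\operatorname{Cov}(U,V)=\E\bigl[U\,(\E[V|\mathcal{G}]-\E V)\bigr]$. This gives $|\operatorname{Cov}(U,V)|\le\|U\|_\infty\,\E\bigl|\E[V|\mathcal{G}]-\E V\bigr|$. Next I set $\xi=\operatorname{sign}(\E[V|\mathcal{G}]-\E V)$, which is $\mathcal{G}$-measurable and takes values in $\{\pm1\}$. Then $\E|\E[V|\mathcal{G}]-\E V|=\E[\xi(\E[V|\mathcal{G}]-\E V)]=\operatorname{Cov}(\xi,V)$. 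Hence $|\operatorname{Cov}(U,V)|\le\|U\|_\infty\operatorname{Cov}(\xi,V)$.

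\textbf{Step 2 (remove $V$).} Symmetrically, $\operatorname{Cov}(\xi,V)=\E[V(\E[\xi|\mathcal{H}]-\E\xi)]\le\|V\|_\infty\,\E|\E[\xi|\mathcal{H}]-\E\xi|$. With $\eta=\operatorname{sign}(\E[\xi|\mathcal{H}]-\E\xi)$, which is $\mathcal{H}$-measurable and $\{\pm1\}$-valued, this equals $\|V\|_\infty\operatorname{Cov}(\xi,\eta)$. Combining the two steps gives $|\operatorname{Cov}(U,V)|\le\|U\|_\infty\|V\|_\infty\operatorname{Cov}(\xi,\eta)$.

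\textbf{Step 3 (sign variables).} Let $A_\pm=\{\xi=\pm1\}\in\mathcal{G}$ and $B_\pm=\{\eta=\pm1\}\in\mathcal{H}$. Expanding,
\begin{equation*}
\operatorname{Cov}(\xi,\eta)=\sum_{a,b\in\{\pm\}}ab\,\bigl(\Pr(A_a\cap B_b)-\Pr(A_a)\Pr(B_b)\bigr).
\end{equation*}
Each of the four terms is at most $\alpha(k)$ in absolute value, so $\operatorname{Cov}(\xi,\eta)\le4\alpha(k)$, which yields the claim.

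The only delicate point is Step 1/2's bookkeeping. The sign variable must be measurable in the correct $\sigma$-field: $\xi$ must be $\mathcal{G}$-measurable and $\eta$ must be $\mathcal{H}$-measurable. This is what lets the final covariance be expressed through events in $\mathcal{G}$ and $\mathcal{H}$ alone. The convention $\operatorname{sign}(0)=1$ handles ties harmlessly, since those points contribute zero to the integrands.
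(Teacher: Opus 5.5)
Your argument is correct. It is the classical sign-reduction proof of Ibragimov's Lemma 1.2: condition on $\mathcal{F}_{-\infty}^t$ to replace $U$ by a $\pm1$-valued variable, do the same for $V$ using $\mathcal{F}_{t+k}^{+\infty}$, and split $\operatorname{Cov}(\xi,\eta)$ into four event-covariances, each bounded by $\alpha(k)$. The paper gives no proof of its own and only cites that reference, so your proposal supplies essentially the intended argument.
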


\begin{lemma}\label{lem:ergodic_to_mixing}[\cite{meyn2012markov}]
    If a Markov chain $\{X_t\}$ is uniformly geometrically ergodic, then the chain is geometrically $\alpha$-mixing with $\alpha(k)\leq 2C\rho^k$.
\end{lemma}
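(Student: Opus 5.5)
This is a standard result from the theory of Markov chains (Meyn and Tweedie), so I would prove it directly from the definition of the $\alpha$-mixing coefficient, using the Markov property to reduce everything to one-step total-variation bounds. Fix $t$ and lag $k$, and take $A\in\sF_{-\infty}^t$ and $B\in\sF_{t+k}^{+\infty}$. By the Markov property there is a measurable function $h_B(x)=\Pr(B\mid X_{t+k}=x)$ with values in $[0,1]$ such that $\Pr(B\mid \sF_{-\infty}^t)=\E[h_B(X_{t+k})\mid X_t]=(P^k h_B)(X_t)$. Therefore
\begin{equation*}
\Pr(A\cap B)-\Pr(A)\Pr(B)=\E\big[\ind_A\big((P^kh_B)(X_t)-\E[(P^kh_B)(X_t)]\big)\big],
\end{equation*}
and it suffices to bound $\sup_x|(P^kh_B)(x)-\E[(P^kh_B)(X_t)]|$ uniformly in $B$.

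Next I would insert the stationary distribution $\mu$ from \Cref{assumption:uniformly_geometrically_ergodic}. For any $x$,
\begin{equation*}
|(P^kh_B)(x)-\mu(h_B)|=\Big|\sum_{y}\big(P^k(x,y)-\mu(y)\big)h_B(y)\Big|\le \|P^k(x,\cdot)-\mu\|_\TV\le C\rho^k,
\end{equation*}
since $0\le h_B\le 1$. Here I use the convention $\|\cdot\|_\TV=\sup_{E}|\cdot(E)|$, which is what the stated assumption gives; under the $\ell_1$ convention the bound changes only by a factor of $2$. Averaging over $x\sim\mathcal{L}(X_t)$ gives $|\E[(P^kh_B)(X_t)]-\mu(h_B)|\le C\rho^k$ as well. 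By the triangle inequality,
\begin{equation*}
|(P^kh_B)(X_t)-\E[(P^kh_B)(X_t)]|\le 2C\rho^k \quad\text{almost surely.}
\end{equation*}
Multiplying by $\ind_A\le 1$ and taking expectations yields $|\Pr(A\cap B)-\Pr(A)\Pr(B)|\le 2C\rho^k$. Taking suprema over $A$, $B$ and $t$ gives $\alpha(k)\le 2C\rho^k$.

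The main subtlety, rather than a genuine obstacle, is to justify that $\Pr(B\mid\sF_{-\infty}^t)$ depends on the past only through $X_t$, with the same kernel $P^k$ applied at every time. Within the paper this is also where care is needed, because the chain is time-homogeneous only within an episode: the policy is fixed during the inner sampling loop. I would therefore state the lemma for the chain driven by the frozen policy $\pi_{t+1}$ (or by $\pi^r$ in the off-policy case). The ergodicity constants in \Cref{assumption:uniformly_geometrically_ergodic} are uniform over policies, so the bound holds uniformly as well.
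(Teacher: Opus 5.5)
Your proof is correct. It takes a genuinely different route from the paper, because the paper gives no argument for this lemma: it imports it from \citet{meyn2012markov} and uses it only through \Cref{lem:cov_bound_alpha_mixing} to get $\Cov(Y_{k_1},Y_{k_2})\le 8C\rho^{|k_1-k_2|}$. You instead derive the bound in a few lines from two ingredients: the Markov-property identity $\Pr(B\mid\sF_{-\infty}^t)=(P^kh_B)(X_t)$ with $0\le h_B\le 1$, and the total-variation bound that holds uniformly in the starting state.

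The citation buys brevity and generality, for example general state spaces. Your argument buys transparency about exactly what is used, and that matters for how the paper applies the lemma:
\begin{itemize}
\item It needs no stationarity. The bound $|\E[(P^kh_B)(X_t)]-\mu(h_B)|\le C\rho^k$ holds for any law of $X_t$. This is relevant because under $\Pr(\cdot\mid\sF_t)$ each episode starts from the leftover state $s_K$ of the previous one.
\item It isolates the real hypothesis: time-homogeneity of the kernel over the $K$ inner steps, i.e.\ a frozen sampling policy within the episode. Uniformity of $C,\rho$ over policies then comes from \Cref{assumption:uniformly_geometrically_ergodic}.
\item It shows the constant is loose. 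Since $\E[\ind_A-\Pr(A)]=0$, you can write $\Pr(A\cap B)-\Pr(A)\Pr(B)=\E[(\ind_A-\Pr(A))((P^kh_B)(X_t)-\mu(h_B))]$. This gives $\alpha(k)\le 2\Pr(A)(1-\Pr(A))\,C\rho^k\le C\rho^k/2$.
\end{itemize}

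Two small remarks. First, you can drop your caveat about the $\ell_1$ convention. For a zero-mass signed measure $\nu$ and $0\le h\le 1$ one has $|\nu(h)|\le\sup_E|\nu(E)|\le\|\nu\|_1$, so under the $\ell_1$ convention the stated bound holds verbatim and in fact improves to $C\rho^k$.

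Second, the paper actually applies the lemma to the state--action process $(s_k,\ba_k)$, not to the state chain. Your argument still goes through after integrating out the one transition $P(\cdot\mid s_t,\ba_t)$ that is not governed by $P_\pi$. The result is $\alpha(k)\le 2C\rho^{k-1}$, i.e.\ the same statement with $C$ replaced by $C/\rho$.
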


\begin{lemma}\label{lem:(a+b)^2_bound}
    For real numbers $a,b$ and constant $c>0$, it holds that:
    \begin{equation}
        (a+b)^2\leq (1+c)a^2+(1+\frac{1}{c})b^2.
    \end{equation}
    Furthermore, for real matrices $A$ and $B$, it holds that:
    \begin{equation}
        \|A+B\|_{\max}^2\leq (1+c)\|A\|_{\max}^2+(1+\frac{1}{c})\|B\|_{\max}^2.
    \end{equation}
\end{lemma}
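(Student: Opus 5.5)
The claim is the elementary Young-type inequality $(a+b)^2\le(1+c)a^2+(1+\frac1c)b^2$ for reals, together with its lift to the max-norm of matrices. I will prove the scalar statement first and then reduce the matrix statement to it through the triangle inequality.

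\emph{Scalar case.} Expand $(a+b)^2=a^2+2ab+b^2$. The only step is to control the cross term by
\[
2ab\le 2|a||b| = 2\bigl(\sqrt{c}\,|a|\bigr)\bigl(|b|/\sqrt{c}\bigr)\le c a^2+\tfrac{1}{c}b^2 .
\]
This is the AM--GM inequality, or equivalently it follows from $\bigl(\sqrt{c}\,|a|-|b|/\sqrt{c}\bigr)^2\ge 0$. Substituting it back into the expansion gives $(a+b)^2\le(1+c)a^2+(1+\frac1c)b^2$.

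\emph{Matrix case.} Here the norm is interpreted as $\|A\|_{\max}=\max_{i,j}|A_{ij}|$. The absolute value is the intended meaning in all the places the lemma is applied, since the norms there measure $Q$-function differences. This is a genuine norm, so the triangle inequality gives $\|A+B\|_{\max}\le\|A\|_{\max}+\|B\|_{\max}$. Both sides are nonnegative, so squaring preserves the inequality. Then I apply the scalar result with $a=\|A\|_{\max}$ and $b=\|B\|_{\max}$.

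An alternative route works entrywise. For each $(i,j)$ the scalar bound gives $(A_{ij}+B_{ij})^2\le(1+c)A_{ij}^2+(1+\frac1c)B_{ij}^2\le(1+c)\|A\|_{\max}^2+(1+\frac1c)\|B\|_{\max}^2$. Taking the maximum over $(i,j)$ on the left yields the claim.

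There is no real obstacle. The only point that needs care is to read $\|\cdot\|_{\max}$ as a maximum of absolute values, so that it is a norm and the squaring step is valid. The notation table writes $\max_{i,j}A_{ij}$ without the absolute value, and I will note this.
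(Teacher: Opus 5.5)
Your proof is correct and follows the paper's argument: the scalar bound comes from applying AM--GM to $2(\sqrt{c}\,a)(b/\sqrt{c})$, and the paper obtains the matrix bound entrywise, exactly as in your alternative route (your primary route via the triangle inequality for $\|\cdot\|_{\max}$ is an equally valid minor variant). You are right that $\|A\|_{\max}$ must be read as $\max_{i,j}|A_{ij}|$, since the paper's step $A_{ij}^2\le\|A\|_{\max}^2$ relies on the absolute value even though its notation table omits it.
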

\begin{proof}
    We have that
    \begin{equation}
        \begin{aligned}
            (a+b)^2=&a^2+2ab+b^2\\
            =&a^2+b^2+2(\sqrt{c}a\cdot\frac{b}{\sqrt{c}})\\
            \stackrel{\text{(i)}}{\leq} &(1+c)a^2+(1+\frac{1}{c})b^2
        \end{aligned}
    \end{equation}
    where (i) comes from the AM-GM inequality. The inequality for the matrix holds by applying the above inequality element-wise:
    \begin{equation}
        \begin{aligned}
            (A_{ij}+B_{ij})^2\leq (1+c)A_{ij}^2+(1+\frac{1}{c})B_{ij}^2\leq (1+c)\|A\|_{\max}^2+(1+\frac{1}{c})\|B\|_{\max}^2.
        \end{aligned}
    \end{equation}
    and the result holds because the inequality above holds for all $i,j$.
\end{proof}

\begin{lemma}\label{lem:stochastic_iter_convergence}
    Consider a series $\{w_t\}_{t\geq 0}$ satisfying the following inequality:
    \begin{equation}
        w_{t+1}\leq (1-k\alpha_t)w_t+C\alpha_t^2
    \end{equation}
    for some $k>0$ such that $k\alpha_t<1$, we have for all $t$:
    \begin{equation}
        w_t\leq (1-k\alpha)^{t} w_0+ \frac{C}{k}\alpha,
    \end{equation}
    if we use constant step size $\alpha_t=\alpha$, and
    \begin{equation}
        w_t\leq \left(w_0+\frac{C\alpha^2}{h^2}\frac{h+2-k\alpha}{1-k\alpha}\right)\left(\frac{h+1}{h+t}\right)^{k\alpha},
    \end{equation}
    if we use diminishing step sizes $\alpha_t=\frac{\alpha}{t+h}$.
\end{lemma}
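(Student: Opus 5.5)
This is the standard one-step recursion lemma from stochastic approximation, and I would prove it by unrolling the recursion directly. Throughout, $w_t\ge 0$ is implicit, since $w_t$ plays the role of an expected squared error.

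\textbf{Constant step size.} Set $\alpha_t=\alpha$ and unroll:
\[
w_t\le (1-k\alpha)^t w_0 + C\alpha^2\sum_{j=0}^{t-1}(1-k\alpha)^j .
\]
Since $0<k\alpha<1$, the geometric sum is at most $1/(k\alpha)$. The second term is therefore at most $C\alpha/k$, which gives the claim.

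\textbf{Diminishing step size.} Set $\alpha_t=\alpha/(t+h)$ and write $a:=k\alpha<1$. First bound the contraction factors using $1-x\le e^{-x}$ and $\sum_{i=j}^{t-1}\frac{1}{i+h}\ge \log\frac{t+h}{j+h}$:
\[
\prod_{i=j}^{t-1}\Bigl(1-\frac{a}{i+h}\Bigr)\le \Bigl(\frac{j+h}{t+h}\Bigr)^{a}.
\]
Unrolling the recursion then gives
\[
w_t\le \Bigl(\frac{h}{h+t}\Bigr)^{a}w_0 + C\alpha^2\sum_{j=0}^{t-1}\frac{1}{(j+h)^2}\Bigl(\frac{j+1+h}{t+h}\Bigr)^{a}.
\]
The first term is at most $\bigl(\frac{h+1}{h+t}\bigr)^a w_0$.

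For the sum, factor out $(t+h)^{-a}$ and bound
\[
\frac{(j+1+h)^a}{(j+h)^2}\le \Bigl(\frac{h+1}{h}\Bigr)^{a}(j+h)^{a-2},
\]
using $\frac{j+h+1}{j+h}\le\frac{h+1}{h}$. Because $a<1$, the exponent satisfies $a-2<-1$, so the series converges. Comparing with an integral,
\[
\sum_{j\ge0}(j+h)^{a-2}\le h^{a-2}+\frac{h^{a-1}}{1-a}.
\]
Multiplying through, the sum contributes at most
\[
C\alpha^2\Bigl(\frac{h+1}{h+t}\Bigr)^{a}\cdot\frac{1}{h^2}\Bigl(1+\frac{h}{1-a}\Bigr).
\]
Since $1+\frac{h}{1-a}=\frac{1-a+h}{1-a}\le \frac{h+2-a}{1-a}$, this is within the stated constant. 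Adding the two terms gives the claim.

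\textbf{Main difficulty.} The only delicate point is the constant bookkeeping in the tail sum: the factor $\bigl(\frac{j+1+h}{t+h}\bigr)^a$ has to be converted into the common prefactor $\bigl(\frac{h+1}{h+t}\bigr)^a$, and this conversion is what produces the $\frac{h+2-a}{1-a}$ shape. If the $h^{a}$ factors do not line up cleanly, I would instead use the cruder bound $(h+1)^a/h^{a}\le 2$ and absorb the loss into the constant.
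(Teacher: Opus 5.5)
Your proof is correct and follows essentially the same route as the paper: unroll the recursion, bound $\prod_i\bigl(1-\frac{k\alpha}{i+h}\bigr)$ by a ratio of powers via $1-x\le e^{-x}$ and a harmonic-sum versus logarithm comparison, then control the remaining sum by an integral of $x^{k\alpha-2}$. The only difference is cosmetic: you absorb the index shift as $\bigl(\frac{h+1}{h}\bigr)^{k\alpha}$ and integrate from $h$, whereas the paper uses $\bigl(\frac{h+1}{h}\bigr)^{2}$ and integrates from $h+1$, so you get the slightly sharper constant $\frac{h+1-k\alpha}{1-k\alpha}$; like the paper, you also implicitly need $k\alpha<1$, which the stated bound presupposes but which does not follow from $k\alpha_t<1$ alone when $h>1$.
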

\begin{proof}
    Unrolling the recursion we obtain:
    \begin{equation}
        \begin{aligned}
            w_{T+1}\leq & (1-k\alpha_T)w_T+C\alpha_T^2\\
            \leq & (1-k\alpha_T)(1-k\alpha_{T-1})w_{T-1}+C\alpha_T^2+C(1-k\alpha_T)\alpha_{T-1}^2\\
            \leq &\dots\\
            \leq & \prod_{t=0}^T(1-k\alpha_{t})w_0+C\sum_{t=0}^T\alpha_t^2\prod_{s=t+1}^T(1-k\alpha_s).
        \end{aligned}
    \end{equation}
    For constant step size $\alpha_t=\alpha$, the inequality above can be simplified as:
    \begin{equation}
        \begin{aligned}
            w_{T+1}\leq & (1-k\alpha)^{T+1} w_0+C\sum_{t=0}^T\alpha^2(1-k\alpha)^{T-t}\\
            =&(1-k\alpha)^{T+1} w_0+C\alpha^2\sum_{t=0}^T (1-k\alpha)^t\\
            =&(1-k\alpha)^{T+1} w_0+C\alpha^2\frac{1-(1-k\alpha)^{T+1}}{k\alpha}\\
            \leq &(1-k\alpha)^{T+1} w_0+ \frac{C}{k}\alpha.
        \end{aligned}
    \end{equation}
    For deminishing step sizes $\alpha_t=\frac{\alpha}{t+h}$, we have:
    \begin{equation}\begin{aligned}
        \prod_{t=0}^T(1-k\alpha_{t})=&\prod_{t=0}^T(1-\frac{k\alpha}{t+h})\\
        =&\exp\left(\sum_{t=0}^T \log(1-\frac{k\alpha}{t+h})\right)\\
        \leq & \exp\left(-k\alpha\sum_{t=0}^T\frac{1}{t+h}\right)\\
        \leq & \exp\left(-k\alpha(H_{h+T}-H_{h-1})\right)\\
        \leq & \left(\frac{h}{h+T+1}\right)^{k\alpha}
    \end{aligned}\end{equation}
    and similarly,
    \begin{equation}
        \begin{aligned}
            \prod_{s=t+1}^T(1-k\alpha_s)=&\exp\left(\sum_{s=t+1}^T \log(1-\frac{k\alpha}{s+h})\right)\\
            \leq &\exp\left(-k\alpha\sum_{s=t+1}^T \frac{1}{s+h}\right)\\
            \leq &\exp\left(-k\alpha(H_{h+T}-H_{h+t})\right)\\
            \leq &\left(\frac{h+t+1}{h+T+1}\right)^{k\alpha}.
        \end{aligned}
    \end{equation}
    Therefore, the bound for $w_{T+1}$ can be simplified as:
    \begin{equation}\begin{aligned}
        w_{T+1}\leq & \left(\frac{h}{h+T+1}\right)^{k\alpha} w_0+C\alpha^2\sum_{t=0}^T \frac{1}{(t+h)^2}\left(\frac{h+t+1}{h+T+1}\right)^{k\alpha}\\
        =&\left(\frac{h}{h+T+1}\right)^{k\alpha}w_0+C\alpha^2\frac{1}{(h+T+1)^{k\alpha}}\sum_{t=0}^T\frac{(t+h+1)^{k\alpha}}{(t+h)^2}\\
        \leq &\left(\frac{h}{h+T+1}\right)^{k\alpha}w_0+C\alpha^2\frac{1}{(h+T+1)^{k\alpha}}\left(\frac{h+1}{h}\right)^2\sum_{t=0}^T(t+h+1)^{k\alpha-2}
    \end{aligned}\end{equation}
    Notice that:
    \begin{equation}
        \sum_{t=0}^T(t+h+1)^{k\alpha-2}\leq (h+1)^{k\alpha-2}+\int_{h+1}^{h+T+2} x^{k\alpha-2}\diff x\leq (h+1)^{k\alpha-2}+\frac{(h+1)^{k\alpha-1}}{1-k\alpha}
    \end{equation}
    we obtain the final bound of:
    \begin{equation}\begin{aligned}
        w_{T+1}\leq& \left(\frac{h}{h+T+1}\right)^{k\alpha}w_0+C\alpha^2\frac{1}{(h+T+1)^{k\alpha}}\left(\frac{h+1}{h}\right)^2\left((h+1)^{k\alpha-2}+\frac{(h+1)^{k\alpha-1}}{1-k\alpha}\right)\\
        \leq &\left(w_0+\frac{C\alpha^2}{h^2}\frac{h+2-k\alpha}{1-k\alpha}\right)\left(\frac{h+1}{h+T+1}\right)^{k\alpha}
    \end{aligned}\end{equation}
\end{proof}

%%%%%%%%%%%%%%%%%%%%%%%%%%%%%%%%%%%%%%%%%%%%%%%%%%%%%%%%%%%%%%%%%%%%%%%%%%%%%%%
%%%%%%%%%%%%%%%%%%%%%%%%%%%%%%%%%%%%%%%%%%%%%%%%%%%%%%%%%%%%%%%%%%%%%%%%%%%%%%%

\end{document}